\documentclass[10pt,a4paper]{scrartcl}
\KOMAoptions{DIV=12}

\usepackage[utf8]{inputenc} 
\usepackage[T1]{fontenc}
\usepackage{microtype}

\usepackage{amsmath,amssymb,amsfonts,amsthm}
\usepackage{mathtools}  
\usepackage{bbold}      

\usepackage{mathabx}

\usepackage{cancel}
\usepackage{slashed}

\usepackage{graphicx}   
\usepackage{tikz}
\usetikzlibrary{arrows.meta,positioning,decorations.pathreplacing,calc,fit,backgrounds,patterns,shapes}
\usepackage{rotating}

\usepackage[font=small,format=plain,indention=0pt]{caption}
\usepackage{array}
\usepackage{booktabs}
\usepackage{multirow}
\usepackage{makecell}
\usepackage{tabularx,adjustbox}
\usepackage{threeparttable}
\usepackage{longtable}

\usepackage{pifont}

\usepackage{hyperref}
\usepackage{cleveref}   

\usepackage{comment}
\usepackage{enumitem}

\usepackage{etoc}

\usepackage[giveninits=true,isbn=false,sorting=none]{biblatex}
\usepackage{appendix} 

\newtheorem{theorem}{Theorem}[section]
\newtheorem{lemma}[theorem]{Lemma}
\newtheorem{corollary}[theorem]{Corollary}

\theoremstyle{definition}
\newtheorem{remark}[theorem]{Remark}
\newtheorem{example}[theorem]{Example}

\DeclareMathOperator{\ard}{ARD}
\DeclareMathOperator{\sci}{SCI}
\DeclareMathOperator*{\argmax}{arg\,max}

\DeclareMathOperator{\EE}{\mathbb{E}}

\newcommand{\appropto}{\mathrel{\vcenter{
  \offinterlineskip\halign{\hfil$##$\cr
    \propto\cr\noalign{\kern2pt}\sim\cr}}}}

\newcommand{\DKL}[2]{D_{\mathrm{KL}}\!\left(#1 \,\middle\|\, #2\right)}
\newcommand{\DF}[2]{D_{f}\!\left(#1 \,\middle\|\, #2\right)}

\newcommand{\logg}[1]{\log\!\left(#1\right)}

\newcommand{\BS}{\mathcal{B}} 
\newcommand{\GS}{\mathcal{P}} 
\newcommand{\CS}{\mathcal{C}} 
\newcommand{\RS}{\mathcal{R}} 

\newcommand{\cat}[1]{\emph{#1}} 

\newcommand{\PP}[1]{\mathbb{P}\!\left(#1\right)}

\newcommand{\PPS}[3]{%
  \mathbb{P}%
  \if\relax\detokenize{#2}\relax\else_{#2}\fi%
  \if\relax\detokenize{#3}\relax\else^{#3}\fi%
  \!\left(#1\right)}

\newcommand{\PS}[3]{%
  P%
  \if\relax\detokenize{#2}\relax\else_{#2}\fi%
  \if\relax\detokenize{#3}\relax\else^{#3}\fi%
  \!\left(#1\right)}

\newcolumntype{M}{>{\(\displaystyle}l<{\)}} 

\usepackage{authblk}

\newif\ifincludesupplement

\includesupplementtrue      

\title{Unified framework for measuring segregation resolves how social and geographical space jointly shape connections}

\author[1]{Johannes Happenhofer}
\author[1,2,5]{Sahil Loomba}
\author[3,1]{Till Hoffmann}
\author[4]{Sumeet Agarwal}
\author[1,5]{Nick S. Jones}

\affil[1]{Department of Mathematics, Imperial College London, South Kensington Campus, London SW7 2AZ, UK}
\affil[5]{I-X Centre for AI in Science, Imperial College London, White City Campus, London W12 7TA, UK}
\affil[2]{Institute for Data, Systems, and Society, Massachusetts Institute of Technology, Cambridge, MA 02139, USA}
\affil[3]{Department of Biostatistics, Harvard T.H. Chan School of Public Health, Boston, MA 02115, USA}
\affil[4]{Yardi School of Artificial Intelligence and Department of Electrical Engineering, Indian Institute of Technology Delhi, New Delhi 110016, India}

\date{\today}

\begin{document}
\maketitle

\begin{abstract}
Our understanding of how geographical and social segregation interact remains limited, as relatively few studies investigate them jointly, and existing approaches often lack a framework distinguishing geographical, social, and total segregation. Additionally, large-scale individually resolved geo-social network data are rarely publicly available. We address both. Conceptually, we develop a unified framework that measures segregation in geosocial networks by comparing network models to appropriate null models and recovers the Theil index, dissimilarity index, and network modularity as special cases. Empirically, we turn to privacy-preserving aggregated relational data (ARD): we combine the Facebook Social Connectedness Index for the US with US Census and Pew data, and introduce an ARD-compatible joint geosocial intervening-opportunities model to infer link probabilities between region--group cell pairs. Applying our segregation framework, we find that social segregation predominates over geographical segregation, with notable separation for White--Black, college-degree--no-degree, and high-income--low/middle-income across both segregation types. We find increasing social homophily with geographical distance and group-specific geographical connectivity patterns, suggesting that geographical segregation may affect cross-group connectivity not only directly but also by amplifying social segregation.
\end{abstract}

\section{Introduction}
The separation of social groups has long been a central problem across demography, sociology, and human geography, giving rise to a rich literature on the measurement, causes, and consequences of segregation \cite{duncan1955segregation, schelling1971segregation, massey1988segregation}. 
Traditionally, this separation has been studied independently along two dimensions: geographical segregation, the differential distribution of sociodemographic groups across regions, and social segregation, the differential preference to form ties across sociodemographic groups. 

Geographical segregation measures are typically based on comparisons between the local distribution over sociodemographic groups in pre-defined regions and the marginal (countrywide) distribution over sociodemographic groups, and include measures like the Theil Information Theory index (or, relatedly, the Mutual Information Index), the index of Dissimilarity, a squared coefficient of variation, a Gini index, the Relative Diversity index, and others \cite{reardon2002measures, reardon2004measures}. While these measures are useful in their own right, the arbitrary character of defining geographical segregation based on pre-defined regions --- usually as defined in available census data --- has been widely noticed \cite{wong1997spatial, openshaw1984modifiable}. One typical problem associated with this approach is the modifiable areal unit problem (MAUP), i.e., individuals within a region are assumed to have the same geospatial distance to each other and the same distance to individuals in another region --- which can be quite misleading if there is high population density at the boundaries of the regions. Another is the checkerboard problem, i.e., homogeneous regions with only one group present in each region arranged like a checkerboard yield the same segregation index as arranging all regions composed of one group, say, in the west and all other regions composed of the other group in the east.  As a remedy, egocentric (or geospatially weighted) extensions of the traditional segregation measures have been proposed. For example, ref.~\cite{reardon2004measures} suggests the use of geospatial proximity kernels, but clearly, the right choice of a geospatial proximity kernel is crucial. We argue that simply building a proximity kernel based on the Euclidean distance between geographical locations is insufficient, and that the population density at different geographical locations interacts with how ``distant'' two individuals are from each other. On the other hand, simply considering the average chance that individuals at different geographical locations interact with each other is insufficient for capturing geographical segregation as well, as it would conflate social and geographical segregation. 

Social segregation has typically been studied within the framework of homophily \cite{mcpherson2001birds} or assortative mixing in the social network literature, that is, the preference of individuals to connect to others who have similar sociodemographic characteristics. Moody \cite{moody2001race}, for instance, introduced an odds-ratio index of friendship segregation in school networks, which compares the likelihood of forming ties within versus across groups. Another line of work employs stochastic block models (SBMs) with predefined blocks corresponding to sociodemographic categories \cite{holland1983stochastic, karrer2011stochastic}. In this setting, segregation is assessed by the estimated probabilities of ties within and between groups (mixing patterns), offering a likelihood-based approach. An overview of social segregation measures can be found in \cite{bojanowski_measuring_2014}. Perhaps the most popular measure is network modularity or, when normalized, the nominal network assortativity coefficient \cite{newman2003mixing, newman2018networks}, which compares the observed number of within-group edges to the expected number of such edges under the degree configuration model, which matches the observed degrees in expectation but is otherwise random. For continuous variables, the assortativity coefficient is the Pearson correlation of sociodemographics between pairs of connected individuals. We note that similar to traditional geographical segregation, a comparison to a counterfactual null model without segregation determines the value of network modularity.

Recently, geographical and social segregation have been studied together in ref.~\cite{kazmina2024socio} for the Netherlands, with network data built upon population registers curated by Statistics Netherlands. They compute network assortativity for income, education, ethnic group and migrant generation for a social network (family, school, work and neighborhood ties) and a spatial network (complete graph within an administrative neighborhood, no edges between such regions), with one of their main findings being that the amount of social segregation is twice the amount of geographical segregation. While this approach roughly separates geographical and social segregation, the social network might itself be partly impacted by geographical segregation, and the spatial network has the arbitrary character of using pre-defined regions.
Ref.~\cite{xu2019quantifying} uses large-scale mobile phone and socio-economic datasets for Singapore to analyze geographical and social segregation along socio-economic status differences. They define a Communication Segregation Index by the similarity-weighted share of a person’s communications with socio-economically similar contacts, and a Physical Segregation Index by similarity-weighted co-location probabilities across the city. Again, while those two indices approximately separate social and geographical segregation, it might be that communication shares to different groups are partly determined by the share of those groups in the geographical proximity, and co-location probabilities partly determined by social preferences (e.g., attending specific cultural events).  

Quantifying homophily based on the number of links between different sociodemographic groups does not differentiate between social and geographical segregation, but rather measures the \emph{combined} impact of the two. We call the combined impact of geographical and social segregation \emph{total segregation}, and propose a principled approach that considers both geographical and social segregation as latent constructs.
As an example, consider three hypothetical societies A--C composed of blue and red individuals, such that the joint distribution of geographical coordinates and group membership is identical across all three societies (Fig.~\ref{fig:hypothetical_societies}). Therefore, any traditional geographical segregation measure assigns the same value to all three societies. In both A and B, individuals form friendships purely based on geospatial homophily: conditional on geographical coordinates, friendship formation is independent of group membership. However, geospatial homophily is much stronger in B than in A. Because the two groups are geographically clustered, individuals in B consequently have fewer friends in the other group and more friends within their own group than individuals in A. Measures such as nominal network assortativity \cite{newman2003mixing}, applied to the marginal group-to-group mixing matrix, would therefore report stronger homophily in B than in A. Interpreting this difference as a difference in social segregation would generate an apparent social-homophily signal even though social homophily is absent in both societies. We would instead argue that B is more geographically segregated than A --- even though any traditional geographical segregation measure would assign the same value to A and B --- and that their social segregation is the same, namely absent.
Now compare societies B and C. Their marginal group-to-group mixing matrices are identical and nominal network assortativity $\rho$ is also the same. However, whereas B has strong geospatial homophily and no social homophily, C has much weaker geospatial homophily but positive social homophily: conditional on geographical coordinates, individuals preferentially connect to members of their own group. We would therefore argue that B is more geographically segregated than C, while C is more socially segregated than B. By contrast, conventional approaches based on traditional geographical segregation measures and marginal group-to-group connectivity would not distinguish between B and C. We argue that this distinction matters: for example, a different housing policy might be much more effective in B than in C if the goal were to reduce segregation, while distinguishing between the two is also important for understanding the social processes responsible for segregation, as individuals in C exhibit stronger social homophily.
\begin{figure}[htbp]
    \centering
    \includegraphics[width=0.85\textwidth]{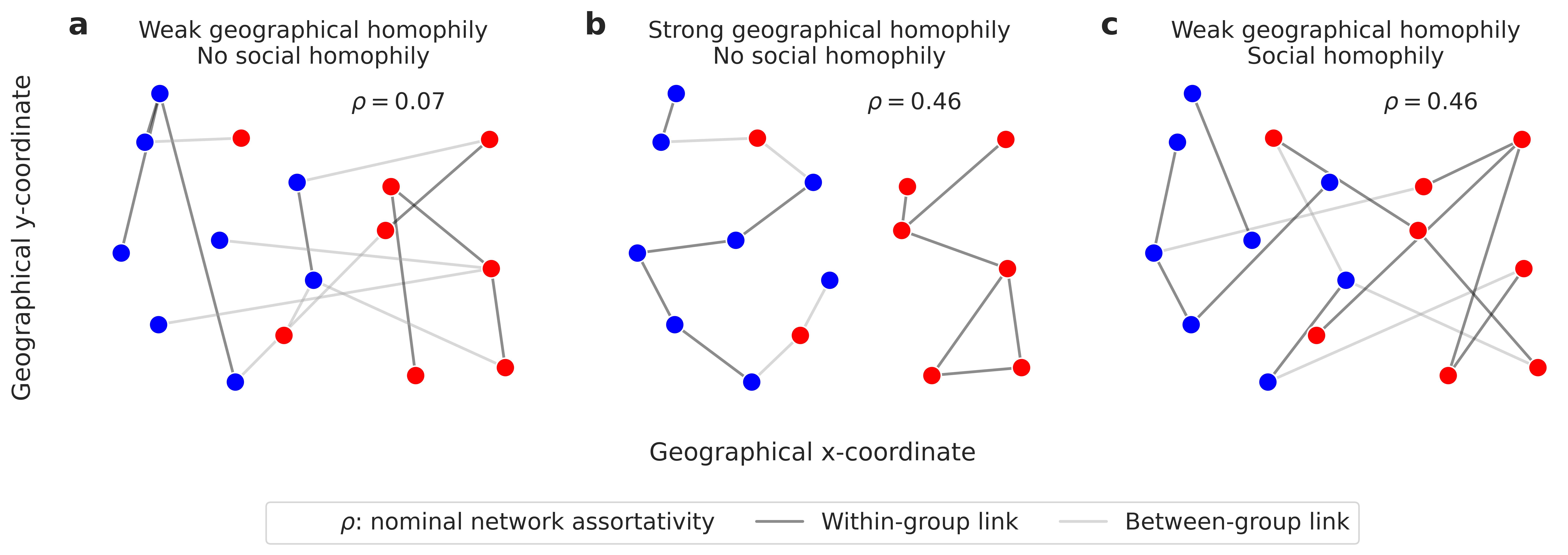}
\caption{ \emph{Traditional geographical segregation and total segregation do not determine geographical and social segregation.} The three hypothetical societies have identical joint distributions of geographical coordinates and group membership, represented by blue and red nodes, so any traditional geographical segregation measure would assign the same value to all three. \textbf{a}, Individuals connect with weak geospatial homophily and no social homophily: conditional on geographical coordinates, friendship formation is independent of group membership. \textbf{b}, Geospatial homophily is stronger, while social homophily remains absent, resulting in more within-group and fewer between-group links and therefore higher nominal network assortativity $\rho$. \textbf{c}, Geospatial homophily is weaker than in \textbf{b}, but individuals preferentially connect to members of their own group conditional on geographical coordinates. The marginal group-to-group mixing matrices, and hence $\rho$, are identical in \textbf{b} and \textbf{c}, despite their different geographical and social homophily. Thus, \textbf{a} and \textbf{b} show that differences in marginal group-to-group connectivity need not reflect differences in social segregation, while \textbf{b} and \textbf{c} show that identical marginal group-to-group connectivity and identical traditional geographical segregation can conceal very different levels of geographical and social segregation. }
    \label{fig:hypothetical_societies}
\end{figure}

We therefore define geographical segregation as the impact of geographical homophily --- the decay of connectivity between pairs of individuals with geographical distance --- on the chance that different sociodemographic groups connect, and define social segregation as the impact of differential preferences towards different sociodemographic groups on the connectivity between these groups. This distinction closely parallels Blau's \cite{blau1977inequality,blau1994structural} distinction between structural opportunities and individual preferences for intergroup association: in our setting, the former corresponds to geographical and the latter to social segregation.
Hence, measuring geographical and social segregation requires a suitable conceptual framework and typically a model-based approach: a fully observed large-scale geosocial network or a statistical estimate of relevant key features of the network (how connected various geosocial groups are) uncovers only total segregation, which has to be further resolved into geographical and social segregation by defining suitable counterfactual models relative to the observed network or estimated ``true'' model.

Studying geographical and social segregation jointly requires large-scale social network data. However, large-scale individual-level network data are typically not publicly available due to privacy concerns, as is the case for the data underlying the above-mentioned studies for the Netherlands \cite{kazmina2024socio} and Singapore \cite{xu2019quantifying}, as well as for the individual-level Facebook friendship graph underlying the investigation of social connectedness (fraction of individuals' friends who are high-SES) in ref.~\cite{chetty2022social}. Moreover, even if such data were publicly accessible, the sheer size of large-scale geosocial networks with millions of nodes and billions of edges would pose serious challenges for statistical inference. We therefore adopt a different approach based on Aggregated Relational Data (ARD), which in our setting refers to aggregated connection volumes between sets of nodes. This definition differs from the classical notion of ARD, which consists of ego-reported counts of ties to demographic groups \cite{mccormick2012latent, maltiel2015estimating, feehan2016generalizing}, although classical ARD can be understood as a special case of our notion of ARD. In addition to preserving privacy, such data is also highly compressed and therefore more suitable for statistical inference on large-scale geosocial networks. The specific ARD data we use for this study is the Social Connectedness Index (SCI) \cite{bailey2018sci, bailey2020social}, which is, upon rescaling, given by the aggregated number of Facebook (FB) friendships between geographical regions, divided by the product of the number of Facebook users in those regions.

Since the SCI provides only aggregated connection volumes between regions, rather than individual-level ties across geosocial groups, we develop a probabilistic geosocial network model that specifies tie probabilities conditional on individuals' geosocial coordinates. This in turn allows the specification of a likelihood for the observed SCI data, conditional on the sociodemographic distribution across regions. We estimate the sociodemographic distribution of Facebook users across regions by combining U.S. Census data \cite{acs_5y_2017_2021_techdoc,census_2020_dhc_techdoc,census_mdat_cps_asec2022}, Pew Research Center surveys \cite{pew_2021_core_trends,pew_atp_wave_112}, and Meta Marketing API reach estimates \cite{meta_reach_estimate_guide,meta_ad_account_reach_estimate_ref}.

It is well established that geographical distance is one of the dominant predictors of connectivity in large-scale geosocial networks \cite{takhteyev2012geography, scellato2011socio, bailey2020social}. Mapping distances directly to connection probabilities implies that individuals in regions with high population density will have much larger degrees than individuals in low density regions, which is not corroborated by empirical data.
Rather, the marginal connection probability of a random pair of individuals within one contiguous region scales as the reciprocal of the population size of this region, as opposed to the area of this region. This population-size scaling and related phenomena have been discussed in \cite{stouffer1940intervening, liben2005geographic, kumar2006navigating, backstrom2010find, simini2012universal, kotsubo2021kernel} under the notions of rank-based network formation or intervening opportunities (IOs), where, for two individuals $i$ and $j$, from the perspective of $i$, all individuals $k$ with a smaller distance to $i$ than $j$ constitute intervening opportunities. Our novel contribution is a probabilistic network model based upon intervening opportunities defined by a latent joint distance in geosocial space, as opposed to intervening opportunities in geographical space alone. The model specifies a distribution of the $\ard$, conditional on the geospatial distribution across regions (or the entities upon which the ARD data is defined) and the latent distance in geosocial space. We fit this model to the SCI data for zip code tabulation areas (ZCTAs) in the United States, separately for each state. By extensive model comparison, we demonstrate that this joint geosocial IO model outperforms simpler models, e.g., models which factorize into a geospatial connectivity kernel (built either upon distance or IOs in geographical space) and a Stochastic Block Model for social connectivity. The model fits enable us to apply our general framework for geographical and social segregation. In accordance with \cite{kazmina2024socio}, we find that social segregation is stronger than geographical segregation. Other findings include  
strong homophily along race, education and income, and interaction patterns between racial homophily and membership to specific age, education, or income categories. Specifically, racial homophily appears to decline with educational attainment and is weaker within the highest income group. We also find evidence of non-trivial interaction patterns between geographical distance and social homophily, and correlations between geographical and social segregation.

\section{Results}
\subsection{Measuring geographical, social, and total segregation}
We distinguish four forms of segregation in geosocial networks---total, traditional geographical, social, and geographical segregation---and operationalize each through a corresponding network model, as illustrated in Fig.~\ref{fig:tot_trad_soc_geo_seg_w_formulas}. Precise definitions of these models are provided in Methods, Section~\ref{sec:segregation_via_counterfactual_models}.

\begin{figure}[htbp]
    \centering
    \includegraphics[width=1.0\textwidth]{figures/main/tot_trad_soc_geo_seg_demo_w_formulas_improved.jpg}
\caption{\emph{Proposed unified segregation measurement framework recovers network modularity, dissimilarity, and Theil index as special cases, and separates total, social, and geographical segregation.}
Toy edge-probability models over individuals nested in four geographical regions ($r_1,\dots,r_4$) and two social groups (blue and red). The models are visualised as networks with edge thickness proportional to the probability of a tie between the corresponding individuals. For each model $M$, these probabilities induce a conditional distribution $\sigma^M$ over pairs of geosocial coordinates, conditional on there being a tie. Segregation is measured by comparing $\sigma^M$ with a degree-preserving random-mixing reference $\sigma^{M_{DC}}$ (``Degree Configuration Model'') using an $f$-divergence, normalised by the $f$-divergence of the maximal-segregation reference  $\sigma^{M_{DSI}}$ (``Degree-constrained Social Isolation'') from the random mixing reference $\sigma^{M_{DC}};$ the resulting normalization constants are denoted by $c_{\max}$ in panels \textbf{A3--A4} and \textbf{B3--B4}. 
\textbf{A1--A4,} In an assumed true data-generating model $M_T$ (or an inferred model as an estimate thereof), both geographical structure and group-based tie preferences contribute to the overall connectivity pattern, so that blue individuals connect mainly to blue individuals and red mainly to red. 
\textbf{B1--B4,} Traditional geographical segregation corresponds to the ``Regional Isolation Model'' $M_{RI}$, where individuals only connect within their region irrespective of group membership; as a result, overall connectivity is determined by regional group composition. With total variation distance this recovers the dissimilarity index, and with Kullback--Leibler divergence it recovers the Theil index. 
\textbf{C1--C3,} By counterfactually reallocating individuals such that each region has the same share of blue and red individuals (homogeneous social space model $M_{HB}$), we isolate group-based tie preferences from geographical patterns; we call this social segregation.  
\textbf{D1--D3,} By counterfactually changing the model such that only geography, but not group membership, affects tie probabilities (social ambiphily model $M_{AB}$), we isolate geographical segregation. 
The formula panels show two choices of $f$, corresponding to Total Variation distance and Kullback--Leibler divergence, and report the resulting normalised segregation index $S_f(M)$ together with our homophily index $\phi_f(M)$; for Total Variation, $\phi_f$ recovers nominal network assortativity (i.e., normalized network modularity), and in the special case of the Regional Isolation Model $M_{RI}$ it coincides with Relative Diversity.}
\label{fig:tot_trad_soc_geo_seg_w_formulas}
\end{figure}
Total segregation corresponds to the true geosocial network model (or an estimate thereof), as illustrated with a toy example in panel \textbf{A1}. 
Traditional geographical segregation assumes counterfactually that individuals only connect within predefined regions; we refer to this as the \emph{regional isolation model} $M_{RI}$, illustrated in panel \textbf{B1}.
Social segregation isolates separation due to group-based connectivity preferences. It is represented by a counterfactual network model $M_{HB}$ in which individuals are reallocated so that each region has the same sociodemographic composition, while the connection probabilities conditional on geosocial coordinates are retained, see panel \textbf{C1}.
Geographical segregation isolates separation induced by the decay of connectivity with geographical distance and the geographical distribution of groups. It is represented by a counterfactual \emph{social ambiphily} model $M_{AB}$, in which the decay of connectivity with geographical distance is retained, but sociodemographic group membership does not affect tie probabilities conditional on geographical coordinates, see panel \textbf{D1}.

We compress the connectivity information contained in a given geosocial network by the distribution of links over geosocial coordinates: for a model $M$ --- whether the true network, an estimate thereof, or one of the counterfactual models above --- let $\sigma^M(r_1b_1,b_2)$ denote the probability that a randomly selected pair of friends consists of one individual in region $r_1$ and group $b_1$ and the other individual in group $b_2$. Equivalently, $\sigma^M(r_1b_1,b_2)$ is the fraction of edges from geosocial coordinate $r_1b_1$ to social coordinate $b_2$. These edge distributions are depicted by the larger heatmaps in panels \textbf{A2}, \textbf{B2}, \textbf{C2}, and \textbf{D2}.

We measure segregation by comparing $\sigma^M$ with the corresponding edge distribution under a degree-preserving random-mixing null model $M_{DC}$ (\emph{degree configuration model}) via an $f$-divergence. We normalize by the $f$-divergence between a maximal segregation reference --- that is, the \emph{degree-constrained social isolation model} $M_{DSI}$ --- and the same random-mixing null. The resulting normalized index $S_f$ lies in $[0,1]$, with $0$ indicating no segregation and $1$ indicating maximal segregation, where distinct sociodemographic groups do not connect. If this normalization is omitted, we refer to the resulting index as unnormalized.

The random mixing and maximal segregation null models corresponding to the different segregation-specific network models are depicted, for the same toy example, as small heatmaps in panels \textbf{A2}, \textbf{B2}, \textbf{C2}, and \textbf{D2}. Note that these null models are always defined relative to a given network model $M,$ as they preserve the row and column marginals of $M.$ 

$S_f$ measures the magnitude of deviations from random mixing, but does not by itself distinguish homophilous from heterophilous deviations. A deviation for a coordinate pair $(r b_1,b_2)$ is homophilous if connectivity is higher than under the null and $b_1=b_2$, or lower than under the null and $b_1\neq b_2$; deviations in the opposite directions are heterophilous. We decompose $S_f$ into a homophilous contribution $\alpha_f$ and a heterophilous contribution $\beta_f$, with $S_f=\alpha_f+\beta_f$, and define the homophily index $\phi_f\coloneqq\alpha_f-\beta_f$. If all deviations are homophilous, then $S_f=\phi_f$.

We focus on two choices of $S_f$: a Total Variation distance (TV) version, based on $f(x)=\frac{1}{2}|x-1|$, and a Kullback--Leibler divergence (KL) version, based on $f(x)=x\log x$. For the TV version, our homophily index $\phi_{\mathrm{TV}}$ recovers nominal network assortativity, equivalently normalized network modularity, when applied to the empirical edge distribution of an observed network. Under the regional isolation model $M_{RI}$ (panel \textbf{B1}), $S_{\mathrm{TV}}$ recovers the Dissimilarity index and $S_{\mathrm{KL}}$ recovers the Theil index, while $\phi_{\mathrm{TV}}$ recovers the Relative Diversity index. In each case, the normalization constant $c_{\max}$ implied by the maximal-segregation reference recovers the standard normalization used for the corresponding measure \cite{reardon2002measures, newman2006modularity}, see panels \textbf{A3}, \textbf{B3} and \textbf{B4}. Moreover, applying $S_{\mathrm{KL}}$ beyond the regional isolation model extends the well-known group and regional decomposability properties of the Theil index to arbitrary geosocial network models.

Thus, the framework separates total, social, geographical, and traditional geographical segregation; distinguishes homophilous from heterophilous contributions; and contains existing social-network and traditional geographical segregation measures as special cases.

\subsection{Inferring geosocial connectivity from aggregated relational data}
The edge distributions underlying the proposed indices are normalized connection volumes between geosocial and social coordinates. The geographical coordinate need not in principle be a predefined region: with sufficiently detailed data, the same framework could be applied at individual-level spatial resolution. In practice, however, privacy-preserving public network data are usually spatially aggregated, so we work at regional resolution.

At this regional resolution, $\sigma^M(r_1b_1,b_2)$ could be computed directly if ARD reported connection volumes from each source region--group coordinate $r_1b_1$ to each target social group $b_2$, as in the toy examples in Fig.~\ref{fig:tot_trad_soc_geo_seg_w_formulas}. The Facebook Social Connectedness Index provides a coarser form of ARD: aggregate connection volumes between geographical regions, without the sociodemographic coordinates of the linked individuals. We therefore infer expected geosocial--geosocial connection volumes from region--region ARD using a statistical geosocial network model and variation in sociodemographic composition across regions, as illustrated in Fig.~\ref{fig:data_pipeline}. Marginalizing these inferred volumes over the target region yields $\sigma^M(r_1b_1,b_2)$ and hence allows us to apply the segregation framework above, including the construction of the counterfactual models defining social and geographical segregation.

\begin{figure}[htbp]
    \centering
    \includegraphics[width=1.0\textwidth]{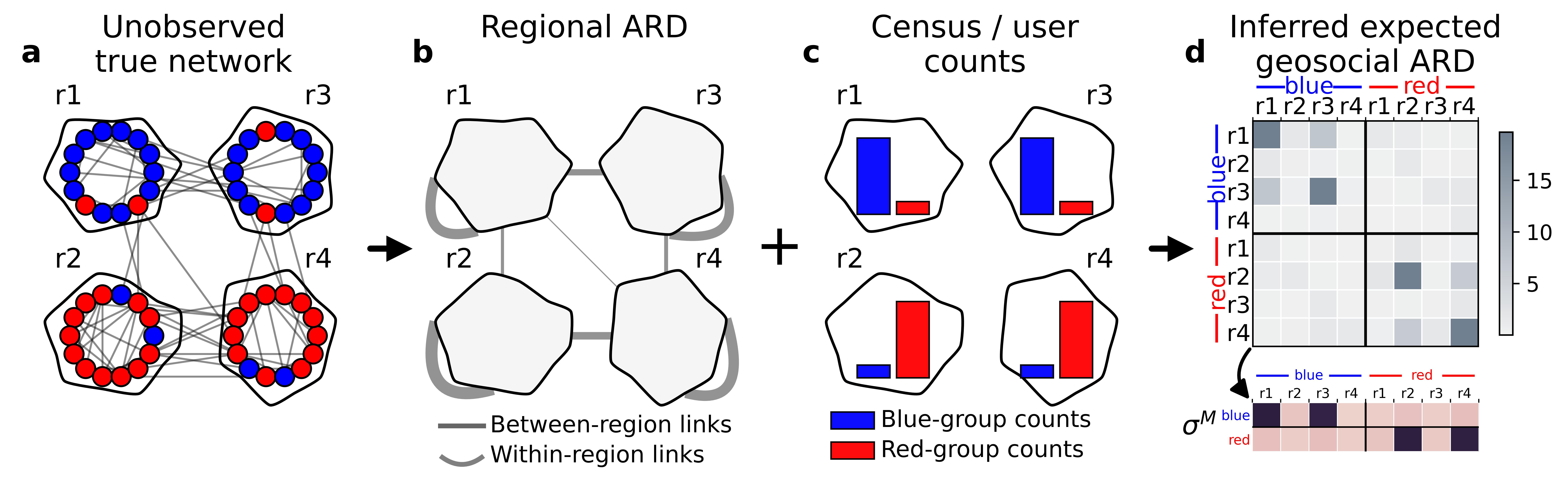}
\caption{
\emph{Schematic data pipeline for inferring geosocial connectivity.}
\textbf{a}, The individual-level geosocial Facebook network is unobserved.
\textbf{b}, The observed Social Connectedness Index (SCI) is derived from regional aggregated relational data (ARD), which record total connectivity between regions but do not include sociodemographic information.
\textbf{c}, Census and estimated Facebook population counts provide information on the distribution of social groups within each region.
\textbf{d}, Variation in the sociodemographic composition across regions in the census and estimated FB populations, together with a suitable statistical network model, enables the statistical estimation of regional-social ARD counts from regional ARD counts. Darker cells indicate larger expected link volumes. The regional-social ARD directly yields the conditional coordinate distribution $\sigma^M$, which underlies our segregation measurement framework, see Fig.~\ref{fig:tot_trad_soc_geo_seg_w_formulas}.
}
\label{fig:data_pipeline}
\end{figure}

\subsection{Joint geosocial intervening opportunities best explain SCI connectivity}
To infer geosocial connection volumes from region--region SCI, we model connectivity as a function of individuals' geographical and sociodemographic coordinates. We compare three model classes, all built around a notion of distance in geosocial space. Here, distance denotes observed geographical distance, inferred sociodemographic dissimilarity, or their combination, and is not assumed to satisfy the axioms of a metric.
The first, distance-based class can be understood as mapping distances directly to connection probabilities, parameterized on the logit scale as a decay of connectivity with geographical distance and group-pair-specific effects---these group-pair-specific effects can be viewed as social dissimilarity or distance.\footnote{Under sparse connectivity, this model approximately factorizes connection probabilities into a geographical kernel and a social kernel, where the social kernel is a stochastic block model (SBM), parameterized on the logit scale.} In a hybrid class, we substitute geographical intervening opportunities (IOs) for geographical distance, but continue to model the effect of sociodemographics via a group-pair-specific dissimilarity structure.
In the joint IO class, geography and sociodemographics no longer affect connection probabilities through separate components. Instead, a distance in joint geosocial space is inferred; for a potential tie from individual $i$ to individual $j$, the relevant IOs are all other individuals who are closer to $i$ in joint geosocial space, and these joint IOs are mapped to connection probabilities.
Accordingly, geographical and social space interact in a non-trivial manner, as illustrated for categorical social space in Fig.~\ref{fig:IOs}, panels \textbf{C1}--\textbf{C4}: a social-group-pair-specific rescaling factor $\kappa$ determines the geographical range over which individuals of each group count as IOs; under social homophily, socially closer groups contribute over a larger geographical radius and socially more distant groups over a smaller radius.

\begin{figure}[htbp]
    \centering
    \includegraphics[width=1.0\textwidth]{figures/main/io_demo_and_model_comparison_wo_quad_approx_wspace.jpg}
\caption{
\emph{Joint geosocial intervening opportunities explain Facebook Social Connectedness Index (SCI) better than distance-based or hybrid models.}
\textbf{A1}--\textbf{A4} illustrate the models considered by showing how individuals $i$ and $j$ with a fixed geosocial distance $d^{\GS\times\BS}$ connect. If distances are mapped directly to connectivity, \textbf{A1}--\textbf{A4} are equivalent; if intervening opportunities (IOs) are considered only for geographical space (marginal IOs) based upon geographical distance $d^{\GS}$, and social distance $d^{\BS}$ is considered for social space (hybrid model), \textbf{A2}--\textbf{A4} are equivalent; for joint IOs in geosocial space \textbf{A3}--\textbf{A4} are equivalent; and for joint fuzzy IOs \textbf{A1}--\textbf{A4} are all different. For illustration purposes, individuals $k$ with a larger distance from $i$ than $j$, depicted by shaded circles instead of solid circles, are counted with weight $0.5$ towards fuzzy IOs.
\textbf{B1}--\textbf{B3} depict the empirical relation in California between $\sci$ and geographical distance, marginal IOs and fuzzy marginal IOs, respectively.
\textbf{C1}--\textbf{C4} illustrate joint geosocial IOs between two individuals $i$ and $j$ at a fixed geographical distance $d^{\GS}$ for categorical social space with categories blue and red, under the assumption of social homophily. Here, $d^{\GS}\kappa$ denotes the social-group-pair-specific geographical range over which individuals of a particular group count as IOs; the red and blue boundaries show this range for red and blue individuals, respectively. In \textbf{C1} and \textbf{C2}, both $i$ and $j$ are red, and the number of IOs between $i$ and $j$ (from the perspective of $i$) is given by all red individuals who are no further in geographical space from $i$ than $j$ is, plus all blue individuals in a smaller geographical area. On the other hand, if $i$ is red and $j$ is blue, \textbf{C3} and \textbf{C4}, then the IOs are given by all blue individuals who are no further than $j$, plus all red individuals in a larger area.
\textbf{D1} shows boxplots of model-specific \emph{pointwise PSIS-LOO log predictive density} ($\mathrm{loo}_i$) \cite{vehtari2017practical} differences $\mathrm{loo}_i - \mathrm{loo}_i^{\text{baseline}}$, where the baseline is the full geosocial IO model with race and geosocial interactions. Each $\mathrm{loo}_i$ corresponds to a region pair $(r,r')$ in California.
\textbf{D2} shows these differences across all states (separate model fits per state); the ordering of the models is consistent across all states: IO models fit the data better than distance-based models (both when only geographical space is considered, and when geosocial space is considered), and the joint IO models fit the data better than a hybrid model based on IOs in geographical space and an SBM in social space.}
\label{fig:IOs}
\end{figure}

Intervening opportunities for an individual $j$ from the perspective of an individual $i$ are defined as the number of individuals $k$ who are at most as distant from $i$ as $j$ is. This unrealistically assumes that there is a sharp boundary which separates individuals who do and do not count as IOs. As a remedy, we use fuzzy IOs \cite{afandizadeh2012fuzzy}, where individuals $k$ at a larger distance from $i$ than $j$ still constitute IOs, although with a weight that declines quickly with distance. For the SCI data, fuzzy marginal IOs visibly track the empirical distance--connectivity relationship better than geographical distance or standard marginal IOs alone, see Fig.~\ref{fig:IOs} \textbf{B1}--\textbf{B3}.

Panels \textbf{A1}--\textbf{A4} of Fig.~\ref{fig:IOs} illustrate the distinction between joint distance, hybrid geographical IOs and social distance, joint IOs in geosocial space, and joint fuzzy IOs in geosocial space by exemplifying that different allocations of individuals in geosocial space can be equivalent under one framework, but distinguishable under another.   

We corroborate the advantage of fuzzy IOs by a grid optimization over the fuzzy-decay parameters, showing that at small distances fuzzy IOs can halve the mean squared error relative to a standard IO baseline, see Extended Data Fig.~\ref{fig:mse_fuzzy_rank_dist}.
We then show by extensive model comparison across 40 US states and multiple model specifications, using pointwise PSIS-LOO log predictive densities \cite{vehtari2017practical}, that joint IO models built upon fuzzy IOs consistently provide a better fit to the SCI data than hybrid models or distance-based models, see Fig.~\ref{fig:IOs}, panel \textbf{D2}; panel \textbf{D1} shows model comparison including additional models for the largest state in the dataset, California.
Thus, the SCI is better explained by opportunities in joint geosocial space than by a factorization of a geographical distance or geographical IO component with a social dissimilarity component.

\subsection{Inferred connectivity reveals strong group homophily along race, education, and income}

Bayesian inference under the joint IO model yields posterior estimates of latent sociodemographic distances and marginal group-to-group connection probabilities.
The inferred distances indicate clear homophily patterns by race, education, and income (Fig.~\ref{fig:marg_conn_hom_prior}\textbf{a},\textbf{b}). White individuals are distant from Black, Latino, and Other groups; individuals with college degrees are distant from those without; and the highest income group is distant from the other income groups. 
The inferred distances are mirrored by the marginal group-to-group connection probabilities: groups with larger inferred distances have lower marginal probabilities of connection (Fig.~\ref{fig:marg_conn_hom_prior}\textbf{c}). Panels \textbf{d}--\textbf{f} show marginal connectivity under the counterfactual models $M_{HB}$, $M_{AB}$, and $M_{RI}$ for social, geographical, and traditional geographical segregation, respectively.

Homophily is more pronounced under the social segregation counterfactual $M_{HB}$ than under the traditional geographical segregation counterfactual $M_{RI}$; the geographical segregation counterfactual $M_{AB}$ exhibits even less homophily. Using individually resolved social network data rather than ARD, ref.~\cite{kazmina2024socio} similarly finds stronger segregation in their social than in their spatial network. The qualitative patterns for education and income are similar under the counterfactual models for social and geographical segregation, consistent with findings for income in ref.~\cite{xu2019quantifying}, also based on individually resolved social network data rather than ARD. Conceptually, the social-network segregation considered in refs.~\cite{kazmina2024socio,xu2019quantifying} corresponds to our total segregation rather than to our social segregation, while the geographical segregation in ref.~\cite{kazmina2024socio} corresponds to our traditional geographical segregation. We note that the inferred marginal connectivities are relatively robust to alternative model specifications; see Extended Data Fig.~\ref{fig:marg_conn_4models}.

\begin{figure}[htbp]
    \centering
    \includegraphics[width=0.9\textwidth]{figures/main/ugraei_homprior_latent_sbinterglobal/marg_conn_country_1colorbar_urban_iomodel_ugraei_sb_inter_global_with_latent_new_titles.jpg}  
    \caption{\emph{Inference yields homophily by race, education and income.} \textbf{a}, Inferred latent social distances for Urban/Rural, Race, Gender, Age, Education, and Income. Heatmaps show posterior means averaged across US states using exposure weights. Numbers in brackets give the ratio of the largest to smallest heatmap entry. \textbf{b}, Posterior distributions of the same distances, obtained by pooling draws from the state-level posteriors. \textbf{c}--\textbf{f}, Posterior mean estimates of marginal connection probabilities under \textbf{c} the inferred network model, and the counterfactual models for \textbf{d} social, \textbf{e} geographical, and \textbf{f} traditional geographical segregation; see Fig.~\ref{fig:tot_trad_soc_geo_seg_w_formulas} for model illustrations. 
    Patterns for geographical and traditional geographical segregation are very similar, with somewhat larger group differences under traditional geographical segregation. Social segregation shows much more pronounced homophily, particularly by race, age, education, and income. White individuals (W) are especially segregated from Black (B), Latino (L), and Other (O) groups; individuals with college degrees (L-4 and L-5) are segregated from those without; and the affluent (income/poverty ratio > 5) are segregated from others. These patterns are strongest under social segregation but remain visible under the geographical variants. Estimated average degrees (average number of friends) of social groups, shown as bar plots above the heatmaps, should be interpreted with caution because they depend on Facebook usage estimates and may therefore inherit any bias in these usage estimates.}
    \label{fig:marg_conn_hom_prior}
\end{figure}

\subsection{Social segregation predominates over geographical segregation}
We now apply the proposed segregation framework to compare total, social, geographical, and traditional geographical segregation by race, education, and income. Across all three variables, social segregation predominates over geographical segregation (Fig.~\ref{fig:seg_bar_plots_reg_group_decomp}\textbf{A}). Total segregation is only slightly larger than social segregation, whereas social segregation is two to three times larger than traditional geographical segregation, and traditional geographical segregation is itself about twice as large as geographical segregation induced by the SCI network. Thus, most segregation in the inferred geosocial network reflects group-based connectivity preferences rather than the geographical distribution of groups alone.

\begin{figure}[htbp]
    \centering
    \includegraphics[width=1.0\textwidth]{figures/main/ugraei_homprior_latent_sbinterglobal/seg_barplots_reg_group_decomp_with_tv_country_lines_ugraei_sb_inter_global_with_latent.jpg} 
    \caption{\emph{Social segregation predominates over geographical segregation.}
    \textbf{A}, Total, social, geographical, and traditional geographical segregation by race, education, and income. Bar lengths show normalized segregation as measured by the KL-divergence $S_{\mathrm{KL}}$, with numerical values for $S_{\mathrm{KL}}$ and the corresponding homophily index $\phi_{\mathrm{KL}}$ shown above the bars. For all three variables, total segregation is slightly stronger than social segregation, which is two to three times stronger than traditional geographical segregation, which in turn is twice as strong as geographical segregation induced by the $\sci$ network. Social segregation indices are largely driven by homophily: the values of $\phi_{\mathrm{KL}}$ are close to $S_{\mathrm{KL}}$. For geographical and traditional geographical segregation, $\phi_{\mathrm{KL}}$ is only about half the size of $S_{\mathrm{KL}}$. We also decompose the segregation indices into within and between-county components (indicated by the hatched area of the bars). Social segregation appears relatively consistent across regions, as indicated by the predominance of within-county segregation, whereas geographical segregation has a larger share of between-county segregation. 
    \textbf{B}, Segregation within urban and rural areas (hatched bars) for race, education and income. Solid shading shows the homophily index $\phi_{\mathrm{KL}}$, and light shading the remaining component, $S_{\mathrm{KL}}-\phi_{\mathrm{KL}}$. While for race, there is little difference between within urban and rural segregation, for education and income there is less traditional geographical segregation within rural than within urban areas. For education, less within-rural segregation is also visible for total segregation. 
    \textbf{C1}--\textbf{C3}, Race segregation within age, education and income groups. Bars show normalized within-group segregation, again decomposed into $\phi_{\mathrm{KL}}$ and $S_{\mathrm{KL}}-\phi_{\mathrm{KL}}$. For all four segregation types, segregation is weaker within college-educated groups (\cat{l-4} and \cat{l-5}), and within relatively affluent groups (\cat{4-5} and \cat{5+}). For traditional geographical segregation, there is an increase of segregation with increasing age; the results for social and total segregation for age are less clear, with a peak among middle-aged groups.
    \textbf{D1}--\textbf{D3}, At the state level, traditional geographical segregation, measured by $\phi_{\mathrm{TV}}$ and relative to the country reference distribution, is moderately positively correlated with total and social segregation, and very strongly positively correlated with geographical segregation. Similar patterns are observed at the ZCTA level when applying the same country-reference framework locally, by comparing the group-specific connectivity of each focal ZCTA with the country-wide reference distribution. Solid reference lines indicate the country-level values of $\phi_{\mathrm{TV}},$ which reduce to relative diversity for traditional geographical segregation and nominal network assortativity for total segregation. \textbf{E}, In contrast, defining local traditional geographical segregation by computing segregation within neighbourhoods of 10 geographically adjacent ZCTAs around each focal ZCTA yields no, or even weakly negative, correlation with social segregation.
    }
    \label{fig:seg_bar_plots_reg_group_decomp}
\end{figure}

The homophily decomposition shows that social segregation is almost entirely homophilous: for race, education, and income, $\phi_{\mathrm{KL}}$ is close to the corresponding normalized $S_{\mathrm{KL}}$ value. 
By contrast, for geographical and traditional geographical segregation, $\phi_{\mathrm{KL}}$ is only about half as large as $S_{\mathrm{KL}}$. This indicates that geographical concentration contributes to segregation not only through homophilous separation between groups, but also through heterophilous deviations from random mixing.
County-level decompositions, enabled by the regional decomposability of $S_{\mathrm{KL}}$, further show that social segregation is mostly within counties, whereas geographical and traditional geographical segregation contain a larger between-county component. Social segregation is therefore not driven primarily by broad between-county variation.

We next apply the same decomposition logic to urban and rural areas by comparing the within-urban and within-rural components of segregation (Fig.~\ref{fig:seg_bar_plots_reg_group_decomp}\textbf{B}). For race, within-urban and within-rural segregation are similar. For education and income, however, traditional geographical segregation is weaker within rural than within urban areas. For education, weaker within-rural segregation is also visible for total segregation.\footnote{Because of how sociodemographic space enters our models, the within-urban/rural decomposition cannot reasonably be applied to social or geographical segregation; see Methods.}

We further decompose race segregation into within-age, within-education, and within-income components (Fig.~\ref{fig:seg_bar_plots_reg_group_decomp}\textbf{C1}--\textbf{C3}). Across all four segregation types, race segregation is lower within college-educated groups and within relatively affluent groups. For traditional geographical segregation, race segregation increases with age, while the corresponding patterns for social and total segregation are less monotonic and peak among middle-aged groups.

Finally, we compare traditional geographical segregation with total, social, and geographical segregation at state and ZCTA level using a country-reference version of $\phi_{\mathrm{TV}}$ (Fig.~\ref{fig:seg_bar_plots_reg_group_decomp}\textbf{D1}--\textbf{D3}). Broadly speaking, the ZCTA-level statistic measures whether individuals in a focal ZCTA are more or less likely to connect to other racial groups than expected under marginal connectivity at the country level; the exact definition is given in Methods. For traditional geographical segregation, this reduces to an isolation-type measure: high values indicate that individuals in the focal ZCTA are less exposed to other racial groups than expected under country-level marginal connectivity. At state level, the corresponding statistic summarizes the same comparison across ZCTAs within each state.
\footnote{The ZCTA-level country-reference construction is similar to the local Divergence Index of ref.~\cite{roberto2018spatial}, which compares the composition of a spatially weighted local environment with the overall population composition; here, the focal distribution is instead the composition of the ZCTA. At state level, our statistic is closely related to relative diversity, but uses country-level rather than state-level marginal connectivity as the reference, to enable a consistent comparison with the ZCTA-level quantities.}
Under this country-reference construction, traditional geographical segregation is moderately positively correlated with total and social segregation, and very strongly positively correlated with geographical segregation, both across states and across ZCTAs.

This association changes when traditional geographical segregation is instead measured locally by computing relative diversity within neighbourhoods of geographically adjacent ZCTAs around each focal ZCTA (Fig.~\ref{fig:seg_bar_plots_reg_group_decomp}\textbf{E}). Under this neighbourhood-based definition, its correlation with social segregation disappears or becomes weakly negative. This suggests that traditional segregation within small neighbourhoods around each focal ZCTA primarily captures local racial diversity, and that higher local diversity may even correspond to lower levels of same-race connectivity preferences, as captured by social segregation.

Taken together, these findings suggest that social segregation not only predominates over geographical segregation, consistent with findings in ref.~\cite{kazmina2024socio}, but is also qualitatively distinct from traditional geographical segregation. Geographical segregation induced by the SCI network, on the other hand, closely tracks traditional geographical segregation, although it is lower in magnitude. Thus, the results emphasize that a complete description of segregation requires resolving total segregation into social and geographical segregation. While traditional geographical segregation captures patterns very similar to geographical segregation induced by an actual geosocial network, it cannot by itself quantify the extent to which geographical concentrations of different social groups actually affect connectivity in that network.

\subsection{Social connectivity patterns vary with geographical distance}

We consider two related questions: first, whether geographical homophily differs across sociodemographic groups; and second, whether social homophily itself varies with geographical distance.

To address the first question, we select ZCTAs that are predominantly composed of one sociodemographic group and label them by their predominant group. We then plot the empirical $\sci$ for pairs of ZCTAs with the same predominant group against geographical distance (Fig.~\ref{fig:geo_blau_inter_dist}\textbf{B1}--\textbf{B4}). These descriptive patterns suggest that the decay of $\sci$ with geographical distance is steeper for low-education and low-income ZCTAs than for high-education and high-income ZCTAs (panels \textbf{B3}, \textbf{B4}). Panel \textbf{B1} suggests a steeper decay for rural than urban ZCTAs, while panel \textbf{B2} suggests a steeper decay for predominantly White than predominantly non-White ZCTAs. These descriptive findings should be interpreted cautiously, because the ZCTAs used in this comparison are selected to have atypically homogeneous sociodemographic composition.

\begin{figure}[htbp]
    \centering
    \includegraphics[width=\textwidth]{figures/main/ugraei_homprior_latent_sbinterglobal/geo_blau_inter_dist_empirical_model_white_vs_non_white_with_latent_sb_inter_global_prior.jpg}  
    \caption{\emph{Social connectivity patterns interact with geographical distance.} \textbf{A1}--\textbf{A4}, \textbf{B1}--\textbf{B4} Empirical relation between $\sci$ and geographical distance (log--log scale) for pairs of ZCTAs predominantly composed of the indicated groups.
    The solid lines show smooth trends obtained by binning the $\log$ distance, taking the bin-wise median of $\log \sci,$ and fitting a cubic spline to the binned values.
    \textbf{C1}--\textbf{C4} Scatterplots of ZCTA-pair-specific nominal network assortativity $\rho$ against geographical distance from source ZCTA 30306 (Atlanta, Georgia) to all other ZCTAs, together with the fitted slope of $\rho$ with respect to distance. All probabilities are estimated from the social segregation model, to control for any patterns induced by geographical segregation.
    \textbf{D1}--\textbf{D4} Distribution of slopes as in \textbf{C1}--\textbf{C4}, now computed for all source ZCTAs; i.e., each ZCTA is in turn treated as the source, and the slope of $\rho$ with respect to the distance to all other ZCTAs is calculated and included in the distribution.
    \textbf{E1}--\textbf{E4} Scatterplots of ZCTA-pair-specific $\log$ ratios of connection probabilities, as estimated from the social segregation model. $\PP{u\sim u}/\PP{r\sim r}$ denotes the ratio of the probability that an urban individual connects to another urban individual relative to the probability that a rural individual connects to another rural individual; $w,\bar{w}$ denote white and non-white, $e_{1\!-\!3}, e_{4\!-\!5}$ non-college vs college degrees, and $i_{6}, i_{1\!-\!5}$ high-income and low/middle-income groups. 
    \textbf{F1}--\textbf{F4} Distribution of slopes as in \textbf{E1}--\textbf{E4}, now computed for all source ZCTAs.
    }
    \label{fig:geo_blau_inter_dist}
\end{figure}   

We corroborate these patterns using connection probability estimates from the social segregation model, which removes patterns attributable to geographical segregation. We take one ZCTA at a time as the source ZCTA and compare its estimated within-group connection probabilities to other ZCTAs as a function of geographical distance. For each source ZCTA and each target ZCTA, we compute log ratios comparing urban--urban with rural--rural, White--White with non-White--non-White, college--college with non-college--non-college, and high-income--high-income with lower-income--lower-income connection probabilities. We then fit, for each source ZCTA, the slope of each log ratio with respect to the distance to the target ZCTAs. Panels \textbf{E1}--\textbf{E4} illustrate this calculation for ZCTA 30306. Panels \textbf{F1}--\textbf{F4} show the distribution of the fitted slopes when each ZCTA is treated as the source in turn. For education, panel \textbf{F3}, the majority of slopes are positive, corroborating the descriptive pattern in panel \textbf{B3}. For income, panel \textbf{F4} has a heavy right tail, suggesting that in some regions the most affluent are considerably less geographically homophilous than lower-income groups, although this pattern does not hold for all ZCTAs.

We then ask whether social homophily varies with geographical distance. In the empirical SCI, same-group ZCTA pairs have higher connectivity than between-group ZCTA pairs for race, education, and income, indicating homophily (Fig.~\ref{fig:geo_blau_inter_dist}\textbf{A1}--\textbf{A4}). Moreover, the gap between same-group and between-group connectivity widens with geographical distance, suggesting that social homophily increases with distance. We assess this pattern using the social segregation model by computing ZCTA-pair-specific nominal assortativity for urban/rural status, race, education, and income. 
Panels \textbf{C1}--\textbf{C4} show, for ZCTA 30306 (Atlanta, Georgia) as the source ZCTA, that ZCTA-pair-specific assortativity increases with geographical distance for race, education, and income, while urban/rural assortativity changes little. We again fit slopes to these data; panels \textbf{D1}--\textbf{D4} show the corresponding slope distributions when each ZCTA is treated as the source ZCTA in turn.
For the majority of ZCTAs, race, education, and income assortativity increase with geographical distance, whereas the slope distribution for urban/rural status is tightly centered around zero.

\section{Discussion}

We introduced a unifying framework for measuring segregation in geosocial networks that distinguishes total, social, geographical, and traditional geographical segregation. We treat segregation as a deviation from degree-preserving random mixing in the distribution of links over geosocial coordinates, and use counterfactual network models to define social, geographical, and traditional geographical segregation. We decompose our segregation indices into homophilous and heterophilous contributions, and define corresponding homophily indices. Several popular geographical segregation measures such as the Dissimilarity, Relative Diversity, and Theil indices, as well as social-network measures such as network modularity or, when normalized,  nominal network assortativity, can be recovered as special cases under this framework, including the corresponding normalization constants. Furthermore, we generalize the group and regional decomposition properties of the Theil index to arbitrary geosocial networks. Different generators $f$ for the $f$-divergence behave differently when quantifying homophily; while we established some properties (see Methods), a thorough investigation of advantages and disadvantages of different approaches remains future work. KL-divergence based measures have regional and group decomposition properties, while TV-distance yields network modularity as the corresponding homophily measure, whose linear properties can be advantageous.

We then showed how privacy-preserving aggregated relational data, together with detailed joint geosocial census or user counts, enable inference of geosocial connectivity by utilizing the regional variations of sociodemographic proportions. For the statistical inference, we proposed a network model based on intervening opportunities in joint geosocial space, and showed that it consistently fits the SCI data better than simpler models which treat geography and social space as separate building blocks; again suggesting that geographical concentrations and social homophily interact and should be studied jointly. The inferred geosocial connectivity allowed us to investigate total, social, geographical, and traditional geographical segregation for the FB network in the US. 

The conceptual distinction between the different segregation types matters because two societies can exhibit the same level of total segregation while differing substantially in whether this segregation is driven mainly by group-based connectivity preferences or by geographically structured opportunities. Existing notions of social segregation are often entangled with geographical segregation, and the interplay of social and geographical segregation can only be studied once both are well defined and isolated. We found that social segregation predominates over geographical and traditional geographical segregation for race, education, and income. Traditional geographical segregation captures patterns that are qualitatively similar to our geographical segregation as induced by the SCI network, but it is larger in magnitude, and therefore does not quantify the extent to which geographical concentration of groups affects connectivity in the actual network.
Conversely, social segregation is also qualitatively different from traditional geographical segregation, although some patterns persist across all four segregation types, including the separation between college-educated and non-college-educated individuals, and between the very affluent and all other income groups.

We also find that social homophily increases with increasing geographical distance. Taken together with the finding that geographical race isolation in a ZCTA is associated with higher social race homophily of individuals in that ZCTA, this suggests the hypothesis that geographical segregation matters beyond geographically restricting opportunities to connect to other groups: it might also shape same-group connectivity preferences, and missing opportunities to connect to other groups locally might not be compensated by opportunities at larger geographical distances because of the increased homophily at larger distances. 
While most of the results about the interplay between social and geographical segregation and between social homophily and geographical distance may seem intuitive, they are neither logically implied by the conceptual framework nor mechanically enforced by the geosocial network model underlying the inference, and therefore constitute genuine empirical findings that merit further investigation.

The inferred connection probabilities between different sociodemographic groups are latent, since the observed SCI is purely regional; multiple data-generating processes could in principle be compatible with the same region--region connection volumes while implying different patterns of group-to-group connectivity. We addressed this through model comparison, robustness checks, and counterfactual specifications, but unobserved confounding cannot be ruled out as long as sociodemographic coordinates of links are not observed. In addition, the inferred group-specific connection probabilities depend on estimated local sociodemographic composition and Facebook usage rates. We illustrate how confounders or biased usage estimates can bias the inferred connectivity in Extended Data Fig.~\ref{fig:toy_ard_cf_fb_bias}. 
While several of our qualitative findings are already independently supported by results from individual-level network data in other countries \cite{xu2019quantifying,kazmina2024socio}, future work should further validate and calibrate the models used to infer geosocial connectivity using richer data sources like more granular ARD with observed connection volumes between sociodemographic groups, or individual-level network data with known geosocial coordinates of egos and alters. Finally, we note that the SCI captures Facebook friendship ties, and the extent to which these represent other forms of social connection may vary across groups and contexts \cite{dunbar2016onlineoffline}.

While the $f$-divergence framework yields a unification of several pre-existing segregation measures, we also note some of its limitations. It is best suited for categorical sociodemographic variables, as it does not take any ordering of the values of variables with additional structure into consideration. For Total Variation, this reflects the fact that the Total Variation distance corresponds to the Wasserstein-1 metric when the underlying coordinate space is equipped with the discrete metric, which only distinguishes whether two coordinates are identical or distinct. Many commonly used $f$-divergences are quantitatively related to Total Variation through well-established divergence inequalities \cite{polyanskiy2025information}. For ordinal or continuous social variables, a natural extension would be to replace the discrete metric with one that respects the additional structure of the variable, and to use the corresponding Wasserstein-1 distance between the edge distributions $\sigma^M$ and $\sigma^N$ as a segregation measure.
A similar approach could be explored when categorical sociodemographic space is given by the product space of many categorical sociodemographic variables: the discrete metric could be replaced by a Hamming distance or by a learned social distance that better reflects the cumulative structure of group differences than treating all pairs of cross-categories as either distinct or equal. Nevertheless, the simpler approach based on $f$-divergences has the advantage of analytical tractability, and we consider it a natural starting point.

Two avenues for future research stand out. First, it would be informative to replicate the U.S. analysis in other countries and continents, with the long-term goal of producing a global map of total, geographical, and social segregation. A strength of our approach is that it is compatible with publicly available aggregated data such as the SCI and census data, rather than requiring proprietary individual-level network data, making such extensions feasible in principle. Second, naturally arising causal questions cannot be answered with cross-sectional data alone. In particular, existing social ties may themselves shape geographical proximity: longitudinal data suggest that social networks influence residential mobility, with movers preferentially relocating to neighbourhoods where they already have more social contacts \cite{buchel2020residentialmobility}. Our framework quantifies geographical, social, and total segregation at a countrywide scale, but it does not establish the causal processes through which these patterns arise: whether geographical segregation strengthens social segregation, whether social homophily reinforces geographical segregation, or whether both co-evolve through shared institutional and behavioural processes. Answering these questions will require longitudinal or quasi-experimental data. The present results show why such work would be fruitful: total segregation in social networks cannot be understood from geography or social preferences alone, but requires measuring how the two combine to shape who connects to whom.

\newpage
\section{Methods}
\noindent\textit{Note: The current Supplementary Information contains a complete description of the segregation framework, including technical details and proofs. Further supplementary material for the intervening opportunities model and the empirical analysis will be added in a subsequent arXiv version.}

\subsection{A unifying segregation framework}

\subsubsection{Traditional geographical segregation as an f-divergence}
\label{sec:methods_trad_seg_as_fdiv}
Let $\RS$ denote a set of regions and $\BS$ a set of sociodemographic groups; the notation $\BS$ refers to the concept of \emph{Blau space} \cite{blau1977inequality,blau1994structural,mcpherson1983ecology}. We write $\pi_{rb}$ for the proportion of the population in region $r\in\RS$ and group $b\in\BS$, with marginals $\pi_r=\sum_b\pi_{rb}$ and $\pi_b=\sum_r\pi_{rb}$, and conditional proportions $\pi_{b\mid r}=\pi_{rb}/\pi_r$. Several widely used measures of traditional geographical segregation can be written as $f$-divergences between the joint population distribution $(\pi_{rb})_{r,b}$ and the product of its regional and sociodemographic marginals $(\pi_r\pi_b)_{r,b}$.

For two discrete probability distributions $P$ and $Q$, and a convex generator $f\colon[0,\infty)\to(-\infty,\infty]$ satisfying $f(1)=0$, the $f$-divergence of $P$ from $Q$ is
\begin{equation*}
    \DF{P}{Q} = \sum_x Q(x) f\left(\frac{P(x)}{Q(x)}\right).
\end{equation*}
Accordingly, we define a family of unnormalized ``traditional geographical'' segregation measures, indexed by the generator $f$, by
\begin{align}
    \mathcal{D}_{f,\pi} &\coloneqq
    \DF{(\pi_{rb})_{r\in\RS,b\in\BS}}{(\pi_r\pi_b)_{r\in\RS,b\in\BS}}
    = \sum_{r\in\RS}\pi_r \sum_{b\in\BS}\pi_b f\left(\frac{\pi_{b\mid r}}{\pi_b}\right).
    \label{eq:traditional_seg_f_divergence}
\end{align}

Different choices of $f$ recover established traditional geographical segregation measures: $f(x)=x\log x$ yields the unnormalized Theil Information Theory index, $f(x)=\tfrac{1}{2}|x-1|$ the unnormalized multigroup Dissimilarity index, and $f(x)=x^2-1$ the squared coefficient of variation \cite{reardon2002measures}.

This common representation motivates our use of $f$-divergences to measure segregation in geosocial networks. Rather than comparing the population distribution $(\pi_{rb})$ with the product of its marginals directly, we extend the $f$-divergence approach to the distribution of geosocial coordinates among connected pairs of individuals under a network model, and compare this with the corresponding distribution under suitable reference models. As shown below, this network formulation recovers \eqref{eq:traditional_seg_f_divergence} under the regional isolation model, and therefore contains the associated traditional geographical segregation indices as exact special cases.

\subsubsection{Geosocial network models and edge-coordinate distributions}

Let $\CS_1$ and $\CS_2$ denote two, potentially different, coordinate spaces. In the following, a geosocial network model $M$ specifies the marginal coordinate distributions $(\pi_{c_1}^M)_{c_1\in\CS_1}$ and $(\pi_{c_2}^M)_{c_2\in\CS_2}$, together with the conditional probability that two individuals with coordinates $c_1$ and $c_2$ are connected,
\begin{equation*}
    \PS{c_1\sim c_2}{}{M}
    \coloneqq
    \PPS{i\sim j\mid C_1^M(i)=c_1,C_2^M(j)=c_2}{}{M}.
\end{equation*}
Here, $C_1^M(i)$ with values in $\CS_1$ and $C_2^M(j)$ with values in $\CS_2$  denote the coordinates of two randomly selected individuals $i$ and $j$ under model $M$ and $\{i\sim j\}$ denotes the event that they share a link.
The networks considered here are undirected, so the underlying connection probability is symmetric when both individuals are represented in the same coordinate space. We nevertheless label the two individuals as first and second, as the two coordinate spaces might differ as detailed below.

The marginal coordinate distributions and conditional connection probabilities together determine the joint distribution of the coordinates of a random pair of individuals and their link status. We define the corresponding conditional coordinate distribution given a link by
\begin{equation*}
\sigma^M(c_1,c_2)
\coloneqq
\PPS{C_1^M(i)=c_1,C_2^M(j)=c_2\mid i\sim j}{}{M}.
\end{equation*}
For ease of exposition, we formulate the segregation measures below using $\sigma^M$. In Supplementary Information, we also develop the corresponding theory for the full joint coordinate--link distribution and show that, for sparse networks, the normalized segregation measures resulting from the two formulations are approximately equal. Indeed, for the inferred Facebook network considered in our empirical analysis, the two formulations agree well beyond the numerical precision relevant to our analyses.

In the empirical analyses, we use the \emph{geo-egocentric geosocial coordinate space}
\begin{equation}
    \CS_1=\RS\times\BS,
    \qquad
    \CS_2=\BS.
    \label{eq:def_geo_egocentric_coord_space}
\end{equation}
The geographical coordinate of the second individual is therefore marginalized out. In particular, with marginal connectivity $\bar{P}^M\coloneqq\PPS{i\sim j}{}{M}$,
the corresponding conditional coordinate distribution given a link is
\begin{equation}
    \sigma^M(r_1b_1,b_2) = \frac{\pi_{r_1b_1}^M\pi_{b_2}^M\PS{r_1b_1\sim b_2}{}{M}}{\bar{P}^M}.
    \label{eq:def_sigma_geo_egocentric}
\end{equation}
By retaining the region and sociodemographic group of the first individual but marginalizing over the region of the second, segregation depends on which sociodemographic groups individuals in each region connect to, rather than on where those connections are geographically located. We therefore consider this geo-egocentric space the appropriate coordinate space for measuring segregation, rather than the full geosocial coordinates of both individuals.

\subsubsection{Representation of segregation via counterfactual and reference models}
\label{sec:segregation_via_counterfactual_models}

We distinguish total, social, geographical and traditional geographical segregation by associating each with a network model. Let $M$ denote the geosocial network model of interest, which may be the true data-generating model or an estimate thereof. Total segregation is defined directly from $M$. The remaining three forms of segregation are defined through counterfactual models that selectively retain or remove the geographical and sociodemographic mechanisms represented by $M$.

Although segregation is measured in the geo-egocentric coordinate space \eqref{eq:def_geo_egocentric_coord_space}, the counterfactual geosocial network models below are most naturally defined using the full geosocial coordinates of both individuals. The regional coordinate of the second individual is subsequently marginalized out when constructing $\sigma$.

\textbf{Social segregation.} The \emph{homogeneous sociodemographic space model} $M_{HB}=M_{HB}(M)$, where $HB$ refers to homogeneous \emph{Blau space}, counterfactually reallocates individuals such that every region has the same sociodemographic composition as the overall population, while retaining the connection probabilities conditional on individuals' geosocial coordinates:
\begin{equation*}
    \begin{aligned}
        \pi_{rb}^{M_{HB}}&=\pi_r^M\pi_b^M,\\
        \PS{r_1b_1\sim r_2b_2}{}{M_{HB}}
        &=\PS{r_1b_1\sim r_2b_2}{}{M}.
    \end{aligned}
\end{equation*}
Thus, $M_{HB}$ removes the geographical concentration of sociodemographic groups while retaining group-specific connectivity patterns, including their variation across geographical space.

\textbf{Geographical segregation.} The \emph{social ambiphily model} $M_{AB}=M_{AB}(M)$, where $AB$ refers to ambiphily in \emph{Blau space}, retains the observed joint distribution of regions and sociodemographic groups, but removes the direct dependence of connectivity on sociodemographic group membership, conditional on geographical coordinates:
\begin{equation*}
\begin{aligned}
    \pi_{rb}^{M_{AB}}&=\pi_{rb}^M,\\
    \PS{r_1b_1\sim r_2b_2}{}{M_{AB}}
    &\coloneqq
    \sum_{b_1^\prime\in\BS}\sum_{b_2^\prime\in\BS}
    \pi_{b_1^\prime}^M\pi_{b_2^\prime}^M
    \PS{r_1b_1^\prime\sim r_2b_2^\prime}{}{M}.
\end{aligned}
\end{equation*}
Averaging using the marginal sociodemographic distribution $(\pi_b^M)_{b\in\BS}$ rather than the local distributions $(\pi_{b\mid r}^M)_{b\in\BS}$ prevents the resulting region-pair connectivity from being weighted by local sociodemographic composition. The resulting connection probability depends on $r_1$ and $r_2$, but not on $b_1$ or $b_2$.

\textbf{Traditional geographical segregation.} The \emph{regional isolation model} $M_{RI}$ retains the observed population distribution but assumes that individuals connect only to individuals in the same predefined region:
\begin{equation*}
    \begin{aligned}
        \pi_{rb}^{M_{RI}}&=\pi_{rb}^M,\\
        \PS{r_1b_1\sim r_2b_2}{}{M_{RI}}
        &=
        \begin{cases}
            \displaystyle
            \frac{\bar{P}^M}{\pi_r^M}, & r_1=r_2=r,\\[0.75em]
            0, & r_1\neq r_2.
        \end{cases}
    \end{aligned}
\end{equation*}
The scaling by $\pi_r^M$ gives all individuals the same marginal connection probability $\bar{P}^M$ and preserves the overall density of the network. The model is therefore sensitive only to the distribution of sociodemographic groups across the chosen regions. 

\textbf{Random mixing and maximal segregation.} For each model $M'\in\left\{M,M_{HB},M_{AB},M_{RI}\right\}$, we define a random-mixing null model and a maximal-segregation reference model directly on the geo-egocentric coordinate space \eqref{eq:def_geo_egocentric_coord_space} used to define the segregation measures below.
We denote the row and column marginals of connectivity under model $M'$ by
\begin{equation*}
\begin{aligned}
    \bar{P}_{r_1b_1}^{M'}
    &\coloneqq \sum_{b_2\in\BS} \pi_{b_2}^{M'} \PS{r_1b_1\sim b_2}{}{M'},\\
    \bar{P}_{b_2}^{M'} 
    &\coloneqq \sum_{r_1\in\RS} \sum_{b_1\in\BS} \pi_{r_1b_1}^{M'}\PS{r_1b_1\sim b_2}{}{M'}.
\end{aligned}
\end{equation*}
The \emph{degree configuration model} $M_{DC}(M')$ \cite{chung2002average,newman2006modularity} preserves these marginal connectivities but otherwise forms links by random mixing:
\begin{equation*}
\begin{aligned}
    \pi^{M_{DC}(M')}
    &=\pi^{M'},\\
    \PS{r_1b_1\sim b_2}{}{M_{DC}(M')}
    &=\frac{\bar{P}_{r_1b_1}^{M'}\bar{P}_{b_2}^{M'}}{\bar{P}^{M'}}.
\end{aligned}
\end{equation*}
It therefore provides a no-segregation null model that controls for differences in expected degree across geosocial and social coordinates.

The \emph{degree-constrained social isolation model} $M_{DSI}(M')$ also preserves the coordinate distribution and marginal connectivities of $M'$, but allocates all connectivity within sociodemographic groups:
\begin{equation*}
\begin{aligned}
    \pi^{M_{DSI}(M')} &=\pi^{M'},\\
    \PS{r_1b_1\sim b_2}{}{M_{DSI}(M')}
    &= \begin{cases}
        \displaystyle\frac{\bar{P}_{r_1b_1}^{M'}}{\pi_{b_1}^{M'}},
        & b_1=b_2,\\[0.75em]
        0,
        & b_1\neq b_2.
    \end{cases}
\end{aligned}
\end{equation*}
Thus, $M_{DC}(M')$ and $M_{DSI}(M')$ preserve the same degree constraints as $M'$, but represent respectively random mixing and complete separation between sociodemographic groups.\footnote{These reference models define valid connection probabilities under mild sparsity conditions typically satisfied in large-scale geosocial networks; precise conditions are given in Supplementary Information.}

\subsubsection{Divergence-based segregation indices}
For any segregation-defining model $M'$ and any generator $f$, we define the unnormalized segregation measure as the $f$-divergence of its conditional coordinate distribution $\sigma^{M'}$, defined in \eqref{eq:def_sigma_geo_egocentric}, from the corresponding distribution under the degree configuration model:
\begin{equation*}
    \mathcal{D}_{f,\sigma}(M')
    \coloneqq
    \DF{\sigma^{M'}}
       {\sigma^{M_{DC}(M')}}.
\end{equation*}
The divergence therefore measures the extent to which the distribution induced
by $M'$ departs from degree-constrained random mixing. 

To place segregation measures on a common scale, we normalize this divergence
by its value under the corresponding degree-constrained social isolation model $M_{DSI}(M')$.
Because $M_{DSI}(M')$ preserves the coordinate distribution and marginal connectivities of $M'$, it has the same associated degree configuration model as $M'$, and we define the normalized segregation index by
\begin{equation}
    S_{f,\sigma}(M')
    \coloneqq
    \frac{\mathcal{D}_{f,\sigma}(M')}{\mathcal{D}_{f,\sigma}\!\left(M_{DSI}(M')\right)}.
    \label{eq:def_normalized_segregation_methods}
\end{equation}

Under the mild conditions stated in Supplementary Lemma~\ref{lemma:normalization_of_indices}, $M_{DSI}(M')$ maximizes the $f$-divergence from $M_{DC}(M')$ among models with the same coordinate
distribution and marginal connectivities as $M'$, so that $0\leq S_{f,\sigma}(M')\leq1$, with zero corresponding to random mixing and one to maximal segregation.

The choices $M,$ $M_{HB}$, $M_{AB},$ and $M_{RI}$ for $M'$ define total, social, geographical, and traditional geographical segregation, respectively. The generator $f$ determines how deviations from the random-mixing model are
weighted. In the empirical analyses, we consider the
Kullback--Leibler divergence, generated by $f(x)=x\log x$, and Total Variation,
generated by $f(x)=\tfrac{1}{2}|x-1|$.

\subsubsection{Homophilous and heterophilous contributions}

The segregation index $S_{f,\sigma}(M')$ measures the magnitude of departures from degree-constrained random mixing, but does not distinguish whether these departures are homophilous or heterophilous. We call a departure homophilous if, under $M'$ relative to $M_{DC}(M')$, it increases within-group connectivity or decreases between-group connectivity, and heterophilous if it decreases within-group connectivity or increases between-group connectivity.

For the $f$-divergence of $\sigma^{M'}$ from $\sigma^{M_{DC}(M')}$, we assign the contribution of each coordinate pair to either a homophilous or heterophilous component according to the direction of its connectivity departure from $M_{DC}(M')$. 
The precise construction and a natural derivation are provided in Supplementary Information. After normalization by the same maximal-segregation reference as in \eqref{eq:def_normalized_segregation_methods}, this yields the normalized non-negative homophilous and heterophilous contributions $\alpha$ and $\beta$, respectively, with $S_{f,\sigma}(M')=\alpha_{f,\sigma}(M')+\beta_{f,\sigma}(M').$ We define the signed homophily measure $\phi$ by
\begin{equation*}
\phi_{f,\sigma}(M')\coloneqq\alpha_{f,\sigma}(M')-\beta_{f,\sigma}(M').
\end{equation*}
Since $-S_{f,\sigma}(M')\leq\phi_{f,\sigma}(M')\leq S_{f,\sigma}(M')$, $\phi$ takes values in $[-1,1]$: positive values indicate predominantly homophilous departures from random mixing and negative values predominantly heterophilous departures.

\subsubsection{Recovery of established segregation measures}
The regional isolation model $M_{RI}$ connects our geosocial framework directly to established measures of traditional geographical segregation. Under $M_{RI}$, connectivity depends only on whether two individuals occupy the same region, so the conditional coordinate distribution is determined entirely by regional sociodemographic composition. Supplementary Lemma~\ref{lemma:relation_between_geo_and_social_f_divergence} establishes that, for any generator $f$,
\begin{equation*}
\mathcal{D}_{f,\sigma}(M_{RI})=\mathcal{D}_{f,\pi},
\end{equation*}
where $\mathcal{D}_{f,\pi}$ is the traditional geographical segregation measure defined in \eqref{eq:traditional_seg_f_divergence}. More generally, the same construction recovers spatially weighted (or egocentric) geographical segregation measures \cite{reardon2004measures} when connectivity is determined by a geographical proximity kernel.

Supplementary Lemma~\ref{lemma:normalization_of_indices} further establishes that, for the generators corresponding to established normalized traditional segregation indices, the normalization induced by $M_{DSI}(M_{RI})$ equals the conventional normalization. Consequently, $S_{f,\sigma}(M_{RI})$ recovers the multigroup Dissimilarity index for $f(x)=\tfrac{1}{2}|x-1|$, the Theil Information Theory index for $f(x)=x\log x$, and the conventionally normalized squared coefficient of variation for $f(x)=x^2-1$ \cite{reardon2002measures}.

For Total Variation, the unnormalized signed homophily measure reduces to excess within-group edge mass relative to degree-constrained random mixing, which is the quantity underlying network modularity \cite{newman2003mixing,newman2006modularity}: for $g(x)=\tfrac{1}{2}|x-1|$, $\phi_{g,\sigma}$ recovers nominal assortativity, while the corresponding unnormalized signed homophily measure recovers network modularity. Under $M_{RI}$, $\phi_{g,\sigma}(M_{RI})$ recovers the Relative Diversity index \cite{reardon2002measures}. These relationships are established in Supplementary Lemma~\ref{lemma:homophily_quantification}, Supplementary Corollary~\ref{corollary:normalized_homophily_quantification}, and Supplementary Lemma~\ref{lemma:recovery_of_relative_diversity}.

\subsubsection{Regional and sociodemographic group decompositions}

The traditional Theil Information Theory index admits regional and sociodemographic group decompositions \cite{reardon2002measures}. The chain-rule property of the Kullback--Leibler divergence allows us to extend both decompositions to geosocial networks.

Let $h(x)=x\log x$. For a partition $\mathcal{S}$ of $\RS$ into coarser regions $s$, Supplementary Lemma~\ref{lemma:regional_decomposition_of_H} establishes
\begin{equation*}
\mathcal{D}_{h,\sigma}(M')
= \mathcal{D}_{h,\sigma}^{\mathcal{S}\leftrightarrow}(M')
+ \sum_s w_s^{M'}\mathcal{D}_{h,\sigma}^{(s)}(M').
\end{equation*}
Here, $w_s^{M'}$ is the share of edges whose first endpoint belongs to coarse region $s$ under $M'$. The between-region component $\mathcal{D}_{h,\sigma}^{\mathcal{S}\leftrightarrow}(M')$ is the Kullback--Leibler divergence between the marginal distributions of $(s,b_2)$ induced by $\sigma^{M'}$ and $\sigma^{M_{DC}(M')}$, and therefore measures segregation associated with differences between the coarse regions $s$. The within-region component $\mathcal{D}_{h,\sigma}^{(s)}(M')$ measures the remaining segregation associated with finer regional and sociodemographic variation within $s$. 

For the sociodemographic group decomposition, let $\mathcal{A}$ partition the sociodemographic groups $\BS$ into coarser groups $a$. Supplementary Lemma~\ref{lemma:group_decomposition_of_H} establishes
\begin{equation*}
\mathcal{D}_{h,\sigma}(M')
= \mathcal{D}_{h,\sigma}^{\mathcal{A}\leftrightarrow}(M')
+ \sum_a w_a^{M'}\mathcal{D}_{h,\sigma}^{(a)}(M').
\end{equation*}
Here, $w_a^{M'}$ is the share of edges whose second endpoint belongs to coarse sociodemographic group $a$ under $M'$. The between-group component $\mathcal{D}_{h,\sigma}^{\mathcal{A}\leftrightarrow}(M')$ is the Kullback--Leibler divergence of $\sigma^{M'}$ from the degree-constrained random-mixing null when replacing the finer groups $b$ by their coarse groups $a$, whereas $\mathcal{D}_{h,\sigma}^{(a)}(M')$ measures the remaining segregation associated with finer sociodemographic distinctions, conditional on the second coordinate belonging to coarse group $a$. Under the regional isolation model $M_{RI}$, both decompositions reduce component by component to the corresponding regional and sociodemographic group decompositions of the traditional Theil Information Theory index \cite{reardon2002measures}.

To obtain normalized decompositions, components are normalized by the corresponding components obtained by applying the same decomposition to the maximal-segregation reference $M_{DSI}(M')$, except for the between-region component, which is normalized by an entropy upper bound on the corresponding divergence. Under $M_{RI}$, these normalizations reduce to those of the traditional normalized Theil decompositions; precise formulations are given in Supplementary Lemma~\ref{lemma:normalization_decompositions}.

\subsection{Data sources and construction of the geosocial population}
\subsubsection{Data sources}

We conduct our empirical analysis at the level of US ZIP Code Tabulation Areas (ZCTAs) and characterize the population along six sociodemographic variables: Urban, Race, Gender, Age, Education, and Income. To reconstruct their joint distribution at ZCTA level, we combine the American Community Survey (ACS) 2017--2021 5-year estimates \cite{acs_5y_2017_2021_techdoc}, the 2020 Decennial Census Demographic and Housing Characteristics File (DHC) \cite{census_2020_dhc_techdoc}, and the Current Population Survey Annual Social and Economic Supplement (CPS ASEC) 2022 \cite{census_mdat_cps_asec2022}. The ACS and DHC provide overlapping ZCTA-level sociodemographic counts, while CPS ASEC provides joint sociodemographic information at state level.

To estimate Facebook (FB) usage across sociodemographic groups, we combine individual-level data from the Pew Research Center Core Trends Survey 2021 \cite{pew_2021_core_trends} and American Trends Panel Wave 112 \cite{pew_atp_wave_112} with age- and gender-specific reach estimates from Meta's Marketing API \cite{meta_reach_estimate_guide,meta_ad_account_reach_estimate_ref}. Network connectivity is measured using Meta's Social Connectedness Index (SCI) \cite{bailey2018sci}, which provides aggregated information on Facebook friendships between geographical regions.

\subsubsection{Region--group cells and census reconstruction}
The Facebook network is restricted to the population aged 13 years or older. We characterize the population by Urban (\cat{Urban}, \cat{Rural}), Race (\cat{White}, \cat{Black}, \cat{Latino}, \cat{Other}), Gender (\cat{Male}, \cat{Female}), Age (\cat{13--24}, \cat{25--34}, \cat{35--44}, \cat{45--54}, \cat{55--64}, \cat{65+}), Education (\cat{less than high school}, \cat{high school}, \cat{some college or associate degree}, \cat{Bachelor's degree}, \cat{graduate or professional degree}), and Income, defined by the ratio of income to the national poverty line (\cat{0--1}, \cat{1--2}, \cat{2--3}, \cat{3--4}, \cat{4--5}, \cat{5+}). Each sociodemographic group $b\in\BS$ is a cross-category of these six variables. For a ZCTA $r\in\RS$ and group $b\in\BS$, let $N_{rb}$ denote the estimated number of individuals in the corresponding region--group cell.

No single Census dataset provides the full joint distribution of these variables at ZCTA level. We therefore reconstruct the joint distribution of Race, Gender, Age, Education, and Income using iterative proportional fitting (IPF) \cite{deming1940least}. For each ZCTA, we construct overlapping ZCTA-level margins for Race--Gender--Age, Gender--Age--Education, and Age--Income from the ACS and DHC data described above. We initialize the joint distribution using the corresponding state-level Race--Gender--Age--Education--Income distribution from CPS ASEC, and iteratively adjust it until the reconstructed distribution matches the available ZCTA-level margins. Differences in category definitions and population universes across the underlying datasets are harmonized before applying IPF; the source-specific recoding, harmonization procedures, and numerical implementation of IPF are described in Supplementary Information.

Finally, we combine the reconstructed Race--Gender--Age--Education--Income distribution with the observed Urban/Rural proportions within each ZCTA. Because joint ZCTA-level data linking Urban/Rural status to the remaining five variables are not available in the data used here, we assume that Urban/Rural status is independent of the remaining five sociodemographic variables within each ZCTA and factorize the corresponding distributions. Most ZCTAs are predominantly either Urban or Rural, limiting the impact of this factorization. The resulting region--group counts $N_{rb}$ define the estimated geosocial distribution of the general population used to construct intervening opportunities and, after adjustment for Facebook usage as described below, the corresponding Facebook-user population.

\subsubsection{Facebook-user estimates}

The census counts $N_{rb}$ describe the general population, whereas not everyone aged 13 years or older uses Facebook. We estimate Facebook usage across geosocial groups by combining the Pew Research Center Core Trends Survey 2021 and American Trends Panel Wave 112, which record individual-level Facebook use together with sociodemographic characteristics and coarse regional information, state- and age--gender-specific reach estimates obtained from Meta's Marketing API, and the census estimates described above. We model individual Facebook usage as a Bernoulli random variable conditional on Urban, Race, Gender, Age, Education, Income, and US Census division, including Age--Gender and Age--Race interactions.

We do not treat the Meta API reach estimates as direct observations of the true numbers of Facebook users. Instead, we model them as noisy observations of latent age--gender-specific Facebook-user counts, allowing for age--gender-specific true-positive and true-negative rates. We fit a Bayesian model jointly to the Pew survey responses and API reach estimates. This construction extends multilevel regression and poststratification (MRP) \cite{gelman1997poststratification}: the survey model estimates Facebook-usage probabilities across sociodemographic cells, which are poststratified using census cell counts, while the API reach estimates provide an additional noisy aggregate measurement layer. The resulting Facebook-usage estimates across sociodemographic groups and states, together with the inferred age--gender-specific true-positive and true-negative rates of the API reach estimates, are shown in Fig.~\ref{fig:fb_usage_post}. Details of the survey harmonization, API measurement model, prior distributions, and likelihood are provided in Supplementary Information.

For each state $s$ and sociodemographic group $b$, let $\hat{p}_{s,b}^{FB}$ denote the resulting estimated Facebook usage rate. Assuming that, conditional on the sociodemographic group, Facebook usage is constant across ZCTAs within a state, we estimate the number of Facebook users in a ZCTA $r$ and group $b$ as
\begin{equation*}
    \hat{N}_{rb}^{FB}=\hat{p}_{s,b}^{FB}N_{rb},
\end{equation*}
where $s$ denotes the state containing $r$. 

\subsubsection{Social Connectedness Index and aggregated relational data}
The Facebook Social Connectedness Index (SCI) provides normalized aggregate Facebook friendship volumes between pairs of geographical regions \cite{bailey2018sci}. The underlying region--region friendship counts constitute aggregated relational data (ARD) at the regional level, and we use the geosocial network models described below to infer how these counts are distributed across sociodemographic group pairs.

Up to a global scaling constant, the SCI equals the probability that a randomly selected pair of distinct Facebook users from the corresponding regions are connected. We estimate this constant by calibrating the average degree implied by the SCI to an average Facebook degree of $350$, consistent with a Pew Research Center estimate of $338$ friends for US adult Facebook users \cite{smith2014facebook}. Combining the rescaled SCI with the estimated Facebook-user populations, we reconstruct region--region friendship counts $\widehat{\ard}_{r_1r_2}$.

Because the released SCI incorporates noise, truncation, and rounding, we treat $\widehat{\ard}_{r_1r_2}$ as a noisy observation of the latent friendship count $\ard_{r_1r_2}$. The corresponding observation model and its numerical approximation, as well as the details of SCI preprocessing and ARD reconstruction, are described in Supplementary Information.

\subsubsection{Geographical distance}

For distinct ZCTAs $r_1$ and $r_2$, we define geographical distance $d^P(r_1,r_2)$ as the geodesic distance between their centroids using US Census Gazetteer coordinates \cite{census_2021_gaz_zcta}. For within-ZCTA pairs, we use the expected distance between two randomly located individuals under an equal-area circular approximation of the ZCTA \cite{solomon1978geometric}, capped at $80\%$ of the distance to the nearest other ZCTA; details are given in Supplementary Information.

\subsection{Geosocial network model and intervening opportunities}
\subsubsection{Distance-based and hybrid geosocial models}
\label{sec:distance_hybrid_models}

We model connectivity as a function of individuals' geographical and sociodemographic coordinates. Geographical distance is a strong predictor of connectivity in large-scale social networks \cite{takhteyev2012geography,scellato2011socio,bailey2020social}; in the SCI data, connection probability decreases approximately as a power law with geographical distance, with an exponent close to $2$. Since individual connection probabilities are small, this motivates a model that is linear in log geographical distance on the logit scale.

Let $\Theta_{b_1b_2}$ denote a symmetric group-pair effect and $d^P(r_1,r_2)$ the geographical distance between regions $r_1$ and $r_2$. As a simple starting point, we consider
\begin{equation}
\PPS{r_1b_1\sim r_2b_2}{}{M_{\mathrm{dist}}}
=\operatorname{logit}^{-1}\left(c_0-c_1\log\left(1+d^P(r_1,r_2)\right)-\Theta_{b_1b_2}\right).
\label{eq:distance_model_connection_probability}
\end{equation}
We refer to this as the \emph{distance-based model}, as the model can be interpreted in terms of distance in joint geosocial space: writing $\Theta_{b_1b_2}=c_2\log\left(1+d^B(b_1,b_2)\right)$, where $d^B$ denotes a latent sociodemographic dissimilarity, the linear predictor depends on the joint geosocial distance
\begin{equation}
d(r_1b_1,r_2b_2)
=c_1\log\left(1+d^P(r_1,r_2)\right)+c_2\log\left(1+d^B(b_1,b_2)\right).
\label{eq:joint_geosocial_distance}
\end{equation}
We use \emph{distance} in a generalized sense, without requiring $d^B$ or $d$ to satisfy the axioms of a metric.

A direct mapping from geographical distance to connection probability can predict substantially larger expected degrees in regions with high population density than in regions with low population density, contrary to empirically observed degree distributions. We address these density-induced degree differences by allowing the intercept and distance-decay coefficient to vary across regions through regional random effects.

A more principled way to address these density-induced degree artefacts is to replace geographical distance by intervening opportunities (IOs) \cite{stouffer1940intervening,liben2005geographic,kumar2006navigating,backstrom2010find,simini2012universal,kotsubo2021kernel}. IO models characterize the separation between individuals $i$ and $j$ by the number of individuals $k$ that are geographically closer to $i$ than $j$ is. At regional resolution, we define
\begin{equation}
R_{r_1\to r_2}^{\mathrm{marg}}\coloneqq\sum_{r\in\RS}N_r\,\mathbb{1}_{\{d^P(r_1,r)\leq d^P(r_1,r_2)\}},
\qquad
R_{r_1r_2}^{\mathrm{marg}}\coloneqq\sqrt{R_{r_1\to r_2}^{\mathrm{marg}}R_{r_2\to r_1}^{\mathrm{marg}}},
\label{eq:marginal_geographical_ios}
\end{equation}
where $N_r$ is the population size of region $r$. We refer to $R_{r_1r_2}^{\mathrm{marg}}$ as the \emph{marginal geographical IOs}. The same geographical distance therefore corresponds to more IOs in densely populated regions than in sparsely populated regions.

We obtain a \emph{hybrid model} by replacing geographical distance in \eqref{eq:distance_model_connection_probability} with marginal geographical IOs while retaining the group-pair effects $\Theta_{b_1b_2}$:
\begin{equation}
\PPS{r_1b_1\sim r_2b_2}{}{M_{\mathrm{hybrid}}}
=\operatorname{logit}^{-1}\left(c_0-c_1\log\left(1+R_{r_1r_2}^{\mathrm{marg}}\right)-\Theta_{b_1b_2}\right).
\label{eq:hybrid_model_connection_probability}
\end{equation}

This accounts for geographical population density but continues to model geographical and sociodemographic differences through separate components. Marginal IOs depend on the total local population and therefore do not account for variation in the local density of different sociodemographic groups. If a group is locally rare and connectivity to other groups is weak, the hybrid model can consequently predict unrealistically low expected degrees for members of that group. We address this limitation below by extending the notion of IOs to joint geosocial space.

\subsubsection{Joint intervening opportunities in geosocial space}
\label{sec:joint_ios_and_joint_io_model}
IOs in joint geosocial space are most naturally defined at the level of individuals. Let $p(i)$ and $b(i)$ denote the geographical and sociodemographic coordinates of individual $i$, respectively, write $d^P(i,j)\equiv d^P(p(i),p(j))$, and let $d(i,j)\equiv c_1\log(1+d^P(i,j))+c_2\log(1+d^B(b(i),b(j)))$ denote their joint geosocial distance, corresponding to \eqref{eq:joint_geosocial_distance}. The directional joint IOs from individual $i$ to individual $j$ count, across all sociodemographic groups $b$, all individuals $k$ in group $b$ whose joint geosocial distance from $i$ is no larger than that of $j$:
\begin{equation*}
r_{i\to j}\coloneqq\sum_{b\in\BS}\sum_{k:b(k)=b}\mathbb{1}_{\{d(i,k)\leq d(i,j)\}}.
\end{equation*}

For an individual $k$ in sociodemographic group $b$, the joint-distance comparison can equivalently be written as
\begin{equation*}
d(i,k)\leq d(i,j)
\;\Longleftrightarrow\;
d^P(i,k)\leq\left(1+d^P(i,j)\right)\kappa_{i\to j}(b)-1,
\end{equation*}
where $\kappa_{i\to j}(b)\coloneqq\left[(1+d^B(b(i),b(j)))/(1+d^B(b(i),b))\right]^{c_2/c_1}$ is a group-$b$-specific rescaling factor. Thus, for each sociodemographic group $b$, joint geosocial distance determines the geographical range around $i$ over which members of that group count as intervening opportunities (Fig.~\ref{fig:IOs}, \textbf{C1}--\textbf{C4}). Groups that are sociodemographically closer to $i$ than the target individual $j$ contribute over a larger geographical range, whereas more distant groups contribute over a smaller range. Hence, joint IOs can be constructed from group-specific geographical IOs, with the relevant geographical ranges determined by the latent sociodemographic distances $d^B$.

Individual geographical coordinates are not available to us. We therefore retain the individual-level definition above but approximate its evaluation using regional group-specific population counts and geographical centroids. These provide a discrete grid of observed values of the group-specific opportunity counts, corresponding to the cumulative number of group members as the geographical range around an individual increases. The socially rescaled ranges implied by $\kappa_{i\to j}(b)$ generally fall between these grid points. Directly interpolating the corresponding group- and region-specific opportunity functions during model fitting would be computationally very expensive, so we instead use a local quadratic approximation to how the cumulative counts change under this rescaling. The precise construction, symmetrization, and computational implementation are described in Supplementary Information.

This yields symmetric joint geosocial IOs $\widehat R_{r_1b_1,r_2b_2}$ between region--group cells, which we map to connection probabilities as
\begin{equation}
\PPS{r_1b_1\sim r_2b_2}{}{M_{\mathrm{IO}}}
=\operatorname{logit}^{-1}\left(c_0-c_1\log\widehat R_{r_1b_1,r_2b_2}\right).
\label{eq:joint_io_model_connection_probability}
\end{equation}

This \emph{joint IO model} contains the hybrid and distance-based models as special cases. As shown in Supplementary Information, if sociodemographic composition is homogeneous across regions, the joint IO construction reduces to marginal geographical IOs combined with a sociodemographic group-pair component, yielding the hybrid model. If, in addition, geographical population density is globally uniform, marginal IOs become proportional to squared geographical distance, yielding the distance-based model.

\subsubsection{Fuzzy intervening opportunities}
\label{sec:fuzzy_ios}
Standard IOs impose a sharp boundary between individuals who count as intervening opportunities and those who do not, an issue that is accentuated at regional resolution because all individuals within a region are assigned the same representative geographical distance. We therefore use \emph{fuzzy IOs} \cite{afandizadeh2012fuzzy,kotsubo2021kernel}, which retain full weight within the standard-IO boundary and assign decreasing weights beyond it. Following ref.~\cite{kotsubo2021kernel}, we use an exponential decay, which we additionally truncate after the next $k$ regions beyond the standard-IO boundary. We select the distance-decay scale $\nu$ and truncation parameter $k$ before fitting the full network models by comparing simple predictive models relating SCI to marginal fuzzy IOs across a grid of parameter values (Extended Data Fig.~\ref{fig:mse_fuzzy_rank_dist}), and use $\nu=4\,\mathrm{km}$ and $k=15$ throughout. The fuzzy weights are used to construct both the marginal geographical IOs entering the hybrid model and the group-specific geographical opportunity counts entering the joint IO model; their precise construction is described in Supplementary Information.

\subsubsection{ARD likelihood and conditional geosocial connectivity}

Throughout this subsection, we assume a geosocial network model $M$ is fixed, and all probability distributions and expectations are implicitly understood to be conditional on its model parameters.

For a pair of regions $r_1,r_2$, let $\ard_{r_1r_2b_1b_2}$ denote the unobserved number of Facebook friendships between group $b_1$ in region $r_1$ and group $b_2$ in region $r_2$. These group-specific friendship volumes sum to the regional ARD, $\ard_{r_1r_2}=\sum_{b_1,b_2\in\BS}\ard_{r_1r_2b_1b_2}$. Correspondingly, their expected values satisfy $\mu_{r_1r_2}^{M}=\sum_{b_1,b_2\in\BS}\mu_{r_1r_2b_1b_2}^{M}$.

Let $E_{r_1r_2b_1b_2}^{FB}$ denote the number of possible Facebook-user pairs between the corresponding region--group cells. For a geosocial network model $M$, the expected group-specific friendship volume is
\begin{equation*}
\mu_{r_1r_2b_1b_2}^{M}=E_{r_1r_2b_1b_2}^{FB}\PPS{r_1b_1\sim r_2b_2}{}{M}.
\end{equation*}

To allow friendship volumes to vary more than would be implied by conditionally independent edges, we model the group-specific friendship volumes as independent Negative Binomial random variables. We use a parameterization under which their sum for a given region pair is itself Negative Binomial with mean $\mu_{r_1r_2}^{M}$; this regional Negative Binomial distribution defines the likelihood for the observed $\ard_{r_1r_2}$. Conditional on the regional friendship volume, the group-specific volumes then follow a Dirichlet--multinomial distribution. The precise Negative Binomial parameterization and resulting conditional distribution are given in Supplementary Information. In particular, their conditional means are
\begin{equation*}
\EE^M\left[\ard_{r_1r_2b_1b_2}\mid\ard_{r_1r_2}=m\right]
=m\frac{\mu_{r_1r_2b_1b_2}^{M}}{\mu_{r_1r_2}^{M}}.
\end{equation*}

Assuming that individual edges, viewed as Bernoulli random variables, are exchangeable among pairs of Facebook users with the same region--group coordinates, for $\ard_{r_1r_2}=m$ the corresponding conditional connection probability is
\begin{equation*}
\PPS{r_1b_1\sim r_2b_2\mid \ard_{r_1r_2}=m}{}{M}
=\frac{m}{\mu_{r_1r_2}^{M}}\PPS{r_1b_1\sim r_2b_2}{}{M}.
\end{equation*}

In our empirical analysis, conditioning on the observed regional ARD fixes the total friendship volume between each pair of regions at its observed value, while the fitted geosocial network model determines how this volume is distributed across sociodemographic group pairs. This avoids relying on the unconditional model prediction $\mu_{r_1r_2}^{M}$ for the overall connectivity of a particular region pair. The conditional setup does not depend on the specific functional form of the connection probabilities of $M$ and can, with appropriate modifications, also be applied to finer observed ARD that reports some, but not all, sociodemographic coordinates of individuals.

\subsubsection{Factorization of sociodemographic space}
\label{sec:factorization_of_sociodemographic_space}

The six sociodemographic variables Urban, Race, Gender, Age, Education, and Income define $2\times4\times2\times6\times5\times6=2880$ cross-categories, corresponding to more than $4.1$ million distinct unordered sociodemographic group pairs. Direct inference over this full sociodemographic space is computationally very expensive. We therefore factorize the geosocial network models into lower-dimensional groups of sociodemographic variables. The resulting approximation and its implementation for the joint IO, hybrid, and distance-based models are described in Supplementary Information.

For model comparison of these three models, we use the factorization $\{\{U,G\},\{R,A\},\{E,I\}\}$, where U, G, R, A, E, and I denote Urban, Gender, Race, Age, Education, and Income, respectively. For our final estimates, we fit models using three alternative factorizations, $\{\{U,G\},\{R,A\},\{E,I\}\}$, $\{\{U,G\},\{R,E\},\{A,I\}\}$, and $\{\{U,G\},\{R,I\},\{A,E\}\}$. Marginal connection probabilities are similar across the three factorizations (Extended Data Fig.~\ref{fig:marg_conn_different_factorizations}). To minimize the dependence of our results on any particular factorization, we combine their posterior predictive distributions using predictive stacking \cite{yao_using_2018}, with stacking weights estimated from leave-one-out predictive densities using PSIS-LOO \cite{vehtari2017practical}.

\subsubsection{Model extensions}
\label{sec:model_extensions}
We consider the following model extensions; details are provided in Supplementary Information.

\noindent\textbf{Geosocial interaction.} In the baseline joint IO model, the relative weighting of sociodemographic and geographical distance is governed by the constant ratio $q\coloneqq c_2/c_1$, see \eqref{eq:joint_geosocial_distance} and Section~\ref{sec:joint_ios_and_joint_io_model}. We relax this assumption by modelling $q$ as a positive function of geographical distance.

\noindent\textbf{Race and urban interactions.} In the baseline model, the sociodemographic distance $d^B$ is constructed from lower-dimensional variable-specific distances such that $\log(1+d^B)$ is additive across sociodemographic variables. We relax this assumption by allowing the magnitude of race distance to vary with age, education, or income, and by allowing the magnitude of race, education, or income distances to vary with urban/rural status.

\noindent\textbf{Local traditional geographical segregation as predictor.} To analyze whether local traditional geographical segregation covaries systematically with local social segregation, we allow the magnitude of race, education, and income distances to depend locally on two measures of local traditional geographical segregation. The first compares the sociodemographic composition of the focal ZCTA with the countrywide distribution. The second measures geographical segregation within a local neighbourhood formed by the focal ZCTA and 10 geographically adjacent ZCTAs.

\subsection{Inference and model comparison}
\label{sec:inference_and_model_comparison}

We fit the Bayesian geosocial network models separately to 40 US state-level analysis units, with several smaller US states combined and analyzed jointly. For computational tractability, we restrict each state-level fit to at most $100{,}000$ region pairs, retaining pairs at shorter geographical distances and keeping the same number of destination ZCTAs for each focal ZCTA. All models are implemented in NumPyro \cite{phan2019composable}. Models without sociodemographic structure are first fitted using Hamiltonian Monte Carlo (HMC), and the resulting posterior estimates are used to initialize the corresponding models with sociodemographic structure, which are fitted using stochastic variational inference (SVI).

We compare different geosocial network models using pointwise Pareto-smoothed importance-sampling leave-one-out cross-validation (PSIS-LOO) \cite{vehtari2017practical}, and report differences in their expected log predictive densities. Posterior predictive checks of regional ARD, residual variance, and residual spatial autocorrelation provide additional assessments of model fit, see Extended Data Fig.~\ref{fig:ppc-combined-prior-moran-cv}.

For the final joint IO model used for our segregation estimates, we share information across state-level analysis units. Ideally, information about social distances would be pooled across states in a joint hierarchical model, but fitting such a model is computationally prohibitive. We therefore use a multistage approximation: for each analysis unit, prior distributions are estimated from posterior distributions obtained for the remaining 39 units before refitting the model for that unit. Related multistage approaches have been used to approximate computationally demanding hierarchical Bayesian models \cite{johnson2022greater,lunn2013fully}.

Further details on the state-level analysis units, region-pair selection, prior distributions, inference settings and diagnostics, and multistage approximation are provided in Supplementary Information.

\subsection{Construction of connectivity and segregation estimates}
\label{sec:construction_of_segregation_estimates}
From the final model fits, we obtain posterior estimates of the conditional geosocial connectivities $\PPS{r_1b_1\sim r_2b_2\mid\ard_{r_1r_2}=m}{}{M}$, from which we construct the social, geographical, and traditional geographical segregation models $M_{HB}$, $M_{AB}$, and $M_{RI}$ as defined in Section~\ref{sec:segregation_via_counterfactual_models}; the fitted model $M$ itself defines total segregation. For each of these four segregation-defining models $M'\in\{M,M_{HB},M_{AB},M_{RI}\}$, by marginalizing over ZCTA $r_2$ and over sociodemographic variables not entering a given analysis, we construct the corresponding geo-egocentric edge distributions $\sigma^{M'}(r_1a_1,a_2)$ for Race, Education, and Income separately, and jointly for Race with Age, Education, or Income. We use these distributions to compute the segregation measures and their regional and sociodemographic group decompositions as shown in Fig.~\ref{fig:seg_bar_plots_reg_group_decomp}.

By additionally marginalizing over the focal ZCTA $r_1$, we obtain the countrywide marginal group-to-group connection probabilities shown in Fig.~\ref{fig:marg_conn_hom_prior}.

\section*{Data availability}
All data used in this study were publicly available at the time of access. U.S. Census data were obtained from the American Community Survey, the 2020 Decennial Census Demographic and Housing Characteristics File, and the Current Population Survey Annual Social and Economic Supplement \cite{acs_5y_2017_2021_techdoc,census_2020_dhc_techdoc,census_mdat_cps_asec2022}; Facebook-usage survey data were obtained from the Pew Research Center \cite{pew_2021_core_trends,pew_atp_wave_112}; Facebook reach estimates were obtained through Meta's Marketing API \cite{meta_reach_estimate_guide,meta_ad_account_reach_estimate_ref}; and Social Connectedness Index data were obtained from the publicly released SCI dataset \cite{bailey2018sci}.

\section*{Code availability}
The code used for this study will be made publicly available in a future version of this preprint.




\printbibliography

@article{mcpherson2001birds,
  title={Birds of a feather: Homophily in social networks},
  author={McPherson, Miller and Smith-Lovin, Lynn and Cook, James M},
  journal={Annual review of sociology},
  volume={27},
  number={1},
  pages={415--444},
  year={2001},
  publisher={Annual Reviews 4139 El Camino Way, PO Box 10139, Palo Alto, CA 94303-0139, USA}
}

@book{newman2018networks,
  title={Networks},
  author={Newman, Mark},
  year={2018},
  publisher={Oxford university press}
}

@article{kotsubo2021kernel,
  title={Kernel-based formulation of intervening opportunities for spatial interaction modelling},
  author={Kotsubo, Masaki and Nakaya, Tomoki},
  journal={Scientific reports},
  volume={11},
  number={1},
  pages={950},
  year={2021},
  publisher={Nature Publishing Group UK London}
}

@article{afandizadeh2012fuzzy,
  title={A fuzzy intervening opportunity model to predict home-based shopping trips},
  author={Afandizadeh, Shahriar and Hamedani, Seyed Mehdi Yadi},
  journal={Canadian Journal of Civil Engineering},
  volume={39},
  number={2},
  pages={203--222},
  year={2012},
  publisher={NRC Research Press}
}

@article{bailey2020social,
  title={Social connectedness in urban areas},
  author={Bailey, Michael and Farrell, Patrick and Kuchler, Theresa and Stroebel, Johannes},
  journal={Journal of Urban Economics},
  volume={118},
  pages={103264},
  year={2020},
  publisher={Elsevier}
}

@article{vehtari2017practical,
  title={Practical Bayesian model evaluation using leave-one-out cross-validation and WAIC},
  author={Vehtari, Aki and Gelman, Andrew and Gabry, Jonah},
  journal={Statistics and computing},
  volume={27},
  number={5},
  pages={1413--1432},
  year={2017},
  publisher={Springer}
}

@article{reardon2002measures,
  title={Measures of multigroup segregation},
  author={Reardon, Sean F and Firebaugh, Glenn},
  journal={Sociological methodology},
  volume={32},
  number={1},
  pages={33--67},
  year={2002},
  publisher={Wiley Online Library}
}

@article{reardon2004measures,
  title={Measures of spatial segregation},
  author={Reardon, Sean F and O’Sullivan, David},
  journal={Sociological methodology},
  volume={34},
  number={1},
  pages={121--162},
  year={2004},
  publisher={Wiley Online Library}
}

@article{lee2008beyond,
  title={Beyond the census tract: Patterns and determinants of racial segregation at multiple geographic scales},
  author={Lee, Barrett A and Reardon, Sean F and Firebaugh, Glenn and Farrell, Chad R and Matthews, Stephen A and O'Sullivan, David},
  journal={American sociological review},
  volume={73},
  number={5},
  pages={766--791},
  year={2008},
  publisher={Sage Publications Sage CA: Los Angeles, CA}
}

@article{roberto2018spatial,
  title={The spatial proximity and connectivity method for measuring and analyzing residential segregation},
  author={Roberto, Elizabeth},
  journal={Sociological Methodology},
  volume={48},
  number={1},
  pages={182--224},
  year={2018},
  doi={10.1177/0081175018796871},
  publisher={SAGE Publications Sage CA: Los Angeles, CA}
}

@article{kazmina2024socio,
  title={Socio-economic segregation in a population-scale social network},
  author={Kazmina, Yuliia and Heemskerk, Eelke M and Bok{\'a}nyi, Eszter and Takes, Frank W},
  journal={Social Networks},
  volume={78},
  pages={279--291},
  year={2024},
  publisher={Elsevier}
}

@article{newman2003mixing,
  title={Mixing patterns in networks},
  author={Newman, Mark EJ},
  journal={Physical review E},
  volume={67},
  number={2},
  pages={026126},
  year={2003},
  publisher={APS}
}

@article{bojanowski_measuring_2014,
	title = {Measuring segregation in social networks},
	volume = {39},
	issn = {0378-8733},
	url = {https://www.sciencedirect.com/science/article/pii/S0378873314000239},
	doi = {10.1016/j.socnet.2014.04.001},
	urldate = {2024-12-11},
	journal = {Social Networks},
	author = {Bojanowski, Michał and Corten, Rense},
	month = oct,
	year = {2014},
	pages = {14--32},
}

@article{xu2019quantifying,
  title={Quantifying segregation in an integrated urban physical-social space},
  author={Xu, Yang and Belyi, Alexander and Santi, Paolo and Ratti, Carlo},
  journal={Journal of the Royal Society Interface},
  volume={16},
  number={160},
  pages={20190536},
  year={2019},
  publisher={The Royal Society}
}

@article{chetty2022social,
  title={Social capital I: measurement and associations with economic mobility},
  author={Chetty, Raj and Jackson, Matthew O and Kuchler, Theresa and Stroebel, Johannes and Hendren, Nathaniel and Fluegge, Robert B and Gong, Sara and Gonzalez, Federico and Grondin, Armelle and Jacob, Matthew and others},
  journal={Nature},
  volume={608},
  number={7921},
  pages={108--121},
  year={2022},
  publisher={Nature Publishing Group UK London}
}

@article{liben2005geographic,
  title={Geographic routing in social networks},
  author={Liben-Nowell, David and Novak, Jasmine and Kumar, Ravi and Raghavan, Prabhakar and Tomkins, Andrew},
  journal={Proceedings of the National Academy of Sciences},
  volume={102},
  number={33},
  pages={11623--11628},
  year={2005},
  publisher={National Academy of Sciences}
}

@inproceedings{kumar2006navigating,
  title={Navigating low-dimensional and hierarchical population networks},
  author={Kumar, Ravi and Liben-Nowell, David and Tomkins, Andrew},
  booktitle={European Symposium on Algorithms},
  pages={480--491},
  year={2006},
  organization={Springer}
}

@inproceedings{backstrom2010find,
  title={Find me if you can: improving geographical prediction with social and spatial proximity},
  author={Backstrom, Lars and Sun, Eric and Marlow, Cameron},
  booktitle={Proceedings of the 19th international conference on World wide web},
  pages={61--70},
  year={2010}
}

@article{stouffer1940intervening,
  title={Intervening opportunities: a theory relating mobility and distance},
  author={Stouffer, Samuel A},
  journal={American sociological review},
  volume={5},
  number={6},
  pages={845--867},
  year={1940},
  publisher={JSTOR}
}

@article{simini2012universal,
  title={A universal model for mobility and migration patterns},
  author={Simini, Filippo and Gonz{\'a}lez, Marta C and Maritan, Amos and Barab{\'a}si, Albert-L{\'a}szl{\'o}},
  journal={Nature},
  volume={484},
  number={7392},
  pages={96--100},
  year={2012},
  publisher={Nature Publishing Group UK London}
}

@article{takhteyev2012geography,
  title={Geography of Twitter networks},
  author={Takhteyev, Yuri and Gruzd, Anatoliy and Wellman, Barry},
  journal={Social networks},
  volume={34},
  number={1},
  pages={73--81},
  year={2012},
  publisher={Elsevier}
}

@inproceedings{scellato2011socio,
  title={Socio-spatial properties of online location-based social networks},
  author={Scellato, Salvatore and Noulas, Anastasios and Lambiotte, Renaud and Mascolo, Cecilia},
  booktitle={Proceedings of the international AAAI conference on web and social media},
  volume={5},
  number={1},
  pages={329--336},
  year={2011}
}

@article{wong1997spatial,
  title={Spatial dependency of segregation indices},
  author={Wong, David WS},
  journal={Canadian Geographer/Le Geographe Canadien},
  volume={41},
  number={2},
  pages={128--136},
  year={1997},
  publisher={Wiley Online Library}
}

@article{openshaw1984modifiable,
  title={The modifiable areal unit problem},
  author={Openshaw, Stan},
  journal={Concepts and techniques in modern geography},
  year={1984},
  publisher={GeoBooks}
}

@article{bailey2018sci,
  title={{Social Connectedness: Measurement, Determinants, and Effects}},
  author={Bailey, Michael and Cao, Rachel and Kuchler, Theresa and Stroebel, Johannes and Wong, Arlene},
  journal={Journal of Economic Perspectives},
  volume={32},
  number={3},
  pages={259--80},
  year={2018},
  doi={10.1257/jep.32.3.259}
}

@article{moody2001race,
  title={Race, school integration, and friendship segregation in America},
  author={Moody, James},
  journal={American journal of Sociology},
  volume={107},
  number={3},
  pages={679--716},
  year={2001},
  publisher={The University of Chicago Press}
}

@article{holland1983stochastic,
  title={Stochastic blockmodels: First steps},
  author={Holland, Paul W and Laskey, Kathryn Blackmond and Leinhardt, Samuel},
  journal={Social networks},
  volume={5},
  number={2},
  pages={109--137},
  year={1983},
  publisher={Elsevier}
}

@article{karrer2011stochastic,
  title={Stochastic blockmodels and community structure in networks},
  author={Karrer, Brian and Newman, Mark EJ},
  journal={Physical Review E—Statistical, Nonlinear, and Soft Matter Physics},
  volume={83},
  number={1},
  pages={016107},
  year={2011},
  publisher={APS}
}

@article{newman2006modularity,
  title={Modularity and community structure in networks},
  author={Newman, Mark EJ},
  journal={Proceedings of the national academy of sciences},
  volume={103},
  number={23},
  pages={8577--8582},
  year={2006},
  publisher={National Academy of Sciences}
}

@article{chung2002average,
  title={The average distances in random graphs with given expected degrees},
  author={Chung, Fan and Lu, Linyuan},
  journal={Proceedings of the National Academy of Sciences},
  volume={99},
  number={25},
  pages={15879--15882},
  year={2002},
  publisher={National Academy of Sciences}
}

@article{mccormick2012latent,
  title={Latent demographic profile estimation in hard-to-reach groups},
  author={McCormick, Tyler H and Zheng, Tian},
  journal={The annals of applied statistics},
  volume={6},
  number={4},
  pages={1795},
  year={2012}
}

@article{maltiel2015estimating,
  title={Estimating population size using the network scale up method},
  author={Maltiel, Rachael and Raftery, Adrian E and McCormick, Tyler H and Baraff, Aaron J},
  journal={The annals of applied statistics},
  volume={9},
  number={3},
  pages={1247},
  year={2015}
}

@article{feehan2016generalizing,
  title={Generalizing the network scale-up method: a new estimator for the size of hidden populations},
  author={Feehan, Dennis M and Salganik, Matthew J},
  journal={Sociological methodology},
  volume={46},
  number={1},
  pages={153--186},
  year={2016},
  publisher={Sage Publications Sage CA: Los Angeles, CA}
}

@book{polyanskiy2025information,
  title={Information theory: From coding to learning},
  author={Polyanskiy, Yury and Wu, Yihong},
  year={2025},
  publisher={Cambridge university press}
}

@book{blau1977inequality,
  title={Inequality and Heterogeneity: A Primitive Theory of Social Structure},
  author={Blau, Peter M.},
  year={1977},
  publisher={Free Press},
  address={New York}
}

@book{blau1994structural,
  title={Structural contexts of opportunities},
  author={Blau, Peter M},
  year={1994},
  publisher={University of Chicago Press}
}

@article{mcpherson1983ecology,
  title={An ecology of affiliation},
  author={McPherson, Miller},
  journal={American Sociological Review},
  pages={519--532},
  year={1983},
  publisher={JSTOR}
}

@article{yao_using_2018,
	title = {Using {Stacking} to {Average} {Bayesian} {Predictive} {Distributions} (with {Discussion})},
	volume = {13},
	issn = {1936-0975, 1931-6690},
	url = {https://projecteuclid.org/journals/bayesian-analysis/volume-13/issue-3/Using-Stacking-to-Average-Bayesian-Predictive-Distributions-with-Discussion/10.1214/17-BA1091.full},
	doi = {10.1214/17-BA1091},
	number = {3},
	urldate = {2025-09-18},
	journal = {Bayesian Analysis},
	author = {Yao, Yuling and Vehtari, Aki and Simpson, Daniel and Gelman, Andrew},
	month = sep,
	year = {2018},
	note = {Publisher: International Society for Bayesian Analysis},
	pages = {917--1007},
}

@article{borgonovo2025convexity,
  title={Convexity and measures of statistical association},
  author={Borgonovo, Emanuele and Figalli, Alessio and Ghosal, Promit and Plischke, Elmar and Savar{\'e}, Giuseppe},
  journal={Journal of the Royal Statistical Society Series B: Statistical Methodology},
  volume={87},
  number={4},
  pages={1281--1304},
  year={2025},
  publisher={Oxford University Press UK}
}

@article{deming1940least,
  title={On a least squares adjustment of a sampled frequency table when the expected marginal totals are known},
  author={Deming, W Edwards and Stephan, Frederick F},
  journal={The Annals of Mathematical Statistics},
  volume={11},
  number={4},
  pages={427--444},
  year={1940},
  publisher={JSTOR}
}

@misc{acs_5y_2017_2021_techdoc,
  author = {{U.S. Census Bureau}},
  title  = {2017--2021 American Community Survey (ACS) 5-Year Estimates: Technical Documentation and User Notes},
  year   = {2022},
  url    = {https://www.census.gov/programs-surveys/acs/technical-documentation/table-and-geography-changes/2021/5-year.html},
  urldate= {2025-10-03}
}

@misc{census_2020_dhc_techdoc,
  author = {{U.S. Census Bureau}},
  title  = {2020 Census Demographic and Housing Characteristics (DHC) File: Technical Documentation},
  year   = {2023},
  url    = {https://www2.census.gov/programs-surveys/decennial/2020/technical-documentation/complete-tech-docs/demographic-and-housing-characteristics-file-and-demographic-profile/2020census-demographic-and-housing-characteristics-file-and-demographic-profile-techdoc.pdf},
  urldate= {2025-10-03}
}

@dataset{census_mdat_cps_asec2022,
  author       = {{U.S. Census Bureau}},
  title        = {CPS ASEC 2022 microdata: custom tabulation via MDAT},
  year         = {2022},
  howpublished = {Microdata Access Tool (MDAT)},
  url          = {https://data.census.gov/app/mdat/CPSASEC2022/},
  urldate      = {2023-08-18},
  note         = {Person weight: MARSUPWT; geography: 50 states + DC.}
}

@dataset{pew_2021_core_trends,
  author       = {{Pew Research Center}},
  title        = {2021 Core Trends Survey},
  year         = {2021},
  howpublished = {Dataset, Pew Research Center},
  address      = {Washington, DC},
  url          = {https://www.pewresearch.org/internet/dataset/2021-core-trends-survey/},
  urldate      = {2023-09-06},
  note         = {Field dates: Jan.~25--Feb.~8,~2021. Dataset page links reports such as \emph{Social Media Use in 2021} and \emph{Mobile Technology and Home Broadband 2021}.}
}

@dataset{pew_atp_wave_112,
  author       = {{Pew Research Center}},
  title        = {American Trends Panel (ATP), Wave 112},
  year         = {2022},
  howpublished = {Dataset, Pew Research Center},
  address      = {Washington, DC},
  url          = {https://www.pewresearch.org/internet/dataset/american-trends-panel-wave-112/},
  urldate      = {2023-10-10},
  note         = {Field dates: July~18--Aug.~21,~2022; topic: social media use (update). Topline: total $N{=}12{,}147$, MOE $\pm1.4$~pp.}
}

@misc{smith2014facebook,
  author       = {Smith, Aaron},
  title        = {What People Like and Dislike About Facebook},
  year         = {2014},
  month        = feb,
  day          = {3},
  organization = {Pew Research Center},
  url          = {https://www.pewresearch.org/short-reads/2014/02/03/what-people-like-dislike-about-facebook/},
  note         = {Accessed 13 August 2026}
}

@misc{meta_reach_estimate_guide,
  author = {{Meta for Developers}},
  title  = {Reach Estimate API --- Marketing API},
  year   = {2023},
  note   = {Accessed 2023-09-19},
  url    = {https://developers.facebook.com/docs/marketing-api/audiences/guides/reach-estimate/}
}

@misc{meta_ad_account_reach_estimate_ref,
  author = {{Meta for Developers}},
  title  = {Ad Account Reach Estimate --- Marketing API Reference},
  year   = {2023},
  note   = {Accessed 2023-09-19},
  url    = {https://developers.facebook.com/docs/marketing-api/reference/ad-account/reachestimate/}
}

@dataset{census_2021_gaz_zcta,
  author    = {{U.S. Census Bureau}},
  title     = {2021 Gazetteer Files — ZIP Code Tabulation Areas (ZCTAs): 2021\_Gaz\_zcta\_national.txt},
  year      = {2021},
  publisher = {U.S. Census Bureau},
  url       = {https://www2.census.gov/geo/docs/maps-data/data/gazetteer/2021_Gazetteer/2021_Gaz_zcta_national.zip},
  urldate   = {2022-02-04}
}

@book{solomon1978geometric,
  title={Geometric probability},
  author={Solomon, Herbert},
  year={1978},
  publisher={SIAM}
}

@article{johnson2022greater,
  title={Greater Than the Sum of its Parts: Computationally Flexible Bayesian Hierarchical Modeling},
  author={Johnson, Devin S. and Brost, Brian M. and Hooten, Mevin B.},
  journal={Journal of Agricultural, Biological and Environmental Statistics},
  volume={27},
  pages={382--400},
  year={2022},
  doi={10.1007/s13253-021-00485-9}
}

@article{lunn2013fully,
  title={Fully Bayesian Hierarchical Modelling in Two Stages, with Application to Meta-Analysis},
  author={Lunn, David and Barrett, Jessica and Sweeting, Michael and Thompson, Simon},
  journal={Journal of the Royal Statistical Society: Series C (Applied Statistics)},
  volume={62},
  number={4},
  pages={551--572},
  year={2013},
  doi={10.1111/rssc.12007}
}

@article{gelman1997poststratification,
  title={Poststratification into Many Categories Using Hierarchical Logistic Regression},
  author={Gelman, Andrew and Little, Thomas C.},
  journal={Survey Methodology},
  volume={23},
  number={2},
  pages={127--135},
  year={1997}
}

@article{phan2019composable,
  title={Composable Effects for Flexible and Accelerated Probabilistic Programming in {NumPyro}},
  author={Phan, Du and Pradhan, Neeraj and Jankowiak, Martin},
  journal={arXiv preprint arXiv:1912.11554},
  year={2019}
}

@article{schelling1971segregation,
author = {Thomas C. Schelling},
title = {Dynamic models of segregation},
journal = {The Journal of Mathematical Sociology},
volume = {1},
number = {2},
pages = {143--186},
year = {1971},
publisher = {Routledge},
doi = {10.1080/0022250X.1971.9989794},


URL = { 
    
        https://doi.org/10.1080/0022250X.1971.9989794
    
    

},
eprint = { 
    
        https://doi.org/10.1080/0022250X.1971.9989794
    
    

}
}

@article{massey1988segregation,
    author = {Massey, Douglas S. and Denton, Nancy A.},
    title = {The Dimensions of Residential Segregation},
    journal = {Social Forces},
    volume = {67},
    number = {2},
    pages = {281-315},
    year = {1988},
    month = {12},
    issn = {0037-7732},
    doi = {10.1093/sf/67.2.281},
    url = {https://doi.org/10.1093/sf/67.2.281},
    eprint = {https://academic.oup.com/sf/article-pdf/67/2/281/6514769/67-2-281.pdf},
}

@article{duncan1955segregation,
 ISSN = {00031224},
 URL = {http://www.jstor.org/stable/2088328},
 author = {Otis Dudley Duncan and Beverly Duncan},
 journal = {American Sociological Review},
 number = {2},
 pages = {210--217},
 publisher = {[American Sociological Association, Sage Publications, Inc.]},
 title = {A Methodological Analysis of Segregation Indexes},
 urldate = {2026-09-05},
 volume = {20},
 year = {1955}
}

@article{dunbar2016onlineoffline,
    author = {Dunbar, R. I. M.},
    title = {Do online social media cut through the constraints that limit the size of offline social networks?},
    journal = {Royal Society Open Science},
    volume = {3},
    number = {1},
    pages = {150292},
    year = {2016},
    month = {01},
    issn = {2054-5703},
    doi = {10.1098/rsos.150292},
    url = {https://doi.org/10.1098/rsos.150292},
    eprint = {https://royalsocietypublishing.org/rsos/article-pdf/doi/10.1098/rsos.150292/224013/rsos.150292.pdf},
}

@article{buchel2020residentialmobility,
title = {Calling from the outside: The role of networks in residential mobility},
journal = {Journal of Urban Economics},
volume = {119},
pages = {103277},
year = {2020},
issn = {0094-1190},
doi = {10.1016/j.jue.2020.103277},
url = {https://www.sciencedirect.com/science/article/pii/S0094119020300486},
author = {Konstantin Büchel and Maximilian V. Ehrlich and Diego Puga and Elisabet Viladecans-Marsal}
}

\newpage
\section{Extended Data}
\setcounter{figure}{0}
{
\captionsetup[figure]{name=Extended Data Fig.}

\begin{figure}[htbp]
    \centering
    \includegraphics[width=0.9\textwidth]{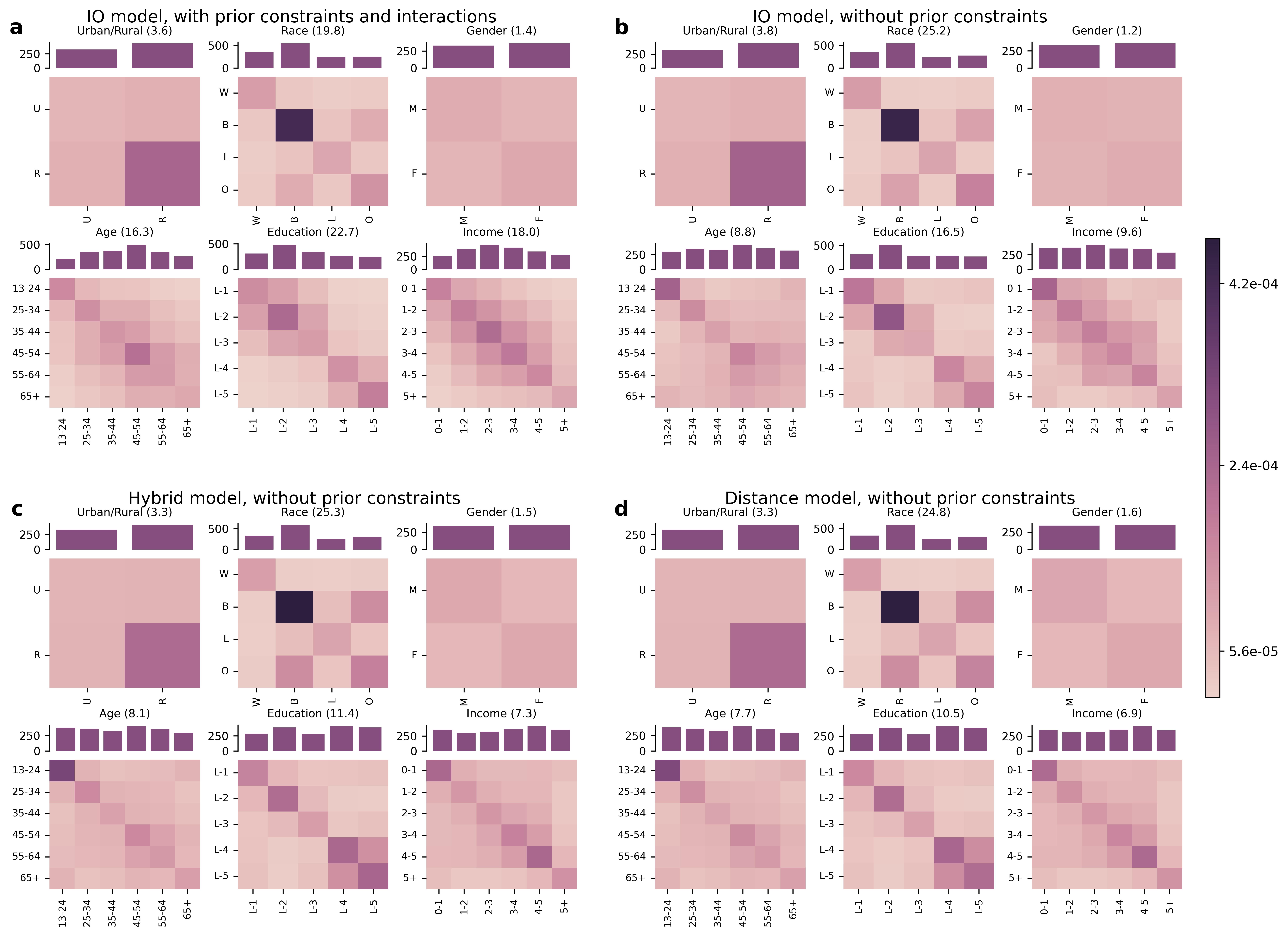}  
    \caption{\emph{Marginal connectivity patterns are qualitatively similar across different model fits.} \textbf{a}, Marginal connectivity between various sociodemographic groups as inferred from a joint geosocial intervening opportunities model with homophily constraints built into the priors (except for the variable \emph{Urban}). \textbf{b}, Marginal connectivity as inferred from a joint geosocial intervening opportunities model with agnostic priors, i.e., no homophily constraints. \textbf{c}, Marginal connectivity as inferred from a hybrid model, where geographical space is modeled via intervening opportunities, and social space as a linear logit model. \textbf{d}, Marginal connectivity as inferred from a purely distance-based model (essentially implemented as a linear logit model). Overall, the qualitative patterns are similar across these four different model fits. However, there is substantial variation in the estimated marginal degrees of various age, education, and income groups --- the inferred degrees should be interpreted with caution. Furthermore, if age connectivity is inferred without homophily constraints, \textbf{b}--\textbf{d}, there seems to be a slight increase in connectivity between the youngest and oldest age groups compared to a model with an ordinal homophily constraint for age as depicted in \textbf{a}; to the extent that this higher young-old connectivity corresponds to the ground truth, model \textbf{a} is slightly misspecified.}
    \label{fig:marg_conn_4models}
\end{figure}

\begin{figure}[htbp]
    \centering 
    \includegraphics[width=0.9\textwidth]{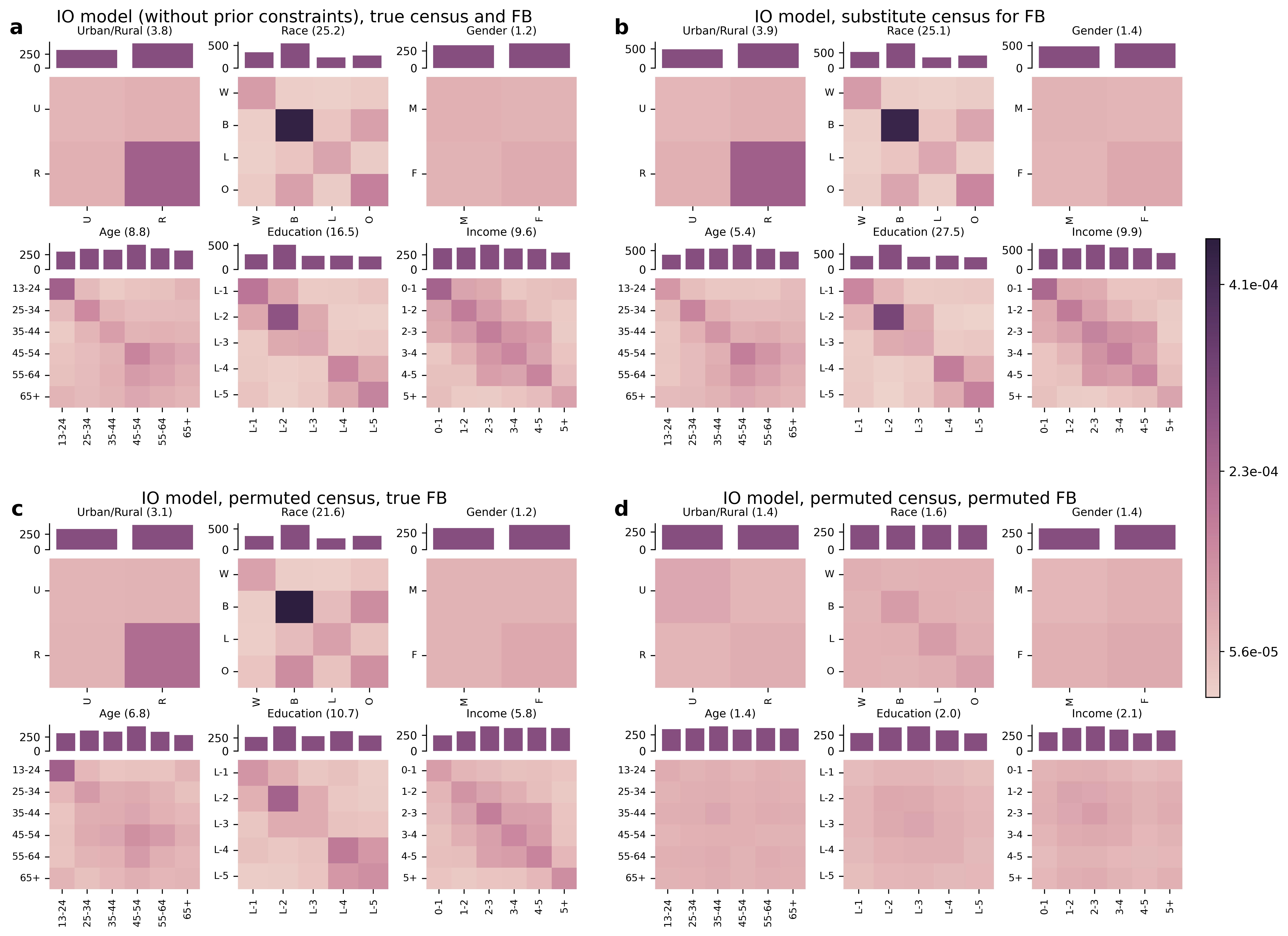}  
    \caption{\emph{Marginal connectivity patterns remain similar when substituting census counts for Facebook user counts.} \textbf{a}, Marginal connectivity between various sociodemographic groups as inferred from a joint geosocial intervening opportunities model with agnostic priors, i.e., without homophily constraints. Intervening opportunities are derived from census counts; the exposure, i.e., the number of possible links between pairs of geosocial groups, is given by the product of estimated FB users in those groups.  \textbf{b}, If exposure is computed as the product of census counts (instead of estimated FB user counts), the inferred connectivity patterns remain qualitatively similar, except for age; the greater sensitivity of the age results to estimated FB usage is not surprising given that our usage estimates vary most strongly with age.
    \textbf{c}, If intervening opportunities are derived from permuted census counts, i.e., proportions of social groups per ZCTA are randomly permuted, but exposure is still computed with unpermuted estimated FB counts, homophily decreases substantially for education, income and age; the number in brackets next to each variable indicates the ratio of the largest to the smallest entry in the depicted heatmaps --- compared to  \textbf{a}, these ratios are substantially smaller for education and income in \textbf{c}, and age connectivity patterns appear qualitatively different. \textbf{d}, If both census and FB counts are randomly permuted over ZCTAs, there is only signal for the decay in connectivity with geographical distance, but no signal for social connectivity left. Consequently, marginal connectivities for race still exhibit some slight degree of homophily (as after permutation, geographical segregation still exists), whereas inferred age, education, and income connectivity patterns appear mostly random. 
    Pairwise model comparisons for each of the 40 US states (we merged several smaller states into larger analysis units) confirm that the model fitted to the unperturbed census and FB data in \textbf{a} always provides a better fit to the SCI data than a model fitted to any of the perturbed data in \textbf{b}, \textbf{c} or \textbf{d}, i.e., the weakened or changed social connectivity patterns go hand in hand with less explanatory power.}
    \label{fig:marg_conn_4models_placebo}
\end{figure}

\begin{figure}[htbp]
    \centering
    \includegraphics[width=0.9\textwidth]{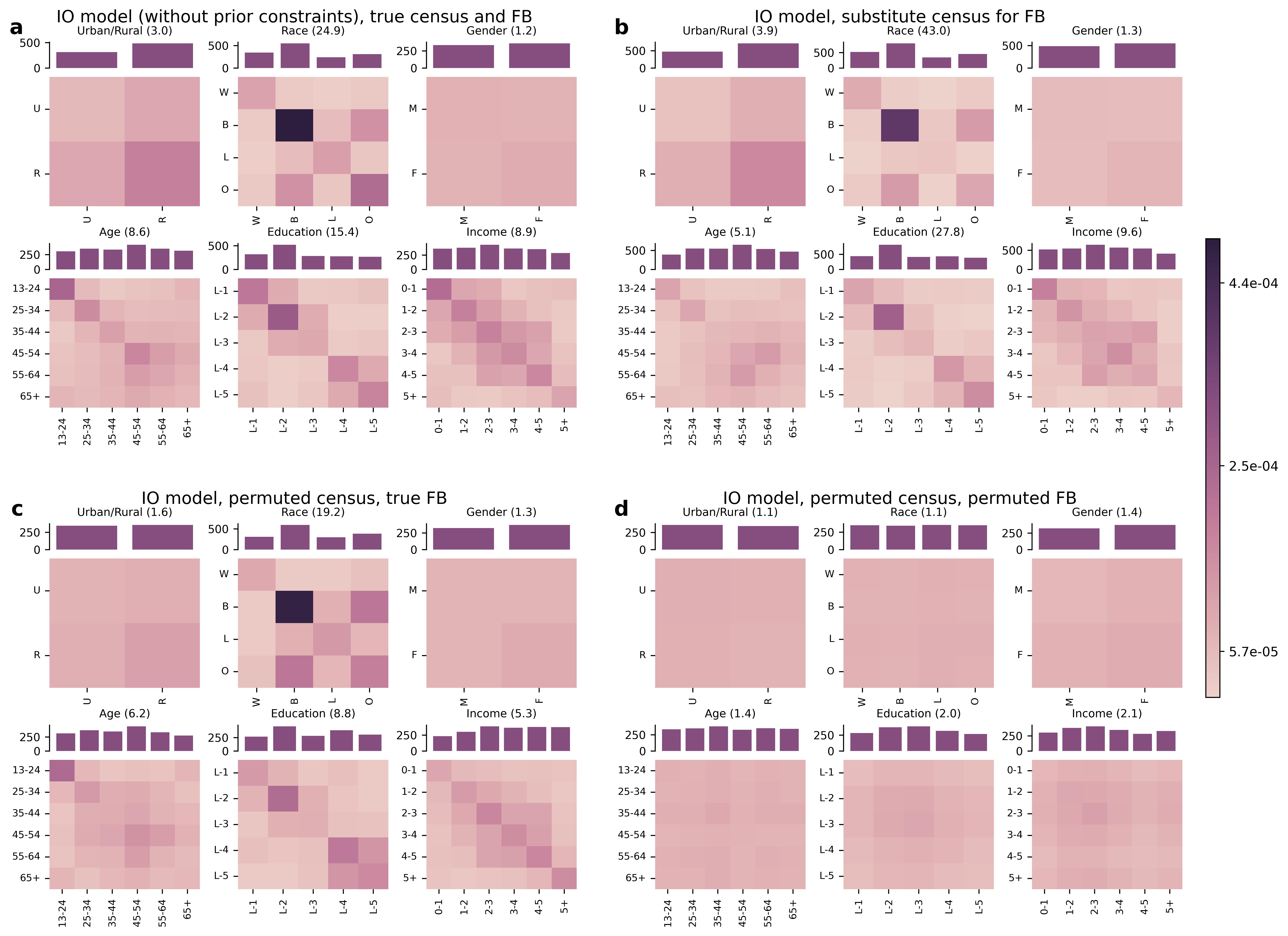}  
    \caption{\emph{Effects of perturbing census and/or Facebook counts become even clearer when considering the implied social segregation models.}
    Marginal connectivity under the social segregation model $M_{HB}(M)$, where $M$ is the corresponding fitted model from Fig.~\ref{fig:marg_conn_4models_placebo}: \textbf{a}, unperturbed census and estimated FB user counts; \textbf{b}, census counts used instead of estimated FB user counts to compute exposure; \textbf{c}, census group proportions randomly permuted across ZCTAs, while exposure is computed from unpermuted estimated FB user counts; \textbf{d}, both census and estimated FB user counts randomly permuted across ZCTAs.}
    \label{fig:marg_conn_soc_seg_4models_placebo}
\end{figure}

\begin{figure}[htbp]
    \centering
    \includegraphics[width=0.9\textwidth]{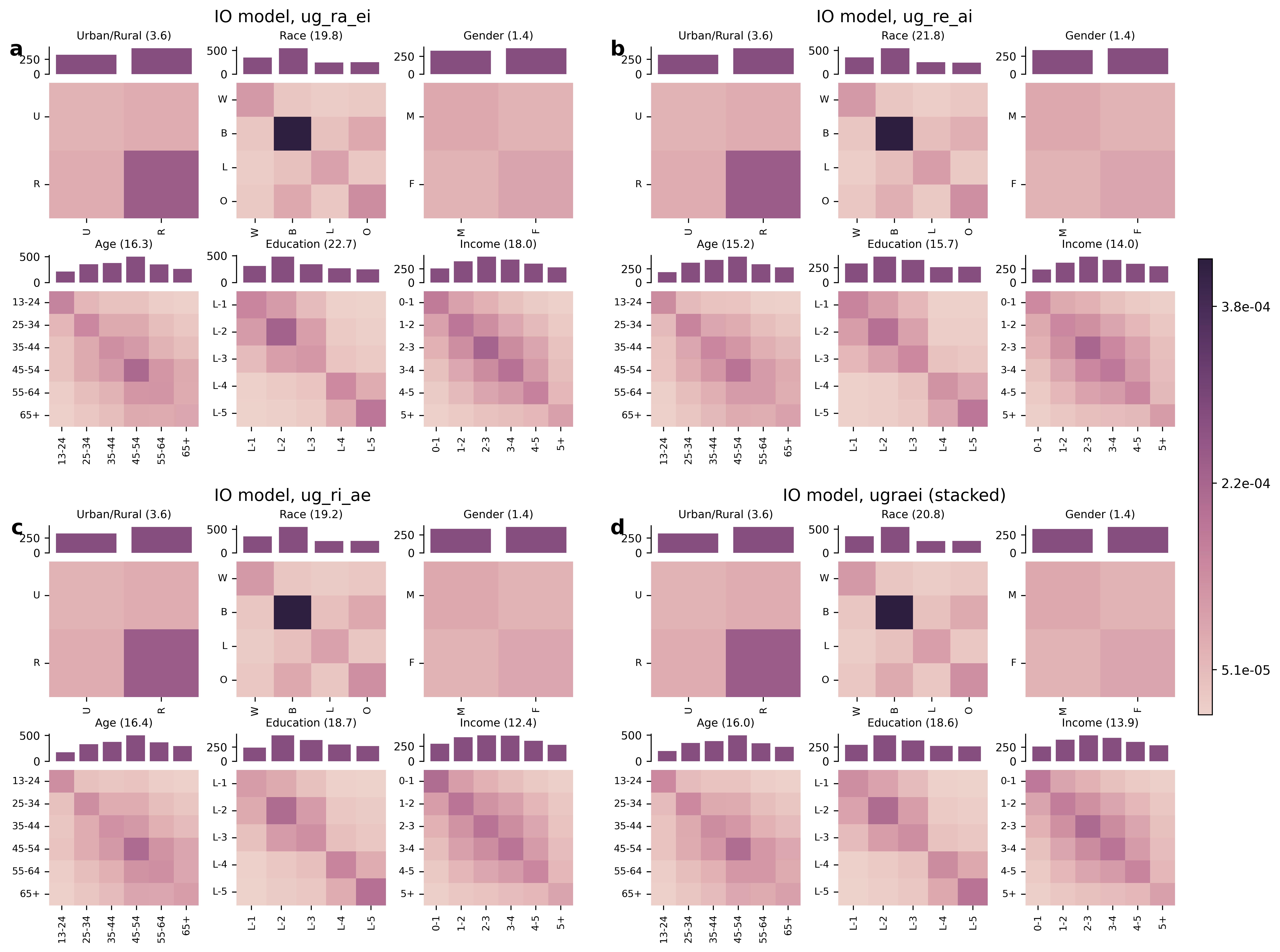}  
    \caption{\emph{Marginal connectivity is similar across alternative factorizations of sociodemographic space.} \textbf{a--c}, Marginal connection probabilities under the three factorizations $\{\{U,G\},\{R,A\},\{E,I\}\}$, $\{\{U,G\},\{R,E\},\{A,I\}\}$, and $\{\{U,G\},\{R,I\},\{A,E\}\}$, respectively. Here, U,G,R,A,E,I denote the variables urban, gender, race, age, education, and income, respectively. \textbf{d}, Marginal connection probabilities obtained by predictive stacking of the models in \textbf{a--c}.}
    \label{fig:marg_conn_different_factorizations}
\end{figure}

\begin{figure}[htbp]
    \centering
    \includegraphics[width=\textwidth]{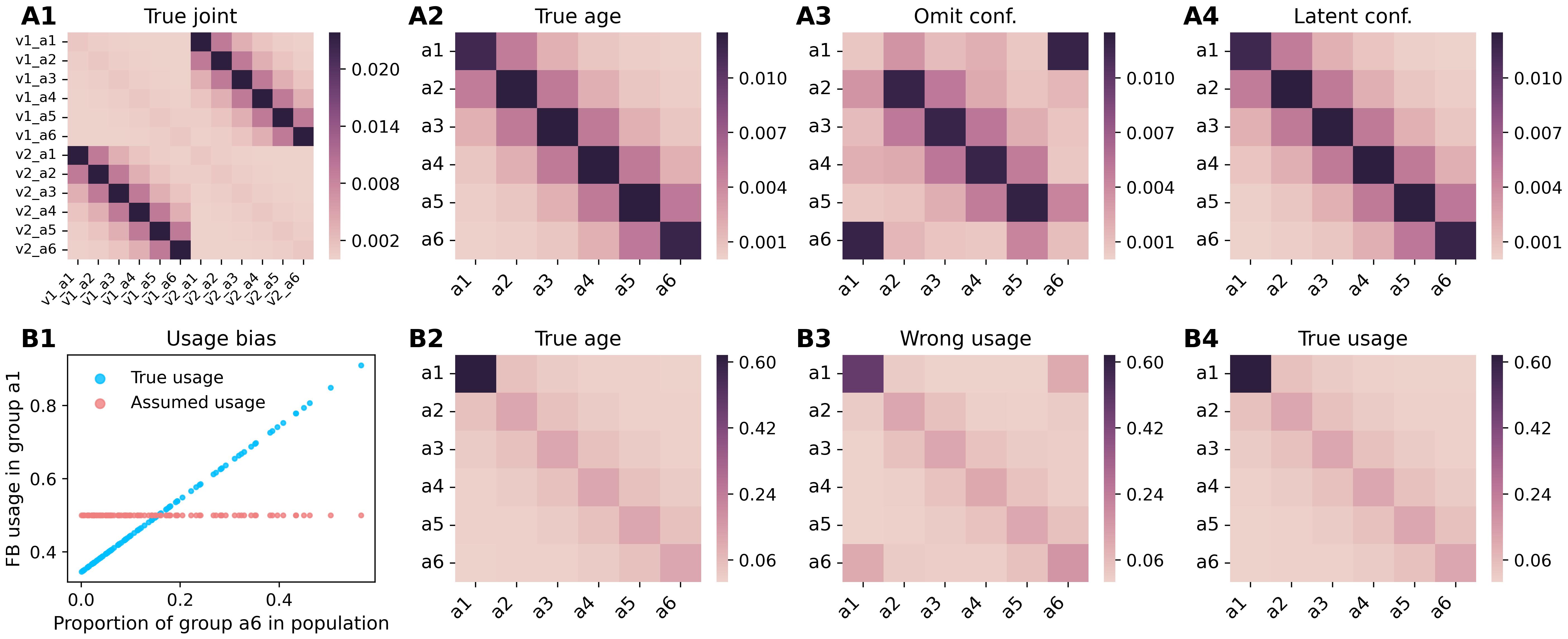} 
\caption{\emph{Toy examples illustrating sensitivity of inferred age connectivity to omitted structure and biased usage weights.}
\textbf{A1--A4}, omitted-confounder example.
\textbf{A1}, true joint connectivity between confounder--age groups.
\textbf{A2}, corresponding true age-marginal connectivity after marginalizing over the confounder.
\textbf{A3}, age connectivity inferred by a model fitted without the confounder, showing distortion relative to the age-marginal target.
\textbf{A4}, age connectivity inferred by a model with an additional latent binary dimension; in this toy example, including a latent dimension recovers the age-marginal target well.
\textbf{B1--B4}, biased Facebook usage example.
\textbf{B1}, data-generating and misspecified Facebook usage for group \(a1\) across regions, plotted against the regional share of group \(a6\); the misspecified usage is constant while the data-generating usage varies systematically with regional composition.
\textbf{B2}, true age connectivity under the data-generating process.
\textbf{B3}, age connectivity inferred when the model is fitted using misspecified usage weights.
\textbf{B4}, age connectivity inferred when the model is fitted using the true usage weights.}
    \label{fig:toy_ard_cf_fb_bias}
\end{figure}

\begin{figure}[htbp]
    \centering
    \includegraphics[width=\textwidth]{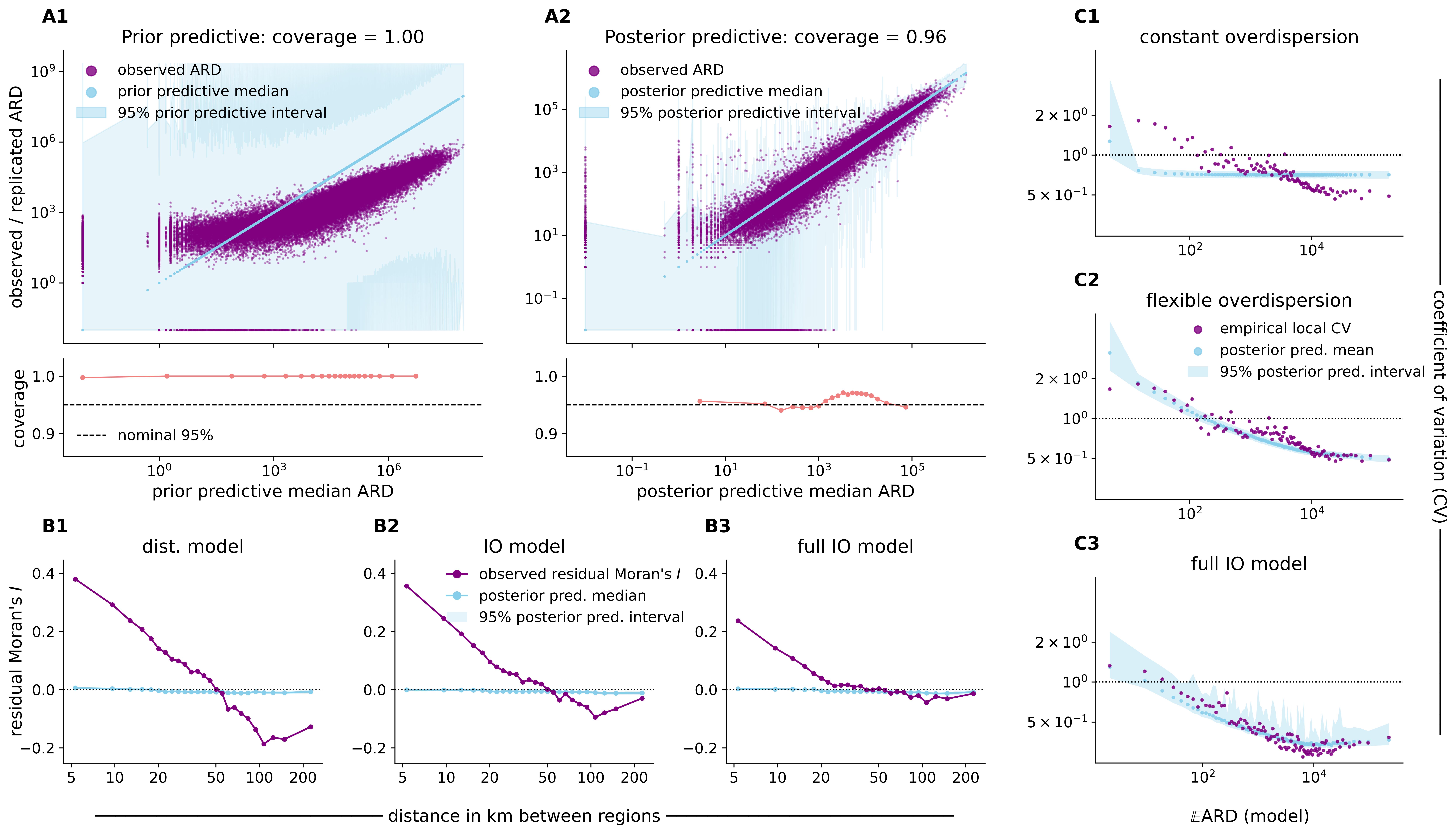}  
\caption{\emph{Posterior predictive checks show accurate marginal ARD predictions but incomplete capture of residual spatial autocorrelation.} All checks are shown for California, the largest state in the analysis. \textbf{A1}, Prior predictive check for aggregated relational data (ARD). Purple points show empirical ARD estimates plotted against the prior predictive median; blue points show the prior predictive median and the shaded region shows the pointwise 95\% prior predictive interval. The lower panel shows binned pointwise coverage of the 95\% prior predictive interval across quantiles of the prior predictive median, with the dashed line indicating nominal 95\% coverage. \textbf{A2}, Corresponding posterior predictive check, with empirical ARD estimates compared to the posterior predictive median and pointwise 95\% posterior predictive interval; the lower panel shows binned pointwise posterior predictive coverage. \textbf{B1}, Posterior predictive check for residual spatial autocorrelation under the distance-only model. Curves show residual Moran's \(I\) as a function of distance between regions; purple denotes the observed residual Moran's \(I\), blue denotes the posterior predictive median and the shaded region denotes the 95\% posterior predictive interval. \textbf{B2}, Corresponding residual Moran's \(I\) check for the intervening-opportunities (IO) model. \textbf{B3}, Corresponding residual Moran's \(I\) check for the full IO model. \textbf{C1}, Posterior predictive check for local ARD variability under the model with constant overdispersion. Points show the empirical local coefficient of variation (CV) and the posterior predictive median local CV for dyads ordered by model-implied expected ARD; the shaded region shows the 95\% posterior predictive interval. \textbf{C2}, Corresponding local CV check for the model with flexible overdispersion. \textbf{C3}, Corresponding local CV check for the full IO model.} \label{fig:ppc-combined-prior-moran-cv}
\end{figure}

\begin{figure}[htbp]
    \centering
    \includegraphics[width=\textwidth]{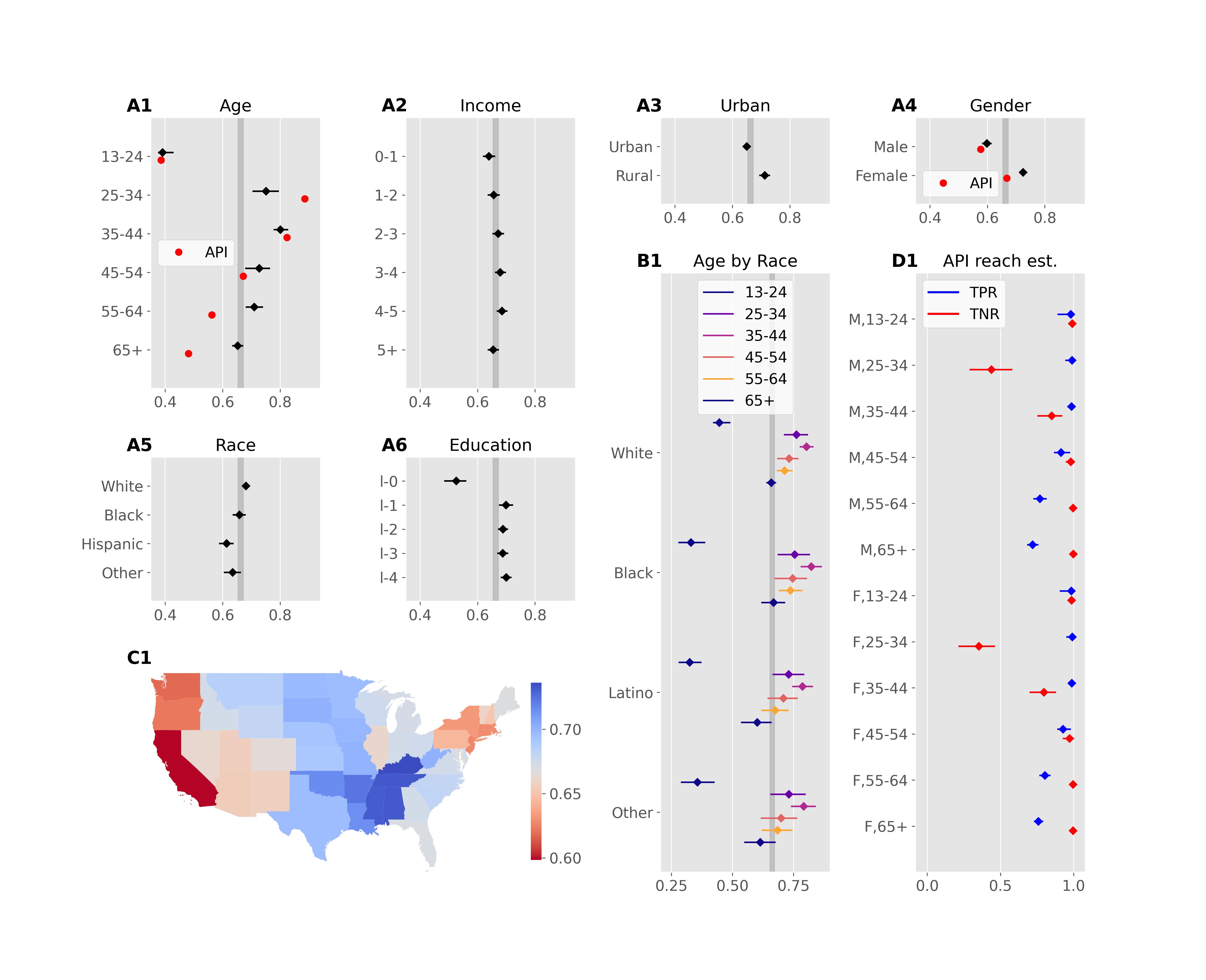} 
    \caption{\emph{Facebook (FB) usage is highest among ages \cat{25-34} and \cat{35-44}.}
    \textbf{A1}--\textbf{A6}, Posterior predictive 95\% credible intervals (black lines) for FB usage across sociodemographic groups; medians shown as black diamonds. In \textbf{A1} and \textbf{A4}, red circles mark usage implied by API reach estimates --- these overestimate usage for ages \cat{25-34} and \cat{35-44} and underestimate usage for older ages.
    \textbf{B1}, Age-by-race: usage in \cat{13-24} is low for all races but comparatively higher for \cat{White}.
    \textbf{C1}, Posterior mean FB usage by US state: lowest on the West Coast and in the Northeast.
    \textbf{D1}, 95\% posterior credible intervals for the expected true positive rates (TPR) and true negative rates (TNR) of the API reach estimates by age–gender groups. The TNR is below 0.5 for age \cat{25-34} (large false positive rate); this age group is consequently overrepresented in the API reach estimates. Conversely, the TPR is comparatively low for groups \cat{55-64} and \cat{65+}; these groups are underrepresented in the API reach estimates.}
    \label{fig:fb_usage_post}
\end{figure}

\begin{figure}[htbp]
    \centering
    \includegraphics[width=1.0\textwidth]{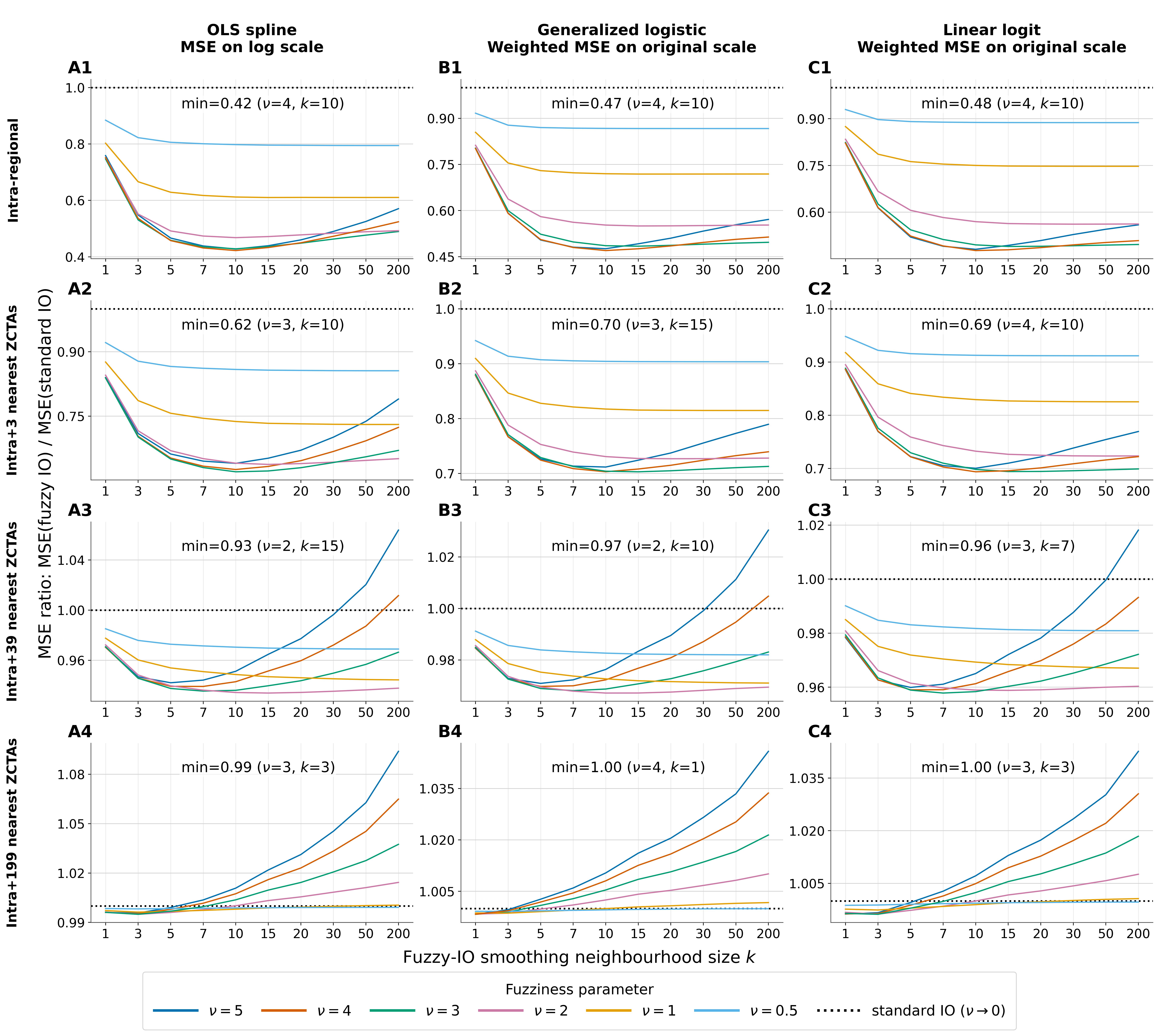} 
    \caption{\emph{Fuzzy IOs explain $\sci$ better than standard IOs.}
For each $(\nu,k)$ on a grid, we compute the mean squared error (MSE) of three simple predictive models of $\sci$ given marginal \emph{fuzzy} IOs, and report it relative to the MSE obtained with \emph{standard} IOs (baseline $\nu=0,$ or, equivalently, $k=0$).
\textbf{A1}--\textbf{A4}, $\sci$ is modeled as log-normal with mean given by a quadratic spline in $\log$ fuzzy IOs. Data subsets: \textbf{A1} intra-regional links only; \textbf{A2} intra-regional plus links to the three nearest ZCTAs per source ZCTA; \textbf{A3} intra-regional plus links to the 39 nearest ZCTAs; \textbf{A4} intra-regional plus links to the 199 nearest ZCTAs.
\textbf{B1}--\textbf{B4}, $\sci$ modeled with mean given by a generalized logistic function of the marginal fuzzy IOs and variance inversely proportional to the square of the fuzzy IOs.
\textbf{C1}--\textbf{C4}, $\sci$ modeled with mean given by a logistic function of the marginal fuzzy IOs and variance inversely proportional to the square of the fuzzy IOs.
For all three predictive models, fuzzy IOs substantially reduce MSE at small geographical distances and have negligible effect at large distances. The minimizing $(\nu,k)$ pair for each data subset and model is reported within each panel; overall, considering the importance of small distances, i.e., that the majority of links are local, a tuning $\nu\in\{3\mathrm{km},4\mathrm{km}\}$ and $k\in\{7,10,15\}$ seems advantageous.}
    \label{fig:mse_fuzzy_rank_dist}
\end{figure}

}

\ifincludesupplement

\clearpage
\newpage



\renewcommand{\thesection}{\arabic{section}}
\renewcommand{\thesubsection}{\thesection.\arabic{subsection}}
\renewcommand{\thesubsubsection}{\thesubsection.\arabic{subsubsection}}

\renewcommand*{\theHsection}{S.\arabic{section}}
\renewcommand*{\theHsubsection}{S.\arabic{section}.\arabic{subsection}}
\renewcommand*{\theHsubsubsection}{S.\arabic{section}.\arabic{subsection}.\arabic{subsubsection}}

\renewcommand*{\sectionformat}{S\thesection\enskip}
\renewcommand*{\subsectionformat}{S\thesubsection\enskip}
\renewcommand*{\subsubsectionformat}{S\thesubsubsection\enskip}

\RedeclareSectionCommand[tocnumwidth=3em]{subsection}
\setcounter{tocdepth}{2}


\numberwithin{equation}{section}
\renewcommand{\theequation}{S\arabic{section}.\arabic{equation}}


\numberwithin{figure}{section}
\numberwithin{table}{section}

\renewcommand{\figurename}{Supplementary Figure}
\renewcommand{\tablename}{Supplementary Table}

\renewcommand{\thefigure}{\arabic{section}.\arabic{figure}}
\renewcommand{\thetable}{\arabic{section}.\arabic{table}}

\newcommand{\suppnoteref}[1]{Supplementary Note~\ref{#1}}
\newcommand{\suppeqref}[1]{Supplementary Eq.~\eqref{#1}}
\newcommand{\suppfigref}[1]{Supplementary Fig.~\ref{#1}}
\newcommand{\supptableref}[1]{Supplementary Table~\ref{#1}}

\setcounter{section}{0}
\setcounter{subsection}{0}
\setcounter{subsubsection}{0}
\setcounter{equation}{0}
\setcounter{figure}{0}
\setcounter{table}{0}

\addpart{Supplementary Information}

\etocsetnexttocdepth{subsection}
{
\let\oldnumberline\numberline
\renewcommand{\numberline}[1]{\oldnumberline{S#1}}
\localtableofcontents
}

\clearpage
%
%


\section{Segregation framework}
\label{sec:supp_seg_framework}
The supplementary material for our segregation measurement framework is organized as follows. We begin by expressing traditional and spatially weighted geographical segregation measures as $f$-divergences and interpreting them in terms of the expected sociodemographic composition of individuals' network connections (section~\ref{sec:trad_geo_seg}).
We then define counterfactual network models to distinguish social, geographical, and traditional geographical segregation from total segregation (section~\ref{sec:segregation_via_counterfactual_and_reference_models}) and we define no-segregation and maximum-segregation reference models. In section~\ref{sec:div_based_seg_indices}, we define our segregation measures in geosocial networks through $f$-divergences between a distribution induced by a network model of interest and distributions induced by the no-segregation and maximum-segregation reference models. We show that the maximum-segregation model indeed maximizes the $f$-divergence from the no-segregation null over a suitable class of network models (Lemma~\ref{lemma:normalization_of_indices}).
Next, we show that the proposed measures recover established geographical segregation measures, including the Dissimilarity and Theil indices, as exact special cases (Lemma~\ref{lemma:relation_between_geo_and_social_f_divergence} and Corollary~\ref{corollary:recovery_normalized_traditional_indices} in section~\ref{sec:recovery_of_trad_seg}). 
Together, these results motivate our recommendation that segregation in geosocial networks should be measured by considering the geosocial coordinates of a random individual $i$ and the social coordinates only of a second random individual $j$, rather than keeping the full coordinates of both individuals (section~\ref{sec:choice_of_coord_space}). We then complete the description of the $f$-divergence measures by establishing asymptotic relations in the sparse and dense network regimes (Lemma~\ref{lemma:T_D_F_to_S_D_F} and Corollary~\ref{corollary:normalized_tau_sigma} in section~\ref{sec:sigma_tau_under_sparsity}).
We next investigate how homophily and heterophily contribute to the proposed segregation measures, define a family of homophily measures in section~\ref{sec:homophily}, and establish some of their properties in Lemma~\ref{lemma:homophily_quantification} and Lemma~\ref{lemma:homophily_quantification_2}. The proposed homophily measure reduces to network modularity/nominal assortativity when the generator of the $f$-divergence is the Total Variation distance (Lemma~\ref{lemma:homophily_quantification} and Corollary~\ref{corollary:normalized_homophily_quantification}), and contains Relative diversity, a traditional geographical segregation measure, as a special case (Lemma~\ref{lemma:recovery_of_relative_diversity}). In sections~\ref{sec:reg_decomp}, \ref{sec:group_decomp}, and \ref{sec:normalization_decomp}, we generalize the regional and group decompositions of the Theil index to segregation measures on geosocial networks (Lemmas~\ref{lemma:regional_decomposition_of_H}, \ref{lemma:group_decomposition_of_H}) and derive normalizations for these decompositions consistent with the standard normalizations for the decompositions of the traditional Theil index (Lemma~\ref{lemma:normalization_decompositions}). Table~\ref{tab:overview_theoretical_results} provides a compact overview of the main theoretical results and where they are stated.
Proofs are provided in section~\ref{sec:proofs}.

\begin{longtable}{@{}p{0.34\textwidth}p{0.62\textwidth}@{}}
\caption{\emph{Overview of the main theoretical results of our segregation measurement framework.}}
\label{tab:overview_theoretical_results}\\
\toprule
\textbf{Stated in} & \textbf{Result} \\
\midrule
\endfirsthead

\toprule
\textbf{Stated in} & \textbf{Result} \\
\midrule
\endhead

Lemma~\ref{lemma:normalization_of_indices}
& Normalization of measures: establishes that the degree-constrained social-isolation model maximizes the $f$-divergence from the degree-configuration null over the relevant class of network models. \\

Lemma~\ref{lemma:relation_between_geo_and_social_f_divergence} and Corollary~\ref{corollary:recovery_normalized_traditional_indices}
& Recovery of established geographical segregation measures, including the Dissimilarity and Theil indices, as special cases.\\

Lemma~\ref{lemma:T_D_F_to_S_D_F} and Corollary~\ref{corollary:normalized_tau_sigma}
& Relate segregation measures based on the joint coordinate--link distribution $\tau$ to those based on the conditional coordinate distribution $\sigma$.\\

Lemma~\ref{lemma:homophily_quantification} and Corollary~\ref{corollary:normalized_homophily_quantification}
& Establish properties of the homophily decomposition and show that, for Total Variation distance, our signed homophily measure recovers network modularity and nominal assortativity.\\

Lemma~\ref{lemma:homophily_quantification_2} and Example~\ref{example:homophily quantification}
& Characterize how different $f$-divergence generators weight departures from the random-mixing null and how this affects homophily quantification.\\

Lemma~\ref{lemma:recovery_of_relative_diversity}
& Shows that Relative Diversity is recovered as a special case of our signed homophily measure.\\

Lemmas~\ref{lemma:regional_decomposition_of_H} and \ref{lemma:group_decomposition_of_H}
& Extend the regional and sociodemographic group decompositions of the Theil index to segregation measures on geosocial networks.\\

Lemma~\ref{lemma:normalization_decompositions}
& Provides normalizations for the regional and group decompositions that recover the standard normalizations of the traditional Theil-index decompositions.\\

Lemma~\ref{lemma:coordinate_coarsening}
& Shows that proposed segregation measures cannot increase under deterministic coarsening of the coordinate space.\\

\bottomrule
\end{longtable}

\subsection{Traditional geographical segregation}
\label{sec:trad_geo_seg}
Let $\RS$ denote a set of geographical regions and $\BS$ a set of sociodemographic groups, such that the regions, and likewise the groups, are mutually exclusive and collectively exhaustive. Let $\pi_{rb}$ denote the proportion of the population in region $r\in\RS$ and sociodemographic group $b\in\BS$, with corresponding marginal proportions $\pi_r=\sum_b\pi_{rb}$ and $\pi_b=\sum_r\pi_{rb}$, and let $\pi_{b\mid r}=\pi_{rb}/\pi_r$ denote the corresponding conditional distribution. Several widely used traditional geographical segregation indices \cite{reardon2002measures} can be written as an $f$-divergence\footnote{\cite{reardon2002measures} list these indices as disproportionality-based measures, with the same functional form as in \eqref{eq:traditional_segregation_indices_as_f_divergence}.} between the product of the marginal distributions $(\pi_r)_{r\in\RS}\otimes (\pi_b)_{b\in\BS}$ and the joint distribution $(\pi_{rb})_{r\in\RS,b\in\BS}$ over regions $\RS$ and sociodemographic groups $\BS$
\begin{equation}
\begin{aligned}
    \mathcal{D}_{f,\pi} & \coloneqq \DF{(\pi_{rb})}{ (\pi_r)\otimes (\pi_b)} \\
    & = \sum_r \pi_r\sum_{b}\pi_{b}
    f\left(\frac{\pi_{b\mid r}}{\pi_{b}}\right),
\end{aligned}
\label{eq:traditional_segregation_indices_as_f_divergence}
\end{equation}
where for two probability measures $P$ and $Q$ such that $P$ is absolutely continuous with respect to $Q$,\footnote{In the discrete case, this just means $Q(x) = 0\Rightarrow P(x) = 0$} and a convex function $f\colon [0,\infty)\to (-\infty, \infty],$ with $f(x)<\infty,\,\forall x>0,$ $f(1) = 0,$ and $f(0) = \lim_{x\to0} f(x),$ the $f$-divergence of $P$ from $Q$ is given by
\begin{equation*}
    \DF{P}{Q} = \EE_Q f\left(\frac{dP}{dQ}\right),
\end{equation*}
which in the discrete case is given by
\begin{equation*}
    \DF{P}{Q} = \sum_x Q(x) f\left(\frac{P(x)}{Q(x)}\right).
\end{equation*}
We call a function $f$ with the above properties a generator for the corresponding $f$-divergence.

With $f(x) = x\log x,$ we obtain the unnormalized Theil's Information Theory index $H,$ with $f(x) = \tfrac{1}{2}|x-1|$ the unnormalized Dissimilarity index $D$, and with $f(x) = x^2-1$ the unnormalized Pearson's $\chi^2$ divergence, called the squared coefficient of variation in \cite{reardon2002measures}.

To address the dependence of traditional segregation measures on pre-defined geographical regions and corresponding shortcomings such as the checkerboard problem and the modifiable areal unit problem discussed in the Introduction, \cite{reardon2004measures,lee2008beyond} propose an egocentric, or spatially weighted, view of segregation.
Let $\GS$ denote the set of geographical coordinates occupied by individuals in the population. For a given symmetric kernel
\begin{equation}
    \phi\colon\GS\times\GS\to[0,\infty),
\end{equation}
the spatially weighted proportion of group $b$ at geographical coordinate $p$ is
\begin{equation}
    \widetilde{\pi}_{b\mid p}
    \coloneqq
    \frac{\sum_{p'\in\GS}\pi_{b\mid p'}\pi_{p'}\phi(p,p')}
    {\sum_{p'\in\GS}\pi_{p'}\phi(p,p')}.
    \label{eq:def_tilde_pi_discrete}
\end{equation}
Here, $\pi_{p'}$ denotes the proportion of the population at geographical coordinate $p'$ and $\pi_{b\mid p'}$ the proportion of that population belonging to group $b$. If there is exactly one individual at geographical coordinate $p$, then $\pi_{b\mid p}=1$ for that individual's group and $0$ otherwise.

In their continuous-space formulation, \cite{reardon2004measures} express the same construction in terms of spatially weighted population-density surfaces; we use the discrete analogue, which is more convenient for the network formulation below.

By substituting $\widetilde{\pi}_{b\mid p}$ for $\pi_{b\mid r}$, \cite{reardon2004measures} define spatially weighted versions of traditional region-based geographical segregation indices. For example, their unnormalized spatially weighted Dissimilarity index is
\begin{equation}
    \widetilde D=\frac12\sum_p\sum_b\pi_p\left|\widetilde\pi_{b\mid p}-\pi_b\right|.
\end{equation}
Writing $\widetilde\pi_{pb}\coloneqq\pi_p\widetilde\pi_{b\mid p}$, this is precisely the Total Variation distance between $(\widetilde\pi_{pb})$ and $(\pi_p)\otimes(\pi_b)$. Their unnormalized spatially weighted Information Theory index is
\begin{equation}
    \widetilde H=E-\sum_p\pi_p\widetilde E_p,
\end{equation}
where $E=-\sum_b\pi_b\log\pi_b$ and $\widetilde E_p=-\sum_b\widetilde\pi_{b\mid p}\log\widetilde\pi_{b\mid p}$. This coincides with the Kullback--Leibler divergence
\begin{equation}
    \DKL{(\widetilde\pi_{pb})}{(\pi_p)\otimes(\pi_b)}
\end{equation}
whenever the spatial weighting preserves the marginal group proportions,
\begin{equation}
    \sum_p\pi_p\widetilde\pi_{b\mid p}=\pi_b,\qquad b\in\BS.
    \label{eq:preservation_of_pi_b_marginal}
\end{equation}
For a symmetric kernel $\phi$, a sufficient condition for \eqref{eq:preservation_of_pi_b_marginal} is that its population-weighted row sums are constant,
\begin{equation}
    \sum_{p'\in\GS}\pi_{p'}\phi(p,p')\equiv\bar\phi,\qquad p\in\GS.
\label{eq:constant_weighted_row_sums_phi}
\end{equation}
We consider it natural that spatial weighting preserves the marginal group proportions \eqref{eq:preservation_of_pi_b_marginal} and therefore restrict attention in the following to symmetric kernels satisfying the sufficient condition \eqref{eq:constant_weighted_row_sums_phi}.
Motivated by the above examples, we define a family of spatially weighted geographical segregation measures via $f$-divergences by
\begin{equation}
    \mathcal D_{f,\widetilde\pi}
    \coloneqq
    \DF{(\widetilde\pi_{pb})}{(\pi_p)\otimes(\pi_b)}.
\label{eq:traditional_egocentric_segregation_indices_as_f_divergence}
\end{equation}

We note that if
\begin{equation}
\phi(p,q)=
\begin{cases}
    \bar\phi/\pi_{r(p)}, & r(p)=r(q),\\
    0, & \text{otherwise},
\end{cases}
\label{eq:def_RI_model1}
\end{equation}
where $r(p)$ denotes the region containing geographical coordinate $p$, then $\widetilde\pi_{b\mid p}=\pi_{b\mid r(p)}$, and hence the spatially weighted segregation measure $\mathcal D_{f,\widetilde\pi}$ reduces to the standard region-based measure $\mathcal D_{f,\pi}$. 

\subsection{Traditional and spatially weighted geographical segregation as connectivity models}
We denote by $\PP{i\sim j \mid p(i) = p_1, b(i) = b_1, p(j) = p_2, b(j) = b_2}$ the probability of the event that two randomly chosen individuals $i$ and $j$ connect, in symbols $i\sim j,$ given their geographical coordinates $p(i),p(j)$ and their sociodemographic coordinates $b(i),b(j).$\footnote{For ease of exposition, we allow $i=j$, corresponding to drawing the two individuals independently from the population. Excluding self-pairs changes the resulting probabilities by terms of order $1/N$, where $N$ is the population size.}

Because $\widetilde\pi_{b\mid p}$ in \eqref{eq:def_tilde_pi_discrete} is invariant under a common positive scaling of $\phi$, we can assume w.l.o.g. that $0\leq\phi(p,p')\leq1$ for all $p,p'\in\GS$. Therefore, and because of assumption \eqref{eq:constant_weighted_row_sums_phi}, we can interpret $\phi$ as a symmetric geospatial connectivity kernel with constant expected degrees for all individuals and  we can interpret the spatially weighted segregation measures as being built upon a connectivity model $M_{\phi}$ where the probability that randomly chosen individuals $i,j$ are connected is solely determined by their geographical locations
\begin{equation}
\begin{aligned}
    &\PPS{i\sim j \mid p(i) = p_1, p(j)=p_2, b(i) = b_1, b(j) = b_2}{}{M_{\phi}}\\
    & = \PPS{i\sim j \mid p(i) = p_1, p(j) = p_2}{}{M_{\phi}}\\
    & = \phi(p_1,p_2),
\end{aligned}
\label{eq:def_geospatial_model_phi}
\end{equation}
which is compared to the Erd\H{o}s--R\'{e}nyi random graph ${M_{ER}}$ as a \emph{Null} or \emph{reference model}, where
\begin{align}
    \PPS{i\sim j \mid p(i)=p_1, p(j)=p_2, b(i) = b_1, b(j) = b_2}{}{M_{ER}} \equiv \bar{P}.
\end{align}
Here, $p(i)\in\GS$ and $b(i)\in\BS$ denote the geographical and sociodemographic coordinates of individual $i$, and $\bar{P} \equiv \PPS{i\sim j}{}{M_{\phi}} = \bar{\phi}$ is the marginal connection probability. 
Under $M_{\phi},$ for any individual $i$ at geographical location $p$ the fraction of expected number of friends in group $b$ out of the total number of friends of $i$ is given by
\begin{align}
    \frac{ \EE_{M_{\phi}}\#\text{friends of i in group }b }{ \EE_{M_{\phi}}\#\text{friends of i}}  = \tilde{\pi}_{b\mid p} 
\end{align}
and under $M_{ER}$ this fraction is given by
\begin{align}\label{eq:expected_faction_of_edges_under_ER_model}
     \frac{ \EE_{M_{ER}}\#\text{friends of i in group b}}{ \EE_{M_{ER}}\#\text{friends of i}}
     = \pi_{b} .
\end{align}
We call the connectivity model corresponding to the region-based kernel in \eqref{eq:def_RI_model1} the regional isolation model $M_{RI}$, under which the above-described fraction of edges of any individual $i$ in region $r$ is given by

\begin{align}\label{eq:expected_fraction_of_edges_under_RI_model}
    \frac{ \EE_{M_{RI}}\#\text{friends of i in group }b }{ \EE_{M_{RI}}\#\text{friends of i}}  = \pi_{b\mid r}. 
\end{align}

To summarize, the traditional segregation indices and their spatially weighted extensions can be understood as comparing the fractions of expected number of edges ending in group $b$ under model $M_{\phi}$ and null model $M_{ER}$. As we will show in the following, a more general approach is given by comparing the fractions of edges between pairs of coordinates $(c_1,c_2)$ out of the total number of edges in the network under a network model $M$ and a suitable null reference model. Depending on the choice of $M,$ such an approach allows to investigate geographical segregation in isolation, or social segregation in isolation, or total segregation.

\subsection{Coordinate spaces and geosocial network models}
\label{sec:coordinate_spaces_and_geosocial_networks}
Before we define our proposed segregation indices, we first establish some notation.
Let $\CS_1$ and $\CS_2$ be one of the following
\begin{enumerate}
    \item \emph{Full geosocial coordinate space:} $\CS_1=\GS\times\BS$ and $\CS_2=\GS\times\BS$, or, coarsened to regional geographical resolution, $\CS_1=\RS\times\BS$ and $\CS_2=\RS\times\BS$;
    \item \emph{Geo-egocentric geosocial coordinate space:} $\CS_1=\GS\times\BS$ and $\CS_2=\BS$, or, coarsened to regional geographical resolution, $\CS_1=\RS\times\BS$ and $\CS_2=\BS$;
    \item \emph{Social coordinate space:} $\CS_1=\BS$ and $\CS_2=\BS$.
\end{enumerate}
Regional geographical coordinates $r\in\RS$ are obtained by assigning each geographical coordinate $p\in\GS$ its containing region $r(p)\in\RS$.

We consider two individuals $i$ and $j$ sampled independently and uniformly from the population underlying a specific geosocial network. Let $C_1$ and $C_2$ denote the random variables (R.V.) recording their coordinates in $\CS_1$ and $\CS_2$, respectively, and let the Bernoulli R.V. $A$ indicate whether they are connected, with $A=1$ if $i\sim j$ and $A=0$ otherwise.

In the following, a geosocial network model $M$ specifies the marginal coordinate distributions $(\pi_{c_1}^M)_{c_1\in\CS_1}$ and $(\pi_{c_2}^M)_{c_2\in\CS_2},$ i.e., the distributions of $C_1$ and $C_2$ under model $M,$ together with the conditional probability that two individuals with coordinates $c_1$ and $c_2$ are connected, which we denote by
\begin{equation*}
    \PS{c_1\sim c_2}{}{M}  \coloneqq \PPS{A = 1 \mid C_1=c_1,C_2=c_2}{}{M}.
\end{equation*}
Equivalently, a geosocial network model is described by the joint distribution of the triplet $C_1, C_2, A$ under $M$
\begin{equation}
    \begin{aligned}
        \tau^M(c_1,c_2,e) & \coloneqq \PPS{C_1=c_1, C_2=c_2, A=e}{}{M}\\
        & = \begin{cases}
        \PS{c_1\sim c_2}{}{M}\pi_{c_1}^M\pi_{c_2}^M,& e = 1\\
        \PS{c_1\nsim c_2}{}{M}\pi_{c_1}^M\pi_{c_2}^M,& e = 0
        \end{cases} .
    \end{aligned}
    \label{eq:def_tau}
\end{equation}
We denote the marginal connection probability under $M$ by
\begin{equation}
    \bar{P}^M \coloneqq \PPS{A=1}{}{M}=\sum_{c_1,c_2}\pi_{c_1}^M\pi_{c_2}^M\PS{c_1\sim c_2}{}{M}.
\end{equation}
Conditioning the joint distribution of $C_1$ and $C_2$ on the presence or absence of a link yields the conditional coordinate distributions
\begin{equation}
    \begin{aligned}
        \sigma^M(c_1,c_2) & \coloneqq \PPS{C_1=c_1, C_2=c_2\mid A=1}{}{M}\\
        & = \frac{\pi_{c_1}^M\pi_{c_2}^M\PS{c_1\sim c_2}{}{M}}{\bar{P}^M}
    \end{aligned}
    \label{eq:def_sigma}
\end{equation}
and
\begin{equation}
    \begin{aligned}
        \xi^M(c_1,c_2) & \coloneqq \PPS{C_1=c_1, C_2=c_2\mid A=0}{}{M}\\
        & = \frac{\pi_{c_1}^M\pi_{c_2}^M\PS{c_1\nsim c_2}{}{M}}{1-\bar{P}^M} .
    \end{aligned}
    \label{eq:def_xi}
\end{equation}
In the following, ``conditional coordinate distribution'' refers to $\sigma^M$ unless conditioning on the absence of a link is stated explicitly. If $c$ records both geographical and social coordinates, we write $c=rb$ for regional geographical coordinates and $c=pb$ for point-level geographical coordinates, and correspondingly $\PS{c_1\sim c_2}{}{M}=\PS{r_1b_1\sim r_2b_2}{}{M}$ or $\PS{p_1b_1\sim p_2b_2}{}{M}$. Furthermore, if we want to emphasize the coordinate spaces on which $\sigma^M$ is defined, we write $\sigma_{\CS_1,\CS_2}^M$.

\subsection{Representation of segregation via counterfactual and reference models}
\label{sec:segregation_via_counterfactual_and_reference_models}

We distinguish total, social, geographical and traditional geographical segregation by associating each with a network model. Let $M$ denote a geosocial network model describing the geosocial network of interest, e.g., empirically or statistically obtained estimates of the true connectivity $\PS{r_1 b_1 \sim r_2 b_2}{}{}$ and the true coordinate distribution $(\pi_{rb})_{r,b},$ or, in an idealized setting, the true data generating process. For ease of exposition, we will refer to $M$ as the observed connectivity patterns. Total segregation is defined directly from $M.$ Social and geographical segregation are defined by counterfactual models relative to $P^M$ and $\pi^M,$ whereas traditional geographical segregation is defined relative to $\pi^M$ and independently of $P^M$ (except that we set its marginal connectivity $\bar{P}$ to equal the one implied by $M$).

\textbf{Social segregation.} The \emph{homogeneous sociodemographic space model} $M_{HB}=M_{HB}(M)$ counterfactually reallocates individuals such that every region has the same sociodemographic composition as the overall population, while retaining the connection probabilities conditional on individuals' geosocial coordinates:
\begin{equation}
    \begin{aligned}
        \pi_{rb}^{M_{HB}}&=\pi_r^M\pi_b^M,\\
        \PS{r_1b_1\sim r_2b_2}{}{M_{HB}}
        &=\PS{r_1b_1\sim r_2b_2}{}{M}.
    \end{aligned}
    \label{eq:def_M_HB}
\end{equation}
Thus, $M_{HB}$ removes the geographical concentration of sociodemographic groups while retaining group-specific connectivity patterns, including their variation across geographical space.

\textbf{Geographical segregation.} The \emph{social ambiphily model} $M_{AB}=M_{AB}(M)$ retains the observed joint distribution of regions and sociodemographic groups, but removes the direct dependence of connectivity on sociodemographic group membership, conditional on geographical coordinates:
\begin{equation}
\begin{aligned}
    \pi_{rb}^{M_{AB}}&=\pi_{rb}^M,\\
    \PS{r_1b_1\sim r_2b_2}{}{M_{AB}}
    &\coloneqq
    \sum_{b_1^\prime\in\BS}\sum_{b_2^\prime\in\BS}
    \pi_{b_1^\prime}^M\pi_{b_2^\prime}^M
    \PS{r_1b_1^\prime\sim r_2b_2^\prime}{}{M}.
\end{aligned}
\label{eq:def_M_BA}
\end{equation}
Averaging using the marginal sociodemographic distribution $(\pi_b^M)_{b\in\BS}$ rather than the local distributions $(\pi_{b\mid r}^M)_{b\in\BS}$ prevents the resulting region-pair connectivity from being weighted by local sociodemographic composition. The resulting connection probability depends on $r_1$ and $r_2$, but not on $b_1$ or $b_2$.

\textbf{Traditional geographical segregation.} The \emph{regional isolation model} $M_{RI}$ retains the observed population distribution but assumes that individuals connect only to individuals in the same predefined region:
\begin{equation}
    \begin{aligned}
        \pi_{rb}^{M_{RI}}&=\pi_{rb}^M,\\
        \PS{r_1b_1\sim r_2b_2}{}{M_{RI}}
        &= \frac{\bar{P}^M}{\pi_{r_1}^M} \mathbb{1}_{\left\{ r_1 = r_2 \right\}} .
    \end{aligned}
    \label{eq:def_M_RI}
\end{equation}
The scaling by $\pi_r^M$ gives all individuals the same marginal connection probability $\bar{P}^M$ and preserves the overall density of the network. The model is therefore sensitive only to the distribution of sociodemographic groups across the chosen regions. 
\begin{remark}
The social ambiphily model $M_{AB}$ and the regional isolation model $M_{RI}$ are both special cases of the class of connectivity models $M_\phi$ introduced in \eqref{eq:def_geospatial_model_phi}, for which, conditional on geographical coordinates, connectivity is independent of sociodemographic coordinates of individuals. $M_{AB}$ and $M_{RI}$ represent different notions of segregation: the connectivity in $M_{AB}(M)$ is constructed from $M$ and retains the geographical pattern of connectivity after removing sociodemographic preferences in tie formation. By contrast, $M_{RI}$ imposes complete isolation between predefined regions and therefore captures traditional residential segregation independently of the geographical connectivity patterns in any actual geosocial network. Therefore, $M_{AB}$ and $M_{RI}$ complement each other, rather than being strict alternatives to measure segregation. Importantly, $M_{AB}$ is network specific, and offline friendship networks will in general exhibit different geographical connectivity patterns than online friendship networks.  Different geographical connectivity kernels $\phi$ could capture additional patterns of geographical segregation. However, we will not explore alternative kernels $\phi$ in this study.
\end{remark}
\begin{remark}
    We say that a model is admissible if it defines valid connection probabilities in $[0,1].$ For the regional isolation model defined in \eqref{eq:def_M_RI}, this implies
    \begin{equation}
        \bar{P}^M\leq \pi_{r}^M, \quad \forall r \in \RS 
        \label{eq:M_RI_admissibility_condition}
    \end{equation}
    For geosocial networks, the marginal connectivity $\bar{P}^M$ typically scales as the reciprocal of the size of the population underlying the network and we will therefore assume that \eqref{eq:M_RI_admissibility_condition} holds.   
\end{remark}
\begin{remark}
    Applying $M_{HB}$ and $M_{AB}$ to a network model $M'$ commute and we denote the correspondingly obtained model by $M_{HB-AB}$
    \begin{equation}
        M_{HB-AB}(M) \coloneqq M_{HB}(M_{AB}(M)) = M_{AB}(M_{HB}(M)).
    \end{equation}
\end{remark}
Thus, $M_{HB-AB}$ constitutes a counterfactual model in which both geographical and social segregation have been removed and as such could define a no-segregation reference model. However, we will adopt a different approach and define a no-segregation reference model $M_{DC}$ by random mixing, as described below.

\textbf{Random mixing and maximal segregation.} 
For each model $M'\in\left\{M,M_{HB},M_{AB},M_{RI}\right\}$, we define a random-mixing null model and a maximal-segregation reference model. The name maximal-segregation reference will be justified in Lemma~\ref{lemma:normalization_of_indices}. We denote the marginal connectivity of geosocial coordinate $rb$ under $M'$ by
\begin{equation}
\begin{aligned}
        \bar P_{rb}^{M'}
    & = \sum_{r'\in\RS}\sum_{b'\in\BS}
    \pi_{r'b'}^{M'}\PS{rb\sim r'b'}{}{M'},
\end{aligned}
    \label{eq:def_geosocial_marginal_connectivity}
\end{equation}
and the marginal connectivity of sociodemographic group $b$ by
\begin{equation}
\begin{aligned}
    \bar P_b^{M'} 
    & = \sum_{r\in\RS}\pi_{r\mid b}^{M'}\bar P_{rb}^{M'}.
\end{aligned}
    \label{eq:def_social_marginal_connectivity}
\end{equation}

The \emph{degree configuration model} $M_{DC}(M')$ \cite{chung2002average,newman2006modularity} preserves the marginal geosocial connectivities, but otherwise forms links by random mixing:
\begin{equation}
\begin{aligned}
    \pi_{rb}^{M_{DC}(M')}&=\pi_{rb}^{M'},\\
    \PS{r_1b_1\sim r_2b_2}{}{M_{DC}(M')}
    &=\frac{\bar P_{r_1b_1}^{M'}\bar P_{r_2b_2}^{M'}}{\bar P^{M'}}.
\end{aligned}
\label{eq:def_M_DC}
\end{equation}
It therefore provides a no-segregation null model that controls for differences in expected degrees across geosocial coordinates. 

The \emph{degree-constrained social isolation model} $M_{DSI}(M')$ also preserves the coordinate distribution and marginal connectivities of $M'$, but allocates all connectivity within sociodemographic groups:
\begin{equation}
\begin{aligned}
    \pi_{rb}^{M_{DSI}(M')}&=\pi_{rb}^{M'},\\
    \PS{r_1b_1\sim r_2b_2}{}{M_{DSI}(M')}
    &= \frac{\bar P_{r_1b_1}^{M'}\bar P_{r_2b_2}^{M'}}{\pi_{b_1}^{M'}\bar P_{b_1}^{M'}} \mathbb{1}_{\left\{b_1 = b_2\right\}} .
\end{aligned}
\label{eq:def_M_DSI}
\end{equation}
Thus, $M_{DC}(M')$ and $M_{DSI}(M')$ preserve the same geosocial marginal connectivities, or degree constraints, as $M'$, but represent respectively random mixing and complete separation between sociodemographic groups.

When all geosocial coordinates have the same marginal connectivity, $\bar P_{rb}^{M'}\equiv\bar P^{M'}$, $M_{DC}(M')$ reduces to the Erd\H{o}s--R\'enyi model $M_{ER}$ introduced above,
\begin{equation}
    \PS{r_1b_1\sim r_2b_2}{}{M_{ER}}\equiv\bar P^{M'}.
    \label{eq:def_M_ER_reference}
\end{equation}
The corresponding \emph{social isolation model} $M_{SI}$ has connection probabilities
\begin{equation}
    \PS{r_1b_1\sim r_2b_2}{}{M_{SI}}
    = \frac{\bar P^{M'}}{\pi_{b_1}^{M'}} \mathbb{1}_{\left\{b_1 = b_2\right\}}.
    \label{eq:def_M_SI}
\end{equation}

All models above are formulated at regional geographical resolution; they can analogously be formulated with finer geographical coordinates, i.e., with geosocial coordinates $pb$ in $\GS\times\BS$. The regional coordinates are obtained by the coarsening $p\mapsto r(p)$.

\begin{remark}
\label{remark:reference_model_admissibility}
We formulated the models with full geosocial coordinates for both individuals, i.e., with coordinate space $\CS_1=\RS\times\BS$, $\CS_2=\RS\times\BS;$ marginalization over $r_2$ yields the geo-egocentric formulation, and further marginalization over $r_1$ yields the social-space formulation.
Whenever the reference models $M_{DC}$ and $M_{DSI}$ as defined above are admissible, by which we mean they define valid connection probabilities in $[0,1]$ for all coordinate pairs, marginalizing over the region of the second individual yields
\begin{equation}
    \PS{r_1b_1\sim b_2}{}{M_{DC}(M')}
    =\frac{\bar P_{r_1b_1}^{M'}\bar P_{b_2}^{M'}}{\bar P^{M'}}
    \label{eq:def_M_DC_geo_egocentric}
\end{equation}
and
\begin{equation}
    \PS{r_1b_1\sim b_2}{}{M_{DSI}(M')}
    = \frac{\bar P_{r_1b_1}^{M'}}{\pi_{b_1}^{M'}} \mathbb{1}_{\left\{b_1 = b_2\right\}} .
    \label{eq:def_M_DSI_geo_egocentric}
\end{equation}
These expressions can be used to define $M_{DC}(M')$ and $M_{DSI}(M')$ directly on the geo-egocentric coordinate space $\CS_1=\RS\times\BS$, $\CS_2=\BS$, even when the corresponding full-space constructions do not define admissible connection probabilities: at full geosocial resolution, preserving the marginal connectivity of every geosocial coordinate can require conditional connection probabilities exceeding one. The corresponding admissibility conditions are substantially less restrictive after marginalizing the second individual's geographical coordinate. For $M_{DSI}$, we require that 
\begin{equation}
    \bar{P}_{r b}^{M'} \leq \pi_b^{M'}
\label{eq:M_DSI_admissibility_condition}
\end{equation}
for all groups $b$ and regions $r$. Typically, average connection probabilities scale like the reciprocal of the population size, such that for large-scale geosocial networks  \eqref{eq:M_DSI_admissibility_condition} will typically hold. For $M_{DC}$ we require
\begin{equation}
    \bar P_{r_1b_1}^{M'}\bar P_{b_2}^{M'} \leq \bar P^{M'}.
\label{eq:M_DC_admissibility_condition}
\end{equation}
If we assume that $\bar{P}_b^{M'} / \bar{P}^{M'}$ is bounded from above, which is a reasonable assumption for categorical sociodemographic space $\BS,$ then \eqref{eq:M_DC_admissibility_condition} will typically hold as well in large-scale geosocial networks. Indeed, in the empirical application considered in Chapter~3, all connection probabilities implied by the geo-egocentric reference models remain below one.
\end{remark}


\subsection{Divergence-based segregation indices}
\label{sec:div_based_seg_indices}
For any segregation-defining model $M'$ and any of its associated distributions $\rho\in\{\tau,\sigma, \xi\}$, where $\tau$ denotes the joint coordinate--link distribution defined in \eqref{eq:def_tau}, $\sigma$ the coordinate distribution conditional on a link defined in \eqref{eq:def_sigma} and $\xi$ the coordinate distribution conditional on the absence of a link defined in \eqref{eq:def_xi}, we define the unnormalized segregation measure as the $f$-divergence from the corresponding degree configuration model on the geo-egocentric geosocial coordinate space:
\begin{equation}
    \mathcal{D}_{f,\rho}(M')
    \coloneqq
    \DF{\rho_{\RS\times \BS, \BS}^{M'}}{\rho_{\RS\times \BS, \BS}^{M_{DC}(M')}}.
    \label{eq:def_unnormalized_segregation}
\end{equation}
The divergence therefore measures the extent to which the distribution induced
by $M'$ departs from degree-constrained random mixing. It equals zero whenever
$\rho^{M'}=\rho^{M_{DC}(M')}$.

To place segregation measures on a common scale, we normalize this divergence
by its value under the corresponding degree-constrained social isolation model $M_{DSI}(M')$.
Because $M_{DSI}(M')$ preserves the coordinate distribution and marginal connectivities of $M'$, it has the same associated degree configuration model as $M'$, and we define the normalized segregation index by
\begin{equation}
    S_{f,\rho}(M')
    \coloneqq
    \frac{\mathcal{D}_{f,\rho}(M')}{\mathcal{D}_{f,\rho}\!\left(M_{DSI}(M')\right)}.
    \label{eq:def_normalized_segregation}
\end{equation}

The choices $M,$ $M_{HB}$, $M_{AB},$ and $M_{RI}$ for $M'$ define total, social, geographical, and traditional geographical segregation, respectively. The generator $f$ determines how deviations from the random-mixing model are weighted. 

Both $\mathcal{D}_{f,\rho}$ and $S_{f,\rho}$ are defined on the geo-egocentric geosocial coordinate space and with reference null model $M_{DC}(M').$ When interested in other coordinate spaces, or other reference null models $N,$ we write
\begin{equation}
    \mathcal{D}_{f,\rho}\left(M',N;\CS_1, \CS_2\right)
    \coloneqq
    \DF{\rho_{\CS_1, \CS_2}^{M'}}{\rho_{\CS_1, \CS_2}^{N}}.
    \label{eq:def_unnormalized_segregation_general_space_and_null}
\end{equation}

For the conditional coordinate distribution $\rho=\sigma$ on the geo-egocentric geosocial coordinate space, under the mild admissibility conditions stated in Lemma~\ref{lemma:normalization_of_indices}, $M_{DSI}(M')$ maximizes the $f$-divergence from $M_{DC}(M')$ among models with the same marginal coordinate distribution and marginal connectivities as $M'$, so that $0\leq S_{f,\sigma}(M')\leq1$, with zero corresponding to random mixing and one to maximal segregation subject to these constraints. By Corollary~\ref{corollary:normalization_social_space}, the analogous $M_{DC}$--$M_{DSI}$ normalization result holds on the social coordinate space. Lemma~\ref{lemma:T_D_F_to_S_D_F} justifies why, for large-scale geosocial networks, \eqref{eq:def_normalized_segregation} also defines a natural normalization for the joint coordinate--link distribution $\rho = \tau$-based measures. \eqref{eq:def_normalized_segregation} will in general not be in $[0,1]$ for $\rho=\xi$-based measures, however, for convenience, we define its normalized version nevertheless by \eqref{eq:def_normalized_segregation}.

On full geosocial space, $\DF{\rho_{\RS\times \BS, \RS\times \BS}^{M}}{\rho_{\RS\times \BS, \RS\times \BS}^{M_{DC}(M)}}$ does, in our opinion, not define an adequate segregation measure for most use cases: consider the case of a society where every geosocial group $r_1 b_1$ has the same connectivity, marginalized over regions $r_2$, to every social group $b_2$, but this connectivity mass is very non-uniformly distributed over regions, e.g., because individuals prefer to connect to geographically close-by regions. Hence, the resulting $\sigma^{M}$ or $\tau^{M}$ would be very different from $\sigma^{M_{DC}(M)}$ and $\tau^{M_{DC}(M)},$ with a large value of the corresponding divergence, even though we would argue that there is no segregation at all: indeed, the corresponding geo-egocentric measures $\mathcal{D}_{f,\rho}(M)$ and $S_{f,\rho}(M)$ will be zero. A natural candidate for a reference null model on full geosocial space is given by $M_{HB-AB}$, which removes both direct dependence of connectivity on social coordinates, and indirect dependence of connectivity on social coordinates via geographical connectivity patterns, while it maintains the broader geographical connectivity patterns. However, $M_{DSI}$ does not provide a useful maximum segregation reference relative to $M_{HB-AB}$, and we leave it to future research what constitutes good no-segregation and maximum-segregation reference models on full geosocial coordinate space.        

We recall from Remark~\ref{remark:reference_model_admissibility} that a reference model is admissible if its implied connection probabilities lie in $[0,1]$ for all coordinate pairs.  The corresponding admissibility conditions for $M_{DC}$ and $M_{DSI}$ on geo-egocentric geosocial coordinate space are given in \eqref{eq:M_DC_admissibility_condition} and \eqref{eq:M_DSI_admissibility_condition}.

\begin{lemma}\label{lemma:normalization_of_indices}
For a given segregation-defining model $M'$, let
\begin{equation}
    \mathcal M(M')=\left\{\widetilde M:\pi^{\widetilde M}=\pi^{M'},\;\bar P_{rb}^{\widetilde M}=\bar P_{rb}^{M'}\ \forall r,b\right\}.
\end{equation}
Assuming that the geo-egocentric reference models $M_{DC}(M')$ and $M_{DSI}(M')$ are admissible, we have
\begin{equation*}
    M_{DC}(M')\in\mathcal M(M'),\qquad M_{DSI}(M')\in\mathcal M(M')=\mathcal M\!\left(M_{DC}(M')\right),
\end{equation*}
and, for every generator $f$,
\begin{equation}
    M_{DSI}(M')\in\argmax_{\widetilde M\in\mathcal M(M')}\mathcal D_{f,\sigma}(\widetilde M).
    \label{eq:M_DSI_in_argmax}
\end{equation}
In particular, if $\bar P_{rb}^{M'}=\bar P^{M'}$ for all $r,b$, i.e., $M_{DC}(M')=M_{ER}$ and $M_{DSI}(M')=M_{SI}$, then
\begin{equation}
    M_{SI}\in\argmax_{\widetilde M\in\mathcal M(M')}\mathcal D_{f,\sigma}(\widetilde M)
    \label{eq:M_SI_in_argmax}
\end{equation}
and
\begin{equation}
    \mathcal D_{f,\sigma}(M_{SI})
    =
    \begin{cases}
        -\displaystyle\sum_b\pi_b^{M'}\log\pi_b^{M'},&f(x)=x\log x,\\[0.75em]
        \displaystyle\sum_b\pi_b^{M'}(1-\pi_b^{M'}),&f(x)=\tfrac12|x-1|,\\[0.75em]
        |\BS|-1,&f(x)=x^2-1.
    \end{cases}
    \label{eq:normalization_const_H_D_C}
\end{equation}
Thus, for these generators, the normalization constants correspond to those of the traditional Theil Information Theory, Dissimilarity, and Pearson $\chi^2$ (squared coefficient of variation) segregation indices \cite{reardon2002measures}.
\end{lemma}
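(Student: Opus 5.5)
The plan has three parts: first check that the two reference models lie in the constraint set, then reduce the maximisation to a convexity statement about $f$-divergences, and finally evaluate the constants directly.

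\textbf{Membership in $\mathcal M(M')$.} By construction both reference models keep $\pi^{M'}$. Summing the geo-egocentric expressions \eqref{eq:def_M_DC_geo_egocentric} and \eqref{eq:def_M_DSI_geo_egocentric} over $b_2$ with weights $\pi_{b_2}^{M'}$ gives the row marginals. For $M_{DC}$ this uses $\sum_{b_2}\pi_{b_2}\bar P_{b_2}=\bar P$. For $M_{DSI}$ only $b_2=b_1$ survives, and $\pi_{b_1}\cdot\bar P_{r_1b_1}/\pi_{b_1}=\bar P_{r_1b_1}$. So $\bar P_{rb}$ is preserved in both cases. Since $M_{DC}(\widetilde M)$ depends only on $\pi$ and the $\bar P_{rb}$ (which determine $\bar P_b$ and $\bar P$), every $\widetilde M\in\mathcal M(M')$ has the same null $\sigma^{N}$, $N\coloneqq M_{DC}(M')$. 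This gives $\mathcal M(M')=\mathcal M(M_{DC}(M'))$.

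\textbf{Maximality.} Every $\sigma^{\widetilde M}$ with $\widetilde M\in\mathcal M(M')$ has the fixed row marginal $\sigma(r_1b_1,\cdot)$ summing to $w_{r_1b_1}\coloneqq\pi_{r_1b_1}\bar P_{r_1b_1}/\bar P$. The null $\sigma^N(r_1b_1,b_2)=w_{r_1b_1}q(b_2)$ has the product form with $q(b_2)=\pi_{b_2}\bar P_{b_2}/\bar P$. Write $\sigma^{\widetilde M}(r_1b_1,b_2)=w_{r_1b_1}\kappa(b_2\mid r_1b_1)$. Then
\[
\mathcal D_{f,\sigma}(\widetilde M)=\sum_{r_1b_1}w_{r_1b_1}\DF{\kappa(\cdot\mid r_1b_1)}{q}.
\]
The row constraints decouple, so I maximise each row separately over probability vectors $\kappa(\cdot\mid r_1b_1)$. (The additional column constraints are automatically met by $M_{DSI}$, so dropping them only enlarges the feasible set.) The map $\kappa\mapsto\DF{\kappa}{q}$ is convex, so its maximum over the simplex is attained at a vertex $\delta_{b}$, with value
\[
q(b)f(1/q(b))+(1-q(b))f(0).
\]
The main obstacle is to show that the vertex $\delta_{b_1}$, which is exactly the $M_{DSI}$ row, is the optimal one. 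Row-wise optimality at an arbitrary vertex is not what we need, because the vertex choice must match $b_1$. I would handle this as follows.
\begin{itemize}
\item First try to show that $g(q)\coloneqq qf(1/q)+(1-q)f(0)$ is nonincreasing in $q$ and compare with the other vertices. This fails in general.
\item Instead, exploit the column constraint. Any feasible $\widetilde M$ satisfies $\sum_{r_1b_1}w_{r_1b_1}\kappa(b_2\mid r_1b_1)=q(b_2)$. The total mass routed to $b_2$ is therefore fixed, so feasible kernels form a transportation polytope.
\item Use joint convexity: the objective is a convex function on this polytope, so a maximiser is an extreme point.
\item Then argue, via $\sigma\ll\sigma^N$ and the perspective form $\sum\sigma^N f(\sigma/\sigma^N)$, that the global bound is
\[
\sum_{r_1b_1}w_{r_1b_1}\bigl[q(b_1)f(1/q(b_1))+(1-q(b_1))f(0)\bigr].
\]
This bound is attained by the assignment $b_2=b_1$, which is the only vertex assignment consistent with the column marginals $q$, since $\sum_{r_1}w_{r_1b}=q(b)$.
\end{itemize}
This is where I expect the real work to lie, and if necessary I would fall back on the ``mild conditions'' route: restrict $\mathcal M$ to admissible models and note that column marginals pin down $\sum_{r_1}w_{r_1b}=q(b)$.

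\textbf{Constants.} Set $\bar P_{rb}\equiv\bar P$, so that $q=\pi_b$ and $w_{r_1b_1}=\pi_{r_1b_1}$. The general maximiser then specialises to $M_{SI}$, and its divergence is $\sum_b\pi_b\,g(\pi_b)$.
\begin{itemize}
\item KL: $\sum_b\pi_b\log(1/\pi_b)$, the entropy.
\item TV: $\sum_b\pi_b\bigl[\tfrac12(1-\pi_b)+\tfrac12(1-\pi_b)\bigr]=\sum_b\pi_b(1-\pi_b)$.
\item $\chi^2$: $\sum_b\pi_b\bigl[\pi_b(\pi_b^{-2}-1)-(1-\pi_b)\bigr]=\sum_b(1-\pi_b)=|\BS|-1$.
\end{itemize}
These match the normalisation constants of the Theil, Dissimilarity and squared coefficient of variation indices in \cite{reardon2002measures}.
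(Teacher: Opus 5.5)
There is a genuine gap at the maximality step, which is the actual content of the lemma. Your membership argument, the row decomposition $\mathcal D_{f,\sigma}(\widetilde M)=\sum_{r_1b_1}w_{r_1b_1}\DF{\kappa(\cdot\mid r_1b_1)}{q}$, and the three constants are all correct. You also rightly note that maximizing each row over the simplex separately loses the link to $b_1$: for KL it gives $\log(1/\min_b q(b))$, which generally exceeds the entropy.

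The replacement route, however, fails as stated. Knowing that a convex function on the transportation polytope is maximized at an extreme point does not reduce you to deterministic assignments $r_1b_1\mapsto b_2$. Extreme points are the feasible points with forest-shaped support, and they may split rows. For example, with one region and $q=(0.3,0.7)$, the point $\sigma$ with rows $(0,0.3)$ and $(0.3,0.4)$ is extreme. Moreover, $b_2=b_1$ is not the only deterministic assignment respecting the column sums. You can swap two groups with $q(b)=q(b')$, or swap $b,b'$ inside a single region $r$ with $w_{rb}=w_{rb'}$. Most importantly, the inequality $\mathcal D_{f,\sigma}(\widetilde M)\le\sum_{r_1b_1}w_{r_1b_1}g(q(b_1))$ is only announced (``via $\sigma\ll\sigma^N$ and the perspective form''), never derived. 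The fallback to ``mild conditions'' is not an argument.

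The missing idea is to apply convexity with the kernel itself as the mixing weights, rather than taking the best vertex per row. Write $\kappa(\cdot\mid r_1b_1)=\sum_{b_2}\kappa(b_2\mid r_1b_1)\,\delta_{b_2}$. Convexity of $p\mapsto\DF{p}{q}$ then gives $\DF{\kappa(\cdot\mid r_1b_1)}{q}\le\sum_{b_2}\kappa(b_2\mid r_1b_1)\,g(q(b_2))$, a bound that is linear in $\kappa$. Multiply by $w_{r_1b_1}$, sum, and exchange the sums. The column constraint $\sum_{r_1b_1}w_{r_1b_1}\kappa(b_2\mid r_1b_1)=q(b_2)$ turns the right-hand side into $\sum_{b_2}q(b_2)\,g(q(b_2))$, which is the same for every $\widetilde M\in\mathcal M(M')$. $M_{DSI}(M')$ attains this value because its rows are the vertices $\delta_{b_1}$, so Jensen is tight, and because $\sum_{r_1}w_{r_1b}=q(b)$.

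This is exactly the paper's proof. It makes the extreme-point analysis unnecessary, and it shows that any feasible model whose rows are all vertices attains the same value. So $M_{DSI}(M')$ is a maximizer, not necessarily the unique one.

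Finally, the column constraint is not one of the defining conditions of $\mathcal M(M')$, so you should justify it rather than assert it. It follows from the symmetry of the undirected connection probabilities. Symmetry makes the second-coordinate social marginal of $\sigma^{\widetilde M}$ equal to $\pi_b\bar P_b/\bar P$ with $\bar P_b=\sum_r\pi_{r\mid b}\bar P_{rb}$, which is therefore fixed on $\mathcal M(M')$.
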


\begin{corollary}
\label{corollary:normalization_social_space}
For a given segregation-defining model $M'$, let
\begin{equation}
    \mathcal M_{\BS}(M')
    =\left\{\widetilde M: \pi_b^{\widetilde M}=\pi_b^{M'},\;
    \bar P_b^{\widetilde M}=\bar P_b^{M'}\ \forall b\right\}.
\end{equation}
Assuming that $M_{DC}(M')$ and $M_{DSI}(M')$ are admissible, for every generator $f$,
\begin{equation}
    M_{DSI}(M') \in \argmax_{\widetilde M\in\mathcal M_{\BS}(M')}
    \mathcal D_{f,\sigma} \left(\widetilde M,M_{DC}(M');\BS,\BS\right).
    \label{eq:M_DSI_in_argmax_BB}
\end{equation}
\end{corollary}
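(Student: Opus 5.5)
The plan is to reduce the statement to an extremal problem over couplings with fixed marginals, and then bound each column separately using convexity of $f$ along segments.

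\textbf{Reduction to couplings.} Marginalizing over both regions, the social-space conditional coordinate distribution of any $\widetilde M\in\mathcal M_{\BS}(M')$ is
\begin{equation*}
\sigma^{\widetilde M}(b_1,b_2)=\frac{\pi_{b_1}^{M'}\pi_{b_2}^{M'}\PS{b_1\sim b_2}{}{\widetilde M}}{\bar P^{M'}} .
\end{equation*}
Here $\bar P^{\widetilde M}=\sum_b\pi_b^{M'}\bar P_b^{M'}=\bar P^{M'}$ is fixed by the constraints. Its row marginal is
\begin{equation*}
\sum_{b_2}\sigma^{\widetilde M}(b_1,b_2)=\frac{\pi_{b_1}^{M'}\bar P_{b_1}^{M'}}{\bar P^{M'}}\eqqcolon w_{b_1}.
\end{equation*}
Because the network is undirected, the social-space connection probability is symmetric in $(b_1,b_2)$, so $\sigma^{\widetilde M}$ is symmetric and its column marginal is also $w$. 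Every admissible $\widetilde M$ therefore induces a coupling $\sigma$ of $(w,w)$.

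Marginalizing \eqref{eq:def_M_DC_geo_egocentric} and \eqref{eq:def_M_DSI_geo_egocentric} over $r_1$ gives the two reference distributions:
\begin{equation*}
\sigma^{M_{DC}(M')}(b_1,b_2)=w_{b_1}w_{b_2},\qquad \sigma^{M_{DSI}(M')}(b_1,b_2)=w_{b_1}\mathbb 1_{\{b_1=b_2\}}.
\end{equation*}
The admissibility assumption guarantees that $M_{DSI}(M')$ lies in $\mathcal M_{\BS}(M')$. The claim \eqref{eq:M_DSI_in_argmax_BB} thus becomes: among all couplings $\sigma$ of $(w,w)$, the quantity $\DF{\sigma}{w\otimes w}$ is maximized by the diagonal coupling.

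\textbf{Column-wise bound.} Write $\sigma(b_1,b_2)=w_{b_1}K(b_2\mid b_1)$ with $K(\cdot\mid b_1)$ a probability vector. For each $b_2$ set $g_{b_2}(t)\coloneqq w_{b_2}f(t/w_{b_2})$ on $t\in[0,1]$; this function is convex. Then
\begin{equation*}
\DF{\sigma}{w\otimes w}=\sum_{b_1}w_{b_1}\sum_{b_2}g_{b_2}\bigl(K(b_2\mid b_1)\bigr).
\end{equation*}
Convexity of $g_{b_2}$ on the segment $[0,1]$ gives the chord bound
\begin{equation*}
g_{b_2}(t)\le(1-t)\,g_{b_2}(0)+t\,g_{b_2}(1)=(1-t)\,w_{b_2}f(0)+t\,w_{b_2}f(1/w_{b_2}).
\end{equation*}

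\textbf{Summing with the column constraint.} Apply the chord bound termwise and sum over $b_1$ with weights $w_{b_1}$. The column-marginal constraint $\sum_{b_1}w_{b_1}K(b_2\mid b_1)=w_{b_2}$ makes the $K$-dependence disappear:
\begin{equation*}
\DF{\sigma}{w\otimes w}\le\sum_{b}\Bigl[(1-w_b)\,w_b f(0)+w_b^2 f(1/w_b)\Bigr].
\end{equation*}
A direct evaluation of $\DF{\sigma^{M_{DSI}(M')}}{w\otimes w}$ gives exactly this right-hand side. On the diagonal the likelihood ratio is $w_b/w_b^2=1/w_b$ with weight $w_b^2$; off the diagonal it is $0$, with total weight $\sum_b w_b(1-w_b)$. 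This proves the maximization.

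\textbf{Main obstacle.} The delicate points are bookkeeping rather than inequalities. First, one must check that symmetry really forces the column marginal to equal $w$; this is what makes the weighted sum collapse, and the argument would fail for a non-symmetric coupling. Second, the case $f(0)=\infty$ (for example reverse-KL-type generators) needs separate treatment. There the right-hand side equals the divergence of $M_{DSI}(M')$, which is $+\infty$ whenever some $w_b<1$, so $M_{DSI}(M')$ is trivially a maximizer. If some $w_b=1$, all divergences are $0$ and the claim is also trivial. Zero entries of $w$ are handled by the absolute-continuity convention for $f$-divergences.
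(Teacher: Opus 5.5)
Your proof is correct and is essentially the paper's argument. The paper obtains the corollary by specializing Lemma~\ref{lemma:normalization_of_indices} to a single region, and that lemma's proof factorizes $\sigma^{\widetilde M}$ as row marginal times conditional, bounds the convex row functional $F_N$ by Jensen over the simplex vertices, and collapses the sum with the column-marginal constraint. Because $F_N$ is separable, that vertex bound is exactly your summed chord bound. Your explicit handling of symmetry, which the paper uses implicitly when writing $\sigma^{\widetilde M}(b)=\pi_b\bar P_b/\bar P$, and of the case $f(0)=\infty$ is a welcome addition.
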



\subsection{Recovery of traditional geographical segregation measures as special cases}
\label{sec:recovery_of_trad_seg}
In section~\ref{sec:trad_geo_seg} we showed how traditional geographical segregation measures can be interpreted as being built upon simple geosocial network models. We are now in a position to show that our suggested geosocial segregation measures defined in \eqref{eq:def_unnormalized_segregation} and \eqref{eq:def_normalized_segregation} contain traditional geographical segregation measures as exact special cases. 

\begin{lemma}\label{lemma:relation_between_geo_and_social_f_divergence}
Let $M_\phi$ be the geospatial connectivity model defined in \eqref{eq:def_geospatial_model_phi}, for a symmetric kernel $\phi$ satisfying the constant-degree condition \eqref{eq:constant_weighted_row_sums_phi}, and let $M_{RI}$ be the regional isolation model. Then, for every generator $f$, the following hold.

\begin{enumerate}
    \item[(i)] On the geo-egocentric coordinate space,
    \begin{equation}
        \mathcal D_{f,\sigma}\left(M_\phi,M_{DC}(M_\phi);\GS\times\BS,\BS\right)
        = \mathcal D_{f,\widetilde\pi},
        \label{eq:equivalence_of_indices_under_spatial_kernel}
    \end{equation}
    and, in particular,
    \begin{equation}
        \mathcal D_{f,\sigma}\left(M_{RI},M_{DC}(M_{RI});\RS\times\BS,\BS\right)
        = \mathcal D_{f,\pi},
        \label{eq:equivalence_of_indices_under_regional_isolation}
    \end{equation}
    where $\mathcal D_{f,\widetilde\pi}$ and $\mathcal D_{f,\pi}$ are defined in \eqref{eq:traditional_egocentric_segregation_indices_as_f_divergence} and \eqref{eq:traditional_segregation_indices_as_f_divergence}, respectively.

    \item[(ii)] Our definition of $\mathcal D_{f,\sigma}$ in \eqref{eq:def_unnormalized_segregation} uses $M_{DC}$ as the reference model. If instead we use $M_{HB}(M_\phi)$ as the reference model, the corresponding divergences on the geo-egocentric and full geosocial coordinate spaces generally do not coincide with one another or with $\mathcal D_{f,\widetilde\pi}$. In particular,
    \begin{align}
        D_f\!\left(
        \sigma_{\GS\times\BS,\BS}^{M_\phi}
        \middle\| \sigma_{\GS\times\BS,\BS}^{M_{HB}(M_\phi)}
        \right)
        &\neq
        D_f\!\left(
        \sigma_{\GS\times\BS,\GS\times\BS}^{M_\phi}
        \middle\| \sigma_{\GS\times\BS,\GS\times\BS}^{M_{HB}(M_\phi)}
        \right),
        \label{eq:lemma_ineq1}\\
        D_f\!\left(
        \sigma_{\GS\times\BS,\BS}^{M_\phi}
        \middle\| \sigma_{\GS\times\BS,\BS}^{M_{HB}(M_\phi)}
        \right)
        &\neq \mathcal D_{f,\widetilde\pi},
        \label{eq:lemma_ineq2}\\
        D_f\!\left(
        \sigma_{\GS\times\BS,\GS\times\BS}^{M_\phi}
        \middle\| \sigma_{\GS\times\BS,\GS\times\BS}^{M_{HB}(M_\phi)}
        \right)
        &\neq \mathcal D_{f,\widetilde\pi}.
        \label{eq:lemma_ineq3}
    \end{align}
    For the regional isolation model, however,
    \begin{equation}
    \begin{aligned}
        D_f\!\left(
        \sigma_{\RS\times\BS,\BS}^{M_{RI}}
        \middle\| \sigma_{\RS\times\BS,\BS}^{M_{HB}(M_{RI})}
        \right)
        &= D_f\!\left(
        \sigma_{\RS\times\BS,\RS\times\BS}^{M_{RI}}
        \middle\| \sigma_{\RS\times\BS,\RS\times\BS}^{M_{HB}(M_{RI})}
        \right)\\
        &= \sum_{r\in\RS}\sum_{b_1,b_2\in\BS} \pi_r\pi_{b_1}\pi_{b_2}
        f\left(
        \frac{\pi_{b_1\mid r}\pi_{b_2\mid r}}{\pi_{b_1}\pi_{b_2}}
        \right).
    \end{aligned}
    \label{eq:equivalence_of_indices_under_HB_phi}
    \end{equation}
    Hence, for $g(x)=x\log x$ or $g(x)=-\log x$,
    \begin{equation}
    \begin{aligned}
        D_g\!\left(
        \sigma_{\RS\times\BS,\BS}^{M_{RI}}
        \middle\| \sigma_{\RS\times\BS,\BS}^{M_{HB}(M_{RI})}
        \right)
        &= D_g\!\left(
        \sigma_{\RS\times\BS,\RS\times\BS}^{M_{RI}}
        \middle\| \sigma_{\RS\times\BS,\RS\times\BS}^{M_{HB}(M_{RI})}
        \right)\\
        &=2\mathcal D_{g,\pi}.
    \end{aligned}
    \label{eq:equivalence_for_xlogx}
    \end{equation}

    \item[(iii)] On the social coordinate space, geographical segregation remains measurable:
    \begin{equation}
    \begin{aligned}
        D_f\!\left(\sigma_{\BS,\BS}^{M_{RI}}\middle\|\sigma_{\BS,\BS}^{M_{DC}(M_{RI})}\right)
        &=D_f\!\left(\sigma_{\BS,\BS}^{M_{RI}}\middle\|
        \sigma_{\BS,\BS}^{M_{HB}(M_{RI})}\right)\\
        &=\sum_{b_1,b_2\in\BS}\pi_{b_1}\pi_{b_2}
        f\left(
        \sum_{r\in\RS}\pi_r\frac{\pi_{b_1\mid r}\pi_{b_2\mid r}}{\pi_{b_1}\pi_{b_2}}
        \right).
    \end{aligned}
    \label{eq:S_under_blau_space_only}
    \end{equation}
\end{enumerate}
\end{lemma}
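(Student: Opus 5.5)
The plan is to compute every conditional coordinate distribution in closed form and then read off the divergences; part~(ii) additionally needs explicit counterexamples. Throughout, the constant-degree condition \eqref{eq:constant_weighted_row_sums_phi} together with symmetry of $\phi$ does the work.

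\emph{Part (i).} Under $M_\phi$ with $\CS_1=\GS\times\BS$, $\CS_2=\BS$, the geo-egocentric connection probability is $\PS{p_1b_1\sim b_2}{}{M_\phi}=\sum_{p_2}\pi_{p_2\mid b_2}\phi(p_1,p_2)$, and $\bar P^{M_\phi}=\bar\phi$. Inserting this into \eqref{eq:def_sigma} gives
\begin{equation*}
\sigma^{M_\phi}(p_1b_1,b_2)=\frac{\pi_{p_1b_1}}{\bar\phi}\sum_{p_2}\pi_{p_2b_2}\phi(p_1,p_2)=\pi_{p_1b_1}\widetilde\pi_{b_2\mid p_1},
\end{equation*}
where the last equality uses \eqref{eq:def_tilde_pi_discrete} and the fact that the denominator there equals $\bar\phi$.

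Next I compute the marginal connectivities. By \eqref{eq:constant_weighted_row_sums_phi}, $\bar P_{p_1b_1}^{M_\phi}=\bar\phi$. For the second coordinate, $\bar P_{b_2}^{M_\phi}$ is the average over $p_2$ (with weights $\pi_{p_2\mid b_2}$) of the row sums $\sum_{p_1}\pi_{p_1}\phi(p_1,p_2)$. By symmetry of $\phi$ each such row sum is $\bar\phi$, so $\bar P_{b_2}^{M_\phi}=\bar\phi$ as well. Hence $M_{DC}(M_\phi)=M_{ER}$ and
\begin{equation*}
\sigma^{M_{DC}}(p_1b_1,b_2)=\pi_{p_1b_1}\pi_{b_2}.
\end{equation*}
The likelihood ratio is therefore $\widetilde\pi_{b_2\mid p_1}/\pi_{b_2}$, which does not depend on $b_1$. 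Summing over $b_1$ collapses $\pi_{p_1b_1}$ to $\pi_{p_1}$, and what remains is exactly $\mathcal D_{f,\widetilde\pi}$ in \eqref{eq:traditional_egocentric_segregation_indices_as_f_divergence}. The statement for $M_{RI}$ follows by taking the kernel \eqref{eq:def_RI_model1}, for which $\widetilde\pi_{b\mid p}=\pi_{b\mid r(p)}$, and coarsening $p\mapsto r(p)$; this coarsening is harmless because all quantities are constant within regions.

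\emph{Part (ii).} In $M_{HB}(M_\phi)$ the population is $\pi_p\pi_b$ and the kernel is still $\phi$. The same computation as in (i) gives $\sigma^{M_{HB}}_{\GS\times\BS,\BS}=\pi_{p_1}\pi_{b_1}\pi_{b_2}$, so the geo-egocentric ratio is
\begin{equation*}
\frac{\pi_{b_1\mid p_1}\widetilde\pi_{b_2\mid p_1}}{\pi_{b_1}\pi_{b_2}}.
\end{equation*}
On full space both $\sigma$'s carry the common factor $\phi(p_1,p_2)/\bar\phi$, so the ratio is $\pi_{b_1\mid p_1}\pi_{b_2\mid p_2}/(\pi_{b_1}\pi_{b_2})$, taken against the weights $\pi_{p_1}\pi_{p_2}\phi(p_1,p_2)\pi_{b_1}\pi_{b_2}/\bar\phi$.

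For $M_{RI}$ the kernel is supported on $r_1=r_2$, where $\widetilde\pi_{b\mid p}=\pi_{b\mid r}$. Summing the weights $\pi_{r}^2\cdot(1/\pi_r)$ over the diagonal then gives \eqref{eq:equivalence_of_indices_under_HB_phi} for both coordinate spaces simultaneously. For $g(x)=x\log x$ and for $g(x)=-\log x$, the logarithm of the product ratio splits into two terms. Each term marginalises to $\mathcal D_{g,\pi}$, using $\sum_r\pi_r\pi_{b\mid r}=\pi_b$ for the $x\log x$ case, which yields the factor $2$.

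For the inequalities \eqref{eq:lemma_ineq1}--\eqref{eq:lemma_ineq3}, I would exhibit a single small counterexample: two points, two groups, $f$ the TV or KL generator, and a kernel mixing across points, for example $\phi$ equal to a constant $a$ on the diagonal and $\phi=1-a$ off it, with equal point weights so that \eqref{eq:constant_weighted_row_sums_phi} holds. I would then evaluate the three expressions numerically or symbolically and check that they differ.

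\emph{Part (iii).} Marginalising over $r_1$ gives
\begin{equation*}
\sigma_{\BS,\BS}^{M_{RI}}(b_1,b_2)=\sum_r\pi_r\pi_{b_1\mid r}\pi_{b_2\mid r}.
\end{equation*}
Both $M_{DC}(M_{RI})=M_{ER}$ and $M_{HB}(M_{RI})$ give $\pi_{b_1}\pi_{b_2}$ on social space, since the kernel has constant row sums; this yields \eqref{eq:S_under_blau_space_only} immediately.

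The main obstacle is the counterexample step in (ii). It is straightforward but requires choosing parameters so that all three inequalities hold at once for a general nonlinear $f$. I would fix a concrete $f$, say $x\log x$, interpreting ``generally do not coincide'' as ``there exists an instance'', and verify the inequalities by direct computation.
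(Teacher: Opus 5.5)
Your proposal is correct. For part~(i), the equalities in part~(ii), and part~(iii) it follows the paper's proof: the same closed forms $\sigma^{M_\phi}(pb_1,b_2)=\pi_{pb_1}\widetilde\pi_{b_2\mid p}$, $M_{DC}(M_\phi)=M_{ER}$ and $\sigma^{M_{HB}(M_\phi)}_{\GS\times\BS,\BS}=\pi_p\pi_{b_1}\pi_{b_2}$, restriction of the full-space sum to $r_1=r_2$ for the regional kernel, and splitting of the logarithm.

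The real difference is in \eqref{eq:lemma_ineq1}--\eqref{eq:lemma_ineq3}. The paper only displays the two closed forms and asserts that they ``generally differ''. You commit to an explicit instance, which is what a proof of ``$\neq$'' actually requires. Your existential reading is also the only tenable one, since $f(x)=c(x-1)$ is a valid generator for which all three quantities vanish.

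Your example can be done without numerics. For $h(x)=x\log x$ and any symmetric constant-degree $\phi$, splitting the logarithm turns the geo-egocentric and full-space quantities into $T+\mathcal D_{h,\widetilde\pi}$ and $2T$. Here $T\coloneqq\sum_p\sum_b\pi_p\pi_b\,h(\pi_{b\mid p}/\pi_b)$ is the point-level Theil index. In your two-point kernel, $\widetilde\pi_{b\mid p}-\pi_b=(2a-1)(\pi_{b\mid p}-\pi_b)$. So $a\in(0,1)$ and unequal compositions give $0\le\mathcal D_{h,\widetilde\pi}<T$, and all three inequalities follow at once.

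With $a=\tfrac12$ one even gets $\mathcal D_{f,\widetilde\pi}=0$. The other two quantities then become $D_f(P\,\|\,Q)$ and $D_f(P\otimes P\,\|\,Q\otimes Q)$ for $P=(\pi_{pb})$, $Q=(\pi_p\pi_b)$. Strict Jensen then gives the inequalities for every strictly convex $f$, not just KL.

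One small slip: in the factor-$2$ step the identity doing the work is $\sum_b\pi_{b\mid r}=1$ for $x\log x$, and $\sum_b\pi_b=1$ for $-\log x$. It is not $\sum_r\pi_r\pi_{b\mid r}=\pi_b$.
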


\begin{corollary}
\label{corollary:recovery_normalized_traditional_indices}
Under the regional isolation model $M_{RI}$,
\begin{equation}
\begin{aligned}
    S_{x\log x,\sigma}(M_{RI}) &= \underline H,\\
    S_{\frac12|x-1|,\sigma}(M_{RI}) &= \underline D,\\
    S_{x^2-1,\sigma}(M_{RI}) &= \underline{\chi^2},
\end{aligned}
\end{equation}
where $\underline H$, $\underline D$, and $\underline{\chi^2}$ denote the normalized Theil Information Theory, Dissimilarity, and squared coefficient of variation segregation indices, respectively.
\end{corollary}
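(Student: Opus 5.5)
The corollary is a ratio of two quantities the paper already computes. By definition \eqref{eq:def_normalized_segregation},
\[
S_{f,\sigma}(M_{RI})=\frac{\mathcal D_{f,\sigma}(M_{RI})}{\mathcal D_{f,\sigma}(M_{DSI}(M_{RI}))}.
\]
Lemma~\ref{lemma:relation_between_geo_and_social_f_divergence}(i), equation \eqref{eq:equivalence_of_indices_under_regional_isolation}, identifies the numerator with $\mathcal D_{f,\pi}$ for every generator $f$. By the discussion after \eqref{eq:traditional_segregation_indices_as_f_divergence}, $\mathcal D_{f,\pi}$ is the unnormalized Theil index for $f(x)=x\log x$, the unnormalized Dissimilarity index for $f(x)=\tfrac12|x-1|$, and the unnormalized squared coefficient of variation for $f(x)=x^2-1$. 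So it remains to show that the denominator equals the conventional normalization constant in each case.

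For the denominator, first check that $M_{RI}$ has constant marginal connectivity across geosocial coordinates. From \eqref{eq:def_M_RI} and \eqref{eq:def_geosocial_marginal_connectivity},
\[
\bar P_{rb}^{M_{RI}}=\sum_{b'}\pi_{rb'}\,\frac{\bar P}{\pi_r}=\bar P
\qquad\text{for all } r,b.
\]
It follows that $\bar P_b^{M_{RI}}=\bar P$, that $M_{DC}(M_{RI})=M_{ER}$, and that $M_{DSI}(M_{RI})=M_{SI}$. The "in particular" clause of Lemma~\ref{lemma:normalization_of_indices}, equation \eqref{eq:normalization_const_H_D_C}, then gives $\mathcal D_{f,\sigma}(M_{SI})$ explicitly:
\begin{itemize}
\item the entropy $E=-\sum_b\pi_b\log\pi_b$ for $f(x)=x\log x$;
\item the Simpson interaction index $I=\sum_b\pi_b(1-\pi_b)$ for $f(x)=\tfrac12|x-1|$;
\item $|\BS|-1$ for $f(x)=x^2-1$.
\end{itemize}
Lemma~\ref{lemma:normalization_of_indices} assumes the geo-egocentric reference models are admissible. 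Here $M_{ER}$ is trivially admissible, and $M_{SI}$ requires $\bar P\le\pi_b$, which is \eqref{eq:M_DSI_admissibility_condition}. I will take this as a standing assumption, in line with Remark~\ref{remark:reference_model_admissibility}.

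The final step matches these constants with the normalizations in \cite{reardon2002measures}: $\underline H=H/E$, $\underline D=D/I$, and the normalized squared coefficient of variation divides by $|\BS|-1$. Dividing numerator by denominator then yields the three identities.

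The only point needing care is this last matching. I must confirm that the unnormalized quantities $\mathcal D_{f,\pi}$ are exactly the numerators Reardon and Firebaugh normalize, with no extra factor. For Dissimilarity, their $D=\frac{1}{2I}\sum_r\sum_b\pi_r|\pi_{b\mid r}-\pi_b|$, and $\mathcal D_{\frac12|x-1|,\pi}$ is precisely $\frac12\sum_r\sum_b\pi_r|\pi_{b\mid r}-\pi_b|$. For $\chi^2$, their normalization is $\frac{1}{|\BS|-1}\sum_r\sum_b\pi_r(\pi_{b\mid r}-\pi_b)^2/\pi_b$. Expanding $\sum_b\pi_b\bigl((\pi_{b\mid r}/\pi_b)^2-1\bigr)$ shows it equals $\sum_b(\pi_{b\mid r}-\pi_b)^2/\pi_b$, because the cross terms telescope using $\sum_b\pi_{b\mid r}=\sum_b\pi_b=1$.
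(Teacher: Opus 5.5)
Your proposal is correct and follows the same route as the paper, whose proof simply combines Lemma~\ref{lemma:relation_between_geo_and_social_f_divergence}(i) for the numerator with the constants in \eqref{eq:normalization_const_H_D_C} for the denominator. Your checks that $M_{DC}(M_{RI})=M_{ER}$ and $M_{DSI}(M_{RI})=M_{SI}$, and that each $\mathcal D_{f,\pi}$ coincides exactly with the Reardon--Firebaugh numerator, fill in details the paper leaves implicit; note also that admissibility is needed only for the maximality interpretation, not for the identity itself, since $\sigma^{M_{SI}}(rb_1,b_2)=\pi_{rb_1}\mathbb{1}_{\{b_1=b_2\}}$ is a valid distribution regardless.
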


\begin{remark}
Lemma~\ref{lemma:relation_between_geo_and_social_f_divergence} says that under \emph{geo-egocentric geosocial space} the $f$-divergence based segregation indices based upon connection probabilities contain the traditional geographical segregation indices as the special cases when the regional isolation model $M_{RI}$ is compared to the null model $M_{ER}$ (Erd\H{o}s--R\'enyi random graph). The traditional egocentric (or spatially weighted) segregation indices for a kernel $\phi$ are the special cases when $M_{\phi}$ is compared to $M_{ER}.$ Corollary~\ref{corollary:recovery_normalized_traditional_indices} shows that this also holds for the normalized variants of the traditional Theil, Dissimilarity and $\chi^2$ indices. To our knowledge, the normalizations in the geographical segregation literature \cite{reardon2002measures} are formulated in an index-specific manner, while our approach as formulated in Lemma~\ref{lemma:normalization_of_indices} provides a general normalization approach.

Up to the factor $2$, the recovery of $\mathcal{D}_{f,\pi}$ by our framework also holds when the reference null is the homogeneous sociodemographic space model $M_{HB}(M_{RI})$ and the function $f(x)$ is given by $x\log x$ or $-\log x$, see \eqref{eq:equivalence_for_xlogx}. However, for general $f,$ using $M_{HB}(M_{RI})$ as the reference null does not recover the traditional geographical measures $\mathcal D_{f,\pi}$. This is noteworthy as $\mathcal D_{f,\pi}$ compares the joint $\pi_{rb}$ to the product of its marginals $\pi_r\pi_b$ and $M_{HB}(M_{RI})$ also substitutes $\pi_r \pi_b$ for the joint $\pi_{rb}$. Note that in our segregation framework, $M_{HB}(M)$ constitutes social segregation when compared to its random mixing null $M_{DC}(M_{HB}(M)).$ When applied to the regional isolation model, $M_{HB}(M_{RI})$ removes the geographical contribution to segregation and keeps social segregation of $M_{RI},$ but $M_{RI}$ does not exhibit social segregation and it is no surprise $M_{HB}$ can constitute a reference null model relative to a purely geographically segregated model. However, with the exception of Lemma~\ref{lemma:relation_between_geo_and_social_f_divergence}, we will not consider it as a reference null model, and solely restrict its use to counterfactually model social segregation. 

While $M_{DC}(M_{RI})$ does not constitute a useful null model for full geosocial coordinate space, as it does not control for geographical homophily, $M_{HB}(M_{RI})$ does control for geographical homophily: when used as a null relative to $M_{RI}$ on full geosocial space, the resulting divergence coincides with the one on geo-egocentric geosocial space, \eqref{eq:equivalence_of_indices_under_HB_phi}. While restriction to social space only \eqref{eq:S_under_blau_space_only} yields a different expression for the divergence from $M_{RI}$ to either $M_{DC}(M_{RI})$ or $M_{HB}(M_{RI})$---on social coordinate space they coincide as null models---it is still possible to meaningfully measure geographical segregation on social space.
\end{remark}

\subsection{Choice of coordinate space and information loss}
\label{sec:choice_of_coord_space}
We have now established several reasons why geo-egocentric geosocial coordinate space provides a natural setting for measuring segregation in geosocial networks:
\begin{enumerate}
    \item[(i)] The degree-configuration model and the degree-constrained social isolation model are well suited reference models for geo-egocentric geosocial coordinate space and allow a principled normalization.
    \item[(ii)] Any measure on geo-egocentric geosocial coordinate space is insensitive to how quickly or irregularly connectivity decays or varies with geographical distance, as long as the total connectivity from geosocial coordinates to social groups stays the same, whereas any divergence-based measure defined on full geosocial coordinate space will be highly sensitive to variations of connectivity with geographical distance relative to a null model. Crucially, measures on geo-egocentric geosocial coordinate space remain sensitive to how social connectivity itself varies across regions, while measures on social coordinate space lose this information.
    \item[(iii)] As shown in Lemma~\ref{lemma:relation_between_geo_and_social_f_divergence}, traditional geographical segregation measures can be recovered as exact special cases when segregation is measured on geo-egocentric geosocial coordinate space.
    \item[(iv)] Compared to full geosocial coordinate space, geo-egocentric geosocial coordinate space is less restrictive when formulating degree-constrained reference models, see remark~\ref{remark:reference_model_admissibility}.
\end{enumerate}
Thus, we formulate our segregation framework for geo-egocentric geosocial coordinate space. However, the information loss is real. In Lemma~\ref{lemma:coordinate_coarsening} we formally establish that any geographical coordinate coarsening cannot increase the values of our $\sigma$-based segregation measures, and we expect such coarsening to typically decrease the values of our segregation measures. Furthermore, while for many use cases the additional information contained in full geosocial coordinate space might be irrelevant, there are certainly questions which can only be answered on full geosocial coordinate space. E.g., one might be interested in how geographical connectivity patterns vary across different social groups, or whether the preference to befriend members of one's own group increases or decreases as the geographical distance to individuals increases.

\subsection{Conditional coordinate and joint coordinate--link distributions}
\label{sec:sigma_tau_under_sparsity}
The joint coordinate--link distribution $\tau$ contains the full information about coordinates and connectivity represented by a network model as defined in section~\ref{sec:coordinate_spaces_and_geosocial_networks}, whereas the conditional coordinate distribution $\sigma$ gives rise to the natural normalization in Lemma~\ref{lemma:normalization_of_indices} and the recovery of traditional geographical segregation measures in Lemma~\ref{lemma:relation_between_geo_and_social_f_divergence} and Corollary~\ref{corollary:recovery_normalized_traditional_indices}. As we will show in Section~\ref{sec:homophily}, the joint coordinate--link distribution also allows a natural homophily decomposition of the $\tau$-based segregation measures via conditional $f$-divergences.

Which distribution should be used to measure segregation?
The following Lemma and Corollary show that in the sparse-network limit of vanishing marginal connectivity $\bar P\to0$, for three-times continuously differentiable generators $f$, the rescaled unnormalized measure $\mathcal D_{f,\tau} / \bar{P}$ and $\mathcal D_{f,\sigma}$ are asymptotically equal, and the normalized measures $S_{f,\tau}$ and $S_{f,\sigma}$ are asymptotically equal.
For the Total Variation distance, we obtain exact equality of the normalized measures. Large-scale geosocial networks are typically sparse, as the average degree (average number of friends) typically does not, or only slowly, increase with population size; rather, the marginal connectivity $\bar P$ scales approximately as the reciprocal of the population size. Furthermore, except for the Total Variation distance with generator $f(x)=\frac12|x-1|$, most commonly used $f$-divergences have three-times continuously differentiable generators. Hence, for the applications we have in mind, there is little practical difference between using $\sigma$ and $\tau$, and we choose the simpler distribution $\sigma$ as our default.

\begin{lemma}\label{lemma:T_D_F_to_S_D_F}
\begin{enumerate}
Let $M$ and $N$ be two network models   .
\item[(i)] Assume that $\pi^M = \pi^N = \pi$ and $\bar{P}^M = \bar{P}^N = \bar{P}.$
Then for any generator $f$
\begin{equation}
    \mathcal D_{f,\tau}(M,N;\CS_1,\CS_2)
    =\bar{P}\mathcal D_{f,\sigma}(M,N;\CS_1,\CS_2)
    +(1-\bar{P})\mathcal D_{f,\xi}(M,N;\CS_1,\CS_2).
    \label{eq:T_D_f_eq_sum_of_S_D_fs}
\end{equation}

In particular, for $g(x) = \frac{1}{2}|x-1|$ it holds that
\begin{equation}
    \mathcal D_{g,\tau}(M,N;\CS_1,\CS_2)
    =2\bar{P}\mathcal D_{g,\sigma}(M,N;\CS_1,\CS_2)
    =2(1-\bar{P})\mathcal D_{g,\xi}(M,N;\CS_1,\CS_2).
    \label{eq:T_D_f_is_twice_dissimilarity}
\end{equation}

Allowing $\bar{P}^M\neq \bar{P}^N,$ for $h(x) = x \log x$
\begin{equation}
\begin{aligned}
    \mathcal D_{h,\tau}(M,N;\CS_1,\CS_2)
    &=\bar{P}^M\mathcal D_{h,\sigma}(M,N;\CS_1,\CS_2)
    +(1-\bar{P}^M)\mathcal D_{h,\xi}(M,N;\CS_1,\CS_2)\\
    &\quad+\DKL{\operatorname{Bernoulli}(\bar{P}^M)}{\operatorname{Bernoulli}(\bar{P}^N)}.
\end{aligned}
\label{eq:T_D_f_and H}
\end{equation}

\item[(ii)] 
Let there be a sequence of models $M_i,$ $N_i,$ such that for all $i$ it holds that $\pi^{M_i} = \pi^{N_i} = \pi,$ $\bar{P}^{M_i} = \bar{P}^{N_i} = \bar{P}_i,$ and $\bar{P}_i\to 0$ as $i \to \infty$. Furthermore assume
\begin{align}
    \frac{\PS{c_1\sim c_2}{}{M_i}}{\bar{P}_i} &= \frac{\PS{c_1\sim c_2}{}{M_1}}{\bar{P}_1},\quad \forall c_1,c_2,\;\forall i,\label{eq:constant_ratio_of_Pmi_to_barP}\\
    \frac{\PS{c_1\sim c_2}{}{N_i}}{\bar{P}_i} &= \frac{\PS{c_1\sim c_2}{}{N_1}}{\bar{P}_1},\quad \forall c_1,c_2,\;\forall i.\label{eq:constant_ratio_of_Pni_to_barP}
\end{align}
If $f$ is three-times continuously differentiable in a neighborhood of $1$, then
\begin{align}
    \frac{\mathcal D_{f,\tau}(M_i,N_i;\CS_1,\CS_2)}{\bar{P}_i}
    = \mathcal D_{f,\sigma}(M_1,N_1;\CS_1,\CS_2)
    + \mathcal{O}(\bar P_i), \quad i\to\infty.
    \label{eq:T_D_f_to_S_D_f}
\end{align}

\item[(iii)] 
Conversely, let there be a sequence of models $M_i,$ $N_i,$ such that for all $i$ it holds that $\pi^{M_i} = \pi^{N_i} = \pi,$ $\bar{P}^{M_i} = \bar{P}^{N_i} =\bar{P}_i,$ and $\bar{P}_i\to 1$ as $i \to \infty$. Furthermore assume 
\begin{align}
    \frac{1-\PS{c_1\sim c_2}{}{M_i}}{1-\bar{P}_i} &=  \frac{1-\PS{c_1\sim c_2}{}{M_1}}{1-\bar{P}_1} ,\quad \forall c_1,c_2,\;\forall i,\\
    \frac{1-\PS{c_1\sim c_2}{}{N_i}}{1-\bar{P}_i} &=  \frac{1-\PS{c_1\sim c_2}{}{N_1}}{1-\bar{P}_1} ,\quad \forall c_1,c_2,\;\forall i.
\end{align}
If $f$ is three-times continuously differentiable in a neighborhood of $1$, then
\begin{align}
    \frac{\mathcal D_{f,\tau}(M_i,N_i;\CS_1,\CS_2)}{1-\bar{P}_i}
    = \mathcal D_{f,\xi}(M_1,N_1;\CS_1,\CS_2) 
    + \mathcal{O}(1 - \bar P_i), \quad i\to\infty.
    \label{eq:T_D_f_to_S_D_f_nsim}
\end{align}

\item[(iv)] In addition to the asymptotic equivalence described in (ii) and (iii), for the generator $f(x)=0.5(x-1)^2$ of the $\chi^2$ divergence, there is exact equivalence when $N=M_{ER}$ the Erd\H{o}s--R\'enyi random graph null, i.e., $P^N\equiv \bar{P},$ in the sense that
\begin{align}
    \mathcal D_{f,\tau}(M,M_{ER};\CS_1,\CS_2)
    &= \frac{1}{2\bar{P}(1-\bar{P})}\sum_{c_1,c_2}\pi_{c_1}\pi_{c_2}\left(\PS{c_1\sim c_2}{}{M} - \bar{P}\right)^2 ,\\
    \mathcal D_{f,\sigma}(M,M_{ER};\CS_1,\CS_2)
    &= \frac{1-\bar{P}}{\bar{P}}\mathcal D_{f,\tau}(M,M_{ER};\CS_1,\CS_2),\label{eq:simga_chi_2_equiv_tau}\\
    \mathcal D_{f,\xi}(M,M_{ER};\CS_1,\CS_2)
    &= \frac{\bar{P}}{1-\bar{P}}\mathcal D_{f,\tau}(M,M_{ER};\CS_1,\CS_2).\label{eq:xi_chi_2_equiv_tau}
\end{align}

\end{enumerate}
\end{lemma}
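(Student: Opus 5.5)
The plan is to reduce everything to the structure of $\tau$ as a two-component mixture. Since $\pi^M=\pi^N=\pi$, write $P^M(c_1,c_2)\coloneqq\PS{c_1\sim c_2}{}{M}$ and similarly for $N$. Then, by \eqref{eq:def_tau}--\eqref{eq:def_xi}, we have $\tau^M(c_1,c_2,1)=\bar P^M\sigma^M(c_1,c_2)$ and $\tau^M(c_1,c_2,0)=(1-\bar P^M)\xi^M(c_1,c_2)$, and likewise for $N$.

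\textbf{Part (i).} Split the sum defining $\DF{\tau^M}{\tau^N}$ over $e\in\{0,1\}$. When $\bar P^M=\bar P^N=\bar P$, the ratios $\tau^M/\tau^N$ reduce to $\sigma^M/\sigma^N$ on $e=1$ and to $\xi^M/\xi^N$ on $e=0$. Factoring out the common weights $\bar P$ and $1-\bar P$ gives \eqref{eq:T_D_f_eq_sum_of_S_D_fs} directly.

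For Total Variation, $\sum_{c}\bigl|\sigma^M-\sigma^N\bigr|\bar P=\sum_c\pi_{c_1}\pi_{c_2}\bigl|P^M-P^N\bigr|$, and the same sum equals $(1-\bar P)\sum_c\bigl|\xi^M-\xi^N\bigr|$, because $\bigl|(1-P^M)-(1-P^N)\bigr|=\bigl|P^M-P^N\bigr|$. So the two terms in \eqref{eq:T_D_f_eq_sum_of_S_D_fs} coincide, which gives \eqref{eq:T_D_f_is_twice_dissimilarity}.

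For KL with $\bar P^M\neq\bar P^N$, write $\log(\tau^M/\tau^N)=\log(\sigma^M/\sigma^N)+\log(\bar P^M/\bar P^N)$ on $e=1$, and analogously on $e=0$. The extra terms sum to the Bernoulli KL; this is the chain rule of KL.

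\textbf{Part (ii).} By \eqref{eq:constant_ratio_of_Pmi_to_barP}, $\sigma^{M_i}=\sigma^{M_1}$ and $\sigma^{N_i}=\sigma^{N_1}$ for all $i$. Hence the $e=1$ term of (i) contributes exactly $\bar P_i\,\mathcal D_{f,\sigma}(M_1,N_1)$, and it remains to show $(1-\bar P_i)\mathcal D_{f,\xi}(M_i,N_i)=\mathcal O(\bar P_i^2)$.

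Write $P^{M_i}=\bar P_i a(c)$ and $P^{N_i}=\bar P_i b(c)$, with $a,b$ fixed. Then
\begin{equation*}
    \frac{\xi^{M_i}}{\xi^{N_i}}=\frac{1-\bar P_i a}{1-\bar P_i b}=1+\bar P_i(b-a)+\mathcal O(\bar P_i^2),
\end{equation*}
uniformly over the finitely many coordinates (assume finite coordinate spaces, and absolute continuity when $b=0$). Taylor-expand $f$ at $1$ to third order with $f(1)=0$. The first-order term is $f'(1)\,\EE_{\xi^{N_i}}\bigl[\xi^{M_i}/\xi^{N_i}-1\bigr]=0$, since both $\xi$'s are probability distributions. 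The second-order term is $\tfrac12f''(1)\,\EE\bigl[(\cdot-1)^2\bigr]=\mathcal O(\bar P_i^2)$, and the remainder is $\mathcal O(\bar P_i^3)$ by $C^3$-smoothness near $1$. Dividing by $\bar P_i$ yields \eqref{eq:T_D_f_to_S_D_f}.

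This step is the main obstacle. It requires that the ratios be uniformly close to $1$, so that the Taylor neighborhood applies, and that the linear term cancels exactly. Note that $f$ need not be smooth away from $1$, which is why the $\sigma$-part is kept exact rather than expanded.

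\textbf{Part (iii).} This is the mirror image of (ii), swapping the roles of links and non-links: replace $P$ by $1-P$, $\sigma$ by $\xi$, and $\bar P$ by $1-\bar P$.

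\textbf{Part (iv).} With $f(x)=\tfrac12(x-1)^2$ and $N=M_{ER}$, direct computation gives
\begin{equation*}
    \mathcal D_{f,\tau}=\sum_c\pi_{c_1}\pi_{c_2}\left[\frac{(P^M-\bar P)^2}{2\bar P}+\frac{(P^M-\bar P)^2}{2(1-\bar P)}\right],
\end{equation*}
which equals the stated formula. Likewise $\mathcal D_{f,\sigma}=\sum_c\pi_{c_1}\pi_{c_2}(P^M-\bar P)^2/(2\bar P^2)$ and $\mathcal D_{f,\xi}=\sum_c\pi_{c_1}\pi_{c_2}(P^M-\bar P)^2/\bigl(2(1-\bar P)^2\bigr)$. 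Comparing these expressions gives the ratios in \eqref{eq:simga_chi_2_equiv_tau} and \eqref{eq:xi_chi_2_equiv_tau}.
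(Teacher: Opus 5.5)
Your proposal is correct and follows essentially the paper's proof. For (i) and (iv) you split $\tau$ over $e\in\{0,1\}$; for (ii) you keep the $\sigma$-part exact and show the $\xi$-part is $\mathcal O(\bar P_i^2)$ by a second-order Taylor expansion whose linear term cancels by mass balance; and you mirror this for (iii). The only cosmetic difference is that you expand $f$ in the single likelihood ratio $\xi^{M_i}/\xi^{N_i}$ around $1$, whereas the paper expands the two-variable function $(x,y)\mapsto(1-y)f\bigl((1-x)/(1-y)\bigr)$ around $(\bar P_i,\bar P_i)$ and cancels the linear term via $\sum_k w_k x_{k,i}=\sum_k w_k y_{k,i}$.
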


\begin{corollary}
\label{corollary:normalized_tau_sigma}
For the normalized measures defined in \eqref{eq:def_normalized_segregation} we obtain the following relations.
\begin{enumerate}
\item[(i)] Let $M_i$ be a sequence of models such that $\pi^{M_i}=\pi$, $\bar P^{M_i}=\bar P_i\to0$, and \eqref{eq:constant_ratio_of_Pmi_to_barP} holds. If $f$ is three-times continuously differentiable in a neighborhood of $1$, then
\begin{equation}
    S_{f,\tau}(M_i) = S_{f,\sigma}(M_1)
    + \mathcal{O}(\bar P_i), \quad i\to\infty.
    \label{eq:asymptotic_equality_norm_sigma_tau}
\end{equation}

    \item[(ii)] For $g(x)=\frac12|x-1|$,
    \begin{equation}
        S_{g,\tau}(M)=S_{g,\sigma}(M)=S_{g,\xi}(M).
        \label{eq:equality_norm_sigma_tau_tv}
    \end{equation}

    \item[(iii)] For $h(x)=0.5(x-1)^2$, if $M_{DC}(M)=M_{ER}$, then
    \begin{equation}
        S_{h,\tau}(M)=S_{h,\sigma}(M)=S_{h,\xi}(M).
        \label{eq:equality_norm_sigma_tau_chi2}
    \end{equation}
\end{enumerate}
\end{corollary}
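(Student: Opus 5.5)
The three parts of Corollary~\ref{corollary:normalized_tau_sigma} all rest on one observation. The normalized index \eqref{eq:def_normalized_segregation} for $M$ and for its maximal reference $M_{DSI}(M)$ uses the same null $M_{DC}(M)$. Both $M$ and $M_{DSI}(M)$ also share the coordinate distribution $\pi$ and the marginal connectivity $\bar P$ with $M_{DC}(M)$. This last fact follows from Lemma~\ref{lemma:normalization_of_indices}, since $M_{DSI}(M),M_{DC}(M)\in\mathcal M(M)$ and preserving every $\bar P_{rb}$ preserves $\bar P$. Hence the hypotheses of Lemma~\ref{lemma:T_D_F_to_S_D_F}(i) hold for both pairs $(M,M_{DC}(M))$ and $(M_{DSI}(M),M_{DC}(M))$. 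The plan is to apply that lemma separately to the numerator and the denominator of $S_{f,\tau}$ and then take the ratio.

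For (ii), with $g(x)=\tfrac12|x-1|$, identity \eqref{eq:T_D_f_is_twice_dissimilarity} gives
$\mathcal D_{g,\tau}(M)=2\bar P\,\mathcal D_{g,\sigma}(M)=2(1-\bar P)\,\mathcal D_{g,\xi}(M)$, and the same identity holds for $M_{DSI}(M)$ with the same $\bar P$. The factors $2\bar P$ and $2(1-\bar P)$ cancel in the ratio, which yields \eqref{eq:equality_norm_sigma_tau_tv}. For (iii), assume $M_{DC}(M)=M_{ER}$. Then both $M$ and $M_{DSI}(M)=M_{SI}$ are compared to $M_{ER}$. Part (iv) of Lemma~\ref{lemma:T_D_F_to_S_D_F}, via \eqref{eq:simga_chi_2_equiv_tau} and \eqref{eq:xi_chi_2_equiv_tau}, gives proportionality constants $(1-\bar P)/\bar P$ and $\bar P/(1-\bar P)$ that depend only on $\bar P$. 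These constants again cancel in the ratio.

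For (i), I first check that the reference models inherit the scaling assumption \eqref{eq:constant_ratio_of_Pmi_to_barP}. Write $P^{M_i}=(\bar P_i/\bar P_1)P^{M_1}$. Then $\bar P_{rb}^{M_i}$ and $\bar P_b^{M_i}$ scale by the same factor $\bar P_i/\bar P_1$, and so do the formulas \eqref{eq:def_M_DC_geo_egocentric} and \eqref{eq:def_M_DSI_geo_egocentric}. It follows that $N_i\coloneqq M_{DC}(M_i)$ and $M_{DSI}(M_i)$ satisfy \eqref{eq:constant_ratio_of_Pni_to_barP}. Applying Lemma~\ref{lemma:T_D_F_to_S_D_F}(ii) to the pair $(M_i,N_i)$ gives $\mathcal D_{f,\tau}(M_i)/\bar P_i=\mathcal D_{f,\sigma}(M_1)+\mathcal O(\bar P_i)$. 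Applying it to $(M_{DSI}(M_i),N_i)$ gives $\mathcal D_{f,\tau}(M_{DSI}(M_i))/\bar P_i=\mathcal D_{f,\sigma}(M_{DSI}(M_1))+\mathcal O(\bar P_i)$. Dividing the two expansions gives \eqref{eq:asymptotic_equality_norm_sigma_tau}, because $a+\mathcal O(\epsilon)$ divided by $c+\mathcal O(\epsilon)$ equals $a/c+\mathcal O(\epsilon)$ whenever $c>0$.

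The main obstacle I expect is this denominator step in (i). It requires $\mathcal D_{f,\sigma}(M_{DSI}(M_1))>0$, which holds whenever there are at least two groups with positive mass, since then $\sigma^{M_{DSI}}\neq\sigma^{M_{DC}}$ and strict convexity of $f$ near $1$ applies. I would add this as an implicit non-degeneracy assumption. A second point to confirm is that admissibility persists along the sequence, but the probabilities only shrink, so this is automatic.
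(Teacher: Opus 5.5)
Your proposal is correct and matches the paper's proof step for step. You verify that $M_{DC}(M_i)$ and $M_{DSI}(M_i)$ inherit the constant-ratio scaling and apply Lemma~\ref{lemma:T_D_F_to_S_D_F}(ii) to numerator and denominator for (i); for (ii) and (iii) you cancel the $\bar P$-dependent factors from \eqref{eq:T_D_f_is_twice_dissimilarity} and Lemma~\ref{lemma:T_D_F_to_S_D_F}(iv). Your explicit non-degeneracy condition $\mathcal D_{f,\sigma}(M_{DSI}(M_1))>0$, needed to divide the two expansions, is left implicit in the paper, so flagging it is a small improvement.
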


The relations in Lemma~\ref{lemma:T_D_F_to_S_D_F} and Corollary~\ref{corollary:normalized_tau_sigma} are illustrated in the first two rows of Fig.~\ref{fig:hom_quant_demo}, which compares the $\sigma$-, $\tau$-, and $\xi$-based measures when the marginal connectivity $\bar P$ ranges across the interval $(0,1)$ for several generators $f$.

\begin{figure}[htbp]
    \centering
    \includegraphics[width=1.0\textwidth]{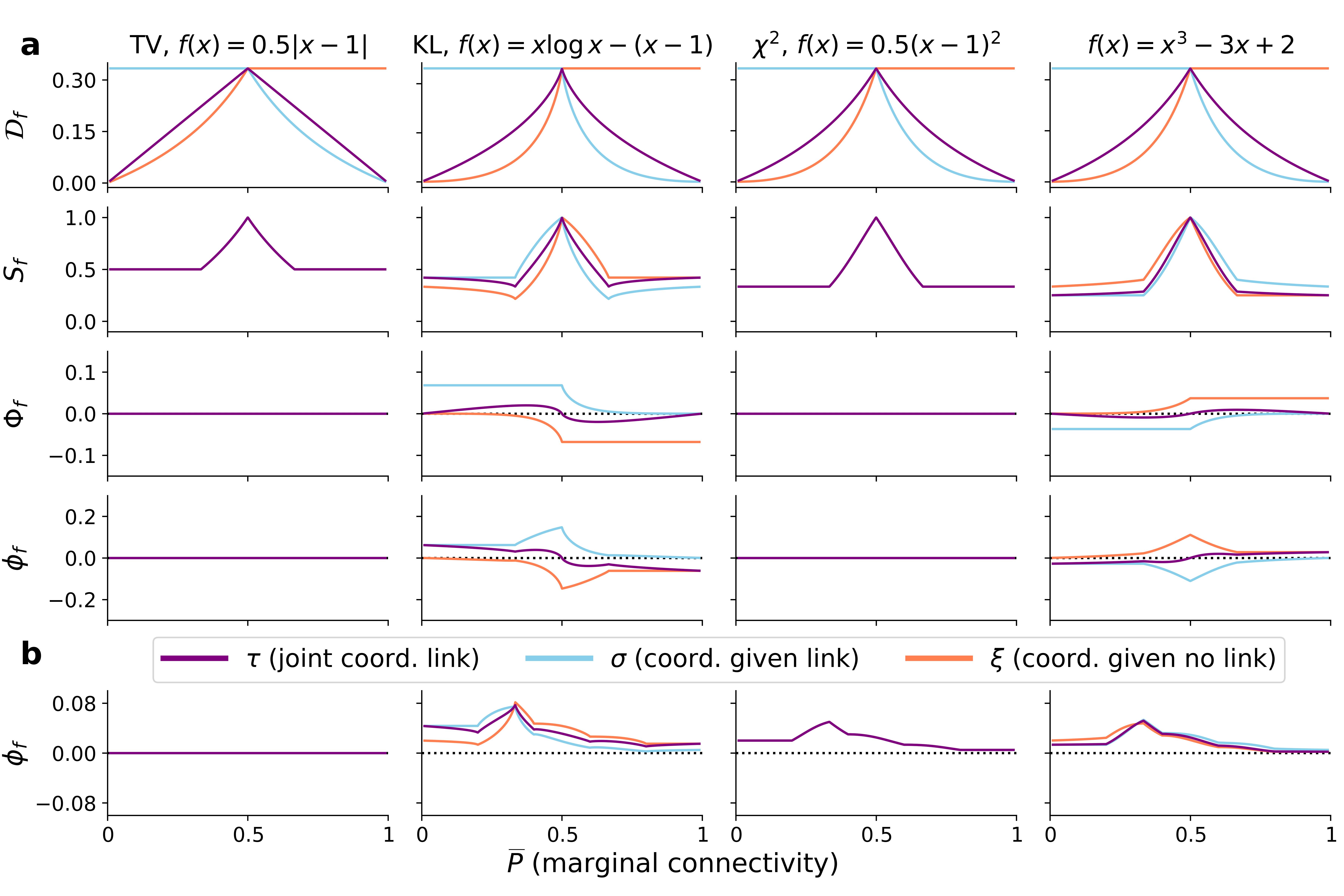}  
    \caption{\emph{Homophily quantification depends on generator $f.$} \textbf{a} The induced $\sigma^M$ and $\sigma^N$ are as in example~\ref{example:homophily quantification} (i), with $\bar{P}$ varied from $0$ to $1$ and $\epsilon = \min\{\bar{P}, 1-\bar{P}\} / (9\bar{P}).$ As predicted by our theory, for TV distance and the $\chi^2$ divergence, 
    the normalized measures $S_{f}$ and $\phi_{f}$ based on $\sigma$, $\tau$, and $\xi$ all coincide exactly, while for the KL-divergence and the cubic generator $f(x)=x^3-3x+2,$ the normalized measures based on $\sigma$ and $\xi$ approach the ones based on $\tau$ when $\bar{P}$ approaches $0$ and $1$ respectively. The TV- and $\chi^2$-based homophily quantifications $\phi$ are zero, whereas for the KL-divergence $\phi_{f,\sigma}$ is positive and for the cubic generator $\phi_{f,\sigma}$ is negative; note that $f''(x) = 6x$ is increasing on $(0,\infty)$ for the cubic generator and $f''(x)=1/x$ is decreasing for the KL divergence. \textbf{b} The normalized homophily quantification $\phi_f$ for example~\ref{example:homophily quantification} (ii) with $\epsilon = \min\{\bar{P}, (1-\bar{P})/2\} / (25\bar{P}).$ For the KL divergence, both terms $A$ and $B$ in \eqref{eq:example_hom_AB_eq} are positive, for the $\chi^2$ divergence $A>0$ and $B=0,$ while for the cubic generator $A>0>B,$ but $A>|B|.$ These relations correspond to $\phi_{f,\sigma}$ being largest for the KL divergence, slightly smaller for the $\chi^2$ divergence, and even smaller, but still positive, for the cubic generator. For the TV distance, $A=B=0,$ and hence $\phi_{f,\rho}=0$ for $\rho\in\{\tau,\sigma,\xi\}$ and $f(x)=\frac12|x-1|$.}  
    \label{fig:hom_quant_demo}
\end{figure}

\subsection{Homophily, heterophily, and network modularity}\label{sec:homophily}
The segregation indices $S_{f,\tau}(M')$ or $S_{f,\sigma}(M')$ measure the magnitude of departures from degree-constrained random mixing, but do not distinguish whether these departures are homophilous or heterophilous. We call a departure homophilous if, under $M'$ relative to $M_{DC}(M')$, it increases within-group connectivity or decreases between-group connectivity, and heterophilous if it decreases within-group connectivity or increases between-group connectivity. The measures based on the joint coordinate--link distribution $\tau$ allow a natural decomposition, described below, into homophilous and heterophilous contributions via conditional $f$-divergences. We propose a signed homophily measure by subtracting the heterophilous contribution from the homophilous contribution. We also present a decomposition of the measures based on the conditional coordinate distribution $\sigma$ into homophilous and heterophilous contributions and show in Lemma ~\ref{lemma:homophily_quantification} that for three times continuously differentiable generators $f$ the $\tau$- and $\sigma$ based decompositions are asymptotically equivalent in the sparse-network limit $\bar P\to 0$, and for Total Variation distance the normalized versions coincide exactly, thus extending the result of Lemma~\ref{lemma:T_D_F_to_S_D_F}. Furthermore, we show that for Total Variation distance, the proposed signed homophily measure recovers Network modularity, or when normalized, nominal assortativity. 

 In this section, for a given segregation-defining model $M'$, let $N = M_{DC}(M')$ denote its degree configuration model. We define the homophilous, heterophilous and ambiphilous sets of coordinate pairs as
\begin{align}
    \mathcal{I}_{\mathrm{hom}}(M')
    &\coloneqq
    \left\{
        (rb,b): \PS{rb\sim b}{}{M'} > \PS{rb\sim b}{}{N}
    \right\}
    \nonumber\\
    &\quad\cup
    \left\{
        (rb_1,b_2):b_1\neq b_2,\, \PS{rb_1\sim b_2}{}{M'} < \PS{rb_1\sim b_2}{}{N}
    \right\},
    \label{eq:def_homophilous_coordinate_set}\\
    \mathcal{I}_{\mathrm{het}}(M')
    &\coloneqq
    \left\{
        (rb,b): \PS{rb\sim b}{}{M'} < \PS{rb\sim b}{}{N}
    \right\}
    \nonumber\\
    &\quad\cup
    \left\{
        (rb_1,b_2): b_1\neq b_2,\, \PS{rb_1\sim b_2}{}{M'} > \PS{rb_1\sim b_2}{}{N}
    \right\},
    \label{eq:def_heterophilous_coordinate_set}\\
    \mathcal{I}_{\mathrm{amb}}(M')
    &\coloneqq
    \left\{
        (rb_1,b_2):\PS{rb_1\sim b_2}{}{M'} = \PS{rb_1\sim b_2}{}{N}
    \right\}.
    \label{eq:def_ambiphilous_coordinate_set}
\end{align}

The $f$-divergence based on the joint coordinate--link distribution $\tau$ admits a natural decomposition into these two types of deviation. Since $M'$ and $N$ have the same (unconditional) coordinate distributions $\pi^{M'}=\pi^N$, they assign the same marginal probability mass to each of the sets in
\eqref{eq:def_homophilous_coordinate_set}--\eqref{eq:def_ambiphilous_coordinate_set}. For any set $\mathcal{I}$ of coordinate pairs, let
\begin{equation}
    \pi^{M'}(\mathcal{I})
    \coloneqq
    \sum_{(rb_1,b_2)\in\mathcal{I}}
    \pi_{rb_1}^{M'}\pi_{b_2}^{M'}.
\end{equation}
The conditional decomposition property of $f$-divergences (see \cite{polyanskiy2025information}) then gives
\begin{equation}
\begin{aligned}
    & \mathcal{D}_{f,\tau}(M')\\
    &= \pi^{M'}\!\left(\mathcal{I}_{\mathrm{hom}}(M')\right)\\
    &\quad\quad \cdot\DF{\tau^{M'}\!\left(rb_1,b_2,e\mid (rb_1,b_2)\in\mathcal{I}_{\mathrm{hom}}(M')\right)}
    {\tau^{N}\!\left(rb_1,b_2,e\mid (rb_1,b_2)\in\mathcal{I}_{\mathrm{hom}}(M')\right)} \\
    &\quad + \pi^{M'}\!\left(\mathcal{I}_{\mathrm{het}}(M')\right)\\
    &\quad\quad\cdot\DF{\tau^{M'}\!\left(rb_1,b_2,e\mid (rb_1,b_2)\in\mathcal{I}_{\mathrm{het}}(M')\right)}
    {\tau^{N}\!\left(rb_1,b_2,e\mid (rb_1,b_2)\in\mathcal{I}_{\mathrm{het}}(M')\right)} \\
    &\eqqcolon
    \mathcal{A}_{f,\tau}(M') + \mathcal{B}_{f,\tau}(M').
    \label{eq:tau_homophily_decomposition}
\end{aligned}
\end{equation}

The ambiphilous set makes no contribution because the conditional connection probabilities under $M'$ and $N$ coincide there. We say that $\mathcal{A}_{f,\tau}(M')$ and $\mathcal{B}_{f,\tau}(M')$ quantify the unnormalized homophilous and heterophilous contributions, respectively.
We define a corresponding signed homophily measure by
\begin{equation}
    \Phi_{f,\tau}(M') \coloneqq \mathcal{A}_{f,\tau}(M') - \mathcal{B}_{f,\tau}(M').
    \label{eq:def_unnormalized_tau_homophily}
\end{equation}

Normalizing by the divergence of the maximal-segregation reference from the null $N,$ i.e., by the same normalization constant used to obtain $S_{f,\tau}(M')$, we define
\begin{equation}
    \begin{aligned}
        \alpha_{f,\tau}(M') &\coloneqq \frac{\mathcal{A}_{f,\tau}(M')}{\mathcal{D}_{f,\tau}\!\left(M_{DSI}(M')\right)},\\
        \beta_{f,\tau}(M') &\coloneqq \frac{\mathcal{B}_{f,\tau}(M')}{\mathcal{D}_{f,\tau}\!\left(M_{DSI}(M')\right)},\\
        \phi_{f,\tau}(M') &\coloneqq \alpha_{f,\tau}(M')-\beta_{f,\tau}(M').
    \end{aligned}
    \label{eq:def_normalized_tau_homophily}
\end{equation}
It follows that
\begin{equation}
    S_{f,\tau}(M') = \alpha_{f,\tau}(M') + \beta_{f,\tau}(M'),
    \label{eq:tau_segregation_homophily_relation}
\end{equation}
and
\begin{equation}
    -S_{f,\tau}(M') \leq \phi_{f,\tau}(M') \leq S_{f,\tau}(M').
\end{equation}
If every deviation from random mixing is homophilous, then
$\phi_{f,\tau}(M')=S_{f,\tau}(M')$; if every deviation is heterophilous, then
$\phi_{f,\tau}(M')=-S_{f,\tau}(M')$.

\subsubsection{Positive and unique generators}
The conditional-divergence decomposition used above to split $\mathcal{D}_{f,\tau}(M')$ into homophilous and heterophilous contributions is not available for $\mathcal{D}_{f,\sigma}(M')$, because conditioning on the presence of a link generally changes the probability mass assigned to the homophilous and heterophilous coordinate sets. Instead, we decompose $\mathcal{D}_{f,\sigma}(M')$ by grouping its individual summands
\begin{equation}
    u_{f,\sigma}^{M'}(rb_1,b_2)\coloneqq \sigma^{N}(rb_1,b_2)f\left(\frac{\sigma^{M'}(rb_1,b_2)}{\sigma^{N}(rb_1,b_2)}\right).
    \label{eq:def_sigma_cellwise_divergence}
\end{equation}
according to whether the corresponding coordinate pair belongs to $\mathcal{I}_{\mathrm{hom}}(M')$ or $\mathcal{I}_{\mathrm{het}}(M')$. However, we need to take care of a technical detail: the generator of an $f$-divergence is not unique. Two generators $f$ and $\tilde f$ generate the same $f$-divergence if and only if $\tilde f(x)=f(x)+c(x-1)$ for some $c\in\mathbb{R}$ \cite{polyanskiy2025information}. We denote the equivalence class of all generators defining the same $f$-divergence by
\begin{equation}
    [f]\coloneqq\left\{f(x)+c(x-1):c\in\mathbb{R}\right\}.
\end{equation}
Although all $f$ in $[f]$ generate the same overall $f$-divergence, the individual summands \eqref{eq:def_sigma_cellwise_divergence} generally depend on the specific representative $f$.
However, when defining homophilous and heterophilous contributions via individual summands, we need the summands to be uniquely determined by the $f$-divergence and non-negative. For differentiable generators, we therefore choose the unique representative $\tilde{f}$ of $[f]$ satisfying $\tilde{f}'(1)=0$: for any $f\in[f]$, this representative is given by $\tilde f(x)=f(x)-f'(1)(x-1)$. Since $\tilde f$ is convex and $\tilde f(1)=0$ --- these properties are shared by all generators --- the additional property $\tilde f'(1)=0$ implies that $\tilde f(x)\geq0$ for all $x\geq0$. Moreover, this representative is unique: assume $g\in [f]$ satisfies $g'(1)=0.$ Because $g\in[f]$, we can write $g(x) = \tilde{f}(x) + c(x-1)$ for some $c\in\mathbb{R}$ and it follows that $g'(1)=c$ and hence $c=0.$

For non-differentiable generators, we generalize this approach by choosing the unique representative of $[f]$ whose subdifferential at $1$ is centered at zero: we choose any $f\in[f],$ determine its subdifferential $\partial f(1)=[a,b]$, and define
\begin{equation}
\tilde f(x)\coloneqq f(x)-\frac{a+b}{2}(x-1),
\end{equation}
such that
\begin{equation}
\partial\tilde f(1)=\left[-\frac{b-a}{2},\frac{b-a}{2}\right].
\label{eq:sub_diff_centered_at_zero}
\end{equation}
Since adding $c(x-1)$ to $f(x)$ shifts both endpoints of its subdifferential $\partial f(1)$ by $c,$ the representative $\tilde f$ is independent of the initially chosen $f$ and hence unique. Moreover, because $0\in\partial\tilde f(1)$ and $\tilde f(1)=0$, it follows by convexity that $\tilde f(x)\geq0$ for all $x\geq0$. For the Total Variation distance, the standard generator $g(x)=\frac12|x-1|$ already satisfies \eqref{eq:sub_diff_centered_at_zero}, since $\partial g(1)=[-\frac{1}{2},\frac{1}{2}]$. 
We henceforth always choose the representative defined in \eqref{eq:sub_diff_centered_at_zero} --- for ease of notation we denote it by $f$ --- and define the non-negative contribution of coordinate pair $(rb_1,b_2)$ to the $f$-divergence by \eqref{eq:def_sigma_cellwise_divergence}. 

Because $M'$ and $N$ have the same (unconditional) coordinate distribution $\pi$ and marginal connectivity $\bar P$, the sign of $\sigma^{M'}(rb_1,b_2)-\sigma^{N}(rb_1,b_2)$ agrees with the sign of the corresponding difference in connection probabilities. We therefore define the unnormalized homophilous and heterophilous contributions by
\begin{align}
    \mathcal{A}_{f,\sigma}(M')&\coloneqq\sum_{(rb_1,b_2)\in\mathcal{I}_{\mathrm{hom}}(M')} u_{f,\sigma}^{M'}(rb_1,b_2),\label{eq:def_sigma_homophilous_contribution}\\
    \mathcal{B}_{f,\sigma}(M')&\coloneqq\sum_{(rb_1,b_2)\in\mathcal{I}_{\mathrm{het}}(M')} u_{f,\sigma}^{M'}(rb_1,b_2),\label{eq:def_sigma_heterophilous_contribution}
\end{align}
respectively. Their sum gives the overall divergence
\begin{equation}
    \mathcal{D}_{f,\sigma}(M')=\mathcal{A}_{f,\sigma}(M')+\mathcal{B}_{f,\sigma}(M').
    \label{eq:sigma_homophily_decomposition}
\end{equation}
In analogy with \eqref{eq:def_unnormalized_tau_homophily} we define the unnormalized signed homophily measure by
\begin{equation}
    \Phi_{f,\sigma}(M')\coloneqq\mathcal{A}_{f,\sigma}(M')-\mathcal{B}_{f,\sigma}(M').
\label{eq:def_unnormalized_sigma_homophily}
\end{equation}
Normalizing by the divergence of the maximal-segregation reference $M_{DSI}$ from $N$, we define
\begin{equation}
\begin{aligned}
    \alpha_{f,\sigma}(M')&\coloneqq\frac{\mathcal{A}_{f,\sigma}(M')}{\mathcal{D}_{f,\sigma}(M_{DSI}(M'))},\\
    \beta_{f,\sigma}(M')& \coloneqq\frac{\mathcal{B}_{f,\sigma}(M')}{\mathcal{D}_{f,\sigma}(M_{DSI}(M'))},\\
    \phi_{f,\sigma}(M')&\coloneqq\alpha_{f,\sigma}(M')-\beta_{f,\sigma}(M').
\end{aligned}
\label{eq:def_sigma_homophily}
\end{equation}
Thus,
\begin{equation}
    S_{f,\sigma}(M')=\alpha_{f,\sigma}(M')+\beta_{f,\sigma}(M')
\end{equation}
and
\begin{equation}
    -1\leq-S_{f,\sigma}(M')\leq\phi_{f,\sigma}(M')\leq S_{f,\sigma}(M')\leq 1.
    \label{eq:sigma_segregation_homophily_relation}
\end{equation}

\subsubsection{Network modularity}
Perhaps the most popular established measure of homophily in social networks is network modularity $Q$, or when normalized, the nominal assortativity coefficient $\rho_A$ \cite{newman2003mixing}. The signed homophily measures $\Phi$ and $\phi$ introduced above include network modularity and nominal assortativity as special cases. We first express these established measures in our model notation and then state the corresponding recovery results.

Let $e_{b_1b_2}$ denote the fraction of edges in an observed network connecting a node in group $b_1$ to a node in group $b_2$, and let $s_b\coloneqq\sum_{b'}e_{bb'}$. For our undirected networks, $e_{b_1b_2}=e_{b_2b_1}$. Network modularity and the nominal assortativity coefficient compare the observed intra-group edge-mass to the expected intra-group edge-mass under degree-constrained random mixing (the degree configuration model) and, for an undirected network, are given by
\begin{equation}
\begin{aligned}
    Q & = \sum_b \left(e_{bb} - s_b^2\right), \\
    \rho_A & = \frac{Q}{ 1 - \sum_b s_b^2} .
\end{aligned}
\label{eq:def_net_mod_ass}
\end{equation}

We rewrite this in our notation with $\hat M$ denoting the maximum-likelihood fit of a Stochastic Block Model \cite{holland1983stochastic} with predefined groups fitted to the observed number of edges between groups\footnote{\label{fn:mle_SBM} Suppose edges are independent Bernoulli random variables whose success probabilities depend only on the group memberships of their nodes (individuals). Let $\hat M$ denote the maximum-likelihood fit of a Stochastic Block Model with predefined groups fitted by maximum likelihood to the observed numbers of edges between groups; then its expected group-mixing fractions satisfy $\sigma^{\hat M}(b,b')=e_{bb'}$ and the implied fractions of edge endpoints belonging to each group are $s_b=\sum_{b'}e_{bb'}$. For $M' = \hat M$, our degree configuration model $M_{DC}(M')$ defined in \eqref{eq:def_M_DC} coincides with the null model used in the definition of network modularity \cite{chung2002average,newman2006modularity}, since $\sigma_{\BS,\BS}^{M_{DC}(\hat M)}(b,b')=s_bs_{b'}$.}
and $M_{DC}(\hat M)$ its degree configuration model as
\begin{align}
    Q & = \sum_b \left(\sigma_{\BS,\BS}^{\hat{M}}(b,b) - \sigma_{\BS,\BS}^{M_{DC}(\hat M)}(b,b)\right)
    \label{eq:network_mod_our_notation}
\end{align}
from which we arrive at our more general definition of network modularity for arbitrary models $M$
\begin{align}\label{eq:def_modularity_general}
    Q(M,M_{DC}(M);\BS,\BS) \coloneqq \sum_b \left(\sigma_{\BS,\BS}^M(b,b) - \sigma_{\BS,\BS}^{M_{DC}(M)}(b,b)\right) .
\end{align}
For geo-egocentric geosocial coordinate space, we define
\begin{equation}
\begin{aligned}
    Q(M) & \coloneqq Q(M,M_{DC}(M);\RS\times\BS,\BS) \\
    & \coloneqq \sum_r\sum_b \left(\sigma_{\RS\times\BS,\BS}^M(rb,b) - \sigma_{\RS\times\BS,\BS}^{M_{DC}(M)}(rb,b)\right)
\end{aligned}
\end{equation}
and note that $Q(M,M_{DC}(M);\RS\times\BS,\BS) = Q(M,M_{DC}(M);\BS,\BS)$ since 
$\sum_r \sigma_{\RS\times\BS,\BS}^M(rb,b)=\sigma_{\BS,\BS}^M(b,b).$
We also define the normalized version\footnote{As for $Q(M)$, we could again equivalently define \eqref{eq:def_general_rho_A} on social coordinate space, as both the denominator and the numerator are equal for geo-egocentric geosocial coordinate space and social coordinate space.}
\begin{equation}
\begin{aligned}
\rho_A(M) 
&\coloneqq \frac{Q(M)}{\mathcal{D}_{g,\sigma}(M_{DSI}(M),M_{DC}(M);\RS\times\BS,\BS)}.
\end{aligned}
\label{eq:def_general_rho_A}
\end{equation}

For Total Variation distance, i.e., the $f$-divergence with generator $g(x)=\frac{1}{2}|x-1|$, our signed unnormalized homophily measure $\Phi_{g, \sigma}(M)$ as defined in \eqref{eq:def_unnormalized_sigma_homophily} equals $Q(M)$ and the normalized signed measure $\phi_{g, \sigma}(M)$ equals $\rho_A(M).$ Furthermore, the normalization in \eqref{eq:def_general_rho_A} indeed corresponds to the normalization in \eqref{eq:def_net_mod_ass}, i.e.,  $\rho_A(\hat{M})$ equals the traditional nominal assortativity coefficient $\rho_A$, and taken together we have
\begin{equation}
\begin{aligned}
    \Phi_{g, \sigma}(\hat M) & = Q(\hat{M}) = Q , \\
    \phi_{g, \sigma}(\hat M) & = \rho_A(\hat{M}) = \rho_A .
\end{aligned}
\end{equation}

These relations are established in Lemma~\ref{lemma:homophily_quantification} and Corollary~\ref{corollary:normalized_homophily_quantification}. We formulate the Lemma for general null models $N$ satisfying the necessary support conditions, and denote the above defined homophily measures with null $N$ as $\mathcal{A}_{f,\rho}(M,N;\RS_1\times \BS,\BS),$ $\mathcal{B}_{f,\rho}(M,N;\RS_1\times \BS,\BS),$ $\Phi_{f,\rho}(M,N;\RS_1\times \BS,\BS)$ and $Q(M,N;\RS_1\times \BS,\BS).$ 
We recall that for any $f$-divergence, we choose the generator $f$ defined in \eqref{eq:sub_diff_centered_at_zero} and denote it by $f$.

\begin{lemma}[Properties of homophily quantification]\label{lemma:homophily_quantification}
We assume the null $N$ has the same marginal coordinate distribution as $M$ and the same marginal connectivity, i.e., $\pi^N=\pi^M$ and $\bar{P}^N = \bar{P}^M$.
\begin{enumerate}
    \item[(i)] For $g(x)=\frac{1}{2}|x-1|$,
    \begin{align}
        \Phi_{g,\sigma}(M,N;\RS\times\BS,\BS) & = Q(M,N;\RS\times\BS,\BS),\label{eq:alpha_m_beta_eq_Q_for_sigma}\\
        \Phi_{g,\sigma}(M,N;\RS\times\BS,\BS) & = \tfrac{1}{2\bar{P}}\Phi_{g,\tau}(M,N;\RS\times\BS,\BS).\label{eq:phi_tau_eq_2phi_sigma}
    \end{align}
    i.e., up to the multiplicative factor $2\bar P$, the approaches based on the joint coordinate--link distribution $\tau$ and the conditional coordinate distribution $\sigma$ are equivalent when using Total Variation distance.
    
    \item[(ii)] Let there be a sequence of models $M_i$, $N_i$, such that for all $i$ it holds that $\pi^{M_i}=\pi^{N_i}=\pi$, $\bar{P}^{M_i}=\bar{P}^{N_i}=\bar{P}_i$, and $\bar{P}_i\downarrow0$ as $i\to\infty$. Furthermore assume that
    \begin{align}
        \frac{\PS{rb_1\sim b_2}{}{M_i}}{\bar{P}^{M_i}} &= \frac{\PS{rb_1\sim b_2}{}{M_1}}{\bar{P}^{M_1}},\quad \forall rb_1,b_2,\;\forall i,\label{eq:const_ratio_hom_lemma}\\
        \frac{\PS{rb_1\sim b_2}{}{N_i}}{\bar{P}^{N_i}} &= \frac{\PS{rb_1\sim b_2}{}{N_1}}{\bar{P}^{N_1}},\quad \forall rb_1,b_2,\;\forall i. \label{eq:const_ratio_hom_lemma_null}
    \end{align}
    If $f$ is three-times continuously differentiable in a neighborhood of $1$, then
    \begin{align}
        \frac{\mathcal{A}_{f,\tau}(M_i,N_i;\RS\times\BS,\BS)}{\bar{P}_i}
        &=\mathcal{A}_{f,\sigma}(M_1,N_1;\RS\times\BS,\BS)
        +\mathcal O(\bar P_i), \quad i\to\infty,\label{eq:alpha1_to_alphaleq1}\\
        \frac{\mathcal{B}_{f,\tau}(M_i,N_i;\RS\times\BS,\BS)}{\bar{P}_i}
        &=\mathcal{B}_{f,\sigma}(M_1,N_1;\RS\times\BS,\BS)
        +\mathcal O(\bar P_i), \quad i\to\infty,\label{eq:beta1_to_betaleq1}
    \end{align}
and consequently also
\begin{equation}
        \frac{\Phi_{f,\tau}(M_i,N_i;\RS\times\BS,\BS)}{\bar{P}_i}
        =\Phi_{f,\sigma}(M_1,N_1;\RS\times\BS,\BS)+\mathcal O(\bar P_i), \quad i\to\infty.
\label{eq:phi1_to_phileq1}
\end{equation}
\end{enumerate}
\end{lemma}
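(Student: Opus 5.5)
The plan is to reduce both parts to a cell-by-cell comparison over coordinate pairs $c=(rb_1,b_2)$. Write $w_c\coloneqq\pi_{rb_1}\pi_{b_2}$, and let $p^M_c$ and $p^N_c$ be the connection probabilities under $M$ and $N$. Since $\pi^M=\pi^N$ and $\bar P^M=\bar P^N=\bar P$, \eqref{eq:def_sigma} gives $\sigma^M(c)=w_cp^M_c/\bar P$ and $\sigma^N(c)=w_cp^N_c/\bar P$. The sign of $\sigma^M(c)-\sigma^N(c)$ therefore equals the sign of $p^M_c-p^N_c$, so $\mathcal I_{\mathrm{hom}}$ and $\mathcal I_{\mathrm{het}}$ classify $\sigma$-cells and $\tau$-cells identically.

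First I unpack the $\tau$-decomposition \eqref{eq:tau_homophily_decomposition} cellwise. The conditional distributions on a set $\mathcal I$ are $\tau(c,e)/\pi(\mathcal I)$ under both models, with the same normaliser. Hence
\begin{equation*}
\mathcal A_{f,\tau}=\sum_{c\in\mathcal I_{\mathrm{hom}}} w_c\Bigl[p^N_c f\bigl(p^M_c/p^N_c\bigr)+(1-p^N_c) f\bigl((1-p^M_c)/(1-p^N_c)\bigr)\Bigr],
\end{equation*}
and analogously for $\mathcal B_{f,\tau}$. Replacing $f$ by $f+c(x-1)$ adds $\sum_{c\in\mathcal I}\sum_e(\tau^M-\tau^N)=0$, so these pieces do not depend on the representative of $[f]$. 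We may therefore use the centred representative \eqref{eq:sub_diff_centered_at_zero} throughout.

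For (i), take $g(x)=\tfrac12|x-1|$. Each cell contributes $u_{g,\sigma}(c)=\tfrac12|\sigma^M(c)-\sigma^N(c)|$. I sort the cells into within-group cells ($b_1=b_2$) and between-group cells, splitting each kind by sign. This gives
\begin{equation*}
\Phi_{g,\sigma}=\tfrac12\Bigl[\sum_{\mathrm{within}}(\sigma^M-\sigma^N)+\sum_{\mathrm{between}}(\sigma^N-\sigma^M)\Bigr].
\end{equation*}
Both $\sigma^M$ and $\sigma^N$ sum to one, so the between-group sum equals the within-group sum. Hence $\Phi_{g,\sigma}=\sum_r\sum_b(\sigma^M(rb,b)-\sigma^N(rb,b))=Q(M,N;\RS\times\BS,\BS)$, which is \eqref{eq:alpha_m_beta_eq_Q_for_sigma}.

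For \eqref{eq:phi_tau_eq_2phi_sigma}, the bracket in the cellwise $\tau$-formula for TV equals $\tfrac12\bigl(|p^M_c-p^N_c|+|p^N_c-p^M_c|\bigr)=|p^M_c-p^N_c|$. Thus each cell contributes $w_c|p^M_c-p^N_c|=\bar P|\sigma^M(c)-\sigma^N(c)|=2\bar P\,u_{g,\sigma}(c)$. This gives $\mathcal A_{g,\tau}=2\bar P\mathcal A_{g,\sigma}$ and $\mathcal B_{g,\tau}=2\bar P\mathcal B_{g,\sigma}$, and the claim follows.

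For (ii), set $a_c\coloneqq p^{M_1}_c/\bar P_1$ and $b_c\coloneqq p^{N_1}_c/\bar P_1$. By \eqref{eq:const_ratio_hom_lemma}--\eqref{eq:const_ratio_hom_lemma_null}, $p^{M_i}_c=\bar P_ia_c$ and $p^{N_i}_c=\bar P_ib_c$. Consequences:
\begin{enumerate}
\item[(a)] The homophilous and heterophilous sets do not depend on $i$.
\item[(b)] $\sigma^{M_i}=\sigma^{M_1}$ and $\sigma^{N_i}=\sigma^{N_1}$, with $\sigma^{M_1}(c)=w_ca_c$ and $\sigma^{N_1}(c)=w_cb_c$.
\end{enumerate}
In the cellwise $\tau$-formula, the link term is $w_c\bar P_ib_cf(a_c/b_c)=\bar P_i\,u_{f,\sigma}^{M_1}(c)$ exactly.

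It remains to show that the no-link term is $\mathcal O(\bar P_i^2)$. Its argument is $x_i=1-\bar P_i(a_c-b_c)/(1-\bar P_ib_c)=1+\mathcal O(\bar P_i)$. The centred representative satisfies $f(1)=f'(1)=0$, and $f$ is $C^3$ (indeed $C^2$ suffices) near $1$. Taylor's theorem then gives $f(x_i)=\mathcal O((x_i-1)^2)=\mathcal O(\bar P_i^2)$, uniformly over the finitely many cells. Dividing by $\bar P_i$ and summing over $\mathcal I_{\mathrm{hom}}$, respectively $\mathcal I_{\mathrm{het}}$, yields \eqref{eq:alpha1_to_alphaleq1}--\eqref{eq:beta1_to_betaleq1}. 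Subtracting these gives \eqref{eq:phi1_to_phileq1}.

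The main point needing care is the bookkeeping step that turns the conditional-divergence decomposition into cellwise sums. One must confirm that the normalisers $\pi(\mathcal I)$ cancel, and that using the centred representative is legitimate for the $\tau$-pieces, which is exactly the representative-independence observation above. Once that is in place, both parts follow from the cellwise formulas by elementary algebra and a second-order Taylor bound.
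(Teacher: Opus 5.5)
Your proposal is correct and follows essentially the same route as the paper. For (i) you use the same sign-matching and within/between bookkeeping, and the same cellwise identity $w_c|p^M_c-p^N_c|=2\bar P\,u_{g,\sigma}(c)$ for the $\tau$-version; for (ii) you use the same split of each $\tau$-cell into an exact link term $\bar P_i\,u^{M_1}_{f,\sigma}(c)$ plus a no-link term that is $\mathcal O(\bar P_i^2)$, because the centred representative has $f(1)=f'(1)=0$. Your one-variable Taylor bound (with the correct remark that $C^2$ suffices) and your explicit check that the $\tau$-pieces do not depend on the representative are slightly cleaner than the paper's two-variable expansion of $(1-y)f\bigl((1-x)/(1-y)\bigr)$, but the argument is the same.
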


\begin{corollary}\label{corollary:normalized_homophily_quantification}
\begin{enumerate}
    \item[(i)] 
Under the assumptions of Lemma~\ref{lemma:homophily_quantification}(i)
    \begin{align}
        \phi_{g,\sigma}(M) & = \rho_{A}(M),\label{eq:alpha_m_beta_eq_Q_for_sigma_normalized}\\
        \phi_{g,\sigma}(M) & = \phi_{g,\tau}(M).\label{eq:phi_tau_eq_2phi_sigma_normalized}
    \end{align}
In particular, with $\hat{M}$ the maximum-likelihood fit introduced in \eqref{eq:network_mod_our_notation}
\begin{equation}
    \phi_{g, \sigma}(\hat M) = \rho_A(\hat{M}) = \rho_A .
    \label{eq:recovery_of_network_ass}
\end{equation}
\item[(ii)]
Let $M_i$ be a sequence of models such that $\pi^{M_i}=\pi$ for all $i$, $\bar{P}^{M_i}\downarrow0$ as $i\to\infty$, and such that \eqref{eq:const_ratio_hom_lemma} holds. If $f$ is three-times continuously differentiable in a neighborhood of $1$, then
    \begin{align}
        \alpha_{f,\tau}(M_i)
        &=\alpha_{f,\sigma}(M_1)+\mathcal O(\bar P_i), \quad i\to\infty,\label{eq:alpha1_to_alphaleq1_normalized}\\
        \beta_{f,\tau}(M_i)
        &=\beta_{f,\sigma}(M_1) +\mathcal O(\bar P_i), \quad i\to\infty,\label{eq:beta1_to_betaleq1_normalized}
    \end{align}
and consequently also
\begin{equation}
        \phi_{f,\tau}(M_i)=\phi_{f,\sigma}(M_1) +\mathcal O(\bar P_i), \quad i\to\infty.
\label{eq:phi1_to_phileq1_normalized}
\end{equation}
\end{enumerate}
\end{corollary}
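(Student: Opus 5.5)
Part (i) follows by dividing the identities of Lemma~\ref{lemma:homophily_quantification}(i) by the matching normalization constants and then computing one TV constant explicitly. Part (ii) is a quotient of two sparse-limit expansions.

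\textbf{Part (i), first identity.} Set $N=M_{DC}(M)$. By Lemma~\ref{lemma:normalization_of_indices}, $N\in\mathcal M(M)$, so $\pi^N=\pi^M$. Summing the row marginals gives $\bar P^N=\bar P^M$. Hence Lemma~\ref{lemma:homophily_quantification}(i) applies and gives $\Phi_{g,\sigma}(M)=Q(M)$. Dividing by $\mathcal D_{g,\sigma}(M_{DSI}(M))$ yields $\phi_{g,\sigma}(M)=\rho_A(M)$ directly from the definitions \eqref{eq:def_sigma_homophily} and \eqref{eq:def_general_rho_A}.

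\textbf{Part (i), identity $\phi_{g,\sigma}=\phi_{g,\tau}$.} The numerator satisfies $\Phi_{g,\tau}=2\bar P\,\Phi_{g,\sigma}$ by \eqref{eq:phi_tau_eq_2phi_sigma}. For the denominator, $M_{DSI}(M)$ and $M_{DC}(M)$ share $\pi$ and $\bar P$, again by Lemma~\ref{lemma:normalization_of_indices}. Equation \eqref{eq:T_D_f_is_twice_dissimilarity} then gives $\mathcal D_{g,\tau}(M_{DSI}(M))=2\bar P\,\mathcal D_{g,\sigma}(M_{DSI}(M))$. The factors $2\bar P$ cancel in the ratio.

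\textbf{Part (i), recovery \eqref{eq:recovery_of_network_ass}.} It remains to show that the denominator of $\rho_A(\hat M)$ equals $1-\sum_b s_b^2$. Write $w_{rb}\coloneqq\pi_{rb}\bar P_{rb}/\bar P$ for the row marginal of $\sigma$ on $\RS\times\BS$. Its column marginal is $s_{b}=\pi_{b}\bar P_{b}/\bar P$, and $\sum_r w_{rb}=s_b$. By \eqref{eq:def_M_DC_geo_egocentric} and \eqref{eq:def_M_DSI_geo_egocentric},
\begin{equation*}
\sigma^{M_{DC}}(rb,b')=w_{rb}s_{b'},\qquad
\sigma^{M_{DSI}}(rb,b')=w_{rb}\mathbb 1_{\{b=b'\}} .
\end{equation*}
With $g(x)=\tfrac12|x-1|$, each row $rb$ contributes $\tfrac12 w_{rb}\bigl[(1-s_b)+\sum_{b'\neq b}s_{b'}\bigr]=w_{rb}(1-s_b)$. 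Summing over $r$ and $b$ gives $\sum_b s_b(1-s_b)=1-\sum_b s_b^2$. Combined with $Q(\hat M)=Q$ from footnote~\ref{fn:mle_SBM}, this yields $\rho_A(\hat M)=\rho_A$.

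\textbf{Part (ii).} I will apply Lemma~\ref{lemma:homophily_quantification}(ii) with $N_i=M_{DC}(M_i)$. This requires checking \eqref{eq:const_ratio_hom_lemma_null} for the nulls. If $P^{M_i}/\bar P_i$ is constant in $i$, then $\bar P_{rb}^{M_i}/\bar P_i$ and $\bar P_b^{M_i}/\bar P_i$ are also constant. Consequently $P^{M_{DC}(M_i)}/\bar P_i=(\bar P_{rb}^{M_i}/\bar P_i)(\bar P_{b}^{M_i}/\bar P_i)$ is constant in $i$. The same argument shows that $P^{M_{DSI}(M_i)}/\bar P_i$ is constant. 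Moreover, $M_{DC}(M_i)=M_{DC}(M_1)$ up to the scale $\bar P_i$, and likewise for $M_{DSI}$.

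This gives $\mathcal A_{f,\tau}(M_i)/\bar P_i=\mathcal A_{f,\sigma}(M_1)+\mathcal O(\bar P_i)$. Lemma~\ref{lemma:T_D_F_to_S_D_F}(ii), applied to the pair $(M_{DSI}(M_i),M_{DC}(M_i))$, gives $\mathcal D_{f,\tau}(M_{DSI}(M_i))/\bar P_i=\mathcal D_{f,\sigma}(M_{DSI}(M_1))+\mathcal O(\bar P_i)$. Taking the quotient gives \eqref{eq:alpha1_to_alphaleq1_normalized}, since $(a+\mathcal O(\epsilon))/(d+\mathcal O(\epsilon))=a/d+\mathcal O(\epsilon)$ when $d>0$. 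The same argument handles $\beta$, and $\phi$ follows by subtraction.

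\textbf{Main obstacle.} The delicate step is ensuring the denominator $d=\mathcal D_{f,\sigma}(M_{DSI}(M_1))$ is strictly positive, so that the quotient expansion is valid. Since $M_{DSI}\neq M_{DC}$ whenever at least two groups carry positive edge mass, I will assume this non-degeneracy, which is implicit in the normalization being well defined. I also rely on admissibility of the reference models for all $i$, which is automatic as $\bar P_i\downarrow0$.
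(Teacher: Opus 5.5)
Your proposal is correct and follows the paper's proof essentially step for step. In part (i) you divide the identities of Lemma~\ref{lemma:homophily_quantification}(i) by normalization constants that are related by the same factor $2\bar P$ via \eqref{eq:T_D_f_is_twice_dissimilarity}, and you compute the TV constant $1-\sum_b s_b^2$ explicitly; in part (ii) you check the constant-ratio conditions for $M_{DC}(M_i)$ and $M_{DSI}(M_i)$ exactly as in the proof of Corollary~\ref{corollary:normalized_tau_sigma} and take the quotient of the two sparse-limit expansions, and your explicit assumption $\mathcal D_{f,\sigma}(M_{DSI}(M_1))>0$ is one the paper leaves implicit.
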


The following lemma examines how our homophily quantification depends on the choice of $f$-divergence. We recall our convention that, for any $f$-divergence, we choose as generator the representative $f$ satisfying \eqref{eq:sub_diff_centered_at_zero}.
\begin{lemma}[Properties of homophily quantification 2]\label{lemma:homophily_quantification_2} 
Let $q\in(0,1).$ For any strictly convex generator $f$
\begin{align}
    q f\left(\frac{q + 2\epsilon }{q}\right) & > 2q f\left(\frac{q +\epsilon }{q}\right),\quad q+2\epsilon\in[0,1],\;\epsilon\neq 0 .\label{eq:lemma_hom_q_ineq_0}
\end{align}
For any $f$ twice differentiable at $1$ with $f''(1)>0$ and sufficiently small \(\epsilon>0\),
\begin{align}
    q f\left(\frac{q + 2\epsilon }{q}\right) & > 2q f\left(\frac{q -\epsilon }{q}\right),\label{eq:lemma_hom_q_ineq_1}\\
    q f\left(\frac{q - 2\epsilon }{q}\right) & > 2q f\left(\frac{q +\epsilon }{q}\right).\label{eq:lemma_hom_q_ineq_2}
\end{align}

If $f''$ is strictly increasing on $(0,\infty),$ then, in addition to \eqref{eq:lemma_hom_q_ineq_1} and \eqref{eq:lemma_hom_q_ineq_2},
\begin{align}
    q f\left(\frac{q - \epsilon }{q}\right) & < q f\left(\frac{q +\epsilon }{q}\right),\quad 0<\epsilon<\min\{q,1-q\},\label{eq:lemma_hom_q_ineq_3}\\
\end{align}
conversely, if $f''$ is strictly decreasing on $(0,\infty),$ then in addition to \eqref{eq:lemma_hom_q_ineq_1} and \eqref{eq:lemma_hom_q_ineq_2},
\begin{align}
    q f\left(\frac{q - \epsilon }{q}\right) & > q f\left(\frac{q +\epsilon }{q}\right),\quad 0<\epsilon<\min\{q,1-q\}.\label{eq:lemma_hom_q_ineq_4}\\
\end{align}
Finally, if $f$ is strictly convex and symmetric around $1,$ then \eqref{eq:lemma_hom_q_ineq_1} and \eqref{eq:lemma_hom_q_ineq_2} hold for all $0<\epsilon<\min\{q/2, (1-q)/2\},$ and
\begin{align}
    q f\left(\frac{q - \epsilon }{q}\right) & = q f\left(\frac{q +\epsilon }{q}\right),\quad  0<\epsilon<\min\{q,1-q\}.\label{eq:lemma_hom_q_ineq_5}
\end{align}
\end{lemma}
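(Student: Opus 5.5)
The plan is to reduce everything to statements about the normalized generator near $1$. Set $t\coloneqq\epsilon/q$, so that each inequality, after dividing by $q>0$, compares $f(1\pm 2t)$ with $2f(1\pm t)$, or $f(1-t)$ with $f(1+t)$. Throughout I use the convention fixed before Lemma~\ref{lemma:homophily_quantification}: $f$ is the representative with $\partial f(1)$ centered at zero. Hence $f(1)=0$, $f\geq 0$, and $f'(1)=0$ whenever $f$ is differentiable at $1$. The side conditions on $\epsilon$ serve only to keep every argument $1\pm t$, $1\pm 2t$ inside $[0,\infty)$.

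\emph{Step 1, inequality \eqref{eq:lemma_hom_q_ineq_0}.} Write $1+t$ as the midpoint of $1$ and $1+2t$. Strict convexity with $t\neq0$ gives
\[
f(1+t)<\tfrac12 f(1)+\tfrac12 f(1+2t)=\tfrac12 f(1+2t).
\]
Multiplying by $2q$ yields the claim. This needs no differentiability, and it works for both signs of $\epsilon$.

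\emph{Step 2, inequalities \eqref{eq:lemma_hom_q_ineq_1} and \eqref{eq:lemma_hom_q_ineq_2}.} Twice differentiability at $1$ forces the normalized representative to satisfy $f'(1)=0$. The second-order Taylor expansion at $1$ then reads $f(1+s)=\tfrac12 f''(1)s^2+o(s^2)$. Consequently
\[
f(1\pm2t)=2f''(1)t^2+o(t^2),\qquad 2f(1\mp t)=f''(1)t^2+o(t^2).
\]
Because $f''(1)>0$, the left-hand side exceeds the right-hand side for all sufficiently small $t>0$, that is, for sufficiently small $\epsilon$. The one point needing care is that the $o(t^2)$ remainders must be dominated by $f''(1)t^2$, which holds once $t$ is small enough.

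\emph{Step 3, inequalities \eqref{eq:lemma_hom_q_ineq_3} and \eqref{eq:lemma_hom_q_ineq_4}.} Define $h(t)\coloneqq f(1+t)-f(1-t)$ on $[0,1)$; the condition $\epsilon<q$ gives $t<1$, so $1-t>0$. Then $h(0)=0$ and $h'(0)=2f'(1)=0$, while
\[
h''(s)=f''(1+s)-f''(1-s).
\]
If $f''$ is strictly increasing, $h''(s)>0$ for $s>0$. Writing $h(t)=\int_0^t\int_0^u h''(s)\,ds\,du$ then gives $h(t)>0$, which is \eqref{eq:lemma_hom_q_ineq_3}. If $f''$ is strictly decreasing, the signs flip and we obtain \eqref{eq:lemma_hom_q_ineq_4}. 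Inequalities \eqref{eq:lemma_hom_q_ineq_1} and \eqref{eq:lemma_hom_q_ineq_2} are inherited from Step~2, since strict monotonicity of $f''$ together with convexity should give $f''(1)>0$. That implication, ruling out $f''(1)=0$, is the step I expect to be the main obstacle. My first attempt would be this: if $f''(1)=0$ with $f''\geq0$ strictly monotone, then $f''$ vanishes on one side of $1$. So $f$ is affine on that side, and after normalization identically zero there, which contradicts the strict convexity the lemma implicitly assumes for homophily quantification. Failing that, I would restrict this part of the statement to the case $f''(1)>0$.

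\emph{Step 4, symmetric $f$.} Equality \eqref{eq:lemma_hom_q_ineq_5} is immediate from $f(1-t)=f(1+t)$. For \eqref{eq:lemma_hom_q_ineq_1}, symmetry rewrites the right-hand side as $2qf(1+t)$, and Step~1 finishes. For \eqref{eq:lemma_hom_q_ineq_2}, write $f(1-2t)=f(1+2t)>2f(1+t)$, again by Step~1. Both hold globally on the stated range, since $\epsilon<q/2$ keeps $1-2t\geq0$, and the bound $(1-q)/2$ only keeps the probabilities $q\pm 2\epsilon$ inside $[0,1]$.
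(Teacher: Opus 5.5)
Your proof follows the paper's own route in all four steps. The paper uses midpoint convexity for the first inequality, a second-order Taylor expansion at $1$ with $f(1)=f'(1)=0$ for the next two, a double integration of $f''$ (your $h(t)=f(1+t)-f(1-t)$ computation is the same argument) for the monotone-$f''$ cases, and symmetry combined with the midpoint bound for the last part.

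The one step to repair is the one you flag as the main obstacle: showing $f''(1)>0$ when $f''$ is strictly monotone. Your sketch is wrong as stated. If $f''$ is strictly increasing and $f''(1)=0$, then $f''$ does not vanish on one side of $1$; it is strictly \emph{negative} on $(0,1)$, which contradicts convexity directly.

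The correct argument takes one line. Convexity gives $f''\ge 0$, so for strictly increasing $f''$ pick any $x\in(0,1)$ to get $0\le f''(x)<f''(1)$. For strictly decreasing $f''$, pick any $x>1$ to get the same conclusion. The same comparison with some other point $y$ shows $f''>0$ on all of $(0,\infty)$, so these $f$ are automatically strictly convex.

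This is exactly the one-line remark at the end of the paper's proof. No appeal to an implicit strict-convexity assumption is needed, and no restriction of the statement to the case $f''(1)>0$ is needed.
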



\begin{remark}
    Lemma~\ref{lemma:homophily_quantification_2} sheds some light on how the homophily quantification depends on the properties of the generator $f.$ Our generalized version $Q(M)$ of network modularity is the sum of the signed deviations of the within-group edge masses under $M$ from the corresponding within-group edge masses under the null $M_{DC}(M),$ i.e., it does not matter how the total surplus or lack of within group edge mass is distributed over different groups, and how this surplus or lack corresponds to a lack or surplus in edge mass for various between group pairs. This corresponds to the following observation: assume $\sigma^{M_{DC}(M)}(r_1b_1,b_2)\equiv q$ is constant, and consider deviations of $\sigma^M(r_1 b_1, b_2)$ from $q,$ then the Total Variation contribution of these deviations does not depend on their sign, and several small deviations have the same contribution as one large deviation which equals the sum of the small deviations. That is, for TV distance there is equality in \eqref{eq:lemma_hom_q_ineq_0}, \eqref{eq:lemma_hom_q_ineq_1}, \eqref{eq:lemma_hom_q_ineq_2}, and \eqref{eq:lemma_hom_q_ineq_5}. For general $f,$ this is not the case, and negative and positive deviations can have different contributions to the $f$-divergence, and one large deviation will in general not have the same contribution as several small deviations whose sum equals the large deviation. Note that for the generators $f$ of the KL-divergence, the Jeffreys divergence, the Le Cam divergence, and the squared Hellinger distance, it holds that $f''(1)>0$ and $f''$ is strictly decreasing in $(0,\infty),$ hence for these generators large deviations from the degree configuration null contribute more than several small deviations, and a lack of edge mass relative to this null contributes more than an equally sized surplus. The implications for the homophily quantification $\phi_{f,\sigma}$ are illustrated in the following example.
\end{remark}

\begin{example}\label{example:homophily quantification} We illustrate the implications of Lemma~\ref{lemma:homophily_quantification_2} for homophily quantification with two simple examples, and for ease of exposition consider social coordinate space $\BS\times\BS$ (marginalized over region space $\RS$) and we suppress the model and coordinate-space arguments of the homophily measures, e.g., we write $\mathcal{A}_{f,\sigma}$ instead of $\mathcal{A}_{f,\sigma}(M,M_{DC}(M);\BS,\BS)$.
    \begin{enumerate}
        \item[(i)]    Let \begin{align}
    \bigg(\sigma^M(b_1,b_2)\bigg)_{b_1,b_2\in\BS}
        =\frac{1}{9} + \begin{pmatrix}
        -\epsilon & \epsilon & 0 \\
         \epsilon &    0 & -\epsilon  \\
         0 &     -\epsilon & \epsilon  
        \end{pmatrix}, \quad 0<\epsilon\leq\frac{1}{9} \label{eq:example_hom_pert_matrix9}
    \end{align}
    and $\sigma^{M_{DC}(M)}(b_1,b_2)\equiv q\coloneqq \frac{1}{9}.$ This implies, by allocating the surplus of edge masses on the diagonal and the lack on the off-diagonal to $\mathcal{A}$, and the lack on the diagonal and the surplus on the off-diagonal to $\mathcal{B},$ that
    \begin{align}
        \mathcal{A}_{f,\sigma} &= q f\left(\frac{q + \epsilon }{q}\right) + q f\left(\frac{q - \epsilon }{q}\right) + q f\left(\frac{q - \epsilon }{q}\right),\\
        \mathcal{B}_{f,\sigma} &= q f\left(\frac{q - \epsilon }{q}\right) + q f\left(\frac{q + \epsilon }{q}\right) + q f\left(\frac{q + \epsilon }{q}\right) .
    \end{align}
    
    While $Q(M) = 0,$ for any generator $f$ with strictly decreasing $f''$ by \eqref{eq:lemma_hom_q_ineq_4} $\Phi_{f,\sigma}=\mathcal{A}_{f,\sigma}-\mathcal{B}_{f,\sigma} > 0.$ Conversely, for any generator $f$ with strictly increasing $f'',$ by \eqref{eq:lemma_hom_q_ineq_3} we have $\Phi_{f,\sigma}<0.$

    The situation is depicted in panel~\textbf{a} of Fig.~\ref{fig:hom_quant_demo}, where the perturbation pattern of \eqref{eq:example_hom_pert_matrix9} is applied at the level of connection probabilities $P^M$ with perturbation size $\delta = \min\{\bar{P}, 1-\bar{P}\}$ and $P^{M_{DC}(M)}\equiv\bar{P}, \pi = (1/3,1/3,1/3)^T.$ The induced edge distribution is $\sigma^M = P^M / (9\bar{P}),$ so $\sigma^M$ is as in \eqref{eq:example_hom_pert_matrix9} with $\epsilon = \delta / (9\bar{P}).$
    
        \item[(ii)] Let \begin{align}
    \bigg(\sigma^M(b_1,b_2)\bigg)_{b_1,b_2\in\BS}
        =\frac{1}{25} + \begin{pmatrix}
        -\epsilon & \epsilon & 0 & 0 & 0\\
         \epsilon & -\epsilon & 0 & 0 & 0 \\
         0 & 0 & 0 & \epsilon & -\epsilon \\
         0 & 0 & \epsilon & 0 & - \epsilon\\
         0 & 0 & -\epsilon & -\epsilon & 2\epsilon 
        \end{pmatrix},\quad 0<\epsilon \leq\frac{1}{25} \label{eq:example_hom_pert_matrix25}
    \end{align}
    and $\sigma^{M_{DC}(M)}(b_1,b_2)\equiv q\coloneqq \frac{1}{25}.$ While again $Q(M) = 0,$ for general generator $f$
    \begin{align}
        \mathcal{A}_{f,\sigma} &= q f\left(\frac{q + 2\epsilon }{q}\right) + 4 q f\left(\frac{q - \epsilon }{q}\right), \\
        \mathcal{B}_{f,\sigma} &= 2q f\left(\frac{q - \epsilon }{q}\right) + 4 q f\left(\frac{q + \epsilon }{q}\right), \\
        \Phi_{f,\sigma} & =  \underbrace{q f\left(\frac{q + 2\epsilon }{q}\right) - 2q f\left(\frac{q + \epsilon }{q}\right)}_{X} 
        + \underbrace{2 q f\left(\frac{q - \epsilon }{q}\right) - 2q f\left(\frac{q + \epsilon }{q}\right)}_{Y}.\label{eq:example_hom_AB_eq}
    \end{align}
    If $\epsilon>0$ sufficiently small and $f''$ is strictly decreasing (e.g., $f(x) =x\log x -(x-1)$ for the KL-divergence), then in \eqref{eq:example_hom_AB_eq} $X>0$ by \eqref{eq:lemma_hom_q_ineq_0} and the implied strict convexity of $f,$ and $Y>0$ by \eqref{eq:lemma_hom_q_ineq_4}, i.e., $\Phi_{f,\sigma}>0.$ If $f$ is strictly convex and symmetric around 1 (e.g., $f(x) = 0.5(x-1)^2$ for the $\chi^2$ divergence), then by \eqref{eq:lemma_hom_q_ineq_5} $Y=0$ and $\Phi_{f,\sigma}>0$ for all $0<\epsilon<\frac{1}{25}.$

    The situation is depicted in panel~\textbf{b} of Fig.~\ref{fig:hom_quant_demo}, where the perturbation pattern of \eqref{eq:example_hom_pert_matrix25} is applied at the level of connection probabilities $P^M$ with perturbation size $\delta = \min\{\bar{P}, (1-\bar{P})/2\}$ and $P^{M_{DC}(M)}\equiv\bar{P}, \pi_b = 1/5,b=1,\dots,5.$ The induced edge distribution is $\sigma^M = P^M / (25\bar{P}),$ so $\sigma^M$ is as in \eqref{eq:example_hom_pert_matrix25} with $\epsilon = \delta / (25\bar{P}).$
    \end{enumerate}
\end{example}

\begin{remark}
For any models $M_1,M_2,M_3$, $Q(M_3,M_1;\RS\times\BS,\BS)$ admits the model decomposition
\begin{equation}
    Q(M_3,M_1;\RS\times\BS,\BS)
    =Q(M_3,M_2;\RS\times\BS,\BS)+Q(M_2,M_1;\RS\times\BS,\BS),
    \label{eq:Q_model_decomposition}
\end{equation}
or more generally, for models $M_1,\dots,M_n$,
\begin{equation}
    Q(M_n,M_1;\RS\times\BS,\BS)
    =\sum_{i=1}^{n-1}Q(M_{i+1},M_i;\RS\times\BS,\BS).
\end{equation}
Indeed,
\begin{equation}
\begin{aligned}
    Q(M_3,M_1;\RS\times\BS,\BS)
    &=\sum_{r,b}\Big(
    \big(\sigma^{M_3}(rb,b)-\sigma^{M_2}(rb,b)\big)
    +\big(\sigma^{M_2}(rb,b)-\sigma^{M_1}(rb,b)\big)\Big)\\
    &=Q(M_3,M_2;\RS\times\BS,\BS)+Q(M_2,M_1;\RS\times\BS,\BS).
\end{aligned}
\end{equation}
In particular, this opens up the possibility of additively decomposing total segregation into social and geographical segregation by choosing suitable intermediate models. Such a decomposition would, however, require a different approach to preserving expected degrees than the one used in our present construction, and we leave this for future work.
\end{remark}

We end our discussion of homophily by building a bridge to traditional geographical segregation. The Relative Diversity index $R$ \cite{reardon2002measures}, together with its normalized version $\underline R$, is recovered by our signed homophily measure $\Phi_{f,\sigma}$ when the $f$-divergence is the Total Variation distance and the network model is the regional isolation model $M_{RI}$. Put another way, Relative Diversity equals ``Network Modularity'' when the edge distribution is obtained from $M_{RI}.$ This is the content of the following Lemma.

\begin{lemma}[Recovery of Relative Diversity]\label{lemma:recovery_of_relative_diversity}
Let $g(x)=\frac12|x-1|$ and let $M_{RI}$ be the regional isolation model defined in \eqref{eq:def_M_RI}. Then
\begin{equation}
    \Phi_{g,\sigma}(M_{RI})=Q(M_{RI})
    =\sum_r\sum_b\pi_r\left(\pi_{b\mid r}-\pi_b\right)^2
    \eqqcolon R,
    \label{eq:Q_contains_R_as_special_case}
\end{equation}
and
\begin{equation}
    \phi_{g,\sigma}(M_{RI})=\rho_A(M_{RI})
    = \frac{R}{\sum_b\pi_b(1-\pi_b)}
    \eqqcolon\underline R,
    \label{eq:rho_A_contains_R_as_special_case}
\end{equation}
where $R$ is the Relative Diversity index and \eqref{eq:rho_A_contains_R_as_special_case} its standard normalized formulation as given in \cite{reardon2002measures}. 
\end{lemma}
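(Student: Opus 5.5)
The plan is to reduce the statement to results already established. The identification $\Phi_{g,\sigma}(M_{RI})=Q(M_{RI})$ and $\phi_{g,\sigma}(M_{RI})=\rho_A(M_{RI})$ is the special case $M=M_{RI}$, $N=M_{DC}(M_{RI})$ of Lemma~\ref{lemma:homophily_quantification}(i) and Corollary~\ref{corollary:normalized_homophily_quantification}(i). Their hypotheses hold because the degree configuration model preserves $\pi$ and $\bar P$. What remains is to compute $Q(M_{RI})$ explicitly and to identify the normalizing constant.

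\emph{Step 1: edge distributions.} Under $M_{RI}$, every geosocial coordinate has marginal connectivity $\bar P^{M}$. Indeed,
\[
\bar P^{M_{RI}}_{rb}=\sum_{b'}\pi_{rb'}\bar P^M/\pi_r=\bar P^M,
\]
so $\bar P_b^{M_{RI}}=\bar P^M$ as well. Hence $M_{DC}(M_{RI})=M_{ER}$ and $M_{DSI}(M_{RI})=M_{SI}$.

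Marginalizing \eqref{eq:def_M_RI} over $r_2$ gives the geo-egocentric connection probability $P(r_1b_1\sim b_2)=\bar P\,\pi_{b_2\mid r_1}/\pi_{b_2}$. Substituting into \eqref{eq:def_sigma_geo_egocentric} yields
\[
\sigma^{M_{RI}}(rb_1,b_2)=\pi_{rb_1}\pi_{b_2\mid r}=\pi_r\,\pi_{b_1\mid r}\,\pi_{b_2\mid r}.
\]
For the null, \eqref{eq:def_M_DC_geo_egocentric} gives
\[
\sigma^{M_{ER}}(rb_1,b_2)=\pi_{rb_1}\pi_{b_2}.
\]

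\emph{Step 2: compute $Q$.} On the diagonal $b_1=b_2=b$,
\[
Q(M_{RI})=\sum_r\sum_b\left(\pi_r\pi_{b\mid r}^2-\pi_r\pi_{b\mid r}\pi_b\right)=\sum_r\pi_r\sum_b\pi_{b\mid r}\left(\pi_{b\mid r}-\pi_b\right).
\]
To reach the symmetric form, add and subtract $\pi_b$ inside, which gives
\[
\sum_r\pi_r\sum_b\left(\pi_{b\mid r}-\pi_b\right)^2+\sum_b\pi_b\sum_r\pi_r\left(\pi_{b\mid r}-\pi_b\right).
\]
The second term vanishes because $\sum_r\pi_r\pi_{b\mid r}=\pi_b$. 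This leaves $R$. The only care needed in this step is bookkeeping: making sure the marginalization over the second region produces $\pi_{b_2\mid r_1}$ rather than $\pi_{b_2}$.

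\emph{Step 3: normalization.} By \eqref{eq:def_general_rho_A}, the denominator is $\mathcal D_{g,\sigma}(M_{DSI}(M_{RI}))=\mathcal D_{g,\sigma}(M_{SI})$. Lemma~\ref{lemma:normalization_of_indices}, equation \eqref{eq:normalization_const_H_D_C}, states that this equals $\sum_b\pi_b(1-\pi_b)$ for $g(x)=\tfrac12|x-1|$. Dividing $R$ by this constant gives \eqref{eq:rho_A_contains_R_as_special_case}, which is Reardon--Firebaugh's normalized Relative Diversity $\underline R$.

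I expect no step to be a real obstacle. The one thing to check, rather than assume, is that the $M_{DC}(M_{RI})=M_{ER}$ case of Lemma~\ref{lemma:normalization_of_indices} applies, which is exactly what the constant-degree computation in Step 1 establishes.
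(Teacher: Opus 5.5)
Your proposal is correct and follows essentially the same route as the paper: show that every geosocial coordinate has marginal connectivity $\bar P$, so that $M_{DC}(M_{RI})=M_{ER}$ and $M_{DSI}(M_{RI})=M_{SI}$; then compute the diagonal of $\sigma^{M_{RI}}-\sigma^{M_{ER}}$ to obtain $R$. Your add-and-subtract step is the same identity $\sum_r\pi_r\pi_{b\mid r}\pi_b=\sum_r\pi_r\pi_b^2$ that the paper uses, and the normalization $\sum_b\pi_b(1-\pi_b)$ then follows from Lemma~\ref{lemma:normalization_of_indices} together with Corollary~\ref{corollary:normalized_homophily_quantification}(i).
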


\begin{remark}
Lemma~\ref{lemma:recovery_of_relative_diversity} implies that traditional geographical segregation is always net homophilous when homophily is quantified by network modularity, since by \eqref{eq:Q_contains_R_as_special_case}
\begin{equation}
    Q(M_{RI})
    =\sum_r\sum_b\pi_r\left(\pi_{b\mid r}-\pi_b\right)^2
    \geq 0,
\end{equation}
with equality if and only if regional sociodemographic compositions equal the marginal sociodemographic composition. Thus, traditional geographical segregation of sociodemographic groups can generate no net heterophily relative to degree-constrained random mixing under this measure. In our notation, with $g(x)=\tfrac{1}{2}|x-1|$ the generator for the Total Variation distance,
\begin{equation}
\begin{aligned}
        \underline R &= \alpha_{g,\sigma}(M_{RI}) - \beta_{g,\sigma}(M_{RI}),\\
        \underline D &= \alpha_{g,\sigma}(M_{RI}) + \beta_{g,\sigma}(M_{RI}),
\end{aligned}
\end{equation}
where $\underline D$ is the (normalized) Dissimilarity index. Therefore, the Dissimilarity index counts both homophilous and heterophilous deviations from random mixing towards segregation, whereas Relative Diversity subtracts the heterophilous from the homophilous contributions.
This suggests that, for general generators $f$, $\alpha_{f,\sigma}$ could itself constitute a meaningful measure of homophilous segregation. Rather than adding heterophilous contributions to, or subtracting them from, the homophilous contributions, it quantifies the magnitude of homophilous departures from random mixing directly.
\end{remark}

\subsection{Regional decomposability}\label{sec:reg_decomp}
Consider a set of (super) regions $\mathcal{S}$ such that $\bigcup_{s\in \mathcal{S}} s = \mathcal{R}$ is a partition of the set of regions $\mathcal{R} = \{r\colon r\in \mathcal{R}\},$ i.e., each $s$ is a set containing regions $\{r\colon r\in s\}$ and $\forall s\neq s^\prime\colon s\cap s^\prime = \emptyset$. 
The well known regional decomposition \cite{reardon2002measures} of the traditional unnormalized Theil Information Theory index $H$ into a between component and a weighted sum (with weights $\pi_s$) over within components for regions $s$ can be written as 
\begin{align}\label{eq:standard_regional_decomposition_of_hatH}
    H & = \sum_s \pi_s\sum_b \pi_b \frac{\pi_{b\mid s}}{\pi_b} \log\left(\frac{\pi_{b\mid s}}{\pi_b} \right)
    + \sum_{s}\pi_s \left(
\sum_{r\in s}\pi_{r\mid s}\sum_b\pi_{b\mid s}\frac{\pi_{b\mid r}}{\pi_{b\mid s}} \log\left(\frac{\pi_{b\mid r}}{\pi_{b\mid s}} \right)
    \right).
\end{align}
For ease of notation, throughout the following we write $N\coloneqq M_{DC}(M)$ for the degree configuration null model, $\pi\coloneqq\pi^M=\pi^N$, and $\bar P\coloneqq\bar P^M=\bar P^N$. Let $h(x) = x \log x$ be the generator for our KL-divergence based segregation measure $\mathcal D_{h,\sigma}(M).$
We now establish a regional decomposition of $\mathcal D_{h,\sigma}(M),$ whose between component compares the connectivity of random individuals in regions $s$ (irrespective of their finer region $r$ and their sociodemographic membership) to sociodemographic groups in the country, under a model $M$ relative to the null $N.$ The within component of region $s$ looks at the additional impact of the finer region $r$ and the sociodemographic membership on the connectivity to sociodemographic groups in the country, under model $M$ relative to $N.$ 
For $M=M_{RI}$, for which $N=M_{DC}(M_{RI})=M_{ER}$ (the Erd\H{o}s--R\'enyi random graph model), this approach recovers the traditional decomposition \eqref{eq:standard_regional_decomposition_of_hatH}.
\begin{lemma}\label{lemma:regional_decomposition_of_H}
We have the following decomposition of $\mathcal D_{h,\sigma}(M)$ into a between component $\mathcal D_{h,\sigma}^{\mathcal{S}\leftrightarrow}(M)$ measuring segregation between regions $s$ and within components $\mathcal D_{h,\sigma}^{(s)}(M)$ measuring segregation within regions $s$
\begin{align}
    & \mathcal D_{h,\sigma}(M) = \mathcal D_{h,\sigma}^{\mathcal{S}\leftrightarrow}(M) + \sum_s \pi_s\frac{\bar{P}_s^M}{\bar{P}}\mathcal D_{h,\sigma}^{(s)}(M)\label{eq:regional_decomposability_sb_rbb}
\end{align}
with
\begin{align}
    \mathcal D_{h,\sigma}^{\mathcal{S}\leftrightarrow}(M)&\coloneqq \DKL{\sigma^M(s,b_2)}{\sigma^N(s,b_2)}, \label{eq:regional_decomposability_sb_rbb_asKL_div_between}\\
    \mathcal D_{h,\sigma}^{(s)}(M)&\coloneqq \DKL{\sigma^M(rb_1, b_2\mid s)}{\sigma^{N\mid M}( rb_1, b_2\mid s)}\label{eq:regional_decomposability_sb_rbb_asKL_div_within},
\end{align}
where
\begin{align}
    \sigma^M( rb_1, b_2\mid s) & = \pi_{b_2}\frac{\pi_{r\mid s} \pi_{b_1\mid r} \PS{rb_1\sim b_2}{}{M}}{\bar{P}_s^M},\label{eq:def_sgima_M_within_s}\\
    \sigma^{N\mid M}( rb_1, b_2\mid s) &\coloneqq \pi_{b_2}\frac{\PS{s\sim b_2}{}{M}}{\bar{P}_s^M}\frac{\pi_{r\mid s} \pi_{b_1\mid r} \PS{rb_1\sim b_2}{}{N}}{\PS{s\sim b_2}{}{N}}, \label{eq:def_sgima_N_given_M_within_s}
\end{align}
with
\begin{equation}
        \bar{P}_s^M \coloneqq \sum_{r\in s}\pi_{r\mid s}\sum_{b_1,b_2}\pi_{b_1\mid r}\pi_{b_2}\PS{rb_1\sim b_2}{}{M}
\end{equation}
and for $M^\prime = M,N$
\begin{align}
    \PS{s\sim b_2}{}{M^\prime} & = \sum_{r\in s}\sum_{b_1}\pi_{r\mid s} \pi_{b_1\mid r}\PS{rb_1\sim b_2}{}{M^\prime},\\
    \sigma^{M^\prime}(s,b_2) & = \frac{\pi_s \pi_{b_2} \PS{s\sim b_2}{}{M^\prime}}{\bar{P}}.
\end{align}

Here $\sigma^{N\mid M}$ is the distribution obtained by keeping the $(s,b_2)$-marginal structure of $M,$ while using the conditional distribution of $rb_1$ given $(s,b_2)$ induced by $N.$

In particular, for $M=M_{RI}$ the regional isolation model we obtain
\begin{align}
    \mathcal D_{h,\sigma}^{\mathcal{S}\leftrightarrow}(M_{RI})
    & = \sum_s \pi_s\sum_b \pi_b \frac{\pi_{b\mid s}}{\pi_b} \log\left(\frac{\pi_{b\mid s}}{\pi_b} \right),\label{eq:between_component_equals_traditional_between_component}\\
    \mathcal D_{h,\sigma}^{(s)}(M_{RI})
    & = \sum_{r\in s}\pi_{r\mid s}\sum_b\pi_{b\mid s}\frac{\pi_{b\mid r}}{\pi_{b\mid s}} \log\left(\frac{\pi_{b\mid r}}{\pi_{b\mid s}} \right),\label{eq:within_component_equals_traditional_within_component}\\
    \pi_s\frac{\bar{P}_s^{M_{RI}}}{\bar{P}} & = \pi_s\label{eq:barPs_equals_barP_under_MRI},
\end{align}
which equals exactly the regional decomposition \eqref{eq:standard_regional_decomposition_of_hatH} of the traditional segregation index $H.$
\end{lemma}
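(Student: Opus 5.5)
The plan is to apply the chain rule for the Kullback--Leibler divergence to the coarsening map $(rb_1,b_2)\mapsto(s(r),b_2)$, where $s(r)$ is the super region containing $r$. Any two distributions $P,Q$ on $\RS\times\BS\times\BS$ satisfy
\begin{equation*}
\DKL{P}{Q}=\DKL{P_{(s,b_2)}}{Q_{(s,b_2)}}+\sum_{s,b_2}P(s,b_2)\,\DKL{P(\cdot\mid s,b_2)}{Q(\cdot\mid s,b_2)}.
\end{equation*}
We take $P=\sigma^M$ and $Q=\sigma^N$. The first term is then exactly the between component \eqref{eq:regional_decomposability_sb_rbb_asKL_div_between}. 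It remains to show that the expected conditional term can be rewritten as $\sum_s \pi_s(\bar P_s^M/\bar P)\,\mathcal D_{h,\sigma}^{(s)}(M)$.

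\textbf{Step 1: identify the conditional laws.} Since $\pi^M=\pi^N$ and $\bar P^M=\bar P^N$, summing \eqref{eq:def_sigma_geo_egocentric} over $r\in s$ and $b_1$ gives the stated formula for $\sigma^{M'}(s,b_2)$. The conditional law of $rb_1$ given $(s,b_2)$ under $N$ is $\pi_{r\mid s}\pi_{b_1\mid r}\PS{rb_1\sim b_2}{}{N}/\PS{s\sim b_2}{}{N}$, and analogously under $M$.

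\textbf{Step 2: regroup the conditional terms.} Next we condition only on $s$. We have $\sigma^M(s)=\pi_s\bar P_s^M/\bar P$, and $\sigma^M(rb_1,b_2\mid s)$ agrees with \eqref{eq:def_sgima_M_within_s}. The distribution $\sigma^{N\mid M}(\cdot\mid s)$ of \eqref{eq:def_sgima_N_given_M_within_s} has $b_2$-marginal equal to $\sigma^M(b_2\mid s)$, and its conditional given $b_2$ is the $N$-conditional. Apply the chain rule a second time, to $\sigma^M(\cdot\mid s)$ versus $\sigma^{N\mid M}(\cdot\mid s)$ along $b_2$. Because the $b_2$-marginals coincide, the marginal term vanishes, and what remains is exactly $\sum_{b_2}\sigma^M(b_2\mid s)\,\DKL{\sigma^M(\cdot\mid s,b_2)}{\sigma^N(\cdot\mid s,b_2)}$. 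Multiplying by $\sigma^M(s)$ and summing over $s$ reproduces the conditional term from the first chain rule, which gives \eqref{eq:regional_decomposability_sb_rbb}. Checking that $\sigma^{N\mid M}(\cdot\mid s)$ is normalized is routine.

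\textbf{Step 3: specialize to $M_{RI}$.} Under $M_{RI}$ we have $\PS{rb_1\sim b_2}{}{M_{RI}}=\bar P\pi_{b_2\mid r}/\pi_{b_2}$, obtained by marginalizing \eqref{eq:def_M_RI} over $r_2$. Every coordinate then has degree $\bar P$, so $N=M_{ER}$ and $\PS{rb_1\sim b_2}{}{N}=\bar P$. The individual quantities follow directly:
\begin{itemize}
\item[] $\bar P_s^{M_{RI}}=\bar P$, which is \eqref{eq:barPs_equals_barP_under_MRI};
\item[] $\PS{s\sim b_2}{}{M_{RI}}=\bar P\pi_{b_2\mid s}/\pi_{b_2}$ and $\PS{s\sim b_2}{}{N}=\bar P$;
\item[] the between term is $\sum_s\pi_s\sum_b\pi_{b\mid s}\log(\pi_{b\mid s}/\pi_b)$, which is \eqref{eq:between_component_equals_traditional_between_component}.
\end{itemize}
For the within term, the likelihood ratio is
\begin{equation*}
\frac{\sigma^M(rb_1,b_2\mid s)}{\sigma^{N\mid M}(rb_1,b_2\mid s)}=\frac{\pi_{b_2\mid r}}{\pi_{b_2\mid s}},
\end{equation*}
which does not depend on $b_1$. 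Summing out $b_1$ therefore leaves $\sum_{r\in s}\pi_{r\mid s}\sum_{b}\pi_{b\mid r}\log(\pi_{b\mid r}/\pi_{b\mid s})$, which is \eqref{eq:within_component_equals_traditional_within_component}.

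I expect the main obstacle to be the bookkeeping in Step 2. The within component is defined with the hybrid reference $\sigma^{N\mid M}$ rather than $\sigma^N(\cdot\mid s)$, so the first chain rule, which conditions on $(s,b_2)$, has to be matched with the decomposition given $s$ alone. This is resolved by the observation that the $b_2$-marginals of $\sigma^M(\cdot\mid s)$ and $\sigma^{N\mid M}(\cdot\mid s)$ coincide. Support issues (zero-probability cells) are handled by the usual conventions, $0\log 0=0$ and absolute continuity inherited from $N$.
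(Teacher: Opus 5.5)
Your proposal is correct and follows essentially the same route as the paper: apply the KL chain rule along the coarsening $(rb_1,b_2)\mapsto(s,b_2)$, regroup the expected conditional term by $s$ to obtain the weights $\pi_s\bar P_s^M/\bar P$, and then specialize to $M_{RI}$ with $N=M_{ER}$. The only difference is cosmetic: the paper does the regrouping by directly multiplying and dividing inside the logarithm by $\pi_{b_2}\PS{s\sim b_2}{}{M}/\bar P_s^M$, whereas you use a second chain rule and the fact that $\sigma^M(\cdot\mid s)$ and $\sigma^{N\mid M}(\cdot\mid s)$ have the same $b_2$-marginal, which shows a little more transparently why the hybrid reference $\sigma^{N\mid M}$ is the right one.
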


It is worth mentioning that the decomposition \eqref{eq:regional_decomposability_sb_rbb} is still meaningful if $\mathcal{S}  = \mathcal{R},$ in which case the between component considers the impact of membership in region $r$ on connectivity, and the within component the additional impact of group membership $b_1$ given region $r$ on connectivity.

We note that alternative regional decompositions of $\mathcal D_{h,\sigma}(M)$ into a between and region $s$ specific within components are possible. At a first glance, one might naively assume that the desired decomposition should consider the variation of $\PS{sb_1\sim b_2}{}{M}$ around $\bar{P}$, relative to $\PS{sb_1\sim b_2}{}{N}$ and $\bar{P}$, for the between component, and the within component for region $s$ should consider the variation of $\PS{rb_1\sim b_2}{}{M}$ around $\bar{P}_s^M$, relative to $\PS{rb_1\sim b_2}{}{N}$ and $\bar{P}_s^M$ (note that $\bar{P}_s^{N} = \bar{P}_s^M$). However, it is obvious that in addition to the between and the within components, such an approach would have to include a third non-positive term. For example, it is easy to construct examples where under such an approach the between component equals the weighted sum over the within components, and such that the ``third term'' again has the same absolute value but with a negative sign. We would consider such a decomposition meaningless.

However, the following alternative approach could be considered: the between component considers $\PS{sb_1\sim b_2}{}{M}$ relative to $\PS{sb_1\sim b_2}{}{N}$, while the within component for region $s$ only looks at the variation of $\PS{rb_1\sim b_2}{}{M}$ around $\PS{sb_1\sim b_2}{}{M}$, relative to $\PS{rb_1\sim b_2}{}{N}$ and $\PS{sb_1\sim b_2}{}{N}$. In this case, the majority of the information is contained in the between component. While we find this approach less appealing, as it puts most of the information about connectivity in the between component, as opposed to the approach outlined in Lemma~\ref{lemma:regional_decomposition_of_H}, which distributes information more equally, the ultimate choice of a decomposition should depend on the research question at hand.

\subsection{Group decomposability}\label{sec:group_decomp}
Consider a set $\mathcal{A}$ such that $\bigcup_{a\in \mathcal{A}} a = \mathcal{B}$ is a partition of sociodemographic space $\mathcal{B}$. The traditional Theil index $H$ admits the group decomposition into a between component and a weighted sum (with weights $\pi_a$) over within components for groups $a$
\begin{align}
    H & = \sum_a \sum_r \pi_{r} \pi_a \frac{\pi_{a\mid r}}{\pi_a} \log \frac{\pi_{a\mid r}}{\pi_a}
    + \sum_a \pi_a \left(\sum_r \pi_{r\mid a} \sum_{b\in a}\pi_{b\mid a} \frac{\pi_{b\mid r,a}}{\pi_{b\mid a}}
    \log \frac{\pi_{b\mid r,a}}{\pi_{b\mid a}}\right).\label{eq:standard_group_decomposition_of_hatH}
\end{align}
We establish a group decomposition of $\mathcal D_{h,\sigma}(M),$ whose between component is given by simply considering the coarser coordinates $a$ instead of $b,$ 
and the within component for group $a_2$ looks at the additional impact of the finer sociodemographic coordinates $b_1,b_2$, conditional on the second coordinate belonging to coarse group $a_2$.
For $M=M_{RI}$ the regional isolation model, for which $N=M_{DC}(M_{RI})=M_{ER}$ (the Erd\H{o}s--R\'enyi random graph model), this approach recovers the traditional decomposition \eqref{eq:standard_group_decomposition_of_hatH}.

\begin{lemma}\label{lemma:group_decomposition_of_H}
We have the following decomposition of $\mathcal D_{h,\sigma}(M)$ into a between component $\mathcal D_{h,\sigma}^{\mathcal{A} \leftrightarrow}(M)$ measuring segregation between sociodemographic groups $a$ and within components $\mathcal D_{h,\sigma}^{(a)}(M)$ measuring segregation within sociodemographic group $a$
\begin{align}
    & \mathcal D_{h,\sigma}(M) = \mathcal D_{h,\sigma}^{\mathcal{A} \leftrightarrow}(M) + \sum_a \pi_a\frac{\bar{P}_a^M}{\bar{P}}\mathcal D_{h,\sigma}^{(a)}(M)\label{eq:blau_group_decomposability_raa_rbb}
\end{align}
and pair $(a_1,a_2)$ specific decomposition
\begin{align}
    & \mathcal D_{h,\sigma}(M) = \mathcal D_{h,\sigma}^{\mathcal{A} \leftrightarrow}(M) + \sum_{a_1}\sum_{a_2} \pi_{a_1}\pi_{a_2}\frac{\PS{a_1\sim a_2}{}{M}}{\bar{P}}\mathcal D_{h,\sigma}^{(a_1,a_2)}(M)\label{eq:blau_pair_decomposability_raa_rbb}
\end{align}
with
\begin{align}
    \mathcal D_{h,\sigma}^{\mathcal{A} \leftrightarrow}(M)&\coloneqq \DKL{\sigma^M(ra_1,a_2)}{\sigma^N( ra_1, a_2)}, \label{eq:group_decomposability_raa_rbb_asKL_div_between}\\
    \mathcal D_{h,\sigma}^{(a_2)}(M)&\coloneqq \DKL{\sigma^M(rb_1,b_2\mid a_2)}{\sigma^{N\mid M}( rb_1, b_2\mid a_2)},\label{eq:group_decomposability_raa_rbb_asKL_div_within_a_2}\\
    \mathcal D_{h,\sigma}^{(a_1,a_2)}(M)&\coloneqq \DKL{\sigma^M( rb_1, b_2\mid a_1, a_2)}{\sigma^{N\mid M}(rb_1,b_2\mid a_1,a_2)}\label{eq:group_decomposability_raa_rbb_asKL_div_within_a_1a_2},
\end{align}
where for $M^\prime = M,N$
\begin{align}
    \sigma^{M^\prime}( ra_1, a_2) = \pi_{a_1}\pi_{r\mid a_1} \pi_{a_2}\frac{\PS{ra_1\sim a_2}{}{M^\prime}}{\bar{P}}\label{eq:sigma_between_groups} 
\end{align}
and
\begin{align}
    \sigma^M( rb_1, b_2\mid a_1,a_2) &= \pi_{r\mid a_1} \pi_{b_1\mid r,a_1}\pi_{b_2\mid a_2}\frac{\PS{rb_1\sim b_2}{}{M}}{\PS{a_1\sim a_2}{}{M}},\\
    \sigma^{N\mid M}( rb_1, b_2\mid a_1,a_2) & \coloneqq \pi_{r\mid a_1} \frac{\PS{ra_1\sim a_2}{}{M}}{\PS{a_1\sim a_2}{}{M}} \nonumber\\
    &\hspace{3em}\cdot\pi_{b_1\mid r,a_1}\pi_{b_2\mid a_2}\frac{\PS{rb_1\sim b_2}{}{N}}{\PS{ra_1\sim a_2}{}{N}}
\end{align}
\begin{align}
    \sigma^M(rb_1,b_2\mid a_2) &= \pi_{a_1}\pi_{r\mid a_1} \pi_{b_1\mid r,a_1}\pi_{b_2\mid a_2}\frac{\PS{rb_1\sim b_2}{}{M}}{\bar{P}_{a_2}^M},\label{eq:sigma_M_group_within}\\
    \sigma^{N\mid M}( rb_1, b_2\mid a_2) & \coloneqq \pi_{a_1}\pi_{r\mid a_1} \frac{\PS{ra_1\sim a_2}{}{M}}{\bar{P}_{a_2}^M} \nonumber\\
    &\hspace{3em}\cdot\pi_{b_1\mid r,a_1}\pi_{b_2\mid a_2}\frac{\PS{rb_1\sim b_2}{}{N}}{\PS{ra_1\sim a_2}{}{N}}\label{eq:sigma_N_given_M_group_within}
\end{align}
\begin{align}
    \bar{P}_{a}^M \coloneqq \sum_{b}\sum_r \PS{rb\sim a}{}{M}\pi_r\pi_{b\mid r} .
\end{align}
In particular, with $M=M_{RI}$ the regional isolation model, we have
\begin{align}
    \mathcal D_{h,\sigma}^{\mathcal{A} \leftrightarrow}(M_{RI})& = \sum_a \sum_r \pi_{r} \pi_a \frac{\pi_{a\mid r}}{\pi_a} \log \frac{\pi_{a\mid r}}{\pi_a},\label{eq:between_group_component_of_triaditional_H_recovered_for_MRI_MER}\\
    \mathcal D_{h,\sigma}^{(a)}(M_{RI})&= \sum_r \pi_{r\mid a} \sum_{b\in a}\pi_{b\mid a} \frac{\pi_{b\mid r,a}}{\pi_{b\mid a}}
    \log \frac{\pi_{b\mid r,a}}{\pi_{b\mid a}},\label{eq:within_group_component_of_triaditional_H_recovered_for_MRI_MER}\\
    \pi_a\frac{\bar{P}_a^{M_{RI}}}{\bar{P}}& = \pi_a,\label{eq:within_group_component_weights_of_triaditional_H_recovered_for_MRI_MER}
\end{align}
which equals exactly the group decomposition \eqref{eq:standard_group_decomposition_of_hatH} of the traditional segregation index $H.$
\end{lemma}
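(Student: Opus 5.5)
The plan is to obtain both decompositions from the chain rule for the Kullback--Leibler divergence, applied to the coarsening map $(rb_1,b_2)\mapsto(ra_1,a_2)$, where $a_i$ denotes the coarse group containing $b_i$. Since $\sigma^{M'}(rb_1,b_2)=\sigma^{M'}(ra_1,a_2)\,\sigma^{M'}(b_1,b_2\mid r,a_1,a_2)$ for $M'=M,N$, the chain rule gives
\begin{equation*}
\mathcal D_{h,\sigma}(M)=\DKL{\sigma^M(ra_1,a_2)}{\sigma^N(ra_1,a_2)}+\sum_{r,a_1,a_2}\sigma^M(ra_1,a_2)\,\DKL{\sigma^M(b_1,b_2\mid r,a_1,a_2)}{\sigma^N(b_1,b_2\mid r,a_1,a_2)}.
\end{equation*}
The first term is $\mathcal D^{\mathcal A\leftrightarrow}_{h,\sigma}(M)$ by \eqref{eq:sigma_between_groups}. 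There, $\PS{ra_1\sim a_2}{}{M'}$ is obtained by averaging $\PS{rb_1\sim b_2}{}{M'}$ with weights $\pi_{b_1\mid r,a_1}\pi_{b_2\mid a_2}$. I would verify this identification directly from \eqref{eq:def_sigma_geo_egocentric}.

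The second term is regrouped according to $a_2$, or according to the pair $(a_1,a_2)$. The key observation is that $\sigma^{N\mid M}(\cdot\mid a_2)$ in \eqref{eq:sigma_N_given_M_group_within} has the form $\sigma^M(r,a_1\mid a_2)\,\sigma^N(b_1,b_2\mid r,a_1,a_2)$. To see this, I will check two identities. First, $\sigma^M(r,a_1\mid a_2)=\pi_{a_1}\pi_{r\mid a_1}\PS{ra_1\sim a_2}{}{M}/\bar P^M_{a_2}$. Second, $\sigma^N(b_1,b_2\mid r,a_1,a_2)=\pi_{b_1\mid r,a_1}\pi_{b_2\mid a_2}\PS{rb_1\sim b_2}{}{N}/\PS{ra_1\sim a_2}{}{N}$. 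In both, the factors $\pi_{a_2}$ and $\bar P$ cancel between numerator and denominator. Similarly, $\sigma^M(\cdot\mid a_2)=\sigma^M(r,a_1\mid a_2)\,\sigma^M(b_1,b_2\mid r,a_1,a_2)$.

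With these factorizations, another application of the chain rule gives the following. The divergence $\DKL{\sigma^M(\cdot\mid a_2)}{\sigma^{N\mid M}(\cdot\mid a_2)}$ has zero marginal term in $(r,a_1)$. What remains is exactly the $\sigma^M(r,a_1\mid a_2)$-average of the conditional divergences above. Multiplying by $\sigma^M(a_2)=\pi_{a_2}\bar P^M_{a_2}/\bar P$ reproduces the second term, which yields \eqref{eq:blau_group_decomposability_raa_rbb}. The pair version \eqref{eq:blau_pair_decomposability_raa_rbb} follows in the same way, conditioning on $(a_1,a_2)$ and using the weight $\sigma^M(a_1,a_2)=\pi_{a_1}\pi_{a_2}\PS{a_1\sim a_2}{}{M}/\bar P$.

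For $M=M_{RI}$, I first compute the geo-egocentric connectivity by marginalizing $r_2$ in \eqref{eq:def_M_RI}. This gives $\PS{rb_1\sim b_2}{}{M_{RI}}=\bar P\,\pi_{b_2\mid r}/\pi_{b_2}$. All degrees then equal $\bar P$, so $N=M_{ER}$ and $\bar P^{M_{RI}}_a=\bar P$, which proves \eqref{eq:within_group_component_weights_of_triaditional_H_recovered_for_MRI_MER}. Next, $\sigma^{M_{RI}}(ra_1,a_2)=\pi_r\pi_{a_1\mid r}\pi_{a_2\mid r}$ and $\sigma^{N}(ra_1,a_2)=\pi_r\pi_{a_1\mid r}\pi_{a_2}$. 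The $a_1$ factor cancels in the logarithm, which gives \eqref{eq:between_group_component_of_triaditional_H_recovered_for_MRI_MER}.

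For the within term, both conditionals share the factor $\pi_{b_1\mid r,a_1}$, because neither connectivity depends on $b_1$. The remaining $b_2$-factors are $\pi_{b_2\mid r,a_2}$ under $M_{RI}$ and $\pi_{b_2\mid a_2}$ under $N$. The $(r,a_1)$ weight, after summing over $a_1$, reduces to $\pi_{r\mid a_2}$. Together these give \eqref{eq:within_group_component_of_triaditional_H_recovered_for_MRI_MER}.

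I expect the main obstacle to be bookkeeping rather than anything conceptual. One must keep the asymmetric roles of the first coordinate $(r,b_1)$ and the second coordinate $b_2$ straight. One must also correctly identify the conditional factors, such as $\pi_{r\mid a_1}\pi_{b_1\mid r,a_1}$ versus $\pi_{r\mid a_2}$ after marginalizing. Finally, one must confirm that $\sigma^{N\mid M}$ is a properly normalized distribution sharing the $(r,a_1)\mid a_2$ marginal of $M$. I would check that last point first, since the whole argument rests on it.
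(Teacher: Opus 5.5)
Your proposal is correct and follows the paper's proof. Both apply the KL chain rule to the coarsening $(rb_1,b_2)\mapsto(ra_1,a_2)$ and regroup the expected conditional divergence by $a_2$ or by $(a_1,a_2)$, and your $M_{RI}$ computations ($\PS{rb_1\sim b_2}{}{M_{RI}}=\bar P\pi_{b_2\mid r}/\pi_{b_2}$, $N=M_{ER}$, $\bar P^{M_{RI}}_a=\bar P$) match the paper's; the only cosmetic difference is that you justify the regrouping via the factorization $\sigma^{N\mid M}(\cdot\mid a_2)=\sigma^M(r,a_1\mid a_2)\,\sigma^N(b_1,b_2\mid r,a_1,a_2)$ and a second chain-rule step, whereas the paper multiplies numerator and denominator inside the logarithm by the same factors.
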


\subsection{Normalization of regional and group decompositions}\label{sec:normalization_decomp}
For the group decomposition of the traditional Theil index, the between (coarse) groups normalization constant is given by the entropy
\begin{align}
    \sum_a \pi_a \log \frac{1}{\pi_a}\label{eq:between_a_normalization_trad_theil},
\end{align}
and the within group $a$ normalization constants by
\begin{align}
    \sum_{b\in a} \pi_{b\mid a}\log \frac{1}{\pi_{b\mid a}}\label{eq:within_a_normalization_trad_theil} .
\end{align}
These normalizations arise naturally by decomposing the total entropy $H$: e.g., for coarse groups $a$ and finer groups $b$,
\begin{equation}
\begin{aligned}
    H(B) & = -\sum_b \pi_b \log \pi_b = -\sum_a \sum_{b\in a}\pi_a \pi_{b\mid a} \log\left(\pi_a \pi_{b\mid a}\right) 
    = - \sum_a \pi_a \log \pi_a -\sum_a\pi_a \sum_{b\in a} \pi_{b\mid a} \log \pi_{b\mid a}\\
    & = H(A) +\sum_a \pi_a H(B\mid A = a).
\end{aligned}
\end{equation}
Thus the normalization constant, i.e., the entropy $H(B),$ itself decomposes by the same chain-rule approach as the KL divergence of $\pi_{rb}$ from $\pi_r \otimes \pi_b,$ i.e., the unnormalized traditional Theil index. This motivates defining the within and between normalizations for our group decomposition by applying the KL chain rule directly to our global normalization constant $\DKL{\sigma^{M_{DSI}(M)}}{\sigma^{M_{DC}(M)}}$, see Lemma~\ref{lemma:normalization_of_indices}. 

For the regional decomposition of the traditional Theil index, the between-super-regions normalization constant is given by the entropy of $\pi_b$
\begin{align}
    -\sum_b \pi_b \log \pi_b \label{eq:between_s_normalization_trad_theil}
\end{align}
and the within region $s$  normalization constants by the entropies
\begin{align}
    -\sum_b \pi_{b\mid s} \log \pi_{b\mid s} = H(B\mid S=s). \label{eq:within_s_normalization_trad_theil} 
\end{align}

Now 
\begin{align}
H(B) = I(S;B) + H(B|S) = I(S;B) + \EE_{\pi_s}[H(B|S = s)],\label{eq:regional_decomp_normalization_const_trad}
\end{align}
with $H(B|S = s)$ the normalization constant for the within $s$ component, and $I(S;B)$ the mutual information of $(S,B)$ under $\pi_{sb}.$

This suggests the following approach for our regional decomposition as given in Lemma~\ref{lemma:regional_decomposition_of_H}: We write shorthand $M_{DC} = M_{DC}(M)$ and $M_{DSI}=M_{DSI}(M)$ and note that since $M_{DSI}$ preserves the geosocial marginal connectivities of $M$, we have $M_{DC}(M_{DSI})=M_{DC}$. We further observe that $\sigma^{M_{DC}}(s, b_2) = \sigma^{M_{DC}}(s)\sigma^{M_{DC}}(b_2) = \sigma^{M}(s)\sigma^{M}(b_2).$ If we now let random variables $S$ and $B$ have joint distribution $\sigma^{M_{DSI}}(s, b_2)$ with marginals $\sigma^{M_{DSI}}(s) = \sigma^{M}(s)$ and $\sigma^{M_{DSI}}(b_2)=\sigma^{M}(b_2),$ we can rewrite the global normalization constant $\DKL{\sigma^{M_{DSI}}}{\sigma^{M_{DC}}}$ by applying the regional decomposition of Lemma~\ref{lemma:regional_decomposition_of_H}
\begin{equation}
\begin{aligned}\label{eq:regional_decomp_normalization_const}
    &\DKL{\sigma^{M_{DSI}}(r_1b_1, b_2)}{\sigma^{M_{DC}}(r_1b_1, b_2)} \\
    &=  \DKL{\sigma^{M_{DSI}}(s,b_2)}{\sigma^{M_{DC}}(s,b_2)} \\
    & + \sum_{s}\sigma^{M_{DSI}}(s)
    \DKL{\sigma^{M_{DSI}}(rb_1, b_2\mid s)}{\sigma^{M_{DC}\mid M_{DSI}}(rb_1, b_2\mid s)}\\
    & = I(S;B) + \EE_{\sigma^{M_{DSI}}(s)} \DKL{\sigma^{M_{DSI}}(rb_1, b_2\mid s)}{\sigma^{M_{DC}\mid M_{DSI}}(rb_1, b_2\mid s)}
\end{aligned}
\end{equation}
where $S$ and $B$ have joint distribution $\sigma^{M_{DSI}}(s,b_2).$ The observation that \eqref{eq:regional_decomp_normalization_const} and \eqref{eq:regional_decomp_normalization_const_trad} have the same structure as a sum of $I(S;B)$ (with $(S,B)$ having distribution $\sigma^{M_{DSI}}(s,b)$ and $\pi_{sb}$ respectively) and an expectation over ``within" components motivates our definition of the within normalization constants for our regional decomposition given by 
\begin{align}
    \DKL{\sigma^{M_{DSI}}(rb_1, b_2\mid s)}{\sigma^{M_{DC}\mid M_{DSI}}(rb_1, b_2\mid s)} ,\quad s\in\mathcal{S} . \label{eq:def_within_normalization_reg_decomp}
\end{align}
However, $I(S;B)$ with joint distribution $\pi_{sb}$ is not a useful normalization for the traditional Theil between component (as it equals the between component and hence always gives a normalized between component of $1$), nor a useful normalization for the between component $\DKL{\sigma^{M}(s,b_2)}{\sigma^{M_{DC}}(s,b_2)}$ of our regional decomposition (as e.g. for $M=M_{RI}$ the regional isolation model, we again obtain a normalized between component of $1$). However, since for any model $M'$ in $\mathcal{M}_{sb}\coloneqq\{M'\colon \sigma^{M'}(s) = \sigma^{M_{DC}}(s),\sigma^{M'}(b_2) = \sigma^{M_{DC}}(b_2)\}$
\begin{align}
    I_{\sigma^{M'}}(S;B) = H_{\sigma^{M'}}(B) - H_{\sigma^{M'}}(B \mid S) \leq H_{\sigma^{M'}}(B) = H_{\sigma^{M_{DC}}}(B)
\end{align}
we have the upper bound
\begin{align}
    \max_{M^\prime \in \mathcal{M}_{sb}} I_{\sigma^{M'}}(S;B) \leq H_{\sigma^{M_{DC}}}(B) = -\sum_{b}\pi_b \frac{\bar{P}_b^M}{\bar{P}}\log\left(\pi_b \frac{\bar{P}_b^M}{\bar{P}}\right), \label{eq:def_between_normalization_reg_decomp}
\end{align}
which we take as the normalization constant for the between component of our regional decomposition.

\begin{lemma}\label{lemma:normalization_decompositions}
    For the group decomposition of $\DKL{\sigma^M}{\sigma^{M_{DC}}}$ given in Lemma~\ref{lemma:group_decomposition_of_H}, define the within and between normalization constants by applying the same group decomposition to $\DKL{\sigma^{M_{DSI}}}{\sigma^{M_{DC}}},$ i.e., the between normalization constant by
    \begin{align}\label{eq:between_group_normalization_const}
        \DKL{\sigma^{M_{DSI}}(r_1 a_1, a_2)}{\sigma^{M_{DC}}(r_1 a_1, a_2)}
    \end{align}
and the within $a_2$ normalization constant by
\begin{align}\label{eq:within_group_normalization_const}
    \DKL{\sigma^{M_{DSI}}(r_1 b_1, b_2\mid a_2)}{\sigma^{M_{DC}\mid M_{DSI}}(r_1 b_1, b_2\mid a_2)}.
\end{align}
The following properties hold:
\begin{enumerate}

    \item[(i)] 
    The between normalization constant \eqref{eq:between_group_normalization_const} is the KL-divergence of the edge distribution over $(r_1 a_1, a_2)$ induced by the connectivity model
\begin{align}
    \PS{r_1 a_1 \sim a_2}{}{} = \mathbb{1}_{ \{a_1=a_2\} }\frac{1}{\pi_{a_2}} \bar{P}_{r_1 a_1}^M\label{eq:between_group_si_model}
\end{align}
from the edge distribution induced by the connectivity model
\begin{align}
    \PS{r_1 a_1 \sim a_2}{}{} = \bar{P}_{r_1 a_1}^M \bar{P}_{a_2}^M \frac{1}{\bar{P}},\label{eq:between_group_dc_model}
\end{align}
i.e., the usual degree constrained isolation model and degree configuration model when defined directly on the coarse coordinates $(\RS\times\mathcal{A})\times\mathcal{A}.$

\item[(ii)]
The within $a_2$ normalization constant \eqref{eq:within_group_normalization_const} is the KL-divergence of the edge distribution over $(r_1b_1,b_2)\in(\RS\times\BS)\times\{b\in a_2\}$
 induced by the connectivity model
 \begin{align}
     \PS{r_1 b_1\sim b_2}{}{} = \mathbb{1}_{ \{b_1=b_2\} } \bar{P}_{r_1 b_1}^M \frac{1}{\pi_{b_2\mid a_2}}\label{eq:within_group_si_model}
 \end{align}
 from the edge distribution induced by the connectivity model
\begin{align}
    \PS{r_1 b_1\sim b_2}{}{}  = \bar{P}_{r_1 b_1}^M \bar{P}_{b_2}^M \frac{1}{\bar{P}_{a_2}^M}.\label{eq:within_group_dc_model}
\end{align}

\item[(iii)]
When the underlying model is the regional isolation model, i.e., $M = M_{RI},$  $M_{DC} = M_{ER}$ and $M_{DSI}=M_{SI},$ we recover 
the traditional normalization constants \eqref{eq:between_a_normalization_trad_theil} and \eqref{eq:within_a_normalization_trad_theil} of the group decomposition of the Theil index.

\item[(iv)]
Likewise, the normalization constants \eqref{eq:def_between_normalization_reg_decomp} and \eqref{eq:def_within_normalization_reg_decomp}
recover the traditional normalization constants \eqref{eq:between_s_normalization_trad_theil} and \eqref{eq:within_s_normalization_trad_theil} of the regional decomposition of the Theil index when the underlying model is the regional isolation model.
\end{enumerate}

\end{lemma}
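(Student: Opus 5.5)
The plan is to compute everything explicitly from the geo-egocentric formulas \eqref{eq:def_M_DC_geo_egocentric} and \eqref{eq:def_M_DSI_geo_egocentric}, with $\sigma$ given by \eqref{eq:def_sigma}. The main tool is the cancellation of $\pi_{b_2}$ against $\pi_{b_1}$ in $M_{DSI}$ whenever $b_1=b_2$. I write $\bar P_{ra}^M\coloneqq\sum_{b\in a}\pi_{b\mid r,a}\bar P_{rb}^M$ and $\bar P_a^M=\sum_{b\in a}\pi_{b\mid a}\bar P_b^M$. Both are the degree aggregates that appear once we coarsen.

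For (i), I first compute the coarse marginal
\[
\sigma^{M_{DSI}}(ra_1,a_2)=\sum_{b_1\in a_1,\,b_2\in a_2}\pi_{rb_1}\pi_{b_2}\frac{\bar P_{rb_1}^M}{\pi_{b_1}\bar P}\mathbb{1}_{\{b_1=b_2\}}=\mathbb{1}_{\{a_1=a_2\}}\frac{\pi_{ra_1}\bar P_{ra_1}^M}{\bar P}.
\]
This is exactly $\pi_{ra_1}\pi_{a_2}P(ra_1\sim a_2)/\bar P$ with $P$ from \eqref{eq:between_group_si_model}. Similarly, $\sigma^{M_{DC}}(ra_1,a_2)$ factorizes as $\pi_{ra_1}\bar P_{ra_1}^M\,\pi_{a_2}\bar P_{a_2}^M/\bar P^2$, which is the edge distribution of \eqref{eq:between_group_dc_model}. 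I would also check that the coarse marginal connectivities agree, so that both are the DSI/DC models of the coarse space.

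For (ii), I use the definitions of Lemma~\ref{lemma:group_decomposition_of_H} applied with $M_{DSI}$ in place of $M$ and $M_{DC}$ as the null. First, $\sigma^{M_{DSI}}(rb_1,b_2\mid a_2)=\mathbb{1}_{\{b_1=b_2\}}\pi_{rb_1}\bar P_{rb_1}^M/(\pi_{a_2}\bar P_{a_2}^M)$ for $b_2\in a_2$; this uses $\sigma^{M_{DSI}}(a_2)=\pi_{a_2}\bar P_{a_2}^M/\bar P$, which holds because $M_{DSI}$ preserves the marginals. This is the normalized edge distribution of \eqref{eq:within_group_si_model} with second-coordinate weights $\pi_{b_2\mid a_2}$. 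Second, $\sigma^{M_{DC}\mid M_{DSI}}$ keeps the DSI $(ra_1,a_2)$-marginal, which by (i) forces $a_1=a_2$, and multiplies it by the DC conditional $\pi_{b_1\mid r,a_1}\pi_{b_2\mid a_2}\bar P_{rb_1}^M\bar P_{b_2}^M/(\bar P_{ra_1}^M\bar P_{a_2}^M)$. The product is $\pi_{rb_1}\bar P_{rb_1}^M\,\pi_{b_2\mid a_2}\bar P_{b_2}^M/(\pi_{a_2}(\bar P_{a_2}^M)^2)$, supported on $b_1,b_2\in a_2$, which is the edge distribution of \eqref{eq:within_group_dc_model}.

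The main obstacle is this support and normalization bookkeeping. The edge distribution induced by \eqref{eq:within_group_dc_model} must be read with the first individual's group restricted to $a_2$, as inherited from the DSI marginal. Otherwise the normalizer would be $\bar P\,\bar P_{a_2}^M$ rather than $\pi_{a_2}(\bar P_{a_2}^M)^2$. I will make this restriction explicit in the statement of (ii).

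For (iii) and (iv), I specialize to $M_{RI}$, where $\bar P_{rb}=\bar P$ for all $r,b$, so $M_{DC}=M_{ER}$ and $M_{DSI}=M_{SI}$. In (i) the two coarse distributions become $\mathbb{1}_{\{a_1=a_2\}}\pi_{ra_1}$ and $\pi_{ra_1}\pi_{a_2}$, so the KL divergence is $\sum_a\pi_a\log(1/\pi_a)$. In (ii) the distributions become $\mathbb{1}_{\{b_1=b_2\}}\pi_{rb_1}/\pi_{a_2}$ and $\pi_{rb_1}\pi_{b_2\mid a_2}/\pi_{a_2}$, giving $\sum_{b\in a}\pi_{b\mid a}\log(1/\pi_{b\mid a})$, which is \eqref{eq:within_a_normalization_trad_theil}. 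For the between constant in (iv), \eqref{eq:def_between_normalization_reg_decomp} with $\bar P_b=\bar P$ reduces to \eqref{eq:between_s_normalization_trad_theil}. For the within constant, Lemma~\ref{lemma:regional_decomposition_of_H} gives $\sigma^{M_{SI}}(rb_1,b_2\mid s)=\pi_{r\mid s}\pi_{b_1\mid r}\mathbb{1}_{\{b_1=b_2\}}$ and $\sigma^{M_{ER}\mid M_{SI}}(rb_1,b_2\mid s)=\pi_{b_2\mid s}\pi_{r\mid s}\pi_{b_1\mid r}$. Their KL divergence is $\sum_{r\in s}\pi_{r\mid s}\sum_b\pi_{b\mid r}\log(1/\pi_{b\mid s})=H(B\mid S=s)$, which is \eqref{eq:within_s_normalization_trad_theil}.
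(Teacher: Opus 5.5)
Your proposal is correct and follows essentially the same route as the paper's proof: explicit marginalization of the DSI and DC edge distributions to the coarse coordinates for (i), and direct evaluation of $\sigma^{M_{DSI}}(\cdot\mid a_2)$ and $\sigma^{M_{DC}\mid M_{DSI}}(\cdot\mid a_2)$ for (ii). This includes the restriction to $a_1=a_2$, which the paper also records as ``zero support on $\{a_1\neq a_2\}$''. Parts (iii) and (iv) then follow by specializing to $\bar P_{rb}\equiv\bar P$. The only cosmetic differences are that the paper marginalizes connection probabilities rather than edge distributions in (i), and cites Lemma~\ref{lemma:normalization_of_indices} on the coarse space for the between constant in (iii), where you compute the entropy directly.
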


\begin{remark}
\begin{enumerate}
    \item[(i)]
The connectivity models \eqref{eq:within_group_si_model} and \eqref{eq:within_group_dc_model} differ from the models obtained by defining $M_{DSI}$ and $M_{DC}$ in the usual way directly on $(\RS\times\{b\in a_2\})\times \{b\in a_2 \},$ i.e., from
\begin{align}\label{eq:SI_within_a2_a2_directly}
    \PS{r_1 b_1\sim b_2}{}{} = \mathbb{1}_{ \{b_1=b_2\} } \PS{r_1 b_1 \sim a_2}{}{M}\frac{1}{\pi_{b_2\mid a_2}} 
\end{align}
and 
\begin{align}\label{eq:DC_within_a2_a2_directly}
    \PS{r_1 b_1\sim b_2}{}{}  = \PS{r_1 b_1 \sim a_2}{}{M}\PS{a_2 \sim b_2}{}{M} \frac{1}{\PS{a_2\sim a_2}{}{M}}.
\end{align}
Consequently, the within normalization constants in Lemma~\ref{lemma:normalization_decompositions} should be interpreted as conditional components of the global normalization constant
\begin{align*}
    \DKL{\sigma^{M_{DSI}}}{\sigma^{M_{DC}}},    
\end{align*}
rather than as normalization constants obtained by solving a new, restricted normalization problem inside each coarse group, as in general the normalization constants built upon \eqref{eq:SI_within_a2_a2_directly} and \eqref{eq:DC_within_a2_a2_directly} will differ from the ones in Lemma~\ref{lemma:normalization_decompositions}.

\item[(ii)] Due to the nature of the global social isolation model $M_{DSI}$, $\sigma^{M_{DSI}}(\cdot\mid a_2)$ and $\sigma^{M_{DC}\mid M_{DSI}}(\cdot\mid a_2)$ only have support on $\{a_1=a_2\},$ while in general $\sigma^M(\cdot\mid a_2)$ and $\sigma^{M_{DC}\mid M}(\cdot\mid a_2)$ as given by \eqref{eq:sigma_M_group_within} and \eqref{eq:sigma_N_given_M_group_within} have support on all of $(\RS\times\BS)\times \{b\in a_2\}.$ While one might wish to simply apply the normalization strategy of Lemma~\ref{lemma:normalization_of_indices} to models defined on $(\RS\times\BS)\times \{b\in a_2\}$ with marginals given by $\sigma^M(\cdot\mid a_2),$ it is not obvious what the considered class of models $\mathcal{M}$ should be: the core ingredient in the proof of Lemma~\ref{lemma:normalization_of_indices} is that $M_{DC}$ and $M$ have the same average degrees (i.e., the same marginal distributions of $b_2$ and of $r_1b_1$). But this does not hold for $\sigma^M(\cdot\mid a_2)$ and $\sigma^{M_{DC}\mid M}(\cdot\mid a_2)$ anymore: while the marginals of $b_2\in a_2$ are still equal, the marginals of $r_1b_1$ will in general not be equal. However, by noting that by construction both $\sigma^M(\cdot\mid a_2)$ and $\sigma^{M_{DC}\mid M}(\cdot\mid a_2)$ also have the same $r_1a_1$ marginal conditional on $a_2,$
one could define a suitable class of models by requiring that the marginals of $b_2$ and of $r_1a_1$
are preserved and then find a maximizing model in this class for the KL-divergence from $\sigma^{M_{DC}\mid M}(\cdot\mid a_2).$ While conceptually valid, such an approach does not recover the traditional normalization constants of the Theil index when the underlying model is the regional isolation model $M_{RI}$ (details are omitted). For comparability with the traditional Theil index, we therefore adopt the normalization strategy of Lemma~\ref{lemma:normalization_decompositions}.
\end{enumerate}
\end{remark}

\subsection{Additional results for the segregation framework}
\label{sec:supp_additional_results}

\begin{lemma}[Coordinate coarsening and segregation]
\label{lemma:coordinate_coarsening}
Let $M'$ be a segregation-defining model and suppose that the corresponding reference models are admissible. We consider the three coordinate spaces defined in section~\ref{sec:coordinate_spaces_and_geosocial_networks}, which are obtained from one another by deterministic coarsening of geographical coordinates, and we consider finer geographical coordinates $p$ and regional coordinates $r$. We have the following relations.

\begin{enumerate}
    \item[(i)] For every generator $f$ and every $\rho\in\{\tau,\sigma,\xi\}$, coordinate coarsening cannot increase the $f$-divergence of $\rho^{M'}$ from $\rho^{M_{DC}(M')}$. In particular,
    \begin{equation}
    \begin{aligned}
        \mathcal{D}_{f,\rho}\left(M',M_{DC}(M');\BS, \BS\right)
        &\leq \mathcal{D}_{f,\rho}\left(M',M_{DC}(M');\RS\times \BS, \BS\right)\\
        &\leq \mathcal{D}_{f,\rho}\left(M',M_{DC}(M');\RS\times \BS, \RS\times \BS\right)
    \end{aligned}
    \label{eq:coordinate_coarsening_unnormalized}
    \end{equation}
and 
    \begin{equation}
    \begin{aligned}
        \mathcal{D}_{f,\rho}\left(M',M_{DC}(M');\RS\times \BS, \BS\right)\leq 
        \mathcal{D}_{f,\rho}\left(M',M_{DC}(M');\GS\times \BS, \BS\right) .
    \end{aligned}
    \label{eq:coordinate_coarsening_unnormalized_geoegocentric}
    \end{equation}

\item[(ii)] For $\rho=\sigma$, the same inequalities hold after normalization by the divergence of $M_{DSI}(M')$ from $M_{DC}(M')$. In particular,
\begin{equation}
\begin{aligned}
&\frac{\mathcal D_{f,\sigma}\left(M',M_{DC}(M');\BS,\BS\right)}
{\mathcal D_{f,\sigma}\left(M_{DSI}(M'),M_{DC}(M');\BS,\BS\right)}\\
\leq\;&\frac{\mathcal D_{f,\sigma}\left(M',M_{DC}(M');\RS\times\BS,\BS\right)}
{\mathcal D_{f,\sigma}\left(M_{DSI}(M'),M_{DC}(M');\RS\times\BS,\BS\right)}= S_{f,\sigma}(M')\\
\leq\;&
\frac{\mathcal D_{f,\sigma}\left(M',M_{DC}(M');\RS\times\BS,\RS\times\BS\right)}
{\mathcal D_{f,\sigma}\left(M_{DSI}(M'),M_{DC}(M');\RS\times\BS,\RS\times\BS\right)}
\end{aligned}
\label{eq:coordinate_coarsening_normalized_sigma}
\end{equation}
and
\begin{equation}
\frac{\mathcal D_{f,\sigma}\left(M',M_{DC}(M');\RS\times\BS,\BS\right)}
{\mathcal D_{f,\sigma}\left(M_{DSI}(M'),M_{DC}(M');\RS\times\BS,\BS\right)}
\leq
\frac{\mathcal D_{f,\sigma}\left(M',M_{DC}(M');\GS\times\BS,\BS\right)}
{\mathcal D_{f,\sigma}\left(M_{DSI}(M'),M_{DC}(M');\GS\times\BS,\BS\right)}.
\label{eq:coordinate_coarsening_normalized_sigma_geoegocentric}
\end{equation}
\end{enumerate}
\end{lemma}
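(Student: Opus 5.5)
The plan is to reduce everything to the data-processing inequality for $f$-divergences under deterministic maps, applied to the pushforward of the fine-space distributions. For (ii), I would then show that the normalizing denominators do not depend on the coordinate space.

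\textbf{Step 1: compatibility of pushforwards.} Let $\kappa$ denote any of the deterministic coarsenings:
\begin{itemize}
\item $(p_1b_1,p_2b_2)\mapsto(r(p_1)b_1,r(p_2)b_2)$,
\item $(r_1b_1,r_2b_2)\mapsto(r_1b_1,b_2)$,
\item $(r_1b_1,b_2)\mapsto(b_1,b_2)$,
\item $(p_1b_1,b_2)\mapsto(r(p_1)b_1,b_2)$.
\end{itemize}
For the model $M'$ itself, $\sigma^{M'}$, $\tau^{M'}$ and $\xi^{M'}$ on a coarse space are by definition the pushforwards under $\kappa$ of the corresponding fine-space distributions. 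This is just marginalization of the coordinate distribution weighted by connection probabilities; for $\tau$ the link indicator $e$ is carried along unchanged.

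The only nontrivial point concerns the reference model, because $M_{DC}(M')$ is defined separately on each space, e.g.\ directly on the geo-egocentric space via \eqref{eq:def_M_DC_geo_egocentric}. I would verify that on every space
\[
\sigma^{M_{DC}(M')}(c_1,c_2)=\sigma^{M'}(c_1)\,\sigma^{M'}(c_2),
\]
the product of the two one-point marginals of $\sigma^{M'}$. For instance, in the geo-egocentric case
\[
\pi_{r_1b_1}\pi_{b_2}\frac{\bar P_{r_1b_1}\bar P_{b_2}}{\bar P^{2}}=\sigma^{M'}(r_1b_1)\,\sigma^{M'}(b_2).
\]
Since $\kappa$ acts coordinatewise, the pushforward of a product of marginals is the product of the pushed-forward marginals. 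Hence $\kappa_\#\sigma^{M_{DC}(M')}_{\mathrm{fine}}=\sigma^{M_{DC}(M')}_{\mathrm{coarse}}$.

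For $\tau$ I would write
\[
\tau^{M_{DC}}(\cdot,\cdot,1)=\bar P\,\sigma^{M_{DC}},\qquad
\tau^{M_{DC}}(\cdot,\cdot,0)=\pi\otimes\pi-\bar P\,\sigma^{M_{DC}}.
\]
Both pieces push forward consistently, because $\pi\otimes\pi$ does. The case of $\xi$ follows in the same way after dividing by $1-\bar P$. Admissibility is used only so that these objects are genuine probability distributions.

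\textbf{Step 2: part (i).} With Step 1 in hand, part (i) is the data-processing inequality:
\[
D_f(\kappa_\#P\,\|\,\kappa_\#Q)\le D_f(P\,\|\,Q).
\]
In the discrete case this follows from Jensen's inequality applied to the convex perspective $(x,y)\mapsto yf(x/y)$ within each fibre of $\kappa$. Chaining the coarsenings gives \eqref{eq:coordinate_coarsening_unnormalized} and \eqref{eq:coordinate_coarsening_unnormalized_geoegocentric}.

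\textbf{Step 3: part (ii).} I expect this to be the main step. I would show that the denominator $\mathcal D_{f,\sigma}(M_{DSI}(M'),M_{DC}(M');\CS_1,\CS_2)$ takes the same value on all of the spaces considered. Note that $M_{DSI}$ does not push forward into a product structure in general, so data processing alone does not settle this; an explicit computation is needed.

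Take the geo-egocentric space first. There
\[
\sigma^{M_{DSI}}(r_1b_1,b_2)=\sigma^{M'}(r_1b_1)\,\mathbb 1_{\{b_1=b_2\}}.
\]
On the diagonal $b_1=b_2$ the likelihood ratio against $\sigma^{M_{DC}}$ is $1/\sigma^{M'}(b_1)$, and off the diagonal it is $0$. This gives
\[
\sum_b\sigma^{M'}(b)\Big[\sigma^{M'}(b)\,f\!\big(1/\sigma^{M'}(b)\big)+\big(1-\sigma^{M'}(b)\big)f(0)\Big].
\]
For the social space $\BS\times\BS$, I would check that $\sigma^{M_{DSI}}(b,b)=\sigma^{M'}(b)$, again with ratio $1/\sigma^{M'}(b)$. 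For the full space, $\sigma^{M_{DSI}}(r_1b_1,r_2b_2)=\sigma^{M'}(r_1b_1)\sigma^{M'}(r_2b_2)/\sigma^{M'}(b_1)$ on $\{b_1=b_2\}$, which has the same ratio. For the finer space $\GS\times\BS$, replacing $r$ by $p$ changes nothing. In every case the likelihood ratio depends only on $b$ through $\sigma^{M'}(b)$, so summing over the remaining coordinates reproduces the same expression.

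Thus all the denominators coincide and are positive whenever there is more than one group. Dividing the inequalities of (i), taken with $\rho=\sigma$, by this common constant yields \eqref{eq:coordinate_coarsening_normalized_sigma} and \eqref{eq:coordinate_coarsening_normalized_sigma_geoegocentric}.
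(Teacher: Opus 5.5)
Your proposal is correct and follows essentially the same route as the paper. For part (i) you apply data processing for $f$-divergences after checking that the directly defined coarse $M_{DC}(M')$ distributions coincide with the pushforwards of the fine ones; your product-of-marginals formulation is equivalent to the paper's aggregation of $\pi_{c}\bar P_{c}$. For part (ii) you show that the $M_{DSI}$--$M_{DC}$ denominator is invariant because the likelihood ratio depends only on $(b_1,b_2)$ — the paper does this in one stroke by summing the $M_{DC}$ mass over geographical coordinates, whereas you verify it space by space.
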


\begin{remark}
\label{remark:coarsening_tau_xi}
The monotonicity result in part~(ii) of Lemma~\ref{lemma:coordinate_coarsening} is specific to $\rho=\sigma$. For $\rho=\tau$ and $\rho=\xi$, the corresponding inequalities after normalization by the divergence of $\rho^{M_{DSI}(M')}$ from $\rho^{M_{DC}(M')}$ do not hold in general: in the following Example~\ref{example:counterexample_coordinate_coarsening} geographical coordinate coarsening decreases the divergence of $\rho^{M_{DSI}(M')}$ from $\rho^{M_{DC}(M')}$ proportionally more than it decreases the divergence of $\rho^{M'}$ from $\rho^{M_{DC}(M')}.$ This cannot happen for the $\sigma$-based measures, as the divergence of $\sigma^{M_{DSI}(M')}$ from $\sigma^{M_{DC}(M')}$ is invariant under coordinate coarsening, as shown in part~(ii) of the proof of Lemma~\ref{lemma:coordinate_coarsening}.
\end{remark}

\begin{example}[Counterexample for Remark~\ref{remark:coarsening_tau_xi}]
\label{example:counterexample_coordinate_coarsening}
Let $\RS=\BS=\{1,2\}$ and
\begin{equation}
    \pi_{rb}^{M'}=\frac14,
    \qquad r,b\in\{1,2\}.
\end{equation}
We order the geosocial coordinates as $(1,1),(1,2),(2,1),(2,2)$ and let the symmetric matrix of connection probabilities under $M'$ be
\begin{equation}
    \left(
    \PS{c_1\sim c_2}{}{M'}
    \right)_{c_1,c_2}
    =
    \begin{pmatrix}
        0.1 & 0.4 & 0.1 & 0.2\\
        0.4 & 0.4 & 0.4 & 0.2\\
        0.1 & 0.4 & 0.1 & 0.2\\
        0.2 & 0.2 & 0.2 & 0.1
    \end{pmatrix}.
\label{eq:counterexample_coarsening_connection_matrix}
\end{equation}
The resulting marginal connection probability is $\bar P^{M'} = 0.23125$, and both $M_{DC}(M')$ and $M_{DSI}(M')$ are admissible. For the Kullback--Leibler generator $f(x)=x\log x$ we obtain
\begin{equation}
    \frac{\mathcal D_{f,\xi}\left(M',M_{DC}(M');\RS\times\BS,\RS\times\BS\right)}
    {\mathcal D_{f,\xi}\left(M_{DSI}(M'),M_{DC}(M');\RS\times\BS,\RS\times\BS\right)}
    \approx 0.0911,
\end{equation}
whereas marginalizing the region of the second individual gives
\begin{equation}
    \frac{\mathcal D_{f,\xi}\left(M',M_{DC}(M');\RS\times\BS,\BS\right)}
    {\mathcal D_{f,\xi}\left(M_{DSI}(M'),M_{DC}(M');\RS\times\BS,\BS\right)}
    \approx 0.0997.
\end{equation}
Thus, the DSI-scaled $\xi$-based measure increases under coordinate coarsening. For the same example,
\begin{equation}
    \frac{\mathcal D_{f,\tau}\left(M',M_{DC}(M');\RS\times\BS,\RS\times\BS\right)}
    {\mathcal D_{f,\tau}\left(M_{DSI}(M'),M_{DC}(M');\RS\times\BS,\RS\times\BS\right)}
    \approx 0.0809,
\end{equation}
while
\begin{equation}
    \frac{\mathcal D_{f,\tau}\left(M',M_{DC}(M');\RS\times\BS,\BS\right)}
    {\mathcal D_{f,\tau}\left(M_{DSI}(M'),M_{DC}(M');\RS\times\BS,\BS\right)}
    \approx 0.0821.
\end{equation}
Thus, the DSI-scaled $\tau$-based measure can also increase under coordinate coarsening.
\end{example}

\section{Proofs for the segregation framework}
\label{sec:proofs}

\subsection{Normalization of measures}

\begin{proof}[Proof of Lemma~\ref{lemma:normalization_of_indices}]
The proof follows an argument similar to the maximal statistical association argument in \cite{borgonovo2025convexity}: association between two R.V.s $X$ and $Y$ becomes large when the conditional distribution $\mathbb{P}_{Y\mid X=x}$ becomes very different from the marginal distribution $\mathbb{P}_Y$, and association, i.e., the average separation between conditional $\mathbb{P}_{Y\mid X=x}$ and marginal $\mathbb{P}_Y,$ becomes maximal when $Y$ is a deterministic function of $X$. In our use case, we are asking about the maximal separation from the random mixing null $\sigma^{M_{DC}}(r_1b_1,b_2)$ where the first coordinate $r_1b_1$ is independent of the second coordinate $b_2$. Perhaps not surprisingly, the maximal separation is given when the second coordinate is a deterministic function of the first coordinate.

For ease of notation, we write shorthand $N=M_{DC}(M')$ for the degree configuration null model and $M_{DSI}=M_{DSI}(M')$. The set $\mathcal M(M')$ is an equivalence class, and $N\in\mathcal M(M')$ by definition of the degree configuration model. Hence $\mathcal M(M')=\mathcal M(N)$, and every $\widetilde M\in\mathcal M(M')$ has the same degree configuration null $N$, since there is only one degree configuration model in each equivalence class. Additionally, $M_{DSI}\in\mathcal M(M')$: by definition, $\pi^{M_{DSI}}=\pi^{M'}$, and
\begin{equation}
\begin{aligned}
    \bar P_{r_1b_1}^{M_{DSI}} & =\sum_{b_2}\pi_{b_2}^{M'}\PS{r_1b_1\sim b_2}{}{M_{DSI}}
    =\sum_{b_2}\pi_{b_2}^{M'}\mathbb{1}_{\{b_1=b_2\}}\frac{\bar P_{r_1b_1}^{M'}}{\pi_{b_2}^{M'}}
    =\bar P_{r_1b_1}^{M'}.
\end{aligned}
\end{equation}
Hence $M_{DSI}$ satisfies the defining constraints of $\mathcal M(M')$.
We also note that for any $\widetilde M\in\mathcal M(M')$, the equality $\pi^{\widetilde M}=\pi^{M'}$ and the constraints $\bar P_{rb}^{\widetilde M}=\bar P_{rb}^{M'}$ imply
\begin{equation*}
\begin{aligned}
    \bar P_b^{\widetilde M}
    &=\sum_r\pi_{r\mid b}^{\widetilde M}\bar P_{rb}^{\widetilde M}
    =\sum_r\pi_{r\mid b}^{M'}\bar P_{rb}^{M'}
    =\bar P_b^{M'},\\
    \bar P^{\widetilde M}
    &=\sum_{r,b}\pi_{rb}^{\widetilde M}\bar P_{rb}^{\widetilde M}
    =\sum_{r,b}\pi_{rb}^{M'}\bar P_{rb}^{M'}
    =\bar P^{M'},\\
    \sigma^{\widetilde M}(b)
    &=\frac{\pi_b^{\widetilde M}\bar P_b^{\widetilde M}}{\bar P^{\widetilde M}}
    =\frac{\pi_b^{M'}\bar P_b^{M'}}{\bar P^{M'}} = \sigma^{M'}(b).
\end{aligned}
\end{equation*}

Next we note that for any $\widetilde M\in\mathcal M(M')$ by writing
\begin{align}
    \sigma^{\widetilde M}(r_1b_1,b_2)
    =\sigma^{\widetilde M}(r_1b_1)\sigma^{\widetilde M}(b_2\mid r_1b_1)
\end{align}
and using the degree constraints
\begin{align}
    \sigma^{\widetilde M}(r_1b_1)
    &=\sum_{b_2}\pi_{r_1b_1}^{M'}\pi_{b_2}^{M'}
    \frac{\PS{r_1b_1\sim b_2}{}{\widetilde M}}{\bar P^{M'}}\\
    &=\frac{\pi_{r_1b_1}^{M'}\bar P_{r_1b_1}^{M'}}{\bar P^{M'}}
    =\sigma^N(r_1b_1)
\end{align}
we obtain
\begin{align}
    \sigma^{\widetilde M}(r_1b_1,b_2)
    =\sigma^N(r_1b_1)\sigma^{\widetilde M}(b_2\mid r_1b_1).
\end{align}
In particular, for the degree constrained social isolation model $M_{DSI}$, we obtain
\begin{align}
    \sigma^{M_{DSI}}(r_1b_1,b_2) = \sigma^N(r_1b_1)\mathbb{1}_{\{b_1=b_2\}}
\end{align}
with
\begin{align}
    \left\{\sigma^{M_{DSI}}(b_2\mid r_1b_1)\right\}_{b_2\in\BS}=e_{b_1},
\end{align}
where $e_{b_1}$ denotes the unit vector with a $1$ at position $b_1$ and zeros elsewhere.
We further note that $\sigma^N(b_2\mid r_1b_1)=\sigma^N(b_2),$ because
\begin{equation}
\begin{aligned}
    \sigma^N(r_1b_1,b_2)
    =\pi_{r_1b_1}^{M'}\frac{\bar P_{r_1b_1}^{M'}}{\bar P^{M'}}
    \pi_{b_2}^{M'}\frac{\bar P_{b_2}^{M'}}{\bar P^{M'}}
    =\sigma^N(r_1b_1)\sigma^N(b_2),
\end{aligned}
\end{equation}
and define the function $F_N$ on the $|\BS|-1$ simplex $\Delta^{|\BS|-1}$ by
\begin{align}
    F_N\colon &\Delta^{|\BS|-1}\to\mathbb{R},\quad
    p\mapsto\sum_{b_2}\sigma^N(b_2)f\left(\frac{p_{b_2}}{\sigma^N(b_2)}\right).
\end{align}
$F_N$ is convex by the convexity of $f,$ and hence
\begin{equation}
\begin{aligned}
    F_N\left(\left\{\sigma^{\widetilde M}(b_2\mid r_1b_1)\right\}_{b_2\in\BS}\right)
    &=F_N\left(\sum_{b_2}\sigma^{\widetilde M}(b_2\mid r_1b_1)e_{b_2}\right)\\
    &\leq\sum_{b_2}\sigma^{\widetilde M}(b_2\mid r_1b_1)F_N(e_{b_2}).
\end{aligned}
\end{equation}
For $M_{DSI}$ we obtain equality because
\begin{equation}
\begin{aligned}
    \sum_{b_2}\sigma^{M_{DSI}}(b_2\mid r_1b_1)F_N(e_{b_2})
    =F_N(e_{b_1})
    =F_N\left(\left\{\sigma^{M_{DSI}}(b_2\mid r_1b_1)\right\}_{b_2\in\BS}\right).
\end{aligned}
\end{equation}

Putting everything together, we obtain for the $f$-divergence between $\sigma^{\widetilde M}$ and $\sigma^N$
\begin{equation}
\begin{aligned}
    D_f\left(\sigma^{\widetilde M}\middle\|\sigma^N\right)
    &=\sum_{r_1b_1}\sigma^N(r_1b_1)
    F_N\left(\left\{\sigma^{\widetilde M}(b_2\mid r_1b_1)\right\}_{b_2\in\BS}\right)\\
    &\leq\sum_{r_1b_1}\sigma^N(r_1b_1)
    \sum_{b_2}\sigma^{\widetilde M}(b_2\mid r_1b_1)F_N(e_{b_2})\\
    &=\sum_{b_2}F_N(e_{b_2})
    \sum_{r_1b_1}\sigma^N(r_1b_1)\sigma^{\widetilde M}(b_2\mid r_1b_1)\\
    &=\sum_{b_2}F_N(e_{b_2})
    \sum_{r_1b_1}\sigma^{\widetilde M}(r_1b_1,b_2)\\
    &=\sum_{b_2}\sigma^{\widetilde M}(b_2)F_N(e_{b_2})
    =\sum_{b_2}\sigma^{M_{DSI}}(b_2)F_N(e_{b_2})\\
    &=\sum_{r_1b_1}\sigma^N(r_1b_1)
    \sum_{b_2}\sigma^{M_{DSI}}(b_2\mid r_1b_1)F_N(e_{b_2})\\
    &=\sum_{r_1b_1}\sigma^N(r_1b_1)F_N(e_{b_1})\\
    &=\sum_{r_1b_1}\sigma^N(r_1b_1)
    F_N\left(\left\{\sigma^{M_{DSI}}(b_2\mid r_1b_1)\right\}_{b_2\in\BS}\right)\\
    &=D_f\left(\sigma^{M_{DSI}}\middle\|\sigma^N\right).
\end{aligned}
\end{equation}
This shows that
\begin{equation*}
    M_{DSI}\in\argmax_{\widetilde M\in\mathcal M(M')}
    \mathcal D_{f,\sigma}(\widetilde M).
\end{equation*}
If $\bar P_{rb}^{M'}=\bar P^{M'}$ for all $r,b$, then $M_{DC}(M')=M_{ER}$ and $M_{DSI}=M_{SI}$ and \eqref{eq:M_SI_in_argmax} follows. To show \eqref{eq:normalization_const_H_D_C}, we first observe that
\begin{equation}
\begin{aligned}
    \mathcal D_{f,\sigma}(M_{SI})
    &=\sum_{r,b_1,b_2}\sigma^{M_{ER}}(rb_1,b_2)
    f\left(\frac{\sigma^{M_{SI}}(rb_1,b_2)}{\sigma^{M_{ER}}(rb_1,b_2)}\right)\\
    &=\sum_{r,b_1,b_2}\pi_r^{M'}\pi_{b_1\mid r}^{M'}\pi_{b_2}^{M'}
    f\left(\frac{\mathbb{1}_{\{b_1=b_2\}}}{\pi_{b_2}^{M'}}\right).
\end{aligned}
\end{equation}
Hence,
\begin{equation}
\begin{aligned}
    \mathcal D_{x\log x,\sigma}(M_{SI})
    &=\sum_{r,b_1,b_2}\pi_r^{M'}\pi_{b_1\mid r}^{M'}\pi_{b_2}^{M'}
    \left(\frac{\mathbb{1}_{\{b_1=b_2\}}}{\pi_{b_2}^{M'}}\right)
    \log\left(\frac{\mathbb{1}_{\{b_1=b_2\}}}{\pi_{b_2}^{M'}}\right)\\
    &=-\sum_b\pi_b^{M'}\log(\pi_b^{M'}),\\
    \mathcal D_{\frac12|x-1|,\sigma}(M_{SI})
    &=\frac12\sum_{r,b_1,b_2}\pi_r^{M'}\pi_{b_1\mid r}^{M'}\pi_{b_2}^{M'}
    \left|\mathbb{1}_{\{b_1=b_2\}}\frac{1}{\pi_{b_2}^{M'}}-1\right|\\
    &=\frac12\left(
    \sum_{r,b}\pi_r^{M'}\pi_{b\mid r}^{M'}\pi_b^{M'}
    \left(\frac{1}{\pi_b^{M'}}-1\right)
    +\sum_r\sum_{b_1\neq b_2}\pi_r^{M'}\pi_{b_1\mid r}^{M'}\pi_{b_2}^{M'}
    \right)\\
    &=\frac12\left(
    1+1-2\sum_r\sum_b\pi_r^{M'}\pi_{b\mid r}^{M'}\pi_b^{M'}
    \right)
    =\sum_b\pi_b^{M'}(1-\pi_b^{M'}),\\
    \mathcal D_{x^2-1,\sigma}(M_{SI})
    &=\sum_{r,b_1,b_2}\pi_r^{M'}\pi_{b_1\mid r}^{M'}\pi_{b_2}^{M'}
    \left(\mathbb{1}_{\{b_1=b_2\}}\frac{1}{(\pi_{b_2}^{M'})^2}-1\right)\\
    &=\sum_b(\pi_b^{M'})^2\frac{1}{(\pi_b^{M'})^2}-1
    =|\BS|-1.
\end{aligned}
\label{eq:normalization_const_KL_TV_chi}
\end{equation}
\end{proof}

\begin{proof}[Proof of Corollary~\ref{corollary:normalization_social_space}]
Recall the definitions of $M_{DC}(M')$ and $M_{DSI}(M')$ in \eqref{eq:def_M_DC} and \eqref{eq:def_M_DSI}. By marginalizing over the regional coordinates, we obtain the group-to-group connection probabilities
\begin{equation}
\begin{aligned}
    \PS{b_1\sim b_2}{}{M_{DC}(M')}
    &=\frac{\bar P_{b_1}^{M'}\bar P_{b_2}^{M'}}{\bar P^{M'}},\\
    \PS{b_1\sim b_2}{}{M_{DSI}(M')}
    &=\frac{\bar P_{b_1}^{M'}}{\pi_{b_1}^{M'}}\mathbb{1}_{\{b_1=b_2\}}.
\end{aligned}
\label{eq:M_DC_and_M_DSI_on_BB}
\end{equation}
The proof of Lemma~\ref{lemma:normalization_of_indices} now applies directly by considering the special case of a single region $\RS=\{r\}$ with $\pi_r=1$. Then $rb$ is identified with $b$, $\pi_{rb}=\pi_b$, and $\bar P_{rb}=\bar P_b$, so the constraints defining $\mathcal M(M')$ become precisely those defining $\mathcal M_{\BS}(M')$. Furthermore, under the identification $rb$ with $b$, the degree configuration and degree-constrained social isolation models take the forms in \eqref{eq:M_DC_and_M_DSI_on_BB}. Hence \eqref{eq:M_DSI_in_argmax_BB} follows from \eqref{eq:M_DSI_in_argmax}.
\end{proof}

\subsection{Recovery of traditional geographical segregation measures as special cases}

\begin{proof}[Proof of Lemma~\ref{lemma:relation_between_geo_and_social_f_divergence}]
For ease of notation, throughout this proof we write $\bar P\coloneqq \bar P^{M_\phi}=\bar P^{M_{RI}}$ for their common marginal connection probability.
For part (i), by the constant-degree condition \eqref{eq:constant_weighted_row_sums_phi}, the marginal connectivity of every geosocial coordinate $pb$ under $M_\phi$ is
\begin{equation*}
    \bar P_{pb}^{M_\phi}=\sum_{p',b'}\pi_{p'b'}\phi(p,p')=\sum_{p'}\pi_{p'}\phi(p,p')=\bar P.
\end{equation*}
Hence $M_{DC}(M_\phi)=M_{ER}$.

To show \eqref{eq:equivalence_of_indices_under_spatial_kernel}, we first note that on the geo-egocentric coordinate space
\begin{equation}
\begin{aligned}
    \PS{p_1b_1\sim b_2}{}{M_\phi}
    &=\sum_{p_2}\PS{p_1b_1\sim p_2b_2}{}{M_\phi}\pi_{p_2\mid b_2}
    =\frac{1}{\pi_{b_2}}\sum_{p_2}\pi_{p_2b_2}\phi(p_1,p_2)\\
    &=\frac{\widetilde\pi_{b_2\mid p_1}}{\pi_{b_2}}\bar P,
\end{aligned}
\label{eq:connectivity_under_M_phi_gecentric}
\end{equation}
where the last equality follows from the constant-degree condition and \eqref{eq:def_tilde_pi_discrete}. Therefore,
\begin{equation*}
\begin{aligned}
    \frac{\sigma^{M_\phi}(pb_1,b_2)}{\sigma^{M_{DC}(M_\phi)}(pb_1,b_2)}
    &=\frac{\pi_p\pi_{b_1\mid p}\pi_{b_2}\frac{\widetilde\pi_{b_2\mid p}}{\pi_{b_2}}\frac{\bar P}{\bar P}}
    {\pi_p\pi_{b_1\mid p}\pi_{b_2}\frac{\bar P}{\bar P}}
    =\frac{\widetilde\pi_{b_2\mid p}}{\pi_{b_2}}.
\end{aligned}
\end{equation*}
Consequently,
\begin{equation}
\begin{aligned}
    \mathcal D_{f,\sigma}\left(M_\phi,M_{DC}(M_\phi);\GS\times\BS,\BS\right)
    &=\sum_p\sum_{b_1,b_2}\sigma^{M_{DC}(M_\phi)}(pb_1,b_2)
    f\left(\frac{\sigma^{M_\phi}(pb_1,b_2)}{\sigma^{M_{DC}(M_\phi)}(pb_1,b_2)}\right)\\
    &=\sum_p\sum_{b_1,b_2}\pi_p\pi_{b_1\mid p}\pi_{b_2}
    f\left(\frac{\widetilde\pi_{b_2\mid p}}{\pi_{b_2}}\right)\\
    &=\sum_p\sum_{b_2}\pi_p\pi_{b_2}
    f\left(\frac{\widetilde\pi_{b_2\mid p}}{\pi_{b_2}}\right)
    =\mathcal D_{f,\widetilde\pi},
\end{aligned}
\label{eq:proof_D_f_sigma_equals_D_f_tilde_pi}
\end{equation}
which proves \eqref{eq:equivalence_of_indices_under_spatial_kernel}.

The regional isolation model $M_{RI}$ is the special case of $M_\phi$ corresponding to the region-based kernel in \eqref{eq:def_RI_model1}. If $r$ denotes the region containing $p$, then
\begin{equation}
\begin{aligned}
    \widetilde\pi_{b\mid p}
    &=\frac{1}{\bar P}\sum_{p'}\pi_{p'b}\phi(p,p')
    =\frac{1}{\pi_r}\sum_{p':\,r(p')=r}\pi_{p'b}
    =\frac{\pi_{rb}}{\pi_r}\\
    &=\pi_{b\mid r}.
\end{aligned}
\label{eq:tilde_pi_under_M_RI}
\end{equation}
Hence, by specializing \eqref{eq:proof_D_f_sigma_equals_D_f_tilde_pi} to $M_{RI}$ and grouping the sum over $p$ by regions, we obtain
\begin{equation*}
    \mathcal D_{f,\sigma}\left(M_{RI},M_{DC}(M_{RI});\RS\times\BS,\BS\right)
    =\sum_r\sum_b\pi_r\pi_b f\left(\frac{\pi_{b\mid r}}{\pi_b}\right)
    =\mathcal D_{f,\pi},
\end{equation*}
which proves \eqref{eq:equivalence_of_indices_under_regional_isolation}.


For part (ii), we first prove \eqref{eq:lemma_ineq1}--\eqref{eq:lemma_ineq3} for a general geospatial connectivity model $M_\phi$. For $M_{HB}$ as defined in \eqref{eq:def_M_HB}, we have
\begin{equation*}
    \pi_{pb}^{M_{HB}(M_\phi)}=\pi_p\pi_b,\qquad \PS{p_1b_1\sim p_2b_2}{}{M_{HB}(M_\phi)}=\phi(p_1,p_2).
\end{equation*}
On the geo-egocentric coordinate space,
\begin{equation*}
\begin{aligned}
    \PS{pb_1\sim b_2}{}{M_{HB}(M_\phi)}
    &=\sum_{p_2}\PS{pb_1\sim p_2b_2}{}{M_{HB}(M_\phi)}\pi_{p_2\mid b_2}^{M_{HB}(M_\phi)}\\
    &=\sum_{p_2}\phi(p,p_2)\pi_{p_2}
    =\bar P,
\end{aligned}
\end{equation*}
where the last equality follows from the constant-degree condition. Since this holds for all $p,b_1,b_2$, the marginal connection probability $\bar{P}^{M_{HB}(M_\phi)}$ also equals $\bar P$. It follows that
\begin{equation*}
    \sigma_{\GS\times\BS,\BS}^{M_{HB}(M_\phi)}(pb_1,b_2)
    =\frac{\pi_p\pi_{b_1}\pi_{b_2}\bar P}{\bar P}
    =\pi_p\pi_{b_1}\pi_{b_2}.
\end{equation*}
By \eqref{eq:connectivity_under_M_phi_gecentric},
\begin{equation*}
\begin{aligned}
    \sigma_{\GS\times\BS,\BS}^{M_\phi}(pb_1,b_2)
    &=\frac{\pi_p\pi_{b_1\mid p}\pi_{b_2}}{\bar P}\PS{pb_1\sim b_2}{}{M_\phi}\\
    &=\pi_p\pi_{b_1\mid p}\widetilde\pi_{b_2\mid p}.
\end{aligned}
\end{equation*}
Hence
\begin{equation*}
    \frac{\sigma_{\GS\times\BS,\BS}^{M_\phi}(pb_1,b_2)}
    {\sigma_{\GS\times\BS,\BS}^{M_{HB}(M_\phi)}(pb_1,b_2)}
    =\frac{\pi_{b_1\mid p}\widetilde\pi_{b_2\mid p}}{\pi_{b_1}\pi_{b_2}},
\end{equation*}
and therefore
\begin{equation}
    D_f\!\left(\sigma_{\GS\times\BS,\BS}^{M_\phi}\middle\|\sigma_{\GS\times\BS,\BS}^{M_{HB}(M_\phi)}\right)
    =\sum_p\sum_{b_1,b_2}\pi_p\pi_{b_1}\pi_{b_2}
    f\left(\frac{\pi_{b_1\mid p}\widetilde\pi_{b_2\mid p}}{\pi_{b_1}\pi_{b_2}}\right).
    \label{eq:S_Df_phi_HB_geoegocentric}
\end{equation}
On the full geosocial coordinate space,
\begin{equation*}
    \sigma_{\GS\times\BS,\GS\times\BS}^{M_\phi}(p_1b_1,p_2b_2)
    =\pi_{p_1}\pi_{b_1\mid p_1}\pi_{p_2}\pi_{b_2\mid p_2}\frac{\phi(p_1,p_2)}{\bar P},
\end{equation*}
whereas
\begin{equation*}
    \sigma_{\GS\times\BS,\GS\times\BS}^{M_{HB}(M_\phi)}(p_1b_1,p_2b_2)
    =\pi_{p_1}\pi_{b_1}\pi_{p_2}\pi_{b_2}\frac{\phi(p_1,p_2)}{\bar P}.
\end{equation*}
Consequently, on their common support,
\begin{equation*}
    \frac{\sigma_{\GS\times\BS,\GS\times\BS}^{M_\phi}(p_1b_1,p_2b_2)}
    {\sigma_{\GS\times\BS,\GS\times\BS}^{M_{HB}(M_\phi)}(p_1b_1,p_2b_2)}
    =\frac{\pi_{b_1\mid p_1}\pi_{b_2\mid p_2}}{\pi_{b_1}\pi_{b_2}},
\end{equation*}
and hence
\begin{equation}
\begin{aligned}
    &D_f\!\left(\sigma_{\GS\times\BS,\GS\times\BS}^{M_\phi}\middle\|\sigma_{\GS\times\BS,\GS\times\BS}^{M_{HB}(M_\phi)}\right)\\
    &\qquad=\sum_{p_1,p_2}\sum_{b_1,b_2}\pi_{p_1}\pi_{p_2}\pi_{b_1}\pi_{b_2}\frac{\phi(p_1,p_2)}{\bar P}
    f\left(\frac{\pi_{b_1\mid p_1}\pi_{b_2\mid p_2}}{\pi_{b_1}\pi_{b_2}}\right).
\end{aligned}
\label{eq:S_Df_phi_HB_full_geo_blau}
\end{equation}
Equations~\eqref{eq:S_Df_phi_HB_geoegocentric} and \eqref{eq:S_Df_phi_HB_full_geo_blau} generally differ from one another and from
\begin{equation*}
    \mathcal D_{f,\widetilde\pi}
    =\sum_p\sum_b\pi_p\pi_b f\left(\frac{\widetilde\pi_{b\mid p}}{\pi_b}\right),
\end{equation*}
which proves \eqref{eq:lemma_ineq1}--\eqref{eq:lemma_ineq3}.

We next prove \eqref{eq:equivalence_of_indices_under_HB_phi} and \eqref{eq:equivalence_for_xlogx} for the regional isolation model $M_{RI}$. Since $M_{RI}$ is the special case of $M_\phi$ corresponding to the region-based kernel in \eqref{eq:def_RI_model1}, we can specialize the expressions above. By \eqref{eq:tilde_pi_under_M_RI}, $\widetilde\pi_{b\mid p}=\pi_{b\mid r}$ for the region $r$ containing $p$. Hence, specializing \eqref{eq:S_Df_phi_HB_geoegocentric} and grouping the sum over $p$ by regions gives
\begin{equation}
    D_f\!\left(\sigma_{\RS\times\BS,\BS}^{M_{RI}}\middle\|\sigma_{\RS\times\BS,\BS}^{M_{HB}(M_{RI})}\right)
    =\sum_r\sum_{b_1,b_2}\pi_r\pi_{b_1}\pi_{b_2}
    f\left(\frac{\pi_{b_1\mid r}\pi_{b_2\mid r}}{\pi_{b_1}\pi_{b_2}}\right).
    \label{eq:HB_RI_geoegocentric}
\end{equation}
Similarly, specializing \eqref{eq:S_Df_phi_HB_full_geo_blau} to the region-based kernel in \eqref{eq:def_RI_model1} gives
\begin{equation}
\begin{aligned}
    &D_f\!\left(\sigma_{\RS\times\BS,\RS\times\BS}^{M_{RI}}\middle\|\sigma_{\RS\times\BS,\RS\times\BS}^{M_{HB}(M_{RI})}\right)\\
    &\qquad=\sum_{r_1,r_2}\sum_{b_1,b_2}\pi_{r_1}\pi_{r_2}\pi_{b_1}\pi_{b_2}
    \frac{\mathbb{1}_{\{r_1=r_2\}}}{\pi_{r_1}}
    f\left(\frac{\pi_{b_1\mid r_1}\pi_{b_2\mid r_2}}{\pi_{b_1}\pi_{b_2}}\right)\\
    &\qquad=\sum_r\sum_{b_1,b_2}\pi_r\pi_{b_1}\pi_{b_2}
    f\left(\frac{\pi_{b_1\mid r}\pi_{b_2\mid r}}{\pi_{b_1}\pi_{b_2}}\right).
\end{aligned}
\label{eq:HB_RI_full}
\end{equation}
Comparing \eqref{eq:HB_RI_geoegocentric} and \eqref{eq:HB_RI_full} proves \eqref{eq:equivalence_of_indices_under_HB_phi}.

\eqref{eq:equivalence_for_xlogx} follows now from
\begin{equation*}
    \log\left(\frac{\pi_{b_1\mid r}\pi_{b_2\mid r}}{\pi_{b_1}\pi_{b_2}}\right)
    =\log\left(\frac{\pi_{b_1\mid r}}{\pi_{b_1}}\right)+\log\left(\frac{\pi_{b_2\mid r}}{\pi_{b_2}}\right).
\end{equation*}

For part (iii), to show \eqref{eq:S_under_blau_space_only}, we first note that by \eqref{eq:connectivity_under_M_phi_gecentric} and \eqref{eq:tilde_pi_under_M_RI},
\begin{equation}
\begin{aligned}
    \PS{b_1\sim b_2}{}{M_{RI}}
    &=\sum_r\PS{rb_1\sim b_2}{}{M_{RI}}\pi_{r\mid b_1} =\bar P\sum_r\frac{\pi_{b_2\mid r}}{\pi_{b_2}}\pi_{r\mid b_1}\\
    &=\bar P\sum_r\pi_r\frac{\pi_{b_1\mid r}}{\pi_{b_1}}\frac{\pi_{b_2\mid r}}{\pi_{b_2}}.
\end{aligned}
\label{eq:P_b_sim_b_under_M_RI}
\end{equation}
Since $M_{RI}$ is a special case of $M_\phi$, the result $M_{DC}(M_\phi)=M_{ER}$ established at the beginning of the proof of part (i) gives $M_{DC}(M_{RI})=M_{ER}$. Coarsening to the social coordinate space therefore yields
\begin{equation}
    \sigma_{\BS,\BS}^{M_{DC}(M_{RI})}(b_1,b_2)=\pi_{b_1}\pi_{b_2}.
    \label{eq:sigma_M_DC_bb}
\end{equation}
Moreover, marginalizing the geo-egocentric distribution $\sigma_{\RS\times\BS,\BS}^{M_{HB}(M_{RI})}(rb_1,b_2)$ obtained in part (ii) over $r$ gives
\begin{equation}
    \sigma_{\BS,\BS}^{M_{HB}(M_{RI})}(b_1,b_2)
    =\sum_r\pi_r\pi_{b_1}\pi_{b_2}
    =\pi_{b_1}\pi_{b_2}.
    \label{eq:sigma_M_HB_bb}
\end{equation}
By \eqref{eq:sigma_M_DC_bb} and \eqref{eq:sigma_M_HB_bb}, using \eqref{eq:P_b_sim_b_under_M_RI},
\begin{equation*}
    \frac{\sigma_{\BS,\BS}^{M_{RI}}(b_1,b_2)}
    {\sigma_{\BS,\BS}^{M_{HB}(M_{RI})}(b_1,b_2)}
    =\frac{\sigma_{\BS,\BS}^{M_{RI}}(b_1,b_2)}
    {\sigma_{\BS,\BS}^{M_{DC}(M_{RI})}(b_1,b_2)}
    =\sum_r\pi_r\frac{\pi_{b_1\mid r}}{\pi_{b_1}}\frac{\pi_{b_2\mid r}}{\pi_{b_2}} .
\end{equation*}
It follows that
\begin{equation*}
\begin{aligned}
    D_f\!\left(\sigma_{\BS,\BS}^{M_{RI}}\middle\|\sigma_{\BS,\BS}^{M_{DC}(M_{RI})}\right)
    &=D_f\!\left(\sigma_{\BS,\BS}^{M_{RI}}\middle\|\sigma_{\BS,\BS}^{M_{HB}(M_{RI})}\right)\\
    &=\sum_{b_1,b_2}\pi_{b_1}\pi_{b_2}
    f\left(
    \sum_r\pi_r\frac{\pi_{b_1\mid r}}{\pi_{b_1}}\frac{\pi_{b_2\mid r}}{\pi_{b_2}}
    \right).
\end{aligned}
\end{equation*}
\end{proof}

\begin{proof}[Proof of Corollary~\ref{corollary:recovery_normalized_traditional_indices}]
These results follow directly from Lemma~\ref{lemma:relation_between_geo_and_social_f_divergence} and the normalization constants in \eqref{eq:normalization_const_H_D_C}.
\end{proof}


\subsection{Conditional coordinate and joint coordinate--link distributions}

\begin{proof}[Proof of Lemma~\ref{lemma:T_D_F_to_S_D_F}]
We first note that for any generator $f$ and models $M,N$ with $\pi^M = \pi^N = \pi$ we have
\begin{align}
    \mathcal D_{f,\tau}(M,N;\CS_1,\CS_2)
    &= \sum_{c_1 c_2} \pi_{c_1}\pi_{c_2} \PS{c_1\sim c_2}{}{N} f\left(\frac{\pi_{c_1}\pi_{c_2} \PS{c_1\sim c_2}{}{M}}{\pi_{c_1}\pi_{c_2} \PS{c_1\sim c_2}{}{N}}\right)\nonumber\\
   & + \sum_{c_1 c_2} \pi_{c_1}\pi_{c_2} \left(1 - \PS{c_1\sim c_2}{}{N}\right) f\left(\frac{\pi_{c_1}\pi_{c_2} \left(1 - \PS{c_1\sim c_2}{}{M}\right)}{\pi_{c_1}\pi_{c_2} \left(1 - \PS{c_1\sim c_2}{}{N}\right)}\right)\label{eq:expand_T_D_f_v0}\\
& = \sum_{c_1 c_2} \pi_{c_1}\pi_{c_2} \PS{c_1\sim c_2}{}{N} f\left(\frac{ \PS{c_1\sim c_2}{}{M}}{\PS{c_1\sim c_2}{}{N}}\right)\nonumber\\
   & + \sum_{c_1 c_2} \pi_{c_1}\pi_{c_2} \left(1 - \PS{c_1\sim c_2}{}{N}\right) f\left(\frac{\left(1 - \PS{c_1\sim c_2}{}{M}\right)}{\left(1 - \PS{c_1\sim c_2}{}{N}\right)}\right) \label{eq:expand_T_D_f}
\end{align}
and \eqref{eq:T_D_f_eq_sum_of_S_D_fs} follows from \eqref{eq:expand_T_D_f_v0} and the assumption $\bar{P}^M = \bar{P}^N = \bar{P}.$

With $g(x) = \frac{1}{2}|x-1|,$ \eqref{eq:T_D_f_is_twice_dissimilarity} follows from \eqref{eq:expand_T_D_f} like this
\begin{align}
    \mathcal D_{g,\tau}(M,N;\CS_1,\CS_2) &= \frac{1}{2}\sum_{c_1 c_2} \pi_{c_1}\pi_{c_2} \left|\PS{c_1\sim c_2}{}{M}-\PS{c_1\sim c_2}{}{N}\right| \nonumber\\
    & + \frac{1}{2}\sum_{c_1 c_2} \pi_{c_1}\pi_{c_2} \left|(1-\PS{c_1\sim c_2}{}{M})-(1-\PS{c_1\sim c_2}{}{N})\right|\nonumber\\
    & = 2 \frac{\bar{P}}{2} \sum_{c_1 c_2}  \left|\frac{\pi_{c_1}\pi_{c_2}\PS{c_1\sim c_2}{}{M}}{\bar{P}}-\frac{\pi_{c_1}\pi_{c_2}\PS{c_1\sim c_2}{}{N}}{\bar{P}}\right|\nonumber\\
    & = 2\bar P\mathcal D_{g,\sigma}(M,N;\CS_1,\CS_2)\nonumber\\
    & = 2(1-\bar{P})\mathcal D_{g,\xi}(M,N;\CS_1,\CS_2).
\end{align}
For $h(x) = x \log x,$ allowing $\bar{P}^M\neq \bar{P}^N,$ by writing
\begin{align}
    \logg{\frac{\pi_{c_1}\pi_{c_2} \PS{c_1\sim c_2}{}{M}}{\pi_{c_1}\pi_{c_2} \PS{c_1\sim c_2}{}{N}}}
    & = \logg{
    \frac{\frac{\pi_{c_1}\pi_{c_2} \PS{c_1\sim c_2}{}{M}}{\bar{P}^M}}
    {\frac{\pi_{c_1}\pi_{c_2} \PS{c_1\sim c_2}{}{N}}{\bar{P}^N}}
    }
    + \logg{\frac{\bar{P}^M}{\bar{P}^N}},\\
    \logg{\frac{\pi_{c_1}\pi_{c_2} (1-\PS{c_1\sim c_2}{}{M})}{\pi_{c_1}\pi_{c_2}(1- \PS{c_1\sim c_2}{}{N})}}
    & = \logg{
    \frac{\frac{\pi_{c_1}\pi_{c_2} (1-\PS{c_1\sim c_2}{}{M})}{(1-\bar{P}^M)}}
    {\frac{\pi_{c_1}\pi_{c_2} (1-\PS{c_1\sim c_2}{}{N})}{(1-\bar{P}^N)}}
    }
    + \logg{\frac{(1-\bar{P}^M)}{(1-\bar{P}^N)}}
\end{align}
\eqref{eq:T_D_f_and H} follows from \eqref{eq:expand_T_D_f_v0} by noting that
\begin{align}
&\bar{P}^M\logg{\frac{\bar{P}^M}{\bar{P}^N}} + (1-\bar{P}^M)\logg{\frac{(1-\bar{P}^M)}{(1-\bar{P}^N)}}\nonumber\\
& = \DKL{\operatorname{Bernoulli}(\bar{P}^M)}{\operatorname{Bernoulli}(\bar{P}^N)}.    
\end{align}

To show \eqref{eq:T_D_f_to_S_D_f}, we first note that \eqref{eq:constant_ratio_of_Pmi_to_barP} implies for all $c_1,c_2$ and $i$ that
\begin{align}
    \PS{c_1\sim c_2}{}{M_i} &= \bar{P}_i\frac{\PS{c_1\sim c_2}{}{M_i}}{\bar{P}_i}
    = \bar{P}_i\frac{\PS{c_1\sim c_2}{}{M_1}}{\bar{P}_1} \leq \frac{\bar{P}_i}{\bar{P}_1}\label{eq:upperbound_for_prob_Mi}
\end{align}
and similarly \eqref{eq:constant_ratio_of_Pni_to_barP} implies 
\begin{align}
   \PS{c_1\sim c_2}{}{N_i}  &\leq \frac{\bar{P}_i}{\bar{P}_1}\label{eq:upperbound_for_prob_Ni}
\end{align}
and therefore also
\begin{align}
    \left| \PS{c_1\sim c_2}{}{M_i} - \PS{c_1\sim c_2}{}{N_i}  \right| \leq \frac{\bar{P}_i}{\bar{P}_1}. \label{eq:upperbound_for_prob_diff_Mi_Ni}
\end{align}
Let us arbitrarily order the coordinate pairs into a finite sequence 
\begin{align}
    (c_1^{(1)},c_2^{(1)}), (c_1^{(2)},c_2^{(2)}),\dots,(c_1^{(K)},c_2^{(K)})
\end{align}
where $K = |\CS_1||\CS_2|,$ and write shorthand
\begin{align}
    w_{k} & = \pi_{c_1^{(k)}}\pi_{c_2^{(k)}},\\
    x_{k,i} & = \PS{c_1^{(k)}\sim c_2^{(k)}}{}{M_i},\\
    y_{k,i} & = \PS{c_1^{(k)}\sim c_2^{(k)}}{}{N_i},
\end{align}
such that the second sum in \eqref{eq:expand_T_D_f} for models $M_i$ and $N_i$ can be written as
\begin{align}
    &\sum_{c_1 c_2} \pi_{c_1}\pi_{c_2} \left(1 - \PS{c_1\sim c_2}{}{N_i}\right) f\left(\frac{\left(1 - \PS{c_1\sim c_2}{}{M_i}\right)}{\left(1 - \PS{c_1\sim c_2}{}{N_i}\right)}\right)\nonumber\\
    &= \sum_k w_k (1 - y_{k,i}) f\left(\frac{1-x_{k,i}}{1-y_{k,i}}\right).
\end{align}
Note that by assumption $\sum_k w_k x_{k,i} = \sum_k w_k y_{k,i} = \bar{P}_i$ for all $i$.
Let $h(x,y)\coloneqq (1-y)f\left(\frac{1-x}{1-y}\right).$
The gradient of $h(x,y)$ at $(x,y) = (a,a)$ is given by $(-f^\prime(1), f^\prime(1))^T$ (where we have used that $f(1) = 0$ for any generator $f$) and the Hessian of $h(x,y)$ at $(x,y) = (a,a)$ is given (again using that $f(1) = 0$ for any generator $f$) by
$\frac{f^{\prime\prime}(1)}{1-a}\begin{pmatrix}
    1,& -1\\
    -1,& 1
\end{pmatrix}.$
Since $f$ is three-times continuously differentiable in a neighborhood of $1$, there exists a real number $0<c<1$ such that $f$ is three-times continuously differentiable in $[1-c,\frac{1}{1-c}]$ and we can uniformly upper-bound the absolute value of the third derivative of 
$f(x)$ in $[1-c,\frac{1}{1-c}]$ by a constant $\alpha>0,$ and hence uniformly upper-bound the absolute values of all third derivatives of $h(x,y)$ in $[0,c]\times [0,c]$ by a constant $\beta>0,$ so that a second order Taylor expansion of $h(x_{k,i},y_{k,i})$ around $(x_{k,i},y_{k,i}) = (\bar{P}_i,\bar{P}_i)$ yields for all $i$ large enough such that $\frac{\bar{P}_i}{\bar{P}_1} < c$ and hence, by \eqref{eq:upperbound_for_prob_Mi}, \eqref{eq:upperbound_for_prob_Ni}, $x_{k,i}\leq c,$ $y_{k,i}\leq c,$
\begin{align}
    &\sum_k w_k (1-y_{k,i})f\left(\frac{1-x_{k,i}}{1-y_{k,i}}\right)\nonumber\\
    & = -f^\prime(1)\sum_k w_k (x_{k,i} - y_{k,i}) + \frac{1}{2}\frac{f^{\prime\prime}(1)}{1-\bar{P}_i}\sum_k w_k (x_{k,i} - y_{k,i})^2
    + R_i\nonumber\\
    & = \frac{1}{2}\frac{f^{\prime\prime}(1)}{1-\bar{P}_i}\sum_k w_k (x_{k,i} - y_{k,i})^2
    + R_i,\label{eq:second_order_taylor_f_div}
\end{align}
where in the last equality we used that $\sum_k w_k x_{k,i} = \sum_k w_k y_{k,i}.$ 

By \eqref{eq:upperbound_for_prob_Mi}
\begin{align}
    \left| \PS{c_1\sim c_2}{}{M_i} - \bar{P}_i \right| = \bar{P}_i\left| \frac{\PS{c_1\sim c_2}{}{M_i}}{\bar{P}_i} - 1 \right|\leq \bar{P}_i\max\{\frac{1}{\bar{P}_1}, 1\}
    = \mathcal{O}(\bar{P}_i)
\end{align}
and similarly, by \eqref{eq:upperbound_for_prob_Ni}
\begin{align}
    \left| \PS{c_1\sim c_2}{}{N_i} - \bar{P}_i \right| 
    = \mathcal{O}(\bar{P}_i),
\end{align}
such that the rest term $R_i$ is of order 
\begin{align}
    R_i = \mathcal{O}\left(\beta \max_k\left\{|x_{k,i} - \bar{P}_i|^3 + |y_{k,i} - \bar{P}_i|^3\right\}\right) = \mathcal{O}(\bar{P}_i^3),\label{eq:third_order_rest_term_upperbound}  
\end{align}
and the squared difference $(x_{k,i} - y_{k,i})^2$ is of order
\begin{align}
(x_{k,i} - y_{k,i})^2 = \mathcal{O}(\bar{P}_i^2).\label{eq:upperbound_for_squared_prob_diff_Mi_Ni}   
\end{align}
Plugging \eqref{eq:third_order_rest_term_upperbound}  and \eqref{eq:upperbound_for_squared_prob_diff_Mi_Ni} into \eqref{eq:second_order_taylor_f_div} yields
\begin{align}\label{eq:misc_order_2_term}
    \sum_k w_k (1-y_{k,i})f\left(\frac{1-x_{k,i}}{1-y_{k,i}}\right) = \mathcal{O}\left(\bar{P}_i^{2}\right),
\end{align}
and therefore, by \eqref{eq:expand_T_D_f} and \eqref{eq:misc_order_2_term}, 
\begin{align}
    \frac{\mathcal D_{f,\tau}(M_i,N_i;\CS_1,\CS_2)}{\bar P_i} & =  \mathcal D_{f,\sigma}(M_i,N_i;\CS_1,\CS_2) + \frac{1}{\bar{P}_i} \mathcal{O}(\bar{P}_i^{2})\nonumber\\
    &\to \mathcal D_{f,\sigma}(M_1,N_1;\CS_1,\CS_2)\text{ as }i\to\infty,
\end{align}
since $\mathcal D_{f,\sigma}(M_i,N_i;\CS_1,\CS_2) = \mathcal D_{f,\sigma}(M_1,N_1;\CS_1,\CS_2)$ for all $i,$ and $\bar P_i \to 0$.

Finally, \eqref{eq:T_D_f_to_S_D_f_nsim} follows by a similar argument when reversing the roles of $\PS{c_1\sim c_2}{}{M}$ and $1 - \PS{c_1\sim c_2}{}{M}.$

To show (iv), we note that under $P^N\equiv \bar{P}$ and with generator $f(x)=0.5(x-1)^2$
\begin{equation}
\begin{aligned}
   \mathcal D_{f,\sigma}(M,M_{ER};\CS_1,\CS_2) &= \frac{1}{2}\sum_{c_1 c_2} \frac{\pi_{c_1}\pi_{c_2}\bar{P}}{\bar{P}}\left(\frac{\PS{c_1\sim c_2}{}{M}}{\bar{P}} -1\right)^2\\
    &=  \frac{1}{2\bar{P}^2}\sum_{c_1 c_2} \pi_{c_1}\pi_{c_2}\left(\PS{c_1\sim c_2}{}{M} -\bar{P}\right)^2
\end{aligned}
\end{equation}
and 
\begin{equation}
\begin{aligned}  
    \mathcal D_{f,\xi}(M,M_{ER};\CS_1,\CS_2) & = \frac{1}{2}\sum_{c_1 c_2} \frac{\pi_{c_1}\pi_{c_2}(1-\bar{P})}{1-\bar{P}}\left(\frac{1-\PS{c_1\sim c_2}{}{M}}{1-\bar{P}} -1\right)^2\\
    &=  \frac{1}{2(1-\bar{P})^2}\sum_{c_1 c_2} \pi_{c_1}\pi_{c_2}\left(\PS{c_1\sim c_2}{}{M} -\bar{P}\right)^2
\end{aligned}
\end{equation}
hence by \eqref{eq:T_D_f_eq_sum_of_S_D_fs}
\begin{equation}
\begin{aligned}
    \mathcal D_{f,\tau}(M,M_{ER};\CS_1,\CS_2) &= \frac{1}{2\bar{P}}\sum_{c_1 c_2} \pi_{c_1}\pi_{c_2}\left(\PS{c_1\sim c_2}{}{M} -\bar{P}\right)^2\\
    &+ \frac{1}{2(1-\bar{P})}\sum_{c_1 c_2} \pi_{c_1}\pi_{c_2}\left(\PS{c_1\sim c_2}{}{M} -\bar{P}\right)^2\\
    &= \frac{1}{2\bar{P}(1-\bar{P})}\sum_{c_1 c_2} \pi_{c_1}\pi_{c_2}\left(\PS{c_1\sim c_2}{}{M} -\bar{P}\right)^2,
\end{aligned}
\end{equation}
which completes the proof.
\end{proof}

\begin{proof}[Proof of Corollary~\ref{corollary:normalized_tau_sigma}]
For part (i), \eqref{eq:constant_ratio_of_Pmi_to_barP} and the assumption $\pi^{M_i}=\pi\;\forall i $ imply that the marginal geosocial connectivities satisfy
\begin{equation}
    \frac{\bar P_{rb}^{M_i}}{\bar P_i}
    = \frac{\bar P_{rb}^{M_1}}{\bar P_1},
    \qquad \forall r,b,\;\forall i,
\end{equation}
and 
\begin{equation}
    \frac{\bar P_b^{M_i}}{\bar P_i}
    = \frac{\bar P_b^{M_1}}{\bar P_1},
    \qquad \forall b,\;\forall i.
\end{equation}
It therefore follows from the definitions of $M_{DC}(M_i)$ and $M_{DSI}(M_i)$ that
\begin{equation}
    \frac{\PS{r_1b_1\sim b_2}{}{M_{DC}(M_i)}}{\bar P_i}
    = \frac{\PS{r_1b_1\sim b_2}{}{M_{DC}(M_1)}}{\bar P_1},
\end{equation}
and
\begin{equation}
    \frac{\PS{r_1b_1\sim b_2}{}{M_{DSI}(M_i)}}{\bar P_i}
    = \frac{\PS{r_1b_1\sim b_2}{}{M_{DSI}(M_1)}}{\bar P_1}.
\end{equation}
Thus, the sequences $M_i$ and $M_{DC}(M_i)$, and the sequences $M_{DSI}(M_i)$ and $M_{DC}(M_i)$ satisfy the conditions of Lemma~\ref{lemma:T_D_F_to_S_D_F}(ii).
Applying \eqref{eq:T_D_f_to_S_D_f} to the numerator and denominator of $S_{f,\tau}(M_i)$ therefore gives
\begin{equation}
\begin{aligned}
    S_{f,\tau}(M_i)
    &= \frac{\mathcal D_{f,\sigma}(M_1)+\mathcal O(\bar P_i)}
    {\mathcal D_{f,\sigma}(M_{DSI}(M_1))+\mathcal O(\bar P_i)}\\
    &= \frac{\mathcal D_{f,\sigma}(M_1)}
    {\mathcal D_{f,\sigma}(M_{DSI}(M_1))}
    +\mathcal O(\bar P_i)\\
    &= S_{f,\sigma}(M_1)+\mathcal O(\bar P_i),
    \quad i\to\infty,
\end{aligned}
\end{equation}
which proves part (i).

For part (ii), applying \eqref{eq:T_D_f_is_twice_dissimilarity} to $M$ and $M_{DC}(M)$ gives
\begin{equation}
    \mathcal D_{g,\tau}(M)=2\bar P\,\mathcal D_{g,\sigma}(M) = 2(1 - \bar P)\,\mathcal D_{g,\xi}(M)
\end{equation}
and applying it to $M_{DSI}(M)$ and $M_{DC}(M)$ gives
\begin{equation}
    \mathcal D_{g,\tau}(M_{DSI}(M)) = 2\bar P\,\mathcal D_{g,\sigma}(M_{DSI}(M)) = 2(1- \bar P)\,\mathcal D_{g,\xi}(M_{DSI}(M)).
\end{equation}
Hence
\begin{equation}
    S_{g,\tau}(M)=S_{g,\sigma}(M) = S_{g,\xi}(M).
\end{equation}

Finally, part (iii) follows from Lemma~\ref{lemma:T_D_F_to_S_D_F}(iv) as the multiplicative factors $\frac{1-\bar P}{\bar P}$ and $\frac{\bar P}{1-\bar P}$ cancel upon normalization.
\end{proof}

\subsection{Homophily, heterophily, and network modularity}

\begin{proof}[Proof of lemma~\ref{lemma:homophily_quantification}]
    We first show \eqref{eq:alpha_m_beta_eq_Q_for_sigma}. By assumption $\pi^N=\pi^M$ and $\bar{P}^M=\bar{P}^N$ and hence $\PPS{rb_1\sim b_2}{}{M} > \PPS{rb_1\sim b_2}{}{N}\Leftrightarrow \sigma^M(r b_1, b_2) > \sigma^N(r b_1, b_2),$ and likewise for $<$ instead of $>$. Consequently
\begin{align}
    &2\Phi_{g,\sigma}(M,N;\RS\times\BS,\BS)  = 2\mathcal{A}_{g,\sigma}(M,N;\RS\times\BS,\BS) - 2\mathcal{B}_{g,\sigma}(M,N;\RS\times\BS,\BS) \nonumber\\
    &= \sum_{r,b}\left| \sigma^M(rb,b) - \sigma^N(rb,b)  \right|\mathbb{1}_{\{ \sigma^M(rb,b) \geq \sigma^N(rb,b)\}}\nonumber\\
    &\hspace{2em} + \sum_{r}\sum_{b_1\neq b_2}\left| \sigma^M(rb_1,b_2) - \sigma^N(rb_1,b_2)  \right|\mathbb{1}_{\{ \sigma^M(rb_1,b_2) \leq \sigma^N(rb_1,b_2)\}}\nonumber\\
    &\hspace{2em} - \sum_{r,b}\left| \sigma^M(rb,b) - \sigma^N(rb,b)  \right|\mathbb{1}_{\{ \sigma^M(rb,b) \leq \sigma^N(rb,b)\}}\nonumber\\
    &\hspace{2em} - \sum_{r}\sum_{b_1\neq b_2}\left| \sigma^M(rb_1,b_2) - \sigma^N(rb_1,b_2)  \right|\mathbb{1}_{\{ \sigma^M(rb_1,b_2) \geq \sigma^N(rb_1,b_2)\}}\nonumber\\
    & = \sum_{r,b}\left( \sigma^M(rb,b) - \sigma^N(rb,b)  \right) - \sum_{r}\sum_{b_1\neq b_2}\left( \sigma^M(rb_1,b_2) - \sigma^N(rb_1,b_2)  \right)\nonumber\\
    & = 2 \sum_{r,b}\left( \sigma^M(rb,b) - \sigma^N(rb,b)\right) = 2Q(M,N;\RS\times\BS,\BS),
\end{align}
where we have used that 
\begin{align}
    \sum_{r,b}\left( \sigma^M(rb,b) - \sigma^N(rb,b)  \right) + \sum_r\sum_{b_1\neq b_2}\left( \sigma^M(rb_1,b_2) - \sigma^N(rb_1,b_2)  \right) = 0.
\end{align}

To show \eqref{eq:phi_tau_eq_2phi_sigma}, by the definition of $\mathcal{A}_{g,\tau}(M,N;\RS\times\BS,\BS)$ in \eqref{eq:tau_homophily_decomposition} 
\begin{align}
&\mathcal{A}_{g,\tau}(M,N;\RS\times\BS,\BS)\\
&=\pi(\mathcal I_{\mathrm{hom}})\DF{\tau^M(rb_1,b_2,e\mid (rb_1,b_2)\in\mathcal I_{\mathrm{hom}})}
{\tau^N(rb_1,b_2,e\mid (rb_1,b_2)\in\mathcal I_{\mathrm{hom}})}\nonumber\\
&=\frac12\sum_{(rb_1,b_2)\in\mathcal I_{\mathrm{hom}}}\sum_{e\in\{0,1\}}
\left|\tau^M(rb_1,b_2,e)-\tau^N(rb_1,b_2,e)\right|\nonumber\\
&=\sum_{(rb_1,b_2)\in\mathcal I_{\mathrm{hom}}}\pi_{rb_1}\pi_{b_2}
\left|\PS{rb_1\sim b_2}{}{M}-\PS{rb_1\sim b_2}{}{N}\right|\nonumber\\
&=2\bar{P} \frac{1}{2}\sum_{(rb_1,b_2)\in\mathcal I_{\mathrm{hom}}}
\left|\sigma^M(rb_1,b_2)-\sigma^N(rb_1,b_2)\right|\\
&= 2\bar{P}\mathcal{A}_{g,\sigma}(M,N;\RS\times\BS,\BS).
\end{align}
A similar calculation for $\mathcal{B}$ yields $\mathcal{B}_{g,\tau}(M,N;\RS\times\BS,\BS) = 2\bar{P}\mathcal{B}_{g,\sigma}(M,N;\RS\times\BS,\BS)$ and subtracting $\mathcal{B}$ from $\mathcal{A}$ yields \eqref{eq:phi_tau_eq_2phi_sigma}.

To show \eqref{eq:alpha1_to_alphaleq1}, we proceed as in the proof of lemma~\ref{lemma:T_D_F_to_S_D_F}, adopting the same notation.
Assumptions \eqref{eq:const_ratio_hom_lemma} and \eqref{eq:const_ratio_hom_lemma_null} imply
$\mathcal{I}\coloneqq\mathcal{I}_{\mathrm{hom}}(M_1,N_1)
=\mathcal{I}_{\mathrm{hom}}(M_i,N_i)$ for all $i\geq 1$.
Let $\mathcal{A}_{f,\tau}^{(i)} \coloneqq \mathcal{A}_{f,\tau}\left(M_i, N_i, \RS\times\BS,\BS\right)$ and $\mathcal{A}_{f,\sigma}^{(i)}\coloneqq\mathcal{A}_{f,\sigma}(M_i,N_i;\RS\times\BS,\BS).$ Noting that \eqref{eq:const_ratio_hom_lemma} and \eqref{eq:const_ratio_hom_lemma_null}
imply $\mathcal{A}_{f,\sigma}^{(i)} = \mathcal{A}_{f,\sigma}^{(1)}$ for all $i$, we obtain
    \begin{align}
        \mathcal{A}_{f,\tau}^{(i)}
        & = \sum_{k\in \mathcal{I}} w_k y_{k,i}f\left(\frac{x_{k,i}}{y_{k,i}}\right)
        + \sum_{k\in \mathcal{I}} w_k (1-y_{k,i})f\left(\frac{(1-x_{k,i})}{(1-y_{k,i})}\right)\nonumber\\
        &= \bar{P}_i \mathcal{A}_{f,\sigma}^{(1)} + \sum_{k\in \mathcal{I}} w_k (1-y_{k,i})f\left(\frac{(1-x_{k,i})}{(1-y_{k,i})}\right).
    \end{align}
A Taylor expansion yields
\begin{align}
    &\sum_{k\in \mathcal{I}} w_k (1-y_{k,i})f\left(\frac{(1-x_{k,i})}{(1-y_{k,i})}\right)\nonumber\\
    & = -f^{\prime}(1)\sum_{k\in \mathcal{I}} w_k ( x_{k,i} - y_{k,i} )
    +\frac{1}{2}\frac{f^{\prime\prime}(1)}{1-\bar{P}_i} \sum_{k\in \mathcal{I}} w_k ( x_{k,i} - y_{k,i} )^2 + R_i.
\end{align}
Now while $\sum_{k\in \mathcal{I}} w_k ( x_{k,i} - y_{k,i} )$ will in general not equal $0,$ the first term vanishes nevertheless because $f^{\prime}(1) = 0$ by our convention of choosing the representative $f$ satisfying \eqref{eq:sub_diff_centered_at_zero}. Following similar arguments as below \eqref{eq:second_order_taylor_f_div}, we obtain \eqref{eq:alpha1_to_alphaleq1}. \eqref{eq:beta1_to_betaleq1} follows similarly.
\end{proof}

\begin{proof}[Proof of Corollary~\ref{corollary:normalized_homophily_quantification}]
Equations~\eqref{eq:alpha_m_beta_eq_Q_for_sigma_normalized} and \eqref{eq:phi_tau_eq_2phi_sigma_normalized} follow from Lemma~\ref{lemma:homophily_quantification}(i) and the arguments in the proof of Corollary~\ref{corollary:normalized_tau_sigma}. Likewise, \eqref{eq:alpha1_to_alphaleq1_normalized}--\eqref{eq:phi1_to_phileq1_normalized} follow from Lemma~\ref{lemma:homophily_quantification}(ii) and the arguments in the proof of Corollary~\ref{corollary:normalized_tau_sigma}.

It remains to show \eqref{eq:recovery_of_network_ass}. 
Let $\sigma_{\RS\times\BS}^M(rb)$ and $\sigma_{\BS}^M(b)$ denote the corresponding marginal distributions under $\sigma^M$, with
\begin{equation}
\begin{aligned}
        \sigma_{\RS\times\BS}^M(rb)&=\sum_{r,b'}\sigma_{\RS\times\BS,\BS}^M(rb,b'),\\
        \sigma_{\BS}^M(b)&=\sum_r\sigma_{\RS\times\BS}^M(rb).
\end{aligned}
\end{equation}
For $M_{DSI}(M)$ and $M_{DC}(M)$, we have by \eqref{eq:def_M_DC_geo_egocentric} and \eqref{eq:def_M_DSI_geo_egocentric}
\begin{equation}
\begin{aligned}
    \sigma_{\RS\times\BS,\BS}^{M_{DSI}(M)}(rb_1,b_2)
    &=\sigma_{\RS\times\BS}^M(rb_1)\mathbb{1}_{\{b_1=b_2\}},\\
    \sigma_{\RS\times\BS,\BS}^{M_{DC}(M)}(rb_1,b_2)
    &=\sigma_{\RS\times\BS}^M(rb_1)\sigma_{\BS}^M(b_2).
\end{aligned}
\end{equation}
Hence, for Total Variation distance with generator $g(x)=\frac{1}{2}|x-1|$,
\begin{equation}
\begin{aligned}
&\mathcal{D}_{g,\sigma}\!\left(M_{DSI}(M),M_{DC}(M);\RS\times\BS,\BS\right)\\
&=\frac{1}{2}\sum_{r,b_1,b_2}\sigma_{\RS\times\BS}^M(rb_1)
\left|\mathbb{1}_{\{b_1=b_2\}}-\sigma_{\BS}^M(b_2)\right|\\
&=\frac{1}{2}\sum_{r,b}\sigma_{\RS\times\BS}^M(rb)\left(1-\sigma_{\BS}^M(b)\right)
+\frac{1}{2}\sum_r\sum_{b_1\neq b_2}\sigma_{\RS\times\BS}^M(rb_1)\sigma_{\BS}^M(b_2)\\
& = \frac{1}{2}\left(1 -\sum_b \left(\sigma_{\BS}^M(b)\right)^2 \right)
+ \frac{1}{2}\left(\sum_{b_1, b_2}  \sigma_{\BS}^M(b_1)\sigma_{\BS}^M(b_2) - \sum_b\left(\sigma_{\BS}^M(b)\right)^2 \right)\\
&=1-\sum_b\left(\sigma_{\BS}^M(b)\right)^2.
\end{aligned}
\end{equation}
For the maximum likelihood model $\hat M$ we have $\sigma_{\BS}^{\hat M}(b_2) = s_b$ with $s_b$ as defined in \eqref{eq:def_net_mod_ass}, and \eqref{eq:recovery_of_network_ass} follows from \eqref{eq:alpha_m_beta_eq_Q_for_sigma} and \eqref{eq:def_net_mod_ass} together with $Q = Q(\hat M, M_{DC}(\hat M); \RS\times \BS, \BS).$
\end{proof}


\begin{proof}[Proof of Lemma~\ref{lemma:homophily_quantification_2}]
To show \eqref{eq:lemma_hom_q_ineq_0}, write
\begin{equation}
    \begin{aligned}
        f\left(\frac{q+\epsilon}{q}\right) & =  f\left(\frac{q+2\epsilon + q}{2q}\right)< \frac{1}{2} f\left(\frac{q+2\epsilon}{q}\right)
        + \frac{1}{2} f\left(\frac{q}{q}\right)\\
        &= \frac{1}{2}  f\left(\frac{q+2\epsilon}{q}\right),\label{eq:proof_lemma_hom_convexity}
    \end{aligned}
\end{equation}
where the inequality uses the assumed strict convexity of $f$ and the last equality follows from $f(1)=0.$

A second order Taylor expansion around $1$ yields, using $f(1)=f'(1) = 0$ by our choice of representative,
\begin{equation}
    \begin{aligned}
         f\left(\frac{q-\delta}{q}\right) &= f''(1)\frac{\delta^2}{2 q^2} + o\left(\frac{\delta^2}{q^2}\right)\\
         f\left(\frac{q+2\delta}{q}\right) &= f''(1)\frac{2\delta^2}{q^2} + o\left(\frac{4\delta^2}{q^2}\right)
    \end{aligned}
\end{equation}
and therefore
\begin{align}
    f\left(\frac{q+2\delta}{q}\right) - 2f\left(\frac{q-\delta}{q}\right) = f''(1)\frac{\delta^2}{q^2} + o(\delta^2),
\end{align}
and \eqref{eq:lemma_hom_q_ineq_1} and \eqref{eq:lemma_hom_q_ineq_2} follow from $f''(1)>0.$

If $f''$ is strictly increasing on $(0,\infty)$, integration over $f''$ yields $f'(1)-f'(1-s)< f'(1+s) - f'(1),$ and, since $f'(1)=0$ by our choice of representative, $-f'(1-s)<f'(1+s).$ Integration over $f'$ then yields $f(1-t)<f(1+t).$  This shows \eqref{eq:lemma_hom_q_ineq_3}; \eqref{eq:lemma_hom_q_ineq_4} follows by a similar argument. Note that strict monotonicity of $f''$ together with convexity of $f$ implies that $f''(1)>0,$ such that \eqref{eq:lemma_hom_q_ineq_1} and \eqref{eq:lemma_hom_q_ineq_2} hold in both cases.

Finally, if $f$ is strictly convex and symmetric, then \eqref{eq:lemma_hom_q_ineq_5} follows directly from symmetry around $1$, and \eqref{eq:lemma_hom_q_ineq_1} and \eqref{eq:lemma_hom_q_ineq_2}  follow by \eqref{eq:proof_lemma_hom_convexity} and using \eqref{eq:lemma_hom_q_ineq_5}.

\end{proof}


\begin{proof}[Proof of Lemma~\ref{lemma:recovery_of_relative_diversity}]
Recall that by \eqref{eq:connectivity_under_M_phi_gecentric} and \eqref{eq:tilde_pi_under_M_RI},
\begin{equation}
    \PS{rb_1\sim b_2}{}{M_{RI}}=\frac{\pi_{b_2\mid r}}{\pi_{b_2}}\bar P.
\end{equation}

Moreover, all geosocial coordinates have the same marginal connectivity under $M_{RI}$, and hence $M_{DC}(M_{RI})=M_{ER}$. Therefore,
\begin{equation}
\begin{aligned}
    Q(M_{RI})
    &=\sum_r\sum_b\left(\sigma_{\RS\times\BS,\BS}^{M_{RI}}(rb,b)
    -\sigma_{\RS\times\BS,\BS}^{M_{DC}(M_{RI})}(rb,b)\right)\\
    &=\sum_r\sum_b\left(
    \pi_r\pi_{b\mid r}\pi_b
    \frac{\pi_{b\mid r}\bar P}{\pi_b\bar P}
    -\pi_r\pi_{b\mid r}\pi_b\frac{\bar P}{\bar P}\right)\\
    &=\sum_r\sum_b\pi_r\left(\pi_{b\mid r}^2-\pi_{b\mid r}\pi_b\right)\\
    &=\sum_r\sum_b\pi_r\left(\pi_{b\mid r}-\pi_b\right)^2
    =R,
\end{aligned}
\end{equation}
where in the penultimate equality we have used
\begin{equation}
    \sum_r\pi_r\pi_{b\mid r}\pi_b =\sum_r\pi_r\pi_b^2.
\end{equation}

Furthermore, since all geosocial coordinates have the same marginal connectivity under $M_{RI}$, $M_{DSI}(M_{RI})=M_{SI}$. Hence, by \eqref{eq:normalization_const_H_D_C},
\begin{equation}
    \mathcal D_{g,\sigma}\!\left(M_{DSI}(M_{RI})\right)
    =\sum_b\pi_b(1-\pi_b).
\end{equation}
Combining this with \eqref{eq:Q_contains_R_as_special_case} and \eqref{eq:alpha_m_beta_eq_Q_for_sigma_normalized} gives
\begin{equation}
    \phi_{g,\sigma}(M_{RI})=\rho_A(M_{RI})
    =\frac{R}{\sum_b\pi_b(1-\pi_b)}
    =\underline R,
\end{equation}
which proves \eqref{eq:rho_A_contains_R_as_special_case}.
\end{proof}


\subsection{Regional and group decompositions}

\begin{proof}[Proof of Lemma~\ref{lemma:regional_decomposition_of_H}]
    To show \eqref{eq:regional_decomposability_sb_rbb}, consider random variables $X\sim \sigma^M(rb_1,b_2),$ $X^{(S)}\sim \sigma^M(s,b_2),$ and $Y\sim \sigma^N(rb_1,b_2),$ $Y^{(S)}\sim \sigma^N(s,b_2),$
then because of the chain rule for the KL divergence
\begin{align}
    \mathcal D_{h,\sigma}(M) &= \DKL{P_X}{P_Y} = \DKL{P_{X,X^{(S)}}}{P_{Y,Y^{(S)}}}\nonumber\\
    &= \DKL{P_{X^{(S)}}}{P_{Y^{(S)}}} + \EE_{P_{X^{(S)}}}\DKL{P_{X\mid X^{(S)}}}{P_{Y\mid Y^{(S)}}}\nonumber\\
    & = \DKL{\sigma^M(s,b_2)}{\sigma^N(s,b_2)}\nonumber\\
    &+ \EE_{\sigma^M(s,b_2)}\DKL{\sigma^M( rb_1, b_2\mid s,b_2)}{\sigma^N( rb_1, b_2\mid s,b_2)},
\end{align}
yielding the between component $\DKL{\sigma^M(s,b_2)}{\sigma^N(s,b_2)}$ in \eqref{eq:regional_decomposability_sb_rbb_asKL_div_between} and the sum over within components
\begin{align}
    &\EE_{\sigma^M(s,b_2)}\DKL{\sigma^M( rb_1, b_2\mid s,b_2)}{\sigma^N( rb_1, b_2\mid s,b_2)}\nonumber\\
    & = \sum_s\sum_{b_2}\pi_s\pi_{b_2}\frac{\PS{s\sim b_2}{}{M}}{\bar{P}}\nonumber\\
    &\hspace{5em}\cdot
    \left[
    \sum_{r\in s}\sum_{b_1}\pi_{r\mid s}\pi_{b_1\mid r}\frac{\PS{rb_1\sim b_2}{}{M}}{\PS{s\sim b_2}{}{M}}
    \logg{
    \frac{\pi_{r\mid s}\pi_{b_1\mid r}\frac{\PS{rb_1\sim b_2}{}{M}}{\PS{s\sim b_2}{}{M}}}
    {\pi_{r\mid s}\pi_{b_1\mid r}\frac{\PS{rb_1\sim b_2}{}{N}}{\PS{s\sim b_2}{}{N}}}
    }
    \right]\nonumber\\
    & = \sum_s \pi_s\frac{\bar{P}_s^M}{\bar{P}}\left[
    \sum_{r\in s}\sum_{b_1}\sum_{b_2} \pi_{r\mid s}\pi_{b_1\mid r}\pi_{b_2} \frac{\PS{rb_1\sim b_2}{}{M}}{\bar{P}_s^M}
    \right.\nonumber\\
    &\hspace{5em}\cdot\left.
    \logg{
    \frac{\pi_{b_2}\pi_{r\mid s}\pi_{b_1\mid r}\frac{\PS{rb_1\sim b_2}{}{M}}{\bar{P}_s^M}}
    {\pi_{b_2}\frac{\PS{s\sim b_2}{}{M}}{\bar{P}_s^M}
    \pi_{r\mid s}\pi_{b_1\mid r}\frac{\PS{rb_1\sim b_2}{}{N}}{\PS{s\sim b_2}{}{N}}}
    }
    \right]\label{eq:regional_decomposition_expand}\\
    & = \sum_s \pi_s\frac{\bar{P}_s^M}{\bar{P}} \DKL{\sigma^M(rb_1, b_2\mid s)}{\sigma^{N\mid M}( rb_1, b_2\mid s)}
\end{align}
giving the within terms \eqref{eq:regional_decomposability_sb_rbb_asKL_div_within}.

We now consider the special case $M=M_{RI}$ and first note that $N=M_{DC}(M_{RI})=M_{ER}$. We recall that by \eqref{eq:connectivity_under_M_phi_gecentric} and \eqref{eq:tilde_pi_under_M_RI},
\begin{align}
    \PS{rb_1\sim b_2}{}{M_{RI}} = \frac{\pi_{b_2\mid r}}{\pi_{b_2}}\bar P
\end{align}
and hence also
\begin{equation}
\begin{aligned}
    \PS{s \sim b_2}{}{M_{RI}}
    & = \sum_{r\in s}\pi_{r\mid s} \sum_{b_1}\pi_{b_1\mid r}
    \PS{rb_1\sim b_2}{}{M_{RI}}
    = \sum_{r\in s}\pi_{r\mid s} \sum_{b_1}\pi_{b_1\mid r} \frac{\pi_{b_2\mid r}}{\pi_{b_2}}\bar{P}\\
    & = \frac{\pi_{b_2\mid s}}{\pi_{b_2}}\bar{P} .
\end{aligned}
\label{eq:P_M_RI_s_b2}
\end{equation}
For the null $N=M_{ER}$ we have constant connectivity
\begin{equation}
    \PS{rb_1\sim b_2}{}{N}=\PS{s\sim b_2}{}{N}=\bar P.
    \label{eq:P_M_ER_s_b2}
\end{equation}
Using \eqref{eq:P_M_RI_s_b2} and \eqref{eq:P_M_ER_s_b2}, we now show \eqref{eq:between_component_equals_traditional_between_component}: 
by \eqref{eq:regional_decomposability_sb_rbb_asKL_div_between} the between component for $M=M_{RI}$ is
\begin{align}
    \mathcal D_{h,\sigma}^{\mathcal S\leftrightarrow}(M_{RI})&
    = \sum_s \sum_{b_2} \pi_s \pi_{b_2}
    \frac{\PS{s\sim b_2}{}{M_{RI}}}{\bar{P}}
    \logg{\frac{\PS{s\sim b_2}{}{M_{RI}}}{\bar{P}}}\nonumber\\
    & = \sum_s \sum_{b_2}\pi_s \pi_{b_2}\frac{\pi_{b_2\mid s}}{\pi_{b_2}}\log\left(\frac{\pi_{b_2\mid s}}{\pi_{b_2}}\right).
\end{align}
Finally, for \eqref{eq:within_component_equals_traditional_within_component} and \eqref{eq:barPs_equals_barP_under_MRI}, we note that
\begin{align}
    \bar{P}_s^{M_{RI}} & = \PPS{i\sim j\mid r(i)\in s}{}{M_{RI}} = \sum_{b_2}\pi_{b_2} \PS{s\sim b_2}{}{M_{RI}}
     = \bar{P} \sum_{b_2}\pi_{b_2}\frac{\pi_{b_2\mid s}}{\pi_{b_2}}\nonumber\\
     & = \bar{P},
\end{align}
and \eqref{eq:P_M_ER_s_b2} together with
\eqref{eq:regional_decomposability_sb_rbb_asKL_div_within}, \eqref{eq:def_sgima_M_within_s}, and
\eqref{eq:def_sgima_N_given_M_within_s} yield as within component for region $s$
\begin{align}
    &\mathcal D_{h,\sigma}^{(s)}(M_{RI})\nonumber\\
    & =\sum_{r\in s}\sum_{b_1,b_2}\pi_{r\mid s}\pi_{b_1\mid r}\pi_{b_2}
    \frac{\PS{rb_1\sim b_2}{}{M_{RI}}}{\bar{P}_s^{M_{RI}}}
    \logg{\frac{\PS{rb_1\sim b_2}{}{M_{RI}}}{\PS{s\sim b_2}{}{M_{RI}}}}\nonumber\\
    & = \sum_{r\in s}\sum_{b_1,b_2}\pi_{r\mid s}\pi_{b_1\mid r}\pi_{b_2}
    \frac{\pi_{b_2\mid r}}{\pi_{b_2}}\logg{
    \frac{\frac{\pi_{b_2\mid r}}{\pi_{b_2}}}{\frac{\pi_{b_2\mid s}}{\pi_{b_2}}}
    }\nonumber\\
    & = \sum_{r\in s}\sum_{b_2}\pi_{r\mid s}\pi_{b_2\mid s}
    \frac{\pi_{b_2\mid r}}{\pi_{b_2\mid s}}\logg{
    \frac{\pi_{b_2\mid r}}{\pi_{b_2\mid s}}
    }.
\end{align}
\end{proof}


\begin{proof}[Proof of Lemma~\ref{lemma:group_decomposition_of_H}]
    By the chain rule for the Kullback-Leibler divergence,
    \begin{align}
        &\mathcal D_{h,\sigma}(M) = \DKL{\sigma^M(rb_1,b_2)}{\sigma^N(rb_1,b_2)}\nonumber\\
        &= \DKL{\sigma^M(ra_1,a_2)}{\sigma^N( ra_1, a_2)}\nonumber\\
        &+ \EE_{\sigma^M(ra_1,a_2)}
        \DKL{\sigma^M(rb_1,b_2\mid ra_1,a_2)}{\sigma^N(rb_1,b_2\mid ra_1,a_2)},
    \end{align}
yielding the between component $\DKL{\sigma^M(ra_1,a_2)}{\sigma^N( ra_1, a_2)}$ and the sum over within components
\begin{align}
    &\EE_{\sigma^M(ra_1,a_2)}
        \DKL{\sigma^M(rb_1,b_2\mid ra_1,a_2)}{\sigma^N(rb_1,b_2\mid ra_1,a_2)}\nonumber\\
    & = \sum_{a_2}\sum_{a_1}\sum_{r}\pi_{a_2}\pi_{a_1}\pi_{r\mid a_1}\frac{\PS{ra_1\sim a_2}{}{M}}{\bar{P}}\nonumber\\
    &\hspace{3em}\cdot\left[
    \sum_{b_1,b_2}\pi_{b_1\mid a_1,r}\pi_{b_2\mid a_2}\frac{\PS{rb_1\sim b_2}{}{M}}{\PS{ra_1\sim a_2}{}{M}}
    \logg{
    \frac{\pi_{b_1\mid a_1,r}\pi_{b_2\mid a_2}\frac{\PS{rb_1\sim b_2}{}{M}}{\PS{ra_1\sim a_2}{}{M}}}{\pi_{b_1\mid a_1,r}\pi_{b_2\mid a_2}\frac{\PS{rb_1\sim b_2}{}{N}}{\PS{ra_1\sim a_2}{}{N}}}
    }
    \right] \label{eq:sum_over_within_group_components_expanded}\\
    & = \sum_{a_2}\pi_{a_2}\frac{\bar{P}_{a_2}^M}{\bar{P}}\left[\sum_{a_1}\sum_{r}\pi_{a_1}\pi_{r\mid a_1}\sum_{b_1,b_2}\pi_{b_1\mid a_1,r}\pi_{b_2\mid a_2}\frac{\PS{rb_1\sim b_2}{}{M}}{\bar{P}_{a_2}^M}\right.\nonumber\\
    &\hspace{3em}\left.\cdot
    \logg{
    \frac{\pi_{a_1}\pi_{r\mid a_1}\pi_{b_1\mid a_1,r}\pi_{b_2\mid a_2}\frac{\PS{rb_1\sim b_2}{}{M}}{\bar{P}_{a_2}^M}}
    {\pi_{a_1}\pi_{r\mid a_1}\frac{\PS{ra_1\sim a_2}{}{M}}{\bar{P}_{a_2}^M}\pi_{b_1\mid a_1,r}\pi_{b_2\mid a_2}\frac{\PS{rb_1\sim b_2}{}{N}}{\PS{ra_1\sim a_2}{}{N}}}
    }\right] \label{eq:read_out_group_a_component}\\
    & = \sum_{a_1,a_2}\pi_{a_1}\pi_{a_2}\frac{\PS{a_1\sim a_2}{}{M}}{\bar{P}}\left[\sum_{r}\pi_{r\mid a_1}\sum_{b_1,b_2}\pi_{b_1\mid a_1,r}\pi_{b_2\mid a_2}\frac{\PS{rb_1\sim b_2}{}{M}}{\PS{a_1\sim a_2}{}{M}}\right.\nonumber\\
    &\hspace{3em}\left.\cdot
    \logg{
    \frac{\pi_{r\mid a_1}\pi_{b_1\mid a_1,r}\pi_{b_2\mid a_2}\frac{\PS{rb_1\sim b_2}{}{M}}{\PS{a_1\sim a_2}{}{M}}}
    {\pi_{r\mid a_1}\frac{\PS{ra_1\sim a_2}{}{M}}{\PS{a_1\sim a_2}{}{M}}\pi_{b_1\mid a_1,r}\pi_{b_2\mid a_2}\frac{\PS{rb_1\sim b_2}{}{N}}{\PS{ra_1\sim a_2}{}{N}}}
    }\right]\label{eq:read_out_pair_a1_a2_component}, 
\end{align}
where from \eqref{eq:read_out_group_a_component} we can read out the group specific within components \eqref{eq:group_decomposability_raa_rbb_asKL_div_within_a_2} and from \eqref{eq:read_out_pair_a1_a2_component} the pair specific within components \eqref{eq:group_decomposability_raa_rbb_asKL_div_within_a_1a_2}.

We now consider the special case $M=M_{RI}$ and first note that the null $N=M_{DC}(M_{RI})=M_{ER}$ has constant connectivity
\begin{align}
    \PS{rb_1\sim b_2}{}{N} = \PS{ra_1\sim a_2}{}{N} = \bar P.
        \label{eq:P_M_ER_a1_a2}
\end{align}
To show \eqref{eq:between_group_component_of_triaditional_H_recovered_for_MRI_MER}, recall that by \eqref{eq:connectivity_under_M_phi_gecentric} and \eqref{eq:tilde_pi_under_M_RI}
\begin{align}
    \PS{rb_1\sim b_2}{}{M_{RI}} = \frac{\pi_{b_2\mid r}}{\pi_{b_2}}\bar P
\end{align}
and therefore
\begin{align}
    \PS{ra_1\sim a_2}{}{M_{RI}} & = \sum_{b_1\in a_1}\sum_{b_2\in a_2}\PS{rb_1\sim b_2}{}{M_{RI}}\pi_{b_1\mid r,a_1}\pi_{b_2\mid a_2}\nonumber\\
    & = \sum_{b_1\in a_1}\pi_{b_1\mid r,a_1}\sum_{b_2\in a_2} \frac{\bar{P}\pi_{b_2\mid r}}{\pi_{b_2}} \frac{\pi_{b_2}}{\pi_{a_2}}
     = \frac{\bar{P}}{\pi_{a_2}} \sum_{b_2\in a_2} \pi_{b_2\mid r}\nonumber\\
   & = \frac{\bar{P}}{\pi_{a_2}} \pi_{a_2\mid r} .
   \label{eq:P_M_RI_a1_a2}
\end{align}
Using \eqref{eq:P_M_ER_a1_a2} and \eqref{eq:P_M_RI_a1_a2}, by \eqref{eq:group_decomposability_raa_rbb_asKL_div_between} the between component simplifies to
\begin{align}
    & \mathcal D_{h,\sigma}^{\mathcal A\leftrightarrow}(M_{RI})\nonumber\\
   &\hspace{3em}= \sum_r\sum_{a_1,a_2}\pi_r\pi_{a_1\mid r}\pi_{a_2}
    \frac{\PS{ra_1\sim a_2}{}{M_{RI}}}{\bar{P}}\logg{\frac{\PS{ra_1\sim a_2}{}{M_{RI}}}{\bar{P}}}\nonumber\\
    &\hspace{3em}= \sum_r \sum_{a_2} \pi_r \pi_{a_2} \frac{\pi_{a_2\mid r}}{\pi_{a_2}} \logg{\frac{\pi_{a_2\mid r}}{\pi_{a_2}}}.
\end{align}
Finally, for \eqref{eq:within_group_component_of_triaditional_H_recovered_for_MRI_MER} and  \eqref{eq:within_group_component_weights_of_triaditional_H_recovered_for_MRI_MER}, note that
\begin{align}
    \bar{P}_{a_2}^{M_{RI}} & = \PPS{i\sim j\mid b(j)\in a_2}{}{M_{RI}}
    = \sum_{r}\sum_{a_1}\PS{ra_1\sim a_2}{}{M_{RI}}\pi_{r}\pi_{a_1\mid r}\nonumber\\
    & = \sum_{r}\left(\sum_{a_1}\pi_{a_1\mid r}\right) \frac{\bar{P}}{\pi_{a_2}} \pi_{a_2\mid r} \pi_{r}
    = \bar{P}
\end{align}
and \eqref{eq:P_M_ER_a1_a2} together with \eqref{eq:group_decomposability_raa_rbb_asKL_div_within_a_2}, \eqref{eq:sigma_M_group_within}, and \eqref{eq:sigma_N_given_M_group_within} yield the within component for group $a_2$
\begin{align}
    &\mathcal D_{h,\sigma}^{(a_2)}(M_{RI})\nonumber\\
    &\hspace{3em} = \sum_{a_1}\sum_{r}\pi_{a_1}\pi_{r\mid a_1}\sum_{b_1,b_2}\pi_{b_1\mid r,a_1}\pi_{b_2\mid a_2}
    \frac{\PS{rb_1\sim b_2}{}{M_{RI}}}{\bar{P}_{a_2}^{M_{RI}}}
    \logg{\frac{\PS{rb_1\sim b_2}{}{M_{RI}}}{\PS{ra_1\sim a_2}{}{M_{RI}}}}\nonumber\\
    &\hspace{3em} = \sum_r \pi_r\sum_{a_1}  \left(\pi_{a_1\mid r} \sum_{b_1\in a_1}\pi_{b_1\mid r,a_1}\right)\sum_{b_2\in a_2}\pi_{b_2\mid a_2}
    \frac{\pi_{b_2\mid r}}{\pi_{b_2}}\logg{\frac{\frac{\pi_{b_2\mid r}}{\pi_{b_2}}}{\frac{\pi_{a_2\mid r}}{\pi_{a_2}}}}\nonumber\\
    &\hspace{3em} = \sum_r \pi_r \sum_{b_2\in a_2}\pi_{b_2\mid a_2}
    \frac{\pi_{a_2\mid r}}{\pi_{a_2}}\frac{\frac{\pi_{b_2\mid r}}{\pi_{b_2}}}{\frac{\pi_{a_2\mid r}}{\pi_{a_2}}}\logg{\frac{\frac{\pi_{b_2\mid r}}{\pi_{b_2}}}{\frac{\pi_{a_2\mid r}}{\pi_{a_2}}}}\nonumber\\
    &\hspace{3em} = \sum_r \pi_{r\mid a_2} \sum_{b_2\in a_2} \pi_{b_2\mid a_2}
    \frac{\pi_{b_2\mid r,a_2}}{\pi_{b_2\mid a_2}}\logg{\frac{\pi_{b_2\mid r,a_2}}{\pi_{b_2\mid a_2}}},
\end{align}
where for the last equality we have used that for $b_2\in a_2$ it holds that $\pi_{b_2\mid r,a_2} = \frac{\pi_{b_2\mid r}}{\pi_{a_2\mid r}}$ and $\pi_r \frac{\pi_{a_2\mid r}}{\pi_{a_2}} = \pi_{r\mid a_2}.$ 
\end{proof}

\begin{proof}[Proof of Lemma~\ref{lemma:normalization_decompositions}]
    We apply the group decomposition of Lemma~\ref{lemma:group_decomposition_of_H} to $\DKL{\sigma^{M_{DSI}}}{\sigma^{M_{DC}}},$ where we recall the definitions of the degree constrained social isolation model and
    \begin{align}
        \PS{r_1 b_1 \sim b_2}{}{M_{DSI}} = \bar{P}_{r_1 b_1}^M \frac{1}{\pi_{b_2}} \mathbb{1}_{\{b_1=b_2\}}
    \end{align}
    the degree configuration model
    \begin{align}
        \PS{r_1 b_1 \sim b_2}{}{M_{DC}} = \bar{P}_{r_1 b_1}^M \bar{P}_{b_2}^M \frac{1}{\bar{P}} .
    \end{align}
\begin{enumerate}
    \item[(i)]
To show that the resulting between normalization constant is based on the connectivity models \eqref{eq:between_group_si_model} and \eqref{eq:between_group_dc_model}, we perform the marginalization of the connectivity models to coarser coordinates $(r_1, a_1, a_2)$ as needed for the edge distributions \eqref{eq:sigma_between_groups} 
\begin{equation}
\begin{aligned}
  \PS{r_1 a_1 \sim a_2}{}{M_{DSI}}
 & =\sum_{b_1\in a_1}\sum_{b_2\in a_2} \pi_{b_1\mid r_1,a_1}\pi_{b_2\mid a_2} \PS{r_1 b_1 \sim b_2}{}{M_{DSI}}\\ 
 & = \mathbb{1}_{\{a_1=a_2\}} \sum_{b_1\in a_1} \pi_{b_1\mid r_1,a_1}\pi_{b_1\mid a_1}\bar{P}_{r_1 b_1}^M \frac{1}{\pi_{b_1}} 
 = \mathbb{1}_{\{a_1=a_2\}}  \bar{P}_{r_1 a_1}^M \frac{1}{\pi_{a_1}}
\end{aligned}
\end{equation}
and
\begin{align}
    \PS{r_1 a_1 \sim a_2}{}{M_{DC}} 
    = \sum_{b_1\in a_1}\sum_{b_2\in a_2}\pi_{b_1\mid r_1, a_1}\pi_{b_2\mid a_2}  \bar{P}_{r_1 b_1}^M \bar{P}_{b_2}^M \frac{1}{\bar{P}}
    = \bar{P}_{r_1 a_1}^M \bar{P}_{a_2}^M \frac{1}{\bar{P}},
\end{align}
i.e., marginalization to coarser coordinates $a_1,a_2$ yields the degree constrained social isolation model and the degree configuration model as defined by the coarser coordinates respectively.

\item[(ii)]
For the within $a_2$ component, we obtain by \eqref{eq:sigma_N_given_M_group_within} the $M_{DSI}$ specific adjustment of $M_{DC}$
\begin{equation}
\begin{aligned}\label{eq:sigma_dc_given_dsi_given_a2}
    \sigma^{M_{DC} \mid M_{DSI}}(r_1 b_1, b_2\mid a_2)
    & = \pi_{r_1 a_1} \frac{\PS{r_1 a_1 \sim a_2}{}{M_{DSI}}}{\bar{P}_{a_2}^{M_{DSI}}}\pi_{b_1\mid r_1 a_1}\pi_{b_2\mid a_2}\frac{\PS{r_1 b_1 \sim b_2}{}{M_{DC}}}{\PS{r_1 a_1 \sim a_2}{}{M_{DC}}}\\
    & = \pi_{r_1 \mid a_1}\pi_{a_1} \frac{\mathbb{1}_{\{a_1=a_2\}}  \bar{P}_{r_1 a_1}^M \frac{1}{\pi_{a_1}}}{\bar{P}_{a_2}^M}
    \pi_{b_1\mid r_1 a_1}\pi_{b_2\mid a_2}\frac{\bar{P}_{r_1 b_1}^M \bar{P}_{b_2}^M \frac{1}{\bar{P}}}{\bar{P}_{r_1 a_1}^M \bar{P}_{a_2}^M \frac{1}{\bar{P}}}\\
    & = \pi_{r_1 \mid a_2} \pi_{b_1\mid r_1 a_2}\pi_{b_2\mid a_2}\frac{\mathbb{1}_{\{a_1=a_2\}} }{\bar{P}_{a_2}^M} \frac{\bar{P}_{r_1 b_1}^M \bar{P}_{b_2}^M}{\bar{P}_{a_2}^M },
\end{aligned}
\end{equation}
i.e., the edge distribution on $(\RS\times \BS)\times\{b\in a_2\}$ induced by the connectivity model \eqref{eq:within_group_dc_model} with zero support on $\{a_1\neq a_2\}.$
By \eqref{eq:sigma_M_group_within}, the within $a_2$ edge distribution for $M_{DSI}$ is given by
\begin{equation}
\begin{aligned}\label{eq:sigma_dsi_given_a2}
    \sigma^{M_{DSI}}(r_1 b_1, b_2\mid a_2)
    &= \pi_{r_1 a_1} \pi_{b_1\mid r_1 a_1}\pi_{b_2\mid a_2} \frac{\bar{P}_{r_1 b_1}^M \frac{1}{\pi_{b_2}} \mathbb{1}_{\{b_1=b_2\}}}{\bar{P}_{a_2}^{M_{DSI}}}\\
    & = \pi_{r_1\mid a_1} \pi_{b_1\mid r_1 a_1}\pi_{b_2\mid a_2} \mathbb{1}_{\{b_1=b_2\}}\frac{\bar{P}_{r_1 b_1}^M \frac{1}{\pi_{b_2\mid a_2}}}{\bar{P}_{a_2}^M}, 
\end{aligned}
\end{equation}
i.e., the edge distribution on $(\RS\times \BS)\times\{b\in a_2\}$ induced by the connectivity model \eqref{eq:within_group_si_model} with zero support on $\{a_1\neq a_2\}$ (note that $b_1 =  b_2$ implies $a_1 = a_2$).

\item[(iii)]
For $M=M_{RI}$, i.e., $\bar{P}_{r_1 b_1}^M=\bar{P}_{b_2}^M=\bar{P}$, $M_{DC}=M_{ER}$, and $M_{DSI}=M_{SI}$, part~(i) shows that the between normalization constant is the normalization constant \eqref{eq:normalization_const_H_D_C} in Lemma~\ref{lemma:normalization_of_indices} applied on the coarser coordinate space $(\RS\times\mathcal A)\times\mathcal A$ and therefore recovers the traditional between normalization constant \eqref{eq:between_a_normalization_trad_theil}. For the within normalization constant, \eqref{eq:sigma_dc_given_dsi_given_a2} simplifies to
\begin{align}
    \sigma^{M_{ER} \mid M_{SI}}(r_1 b_1, b_2\mid a_2)
    = \pi_{r_1 a_1}  \frac{\mathbb{1}_{\{a_1=a_2\}}}{\pi_{a_2}} \pi_{b_1\mid r_1 a_1}\pi_{b_2\mid a_2}
\end{align}
    while \eqref{eq:sigma_dsi_given_a2} simplifies to
\begin{align}
     \sigma^{M_{SI}}(r_1 b_1, b_2\mid a_2)
    = \pi_{r_1 a_1} \pi_{b_1\mid r_1 a_1}\pi_{b_2\mid a_2} \frac{\mathbb{1}_{\{b_1=b_2\}}}{\pi_{b_2}}.
\end{align}
Consequently we obtain
\begin{equation}
\begin{aligned}
    &\DKL{\sigma^{M_{SI}}(r_1 b_1, b_2\mid a_2)}{\sigma^{M_{ER} \mid M_{SI}}(r_1 b_1, b_2\mid a_2)}\\
    & = \sum_{r_1}\sum_{a_1}\sum_{b_1\in a_1,b_2\in a_2}
    \pi_{r_1 a_1} \pi_{b_1\mid r_1 a_1}\pi_{b_2\mid a_2} \frac{\mathbb{1}_{\{b_1=b_2\}}}{\pi_{b_2}}
    \log\left(
    \frac{ \frac{\mathbb{1}_{\{b_1=b_2\}}}{\pi_{b_2}}}{ \frac{\mathbb{1}_{\{a_1=a_2\}}}{\pi_{a_2}}}
    \right)\\
    &= \sum_{r}\sum_{b\in a} \pi_a \pi_{r\mid a} \pi_{b\mid r,a}\frac{1}{\pi_{a}}
    \log\left(\frac{1}{\pi_{b\mid a}}\right)\\
    & = \sum_r\sum_{b\in a} \pi_{r\mid a}\pi_{b\mid r,a} \log\left(\frac{1}{\pi_{b\mid a}}\right) .
\end{aligned}
\end{equation}

\item[(iv)]
When $M=M_{RI}$ the regional isolation model, then $\bar{P}_b^M = \bar{P}$ such that \eqref{eq:def_between_normalization_reg_decomp} reduces to \eqref{eq:between_s_normalization_trad_theil}. To show that \eqref{eq:def_within_normalization_reg_decomp} reduces to \eqref{eq:within_s_normalization_trad_theil}, we first note that
\begin{equation}
    \begin{aligned}
    \PS{s\sim b_2}{}{M_{DSI}} & = \sum_{r_1\in s}\sum_{b_1}\pi_{r_1 b_1 \mid s} \PS{r_1 b_1 \sim b_2}{}{M_{DSI}}\\
    & = \sum_{r_1\in s}\sum_{b_1}\pi_{r_1 b_1 \mid s} \mathbb{1}_{\{b_1=b_2\}}\frac{\bar{P}_{r_1 b_1}^M}{\pi_{b_2}}
\end{aligned}
\end{equation}
Hence we obtain by \eqref{eq:def_sgima_M_within_s} and \eqref{eq:def_sgima_N_given_M_within_s}
\begin{align}
    \frac{\sigma^{M_{DSI}}(r_1b_1, b_2\mid s)}{\sigma^{M_{DC}\mid M_{DSI}}(r_1b_1, b_2\mid s)}
    = \frac{ \mathbb{1}_{\{b_1=b_2\}} \bar{P}_{r_1 b_1}^M \frac{1}{\pi_{b_2}}}
    { \left(\sum_{r'\in s}\sum_{b'} \pi_{r'\mid s}\pi_{b' \mid r'} \mathbb{1}_{\{b'=b_2\}} \bar{P}_{r' b'}^M\frac{1}{\pi_{b_2}}\right)\left(\frac{\bar{P}_{r_1 b_1}^M\bar{P}_{b_2}^M}{\bar{P}_{s}^M \bar{P}_{b_2}^M}\right) }
\end{align}
Now when $M=M_{RI},$ then $\bar{P}_{s}^M = \bar{P}_{b_2}^M =  \bar{P}_{r_1 b_1}^M = \bar{P},$ and by observing that $\sum_{r\in s} \pi_{r,b\mid s} = \pi_{b\mid s}\sum_{r\in s}\pi_{r\mid b,s} = \pi_{b\mid s},$ we obtain
\begin{align}
    \frac{\sigma^{M_{SI}}(r_1b_1, b_2\mid s)}{\sigma^{M_{ER}\mid M_{SI}}(r_1b_1, b_2\mid s)}
    = \frac{\mathbb{1}_{\{b_1=b_2\}}}{\pi_{b_{2}\mid s}}.
\end{align}
It follows that
\begin{equation}
    \begin{aligned}
        &\DKL{\sigma^{M_{SI}}(r_1b_1, b_2\mid s)}{\sigma^{M_{ER}\mid M_{SI}}(r_1b_1, b_2\mid s)}\\
        &= \sum_{r_1\in s} \sum_{b_1, b_2} \pi_{r_1 b_1\mid s} \pi_{b_2} \frac{\mathbb{1}_{\{b_1=b_2\}}\bar{P}\frac{1}{\pi_{b_2}}}{\bar{P}}\log\left(
                \frac{\mathbb{1}_{\{b_1=b_2\}}}{\pi_{b_2\mid s}}
        \right)\\
        & = \sum_{r_1\in s} \sum_{b_1} \pi_{r_1 b_1\mid s} \log\left(\frac{1}{\pi_{b_{1}\mid s}}\right)
        = \sum_b \pi_{b\mid s}\log\left(\frac{1}{\pi_{b\mid s}}\right) .
    \end{aligned}
\end{equation}

\end{enumerate}
\end{proof}

\subsection{Additional results}

\begin{proof}[Proof of Lemma~\ref{lemma:coordinate_coarsening}]   
For part (i), it suffices to consider a generic coordinate coarsening $T_1\colon\CS_1\to\widetilde{\CS}_1$ and $T_2\colon\CS_2\to\widetilde{\CS}_2$. For $\rho=\tau$, define 
\begin{equation*}
    T_\tau(c_1,c_2,e)\coloneqq\left(T_1(c_1),T_2(c_2),e\right),
\end{equation*}
and for $\rho\in\{\sigma,\xi\}$ define
\begin{equation*}
    T_\rho(c_1,c_2)\coloneqq\left(T_1(c_1),T_2(c_2)\right).
\end{equation*}
For a probability distribution $P$ on $\mathcal X$ and a deterministic map $T\colon\mathcal X\to\widetilde{\mathcal X}$, let $P_T$ denote the induced distribution on $\widetilde{\mathcal X}$, i.e.,
\begin{equation*}
    P_T(\tilde x)\coloneqq\sum_{x:T(x)=\tilde x}P(x).
\end{equation*}
Accordingly, for any model $N$ and $\rho\in\{\tau,\sigma,\xi\}$, let $\rho_T^N$ denote the distribution induced from $\rho^N$ by the map $T_\rho$. We use a tilde to distinguish distributions $\widetilde\rho^N$ defined directly on the coarsened coordinate spaces $\widetilde{\CS}_1$ and $\widetilde{\CS}_2$.

The data-processing inequality for $f$-divergences \cite{polyanskiy2025information} states that for any probability distributions $P$ and $Q$ on $\mathcal X$ and any deterministic map $T\colon\mathcal X\to\widetilde{\mathcal X}$,
\begin{equation}
    D_f(P_T\|Q_T)\leq D_f(P\|Q) .
    \label{eq:DPI_f_divergence}
\end{equation}
Applied to the two distributions defining $\mathcal D_{f,\rho}\left(M',M_{DC}(M');\CS_1,\CS_2\right)$, this gives
\begin{equation}
    D_f\!\left(\rho_T^{M'}\middle\|\rho_T^{M_{DC}(M')}\right)
    \leq D_f\!\left(\rho^{M'}\middle\|\rho^{M_{DC}(M')}\right).
    \label{eq:DPI_segregation_coarsening}
\end{equation}
To identify the left-hand side with the corresponding divergences on the coarsened coordinate spaces, it therefore remains to show, for each $\rho\in\{\tau,\sigma,\xi\}$, that
\begin{equation}
    \rho_T^{M'}=\widetilde\rho^{M'}
    \label{eq:coarsening_Mprime_distribution}
\end{equation}
and
\begin{equation}
    \rho_T^{M_{DC}(M')}=\widetilde\rho^{M_{DC}(M')},
    \label{eq:coarsening_MDC_distribution}
\end{equation}
where the right-hand sides in \eqref{eq:coarsening_Mprime_distribution} and \eqref{eq:coarsening_MDC_distribution} denote the corresponding distributions defined directly on the coarsened coordinate spaces.

Equation~\eqref{eq:coarsening_Mprime_distribution} follows directly from the definitions in \eqref{eq:def_tau}--\eqref{eq:def_xi}: for $\tau$, it is the distribution of $\left(T_1(C_1),T_2(C_2),A\right)$, while for $\sigma$ and $\xi$ it is the distribution of $\left(T_1(C_1),T_2(C_2)\right)$ conditional on $A=1$ and $A=0$, respectively. We therefore only need to establish \eqref{eq:coarsening_MDC_distribution}.

We first consider $\rho=\tau$. For $k\in\{1,2\}$,
\begin{equation}
    \pi_{\tilde c_k}^{M'}\bar P_{\tilde c_k}^{M'}
    = \PPS{T_k(C_k)=\tilde c_k,A=1}{}{M'}
    = \sum_{c_k:T_k(c_k)=\tilde c_k}
    \pi_{c_k}^{M'}\bar P_{c_k}^{M'}.
    \label{eq:coarsened_marginal_connectivity}
\end{equation}
Hence, for any $\tilde c_1\in\widetilde{\CS}_1$ and $\tilde c_2\in\widetilde{\CS}_2$,
\begin{equation}
\begin{aligned}
\tau_T^{M_{DC}(M')}(\tilde c_1,\tilde c_2,1)
&=\PPS{T_1(C_1)=\tilde c_1,T_2(C_2)=\tilde c_2,A=1}{}{M_{DC}(M')}\\
&=\sum_{\substack{c_1:T_1(c_1)=\tilde c_1\\c_2:T_2(c_2)=\tilde c_2}}
\pi_{c_1}^{M'}\pi_{c_2}^{M'}
\frac{\bar P_{c_1}^{M'}\bar P_{c_2}^{M'}}{\bar P^{M'}}\\
&=\frac{1}{\bar P^{M'}}
\left(\sum_{c_1:T_1(c_1)=\tilde c_1}\pi_{c_1}^{M'}\bar P_{c_1}^{M'}\right)
\left(\sum_{c_2:T_2(c_2)=\tilde c_2}\pi_{c_2}^{M'}\bar P_{c_2}^{M'}\right)\\
&=\pi_{\tilde c_1}^{M'}\pi_{\tilde c_2}^{M'}
\frac{\bar P_{\tilde c_1}^{M'}\bar P_{\tilde c_2}^{M'}}{\bar P^{M'}}\\
&=\widetilde\tau^{M_{DC}(M')}(\tilde c_1,\tilde c_2,1),
\end{aligned}
\label{eq:MDC_coarsening_tau_one}
\end{equation}
where \eqref{eq:coarsened_marginal_connectivity} was used in the penultimate equality.

Both $\tau_T^{M_{DC}(M')}$ and $\widetilde{\tau}^{M_{DC}(M')}$ assign, when marginalized over $e$, mass $\pi_{\tilde c_1}^{M'}\pi_{\tilde c_2}^{M'}$ to the coordinate pair $(\tilde c_1,\tilde c_2).$ Therefore, using \eqref{eq:MDC_coarsening_tau_one}, we obtain for $e=0$
\begin{equation*}
\begin{aligned}
\tau_T^{M_{DC}(M')}(\tilde c_1,\tilde c_2,0)
&=\pi_{\tilde c_1}^{M'}\pi_{\tilde c_2}^{M'}
-\tau_T^{M_{DC}(M')}(\tilde c_1,\tilde c_2,1)\\
&=\pi_{\tilde c_1}^{M'}\pi_{\tilde c_2}^{M'}
-\widetilde\tau^{M_{DC}(M')}(\tilde c_1,\tilde c_2,1)\\
&=\widetilde\tau^{M_{DC}(M')}(\tilde c_1,\tilde c_2,0).
\end{aligned}
\end{equation*}
Thus \eqref{eq:coarsening_MDC_distribution} holds for $\rho=\tau$.

For $\rho=\sigma$, since $M_{DC}(M')$ has marginal connection probability $\bar P^{M'}$ and coordinate coarsening does not change this probability,
\begin{equation*}
\begin{aligned}
\sigma_T^{M_{DC}(M')}(\tilde c_1,\tilde c_2)
&=\frac{\tau_T^{M_{DC}(M')}(\tilde c_1,\tilde c_2,1)}{\bar P^{M'}}=\frac{\widetilde\tau^{M_{DC}(M')}(\tilde c_1,\tilde c_2,1)}{\bar P^{M'}}\\
&=\widetilde\sigma^{M_{DC}(M')}(\tilde c_1,\tilde c_2).
\end{aligned}
\end{equation*}
Likewise,
\begin{equation*}
\begin{aligned}
\xi_T^{M_{DC}(M')}(\tilde c_1,\tilde c_2)
&=\frac{\tau_T^{M_{DC}(M')}(\tilde c_1,\tilde c_2,0)}{1-\bar P^{M'}}=\frac{\widetilde\tau^{M_{DC}(M')}(\tilde c_1,\tilde c_2,0)}{1-\bar P^{M'}}\\
&=\widetilde\xi^{M_{DC}(M')}(\tilde c_1,\tilde c_2).
\end{aligned}
\end{equation*}
Hence \eqref{eq:coarsening_MDC_distribution} holds for every $\rho\in\{\tau,\sigma,\xi\}$.

Using \eqref{eq:coarsening_Mprime_distribution} and \eqref{eq:coarsening_MDC_distribution} in \eqref{eq:DPI_segregation_coarsening} therefore gives
\begin{equation}
    \mathcal D_{f,\rho}\left(M',M_{DC}(M');\widetilde{\CS}_1,\widetilde{\CS}_2\right)
    \leq
    \mathcal D_{f,\rho}\left(M',M_{DC}(M');\CS_1,\CS_2\right),
    \qquad \rho\in\{\tau,\sigma,\xi\}.
\end{equation}
Applying this inequality successively to the coordinate coarsenings in \eqref{eq:coordinate_coarsening_unnormalized} and \eqref{eq:coordinate_coarsening_unnormalized_geoegocentric} proves part (i).

For part (ii), by part (i) the $f$-divergence in the numerator cannot increase under geographical coordinate coarsening. It therefore suffices to show that the denominator
\begin{equation*}
    \mathcal D_{f,\sigma}\left(M_{DSI}(M'),M_{DC}(M');\CS_1,\CS_2\right)
    = D_f\!\left(\sigma^{M_{DSI}(M')}\middle\|\sigma^{M_{DC}(M')}\right)
\end{equation*}
is unchanged by coarsening of geographical coordinates.

For any of the coordinate spaces considered in \eqref{eq:coordinate_coarsening_normalized_sigma} and \eqref{eq:coordinate_coarsening_normalized_sigma_geoegocentric}, let $b_k = b(c_k)$ denote the sociodemographic component of coordinate $c_k$. From the definitions of $M_{DC}(M')$ and $M_{DSI}(M')$,
\begin{equation}
    \frac{\sigma^{M_{DSI}(M')}(c_1,c_2)}{\sigma^{M_{DC}(M')}(c_1,c_2)}
    = \begin{cases}
        \displaystyle\frac{\bar P^{M'}}{\pi_b^{M'}\bar P_b^{M'}},&b_1=b_2=b,\\[0.75em]
        0,&b_1\neq b_2.
    \end{cases}
    \label{eq:ratio_sigma_DSI_DC_coarsening}
\end{equation}
Thus, the likelihood ratio depends only on the sociodemographic coordinates $b_1$ and $b_2$. When evaluating the $f$-divergence, we can therefore first sum the mass of $\sigma^{M_{DC}(M')}$ over all geographical coordinates corresponding to each fixed pair $(b_1,b_2)$. This mass is
\begin{equation}
\begin{aligned}
    \sum_{\substack{c_1:\,b(c_1)=b_1\\c_2:\,b(c_2)=b_2}}
    \sigma^{M_{DC}(M')}(c_1,c_2)
    &=
    \frac{\pi_{b_1}^{M'}\bar P_{b_1}^{M'}}{\bar P^{M'}}
    \frac{\pi_{b_2}^{M'}\bar P_{b_2}^{M'}}{\bar P^{M'}}.
\end{aligned}
\label{eq:sigma_DC_social_pair_mass}
\end{equation}
Consequently,
\begin{equation}
\begin{aligned}
\mathcal D_{f,\sigma}\left(M_{DSI}(M'),M_{DC}(M');\CS_1,\CS_2\right)
& = \sum_{b_1,b_2\in\BS}
\frac{\pi_{b_1}^{M'}\bar P_{b_1}^{M'}}{\bar P^{M'}}
\frac{\pi_{b_2}^{M'}\bar P_{b_2}^{M'}}{\bar P^{M'}}
f\left(\frac{\bar P^{M'}}{\pi_{b_1}^{M'}\bar P_{b_1}^{M'}}\mathbb{1}_{\left\{b_1=b_2\right\}}\right).
\end{aligned}
\label{eq:sigma_DSI_normalization_social_only}
\end{equation}
The right-hand side depends only on $\pi_b^{M'}$, $\bar P_b^{M'}$, and $\bar P^{M'}$, which are unchanged by geographical coordinate coarsening. Hence
\begin{equation}
\begin{aligned}
    \mathcal D_{f,\sigma}\left(M_{DSI}(M'),M_{DC}(M');\BS,\BS\right)
    &=
    \mathcal D_{f,\sigma}\left(M_{DSI}(M'),M_{DC}(M');\RS\times\BS,\BS\right)\\
    &=
    \mathcal D_{f,\sigma}\left(M_{DSI}(M'),M_{DC}(M');\RS\times\BS,\RS\times\BS\right)\\
    &=
    \mathcal D_{f,\sigma}\left(M_{DSI}(M'),M_{DC}(M');\GS\times\BS,\BS\right)\\
    &=
    \mathcal D_{f,\sigma}\left(M_{DSI}(M'),M_{DC}(M');\GS\times\BS,\GS\times\BS\right).
\end{aligned}
\label{eq:sigma_normalization_invariant_to_geo_coarsening}
\end{equation}
\end{proof}

\fi

\end{document}